\newenvironment{phase}[1][htb]{%
    \renewcommand{\ALG@name}{Phase}
       \begin{algorithm}[#1]%
  }{\end{algorithm}}
\newtheorem{theorem}{Theorem}[section]
\numberwithin{theorem}{section}
\newtheorem{lemma}[theorem]{Lemma}
\newtheorem{corollary}[theorem]{Corollary}
\newcommand{\fieldinput}{\mathtt{input}}
\newcommand{\fieldoutput}{\mathtt{output}}
\newcommand{\fieldphase}{\mathtt{phase}}
\newcommand{\fieldrole}{\mathtt{role}}
\newcommand{\fieldbias}{\mathtt{bias}}
\newcommand{\fieldcounter}{\mathtt{counter}}
\newcommand{\fieldbiases}{\mathtt{opinions}}
\newcommand{\fieldlevel}{\mathtt{exponent}}
\newcommand{\fieldexponent}{\mathtt{exponent}}
\newcommand{\fieldfull}{\mathtt{full}}
\newcommand{\fieldactive}{\mathtt{active}}
\newcommand{\fieldhour}{\mathtt{hour}}
\newcommand{\fieldminute}{\mathtt{minute}}
\newcommand{\fieldTlevel}{\mathtt{hour}}
\newcommand{\fieldsample}{\mathtt{sample}}
\newcommand{\fieldestimatelogn}{\mathtt{logn}}
\newcommand{\fieldassigned}{\mathtt{assigned}}
\newcommand{\fieldopinion}{\mathtt{opinion}}
\newcommand{\rolemain}{\mathsf{Main}}
\newcommand{\roleclock}{\mathsf{Clock}}
\newcommand{\rolereserve}{\mathsf{Reserve}}
\newcommand{\roleMCR}{\mathsf{Role_{MCR}}}
\newcommand{\roleCR}{\mathsf{Role_{CR}}}
\newcommand{\True}{\mathsf{True}}
\newcommand{\False}{\mathsf{False}}
\newcommand{\unbiased}{\mathcal{O}}
\newcommand{\hourstart}{\ensuremath{\mathrm{start}}}
\newcommand{\hourend}{\ensuremath{\mathrm{end}}}
\newcommand{\kconstant}{45}
\newcommand{\roleStableSize}{\ensuremath{\mathsf{Size}}}
\newcommand{\A}{\mathsf{A}}
\newcommand{\B}{\mathsf{B}}
\newcommand{\T}{\mathsf{T}}
\newcommand{\mains}{\mathcal{M}}
\newcommand{\clocks}{\mathcal{C}}
\newcommand{\reserves}{\mathcal{R}}
\lstdefinelanguage{json}
{
    string=[s]{"}{"},
    stringstyle=\color{blue},
    comment=[l]{:},
    commentstyle=\color{black},
    basicstyle=\small\ttfamily,
}
\newcommand{\polylog}{\mathrm{polylog}}
\DeclarePairedDelimiter\ceil{\lceil}{\rceil}
\DeclarePairedDelimiter\floor{\lfloor}{\rfloor}
\let\vec\mathbf
\newcommand{\N}{\mathbb{N}}
\newcommand{\vc}{\vec{c}}
\newcommand{\vo}{\vec{o}}
\newcommand{\vi}{\vec{i}}
\newcommand{\calP}{\mathcal{P}}
\newcommand{\reach}{\Rightarrow}
\renewcommand{\Pr}{\mathbb{P}}
\newcommand{\E}{\mathbb{E}}
\newcommand{\countersubroutine}{\hyperref[alg:standard-counter-subroutine]{Standard Counter Subroutine}}
\newcommand{\phaseInitialize}{\hyperref[phase:initialize]{Phase~0}}
\newcommand{\phaseDiscreteAveraging}{\hyperref[phase:discrete-averaging]{Phase~1}}
\newcommand{\phaseConsensus}{\hyperref[phase:consensus]{Phase~2}}
\newcommand{\phaseMainAveraging}{\hyperref[phase:main-averaging]{Phase~3}}
\newcommand{\phaseDetectTie}{\hyperref[phase:detecttie]{Phase~4}}
\newcommand{\phaseReserveSample}{\hyperref[phase:reserve-sample]{Phase~5}}
\newcommand{\phaseReserveSplit}{\hyperref[phase:reserve-split]{Phase~6}}
\newcommand{\phaseHighMinorityElimination}{\hyperref[phase:highminorityelimination]{Phase~7}}
\newcommand{\phaseLowMinorityElimination}{\hyperref[phase:lowminorityelimination]{Phase~8}}
\newcommand{\phaseConsensusTwo}{\hyperref[phase:consensus2]{Phase~9}}
\newcommand{\phaseStableBackup}{\hyperref[phase:backup]{Phase~10}}
\newcommand{\stableBackup}{\hyperref[phase:backup]{Phase~10}}
\newcommand{\majorityNonuniform}{\hyperref[alg:majority-nonuniform]{Nonuniform Majority}}
\title{\Large A time and space optimal stable population protocol solving exact majority}
\author[1]{David Doty}
\author[1]{Mahsa Eftekhari}
\author[2]{Leszek G\k{a}sieniec} 
\author[1]{Eric Severson}
\author[3]{Grzegorz Stachowiak}
\author[3]{Przemysław Uznański}
\affil[1]{University of California, Davis, {\tt \{doty,mhseftekhari,eseverson\}@ucdavis.edu}}
\affil[2]{University of Liverpool, {\tt l.a.gasieniec@liverpool.ac.uk}}
\affil[3]{University of Wrocław, {\tt \{gst,puznanski\}@cs.uni.wroc.pl}}
\date{}
\begin{document}

\maketitle

\begin{abstract}
    We study population protocols, a model of distributed computing appropriate for modeling well-mixed chemical reaction networks and other physical systems where agents exchange information in pairwise interactions, but have no control over their schedule of interaction partners.
    The well-studied \emph{majority} problem is that of determining in an initial population of $n$ agents, each with one of two opinions $A$ or $B$, whether there are more $A$, more $B$, or a tie.
    A \emph{stable} protocol solves this problem with probability 1 by eventually entering a configuration in which all agents agree on a correct consensus decision of $\A$, $\B$, or $\T$, from which the consensus cannot change.
    We describe a protocol that solves this problem using $O(\log n)$ states ($\log \log n + O(1)$ bits of memory) and optimal expected time $O(\log n)$. 
    The number of states $O(\log n)$ is known to be optimal for the class of 
    polylogarithmic time stable protocols that are
    ``output dominant'' and 
    ``monotone''~\cite{alistarh2018space}.
    These are two natural constraints satisfied by our protocol,
    making it simultaneously time- and state-optimal for that class.
    We introduce a key technique called a ``fixed resolution clock'' to achieve partial synchronization.
    
    Our protocol is \emph{nonuniform}:
    the transition function has the value $\ceil{\log n}$ encoded in it.
    We show that the protocol can be modified to be uniform, while increasing the state complexity to $\Theta(\log n \log \log n)$.
\end{abstract}

\paragraph{Acknowledgements.}
Doty, Eftekhari, and Severson were supported by NSF award 1900931 and CAREER award 1844976.

\todototoc
\listoftodos

\thispagestyle{empty}
\clearpage 
\pagenumbering{arabic} 

\section{Introduction}
\label{sec:intro}

\emph{Population protocols}~\cite{AADFP06}
are asynchronous, complete networks that consist of computational entities called \emph{agents}
with no control over the schedule of interactions with other agents.
In a population of $n$ agents,
repeatedly a random pair of agents is chosen to interact,
each observing the state of the other agent before updating its own state.\footnote{
    Using message-passing terminology, each agent sends its entire state of memory as the message.
}
They are an appropriate model for electronic computing scenarios such as sensor networks
and for ``fast-mixing'' physical systems such as
animal populations~\cite{Volterra26},
gene regulatory networks~\cite{bower2004computational},
and chemical reactions~\cite{SolCooWinBru08},
the latter increasingly regarded as an implementable ``programming language'' for molecular engineering,
due to recent experimental breakthroughs in DNA nanotechnology~\cite{chen2013programmable, srinivas2017enzyme}.

Time complexity in a population protocol is defined by \emph{parallel time}: 
the total number of interactions divided by the population size $n$,
henceforth called simply \emph{time}.
This captures the natural timescale in which each individual agent experiences expected $O(1)$ interactions per unit time.
All problems solvable with zero error probability 
by a constant-state population protocol are solvable in $O(n)$ time~\cite{AAE08, DotHajLDCRNNaCo}.
The benchmark for ``efficient'' computation is thus sublinear time,
ideally $\polylog(n)$,
with $\Omega(\log n)$ time a lower bound on most nontrivial computation,
since a simple coupon collector argument shows that is the time required for each agent to have at least one interaction.

As a simple example of time complexity, 
suppose we want to design a protocol to decide whether at least one $x$ exists in an initial population of $x$'s and $q$'s.
The single transition $x,q \to x,x$ indicates that if agents in states $x$ and $q$ interact, the $q$ agent changes state to $x$.
If $x$ outputs ``yes'' and $q$ outputs ``no'',
this takes expected time $O(\log n)$ 
to reach a consensus of all $x$'s (i.e., $O(n \log n)$ total interactions, including null interactions between two $x$'s or between two $q$'s).
However,
the transitions 
$x,x \to y,y$; \quad
$y,x \to y,y$; \quad
$y,q \to y,y$, 
where $x,q$ output ``no'' and $y$ outputs ``yes'',
which computes whether at least \emph{two} $x$'s exist,
is exponentially slower:
expected time $O(n)$.
The worst-case input is exactly 2 $x$'s and $n-2$ $q$'s, where the first interaction between the $x$'s takes expected $\binom{n}{2} = \Theta(n^2)$ interactions, i.e., $\Theta(n)$ time.

To have probability 0 of error, 
a protocol must eventually \emph{stabilize}:
reach a configuration where all agents agree on the correct output, 
which is \emph{stable},
meaning no subsequently reachable configuration can change the output.\footnote{
    Technically this connection between probability 1 correctness and reachability requires the number of producible states for any fixed population size $n$ to be finite, which is the case for our protocol.
}
The original model~\cite{AADFP06}
assumed states and transitions are constant with respect to $n$.
However, for important 
problems such as
leader election~\cite{LeaderElectionDIST},
majority computation~\cite{alistarh2017time},
and computation of other functions and predicates~\cite{belleville2017time},
no constant-state protocol can stabilize in sublinear time with probability 1.\footnote{
    These problems have $O(1)$ state, sublinear time \emph{converging} protocols~\cite{kosowski2018population}.
    A protocol \emph{converges} when it reaches the correct output without subsequently changing it---though it may remain change\emph{able} for some time after converging---whereas it \emph{stabilizes} when the output becomes unchangeable.
    See~\cite{Berenbrink_majority_2020, LeaderElectionDIST} for a discussion of the distinction between stabilization and convergence.
    In this paper we consider only stabilization time.
}
This has motivated the study of population protocols whose number of states is allowed to grow with $n$, 
and as a result they can solve such problems in polylogarithmic time~\cite{elsasser2018recent, doty2018exact, burman2021self, alistarh2015fastExactMajority, alistarh2017time, alistarh2018space, berenbrink2018simple, bilke2017brief, berenbrink2018population, ben2020log3, doty2019efficient, Berenbrink2019counting, alistarh2015polylogarithmic, GS18, Berenbrink_majority_2020, gasieniec2019almost, berenbrink2020optimal}.

\subsection{The majority problem in population protocols}

\begin{table}
\centering
\caption{Summary of results on the stable exact majority problem in population protocols, including this paper [$\ast$].
Gray regions are provably impossible:
$o(\log \log n)$ state, $o(n)$ time unconditionally~\cite{alistarh2017time},
$o(\log n)$ state, $O(n^{1-\varepsilon})$ time for monotone, output-dominant protocols~\cite{alistarh2018space},
and
$o(\log n)$ time unconditionally. 
}
\vspace{-0.2cm}

\begin{tikzpicture}
\definecolor{lightgray}{gray}{0.8}
\fill[fill=lightgray] (0,0) rectangle (6,1); 
\fill[fill=lightgray, path fading = east] (5.99,0) rectangle (6.7,1);
\fill[fill=lightgray] (0,0) rectangle (1.4,4); 

\draw[->] (0,0) -- (0,5) node[anchor=east] {Time};
\draw	(0,0.33) node[anchor=east] {$O(1)$};
\draw	(0,1) node[anchor=east] {$O(\log n)$};
\draw	(0,1.66) node[anchor=east] {$O(\log^{3/2} n)$};
\draw	(0,2.33) node[anchor=east] {$O(\log^{5/3} n)$};
\draw	(0,3) node[anchor=east] {$O(\log^{2} n)$};
\draw	(0,4) node[anchor=east] {$\Omega(n)$};
	
\draw[->] (0,0) -- (6.3,0) node[anchor=north] {States};
\draw	(0.5+0.3,-0.4) node[anchor=north,rotate=-45] {$O(1)$};
\draw	(1.4+0.3,-0.4) node[anchor=north,rotate=-45] {$O(\log  n)$};
\draw	(2.6+0.3,-0.4) node[anchor=north,rotate=-45] {$O(\log^2  n)$};
\draw	(3.8+0.3,-0.4) node[anchor=north,rotate=-45] {$O(\log^3  n)$};
\draw	(4.8+0.3,-0.4) node[anchor=north,rotate=-45] {$O(n^\varepsilon)$};
\draw	(5.6+0.3,-0.4) node[anchor=north,rotate=-45] {$\Omega(n)$};

\draw [-] (0.5, 4) node[circle, fill,inner sep=2pt] {} -- (0.5, 4) node[anchor=south west]  
{\hspace{-0.05in}\cite{draief2012convergence, mertzios2014determining,AADFP06,AAE08}} ;
\draw [-] (3.8, 3) node [circle, fill,inner sep=2pt] {} -- (3.8, 3) node[anchor=south west] 
{\hspace{-0.05in}\cite{alistarh2017time}};
\draw [-] (2.6, 3) node [circle, fill,inner sep=2pt] {} -- (2.6, 3) node[anchor=south west] 
{\hspace{-0.05in}\cite{bilke2017brief}};
\draw [-] (1.4, 1.66) node [circle, fill,inner sep=2pt] {} -- (1.4, 1.66) node[anchor=south west] 
{\hspace{-0.05in}\cite{ben2020log3}};
\draw [-] (1.4, 2.33) node [circle, fill,inner sep=2pt] {} -- (1.4, 2.33) node[anchor=south west] 
{\hspace{-0.05in}\cite{berenbrink2018population}};
\draw [-] (1.4, 3) node [circle, fill,inner sep=2pt] {} -- (1.4, 3) node[anchor=south west] 
{\hspace{-0.05in}\cite{alistarh2018space,Berenbrink_majority_2020}};
\draw [-] (5.5, 3) node[circle, fill,inner sep=2pt] {} -- (5.5, 3) node[anchor=south west] 
{\hspace{-0.05in}\cite{alistarh2015fastExactMajority}};
\draw [-] (4.7, 1) node [circle, fill,inner sep=2pt] {} -- (4.7, 1) node[anchor=south west] 
{\hspace{-0.05in}\cite{Berenbrink_majority_2020}};
\draw [-] (5.5, 1) node  [circle, fill,inner sep=2pt] {} -- (5.5, 1) node[anchor=south west] 
{\hspace{-0.05in}\cite{MocquardAABS2015,mocquard2016optimal}};
\draw [-] (1.4, 1) node [circle, fill,inner sep=2pt] {} -- (1.4, 1) node[anchor=south west] 
{\hspace{-0.05in}[$\ast$]};//{this paper};
\end{tikzpicture}

\label{tab:majority-summary}
\vspace{-0.8cm}
\end{table}

Angluin, Aspnes, and Eisenstat~\cite{AAE08-2} showed a protocol they called \emph{approximate majority},
which means that starting from an initial population of $n$ agents with opinions $A$ or $B$, if $|A-B| = \omega(\sqrt{n} \log n)$ 
(i.e., the \emph{gap} between the initial majority and minority counts is greater than roughly $\sqrt{n}$), 
then with high probability the algorithm stabilizes to all agents adopting the majority opinion in $O(\log n)$ time.
A tighter analysis by Condon, Hajiaghayi, Kirkpatrick, and Ma\v{n}uch~\cite{condon2020approximate} reduced the required gap to $\Omega(\sqrt{n \log n})$.

Mertzios, Nikoletseas, Raptopoulos, and Spirakis~\cite{mertzios2014determining},
and independently
Draief and Vojnovi\'{c}~\cite{draief2012convergence},
showed a 4-state protocol that solves \emph{exact} majority problem, i.e., it identifies the majority correctly, no matter how small the initial gap.\footnote{
    The 4-state protocol doesn't identify ties, (gap $=0$), 
    but this can be handled with 2 more states;
    see~\hyperref[alg:slow-stable-majority]{Stable Backup}.
}
We henceforth refer to this simply as the \emph{majority} problem.
The protocol of~\cite{draief2012convergence, mertzios2014determining}
is also \emph{stable} in the sense that it has probability 1 of getting the correct answer.
However, this protocol takes $\Omega(n)$ time in the worst case: when the gap is $O(1)$.
Known work on the stable majority problem is summarized in~\cref{tab:majority-summary}.
G\k{a}sieniec, Hamilton, Martin, Spirakis, and Stachowiak~\cite{gasieniec2017deterministic} investigated $\Omega(n)$ time protocols for majority and the more general ``plurality consensus'' problem.
Blondin, Esparza, Jaax, and Ku\v{c}era~\cite{BlackNinjasInTheDark} show a similar stable (also $\Omega(n)$ time) majority protocol that also reports if there is a tie.

Alistarh, Gelashvili, and Vojnovi\'{c}~\cite{alistarh2015fastExactMajority} showed the first stable majority protocol with worst-case polylogarithmic expected time, requiring $\Omega(n)$ states.
A series of positive results reduced the state and time complexity for stable majority protocols~\cite{alistarh2017time, alistarh2018space, bilke2017brief, berenbrink2018population, MocquardAABS2015, mocquard2016optimal, Berenbrink_majority_2020, ben2020log3}.
Ben-Nun, Kopelowitz, Kraus, and Porat showed the current fastest stable sublinear-state protocol~\cite{ben2020log3} using $O(\log^{3/2} n)$ time and $O(\log n)$ states.
The current state-of-the-art protocols use alternating phases of \emph{cancelling} 
(two biased agents with opposite opinions both become unbiased, preserving the 
{difference between the majority and minority counts})
and \emph{splitting} (a.k.a. \emph{doubling}):
a biased agent converts an unbiased agent to its opinion; if all biased agents that didn't cancel can successfully split in that phase, then the 
{count difference} 
doubles.
The goal is to increase the 
{count difference}
until it is $n$;
i.e., all agents have the majority opinion.
See~\cite{elsasser2018recent, alistarh2018recent} for relevant surveys.

Some non-stable protocols solve exact majority with high probability but have a small positive probability of incorrectness.
Berenbrink, Els\"{a}sser, Friedetzky, Kaaser, Kling, and Radzik~\cite{Berenbrink_majority_2020} showed a protocol that with initial gap $\alpha$ uses $O(s+\log \log n)$ states and WHP converges in $O(\log n\log_s(\frac{n}{\alpha}))$ time.\footnote{
    This protocol is \textsc{SimpleMajority} in~\cite{Berenbrink_majority_2020}, which they then build on to achieve multiple stable protocols.
    The stable protocols require either $\Omega(n)$ stabilization time or $\Omega(\log n)$ states to achieve sublinear stabilization time.
}
Setting $\alpha = 1$ and $s = O(1)$, 
their protocol uses 
$O(\log \log n)$ states and converges in 
$O(\log^2 n)$ time. 
Kosowski and Uznański~\cite{kosowski2018population} showed a protocol using $O(1)$ states and converging in poly-logarithmic time with high probability.

On the negative side,
Alistarh, Aspnes, and Gelashvili~\cite{alistarh2018space} showed that any stable majority protocol taking (roughly) less than linear time requires $\Omega(\log n)$ states if it also satisfies two conditions 
(satisfied by all known stable majority protocols, including ours):
\emph{monotonicity} 
and 
\emph{output dominance}.
These concepts are discussed in \cref{sec:conclusion}.
In particular, the $\Omega(\log n)$ state bound of~\cite{alistarh2018space} applies only to \emph{stable} (probability 1) protocols;
the high probability protocol of~\cite{Berenbrink_majority_2020},
for example,
uses $O(\log \log n)$ states and $O(\log^2 n)$ time.

\subsection{Our contribution}

We show a stable population protocol solving the exact majority problem in optimal $O(\log n)$ time (in expectation and with high probability) that uses $O(\log n)$ states.
Our protocol is both monotone and output dominant 
(see \cref{sec:conclusion} or~\cite{alistarh2018space} for discussion of these definitions),
so by the $\Omega(\log n)$ state lower bound of~\cite{alistarh2018space},
our protocol is both time and space optimal for the class of monotone, output-dominant stable protocols.

A high-level overview of the algorithm is given in~\cref{subsec:high-level-overview,subsec:overview}, with a full formal description given in \cref{subsec:pseudocode}.
Like most known majority protocols using more than constant space
(the only exceptions being in~\cite{Berenbrink_majority_2020}),
our protocol is \emph{nonuniform}:
agents have an estimate of the value $\ceil{\log n}$ embedded in the transition function and state space.
\cref{sec:uniform} describes how to modify our main protocol to make it uniform, retaining the $O(\log n)$ time bound, 
but increasing the state complexity to $O(\log n \log \log n)$ in expectation and with high probability.
That section discusses challenges in creating a uniform $O(\log n)$ state protocol.

\section{Preliminaries}
\label{sec:prelim}



We write $\log n$ to denote $\log_2 n$, and $\ln n$ to denote the natural logarithm. We write $x \sim y$ to denote that $x$ and $y$ are asymptotically equivalent (implicitly in the population size $n$), meaning $\lim\limits_{n\to\infty}\frac{x(n)}{y(n)} = 1$.

\subsection{Population protocols}
\label{subsec:prelim-pop-protocols}

A \emph{population protocol} is a pair $\calP=(\Lambda,\delta)$,
where $\Lambda$ is a finite set of \emph{states}, and $\delta: \Lambda \times \Lambda \to \Lambda \times \Lambda$ 
is the \emph{transition function}.\footnote{
    To understand the full generality of our main protocol,
    we include randomized transitions in our model.
    However, there is only one type of randomized transition in the protocol (the ``drip reactions'' of \phaseMainAveraging\ described in~\cref{subsec:overview}), parameterized by probability $p$, and in fact we prove the protocol works even when these transitions are deterministic, i.e., when $p=1$.
}
In this paper we deal with \emph{nonuniform} protocols in which a different $\Lambda$ and $\delta$ are allowed for different population sizes $n$ 
(one for each possible value of $\ceil{\log n}$),
but we abuse terminology and refer to the whole family as a single protocol.
In all cases (as with similar nonuniform protocols),
the nonuniformity is used to embed the value $\ceil{\log n}$ into each agent; 
the transitions are otherwise ``uniformly specified''.
See Section~\ref{sec:uniform} for more discussion of uniform protocols.

A \emph{configuration} $\vc$ of a population protocol is a multiset over $\Lambda$ of size $n$, giving the states of the $n$ agents in the population. For a state $s \in \Lambda$, we write $\vc(s)$ to denote the count of agents in state $s$.
A \emph{transition} (a.k.a., \emph{reaction}) 
is a 4-tuple $\alpha = (r_1,r_2,p_1,p_2)$,
written $\alpha: r_1,r_2 \to p_1,p_2$,
such that $\delta(r_1,r_2) = (p_1,p_2)$.
If an agent in state $r_1$ interacts with an agent in state $r_2$, then they change states to $p_1$ and $p_2$.
For every pair of states $r_1,r_2$ without an explicitly listed transition $r_1,r_2 \to p_1,p_2$, there is an implicit \emph{null} transition $r_1,r_2 \to r_1,r_2$ in which the agents interact but do not change state.
For our main protocol,
we specify transitions formally with pseudocode that indicate how agents alter each independent field in their state.

\newcommand{\inputvector}{\vv_{\mathrm{i},\alpha}}
\newcommand{\outputvector}{\vv_{\mathrm{o},\alpha}}
\newcommand{\transitionvector}{\vv_{\alpha}}

More formally, given a configuration $\vc$ and transition $\alpha: r_1,r_2 \to p_1,p_2$, we say that $\alpha$ is \emph{applicable} to $\vc$ if $\{r_1,r_2\} \subseteq \vc$, i.e., $\vc$ contains 2 agents, one in state $r_1$ and one in state $r_2$.
If $\alpha$ is applicable to $\vc$, then we write $\vc\reach_\alpha\vc'$, where $\vc' = \vc - \{r_1,r_2\} + \{p_1,p_2\}$ is the configuration that results from applying $\alpha$ to $\vc$.
We write $\vc \reach \vc'$,
and we say that $\vc'$ is \emph{reachable} from $\vc$,
if there is a sequence $(\alpha_0,\alpha_1,\ldots,\alpha_k)$ of transitions such that
$\vc\reach_{\alpha_0}\vc_1\reach_{\alpha_1}\ldots\reach_{\alpha_k}\vc'$.
This notation omits mention of $\calP$; we always deal with a single protocol at a time, so it is clear from context which protocol is defining the transitions.

\subsection{Stable majority computation with population protocols}
There are many modes of computation considered in population protocols:
computing integer-valued functions~\cite{CheDotSolNaCo,DotHajLDCRNNaCo, belleville2017time} where the number of agents in a particular state is the output,
Boolean-valued predicates~\cite{AAE06, AAE08} where each agent outputs a Boolean value as a function of its state and the goal is for all agents eventually to have the correct output,
problems such as leader election~\cite{LeaderElectionDIST, alistarh2017time, berenbrink2018simple,  alistarh2015polylogarithmic, GS18, gasieniec2019almost, berenbrink2020optimal},
and generalizations of predicate computation, where each agent individually outputs a value from a larger range, such as reporting the population size~\cite{doty2018exact, doty2019efficient, Berenbrink2019counting}.
Majority computation is Boolean-valued if computing the predicate ``$A \geq B$?'', where $A$ and $B$ represent the initial counts of two opinions $\A$ and $\B$.
We define the slightly generalized problem that requires recognizing when there is a tie,
so the range of outputs is $\{\A,\B,\T\}$.

Formally, if the set of states is $\Lambda$, the protocol defines a disjoint partition of $\Lambda = \Lambda_\A \cup \Lambda_\B \cup \Lambda_\T$.
For $u \in \{\A,\B,\T\}$, 
if $a\in\Lambda_u$ for all $a\in\vc$,
we define \emph{output} $\phi(\vc) = u$ 
(i.e., all agents in $\vc$ agree on the output $u$).
Otherwise
$\phi(\vc)$ is undefined 
(i.e., agents disagree on the output).
We say $\vo$ is \emph{stable} if $\phi(\vo)$ is defined and, for all $\vo_2$ such that $\vo \reach \vo_2$, $\phi(\vo) = \phi(\vo_2)$,
i.e., the output cannot change.

The protocol identifies two special 
\emph{input states} $A,B \in \Lambda$.
A \emph{valid} initial configuration $\vi$ satisfies
$a\in\{A,B\}$ for all $a\in\vi$.
We say the \emph{majority opinion} of $\vi$ is 
$M(\vi) = \A$ if $\vi(A) > \vi(B)$, 
$M(\vi) = \B$ if $\vi(A) < \vi(B)$, 
and
$M(\vi) = \T$ if $\vi(A) = \vi(B)$.
The protocol \emph{stably computes majority} if,
for any valid initial configuration $\vi$,
for all $\vc$ such that $\vi \reach \vc$,
there is a stable $\vo$ such that $\vc \reach \vo$
and $\phi(\vo) = M(\vi)$.
Let $O_{\vi} = \{\vo:\phi(\vo)=M(\vi)\}$ be the set of all correct, stable configurations.
In other words,
for any reachable configuration, it is possible to reach a correct, stable configuration, or equivalently reach a strongly connected component in $O_{\vi}$.


\subsection{Time complexity}

In any configuration the next interaction is chosen by selecting a pair of agents uniformly at random and applying an applicable transition,
with appropriate probabilities for any randomized transitions.
Thus the sequence of transitions and configurations they reach are random variables.
To measure time we count the total number of interactions (including null transitions such as $a,b \to a,b$ in which the agents interact but do not change state),
and divide by the number of agents $n$.
In the population protocols literature, this is often called ``parallel time'':
$n$ interactions among a population of $n$ agents equals one unit of time.

If the protocol stably computes majority, then for any valid initial configuration $\vi$, the probability of reaching a stable, correct configuration, $\Pr[\vi\reach O_{\vi}] = 1$.\footnote{
    Since population protocols have a finite reachable configuration space, this is equivalent to the stable computation definition that for all $\vc$ reachable from $\vi$, there is a $\vo' \in O_{\vi}$ reachable from $\vc$.
}
We define the \emph{stabilization time} $S$ to be the random variable giving the 
time to reach a configuration $\vo \in O_{\vi}$.

When discussing random events in a protocol of population size $n$,
we say event $E$ happens 
\emph{with high probability} 
if $\Pr[\neg E] = O(n^{-c})$,
where $c$ is a constant that depends on our choice of parameters in the protocol,
where $c$ can be made arbitrarily large by changing the parameters.
In other words, the probability of failure can be made an arbitrarily small polynomial.
For concreteness, we will write a particular polynomial probability such as $O(n^{-2})$,
but in each case we could tune some parameter 
(say, increasing the time complexity by a constant factor)
to increase the polynomial's exponent.
We say event $E$ happens 
\emph{with very high probability}
if $\Pr[\neg E] = O(n^{-\omega(1)})$,
i.e., if its probability of failure is smaller than any polynomial probability.

\section{Nonuniform majority algorithm description}
\label{sec:algorithm}

The goal of Sections~\ref{sec:algorithm} through \ref{sec:analysis-final-phases} is to show the following theorem:

\begin{restatable}{theorem}{replicatedThmNonuniformAlgorithm}
\label{thm:nonuniform-algorithm}
There is a nonuniform population protocol \majorityNonuniform,
using $O(\log n)$ states,
that stably computes majority in $O(\log n)$ stabilization time, both in expectation and with high probability.
\end{restatable}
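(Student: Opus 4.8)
The plan is to prove \cref{thm:nonuniform-algorithm} by building \majorityNonuniform\ as a fixed sequence of loosely synchronized \emph{phases}, analyzing a \emph{fixed resolution clock} subroutine that the agents use to agree on which phase is current, and showing that with high probability all phases complete correctly within $O(\log n)$ total time; a slow, always-correct backup running underneath in its own field then upgrades ``with high probability'' to probability-$1$ stabilization and keeps the expected time at $O(\log n)$.

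First I would specify the protocol. In \phaseInitialize\ the agents partition themselves, via an $O(1)$-state subroutine, into a constant fraction of $\roleclock$ agents, a constant fraction of $\rolereserve$ agents, and a remaining majority of $\rolemain$ agents, and seed the clock. The $\rolemain$ agents carry the ``opinion mass'': each starts with bias $\pm 1$ at exponent $0$, and the averaging phases repeatedly apply \emph{cancel} (two opposite biases at the same exponent annihilate to unbiased) and \emph{split}/\emph{double} (a biased agent at exponent $i$ and an unbiased agent both become biased at exponent $i+1$), operations that together preserve the signed quantity $\Phi = \sum_a \mathrm{sign}(a)\,2^{-\fieldexponent(a)}$, which starts at $A-B$. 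If the population is driven to all $\rolemain$ agents sharing exponent $L$ and sign $\sigma$, then $\Phi = n\sigma 2^{-L} = A-B$, so $\sigma = M(\vi)$ and $L = \log_2(n/|A-B|)$; thus $O(\log n)$ exponents suffice, and the phases aim to reach such a configuration, or failing that, to leave only a manageably small minority. The $\roleclock$ agents maintain an $\fieldhour$ of size $\Theta(\log n)$ (advanced by the ``drip'' reactions of \phaseMainAveraging), which all agents read to select their phase. Because approximate averaging cannot by itself certify \emph{exact} majority or a tie, \phaseReserveSample\ through \phaseConsensusTwo\ use the frozen $\rolereserve$ agents to run \phaseHighMinorityElimination\ and \phaseLowMinorityElimination\ (two regimes according to the size of the residual minority) and a final consensus, and \phaseDetectTie\ recognizes $A=B$ from $\Phi=0$. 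Underneath all of this, a dedicated $\fieldbackup$ field runs the slow $O(1)$-state \emph{Stable Backup} majority protocol, untouched by the fast protocol except that the overall output function prefers the fast decision only when a global consistency check passes.

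The two main technical components are the clock analysis and the averaging analysis. For the clock I would prove a \emph{fixed resolution clock lemma}: with failure probability $O(n^{-c})$, throughout the entire $\Theta(\log n)$-time run every agent's $\fieldhour$ stays within an $O(1)$-width window of every other's --- a drip advances a few agents to the next hour, the new hour then spreads by epidemic in $O(\log n)$ time, and a Chernoff bound on per-agent interaction counts shows no agent runs too far ahead or lags too far behind --- so each hour lasts $\Theta(\log n)$ time, there are $O(1)$ hours per phase and $O(1)$ phases, and the total time is $O(\log n)$. Conditioned on this synchronization I would analyze the cancel/split dynamics: a potential-function and concentration argument shows that within the allotted phases the biased agents are pushed up to (essentially) a common exponent with the correct sign, so that either all $\rolemain$ agents already agree on $M(\vi)$, or the number still holding the minority opinion is small enough --- relative to the $\rolereserve$ pool --- to be driven to zero by the elimination phases; conservation of $\Phi$ is what makes the decision \emph{exact} and what detects ties.

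Finally I would assemble the pieces. For probability-$1$ stabilization I would show that from \emph{any} reachable configuration, including every configuration reachable after a rare clock or averaging failure, the $\fieldbackup$ field can still drive the population to a stable configuration with output $M(\vi)$, and that the fast protocol overrides the backup output only after an epidemic-checked global agreement that it has correctly finished; hence the protocol stably computes majority in the sense of \cref{subsec:prelim-pop-protocols}. For the time bound, the two analyses give stabilization in $O(\log n)$ time with probability $1-O(n^{-c})$, and in the complementary event the backup stabilizes in $\mathrm{poly}(n)$ time, contributing $O(n^{-c})\cdot\mathrm{poly}(n)=o(\log n)$ to the expectation for $c$ large enough. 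The state bound comes from a careful accounting of producible states: although several fields are used, only a single $\Theta(\log n)$-valued quantity is stored at a time (and reused across roles and phases) while the rest are $O(1)$-valued, so the reachable state set has size $O(\log n)$. I expect the \textbf{main obstacle} to be exactly the tension between the coarse, never-perfectly-synchronized clock and the demand for \emph{exact} majority: a single split or cancel performed in the wrong hour can perturb $\Phi$, so the analysis must show such events cannot happen under the $O(1)$-window guarantee --- or are detected and corrected before any agent commits --- and, when all else fails, that the backup still recovers a correct stable configuration from whatever configuration results. Welding clock concentration, exactness of the averaging, and backup recoverability into one argument is the heart of the proof and occupies Sections~\ref{sec:algorithm}--\ref{sec:analysis-final-phases}.
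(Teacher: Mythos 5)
Your skeleton is right (population splitting into $\rolemain/\roleclock/\rolereserve$, cancel/split reactions conserving $\sum_a a.\fieldbias = g$, reserve cleanup, tie detection via the invariant, backup fallback, and the state-accounting argument), but there are three concrete gaps, the first of which is fatal to the time bound.

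\emph{The clock timing is backwards.} You describe the clock as producing $O(1)$ hours per phase, each lasting $\Theta(\log n)$ time, with a newly reached hour spreading ``by epidemic in $O(\log n)$ time.'' But \phaseMainAveraging\ must drive the exponent from $0$ down to $-L = -\lceil \log n \rceil$, one exponent per hour, so it needs $\Theta(\log n)$ hours. With $\Theta(\log n)$ time per hour that is $\Theta(\log^2 n)$ time, i.e.\ exactly the prior art (\cite{alistarh2018space,Berenbrink_majority_2020}) our paper is improving on. The paper's central technical contribution is the \emph{fixed-resolution clock}: each hour takes only $\Theta(1)$ time, because the clock's drip-plus-epidemic dynamics keep a large \emph{constant fraction} (not all) of clock agents within an $O(1)$ window, and the analysis (\cref{thm:clock-front-tail}, \cref{thm:clock-bounds}, \cref{thm:clock}) shows the front tail decays doubly exponentially so that very few agents run ahead. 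The epidemic only needs to carry a new minute from fraction $\Theta(1)$ to fraction $\Theta(1)$, which is $O(1)$ time by \cref{lem:epidemic}, not $O(\log n)$. Your proposal, as written, cannot give $O(\log n)$ total time while visiting $\Theta(\log n)$ exponents; you must either rediscover the $O(1)$-per-hour clock or be stuck at $\Theta(\log^2 n)$.

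\emph{The main obstacle is misidentified.} You worry that ``a single split or cancel performed in the wrong hour can perturb $\Phi$.'' That cannot happen: every split and every cancel in the protocol preserves $\sum_a a.\fieldbias$ exactly, by construction, with probability~1, and this is precisely what makes the protocol stable rather than merely WHP-correct. The actual difficulty created by imperfect synchronization is \emph{distributional}: exponents spread out, the count of $\unbiased$ agents shrinks, cancel reactions slow down, and a small residue of minority agents survives \phaseMainAveraging\ at scattered exponents. That residue is what Phases \ref{phase:reserve-sample}--\ref{phase:lowminorityelimination} clean up --- the reserve sampling/splitting, the gap-$\leq 2$ generalized cancels, and crucially the ``consumption'' reactions of \phaseLowMinorityElimination, which sacrifice exact knowledge of an agent's bias but provably never flip its sign. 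Your proof plan does not engage with why those specific mechanisms are needed or why they preserve correctness.

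\emph{The backup is a fallback phase, not a parallel layer.} In the paper, \phaseStableBackup\ is entered only when an error is detected (at \phaseDiscreteAveraging\ init, \phaseConsensus\ init, or \phaseConsensusTwo), and those agents reinitialize from $\fieldinput$. Your proposal runs the backup continuously in a separate $\fieldbackup$ field with an output function that ``prefers the fast decision only when a global consistency check passes.'' That is a genuinely different design, and it is harder, not easier: you would need to argue that the combined output map is itself stable, i.e.\ that once the fast layer commits, no later backup transition can change any agent's output and vice versa. The phase-based design sidesteps this entirely because the two mechanisms never produce output simultaneously. If you keep your design, you owe a stability argument for the output-combining rule; if you adopt the paper's design, the probability-1 argument reduces to observing (i) the bias invariant holds always, (ii) stopping in \phaseConsensus/\phaseConsensusTwo\ or \phaseDetectTie\ is correct whenever it occurs, and (iii) any detected inconsistency routes to a protocol that is correct from $\fieldinput$ alone.

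Finally, one smaller confusion worth flagging: you say agents read $\fieldhour$ ``to select their phase.'' In the paper, $\fieldphase$ advances via the $\fieldcounter$ (leaderless phase clock) and is orthogonal to $\fieldhour$; $\fieldhour$/$\fieldminute$ live only inside \phaseMainAveraging\ to gate the split reactions. Keeping these two clocks distinct matters for the $O(\log n)$ state count, since each is a $\Theta(\log n)$-valued field and an agent may store only one of them at a time.
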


\subsection{High-level overview of algorithm}
\label{subsec:high-level-overview}

In this overview we use ``pseudo-transitions'' such as $A,B \to \unbiased,\unbiased$ to describe agents updating a portion of their states, while ignoring other parts of the state space.

Each agent initially has a $\fieldbias$: $+1$ for opinion $A$ and $-1$ for opinion $B$, so 
the population-wide sum $g = \sum_v v.\fieldbias$ gives the \emph{initial gap} between opinions. 
The majority problem is equivalent to determining $\text{sign}(g)$. 
Transitions redistribute biases among agents but,
to ensure correctness,
maintain the population-wide $g$ as an invariant.
Biases change through \emph{cancel reactions} $+\frac{1}{2^i},-\frac{1}{2^i}\rightarrow 0, 0$ and \emph{split reactions} $\pm\frac{1}{2^i}, 0 \rightarrow \pm\frac{1}{2^{i+1}}, \pm\frac{1}{2^{i+1}}$, down to a minimum $\pm\frac{1}{2^L}$. The constant $L = \ceil{\log_2(n)}$ ensures $\Theta(\log n)$ possible states.
The \emph{gap}
is defined to be $\sum_v \text{sign}(v.\fieldbias)$, the difference in counts between majority and minority biases. Note the gap should grow over time to spread the correct majority opinion to the whole population, while the invariant $g$ should ensure correctness of the final opinion.

The cancel and split reactions average the bias value between both agents,
but only when the average is also a power of 2, or 0. 
If we had averaging reactions between all pairs of biases (also allowing, e.g.,  $\frac{1}{2},\frac{1}{4} \to \frac{3}{8},\frac{3}{8}$), then all biases would converge to $\frac{g}{n}$,
but this would use too many states.\footnote{
This was effectively the approach used for majority in \cite{alistarh2015fastExactMajority, mocquard2016optimal}, for an $O(n)$ state, $O(\log n)$ time protocol.}
With our limited set $\{0,\pm\frac{1}{2}, \pm\frac{1}{4}, \ldots, \pm\frac{1}{2^L}\}$ of possible biases, allowing all cancel and split reactions simultaneously does not work. 
Most biases appear simultaneously across the population, reducing the count of each bias,
which slows the rate of cancel reactions.
Then the count of unbiased $0$ agents is reduced, 
which slows the rate of split reactions, see \cref{fig:ba-no-synchronization}. 
Also, there is a non-negligible probability for the initial minority opinion to reach a much greater count, 
if those agents happen to do more split reactions, see \cref{fig:ba-no-synchronization2}. 
Thus using only the count of positive versus negative biases will not work to solve majority even with high probability.

To solve this problem, we partially synchronize the unbiased agents with a field $\fieldhour$, adding $\log n$ states $0_0, 0_1, 0_2, \ldots, 0_L$. The new split reactions
\[
\pm\frac{1}{2^i}, 0_h \rightarrow \pm\frac{1}{2^{i+1}}, \pm\frac{1}{2^{i+1}}
\quad
\text{ if } h > i
\]
will wait until $\fieldhour \geq h$ before doing splits down to $\fieldbias = \pm\frac{1}{2^h}$. 
We could use existing phase clocks to perfectly synchronize $\fieldhour$, by making each hour use $\Theta(\log n)$ time,
enough time for all opinionated agents to split. 
Then WHP all agents would be in states $\{0_h, +\frac{1}{2^h}, -\frac{1}{2^h}\}$ by the end of hour $h$,
see~\cref{fig:ba-perfect-synchronization}.
The invariant $g = \sum_v v.\fieldbias$ implies that all minority opinions would be eliminated by hour $\ceil{\log_2 \frac{1}{g}} \leq L$. 
This would give an $O(\log n)$-state, $O(\log^2 n)$-time majority algorithm, essentially equivalent to \cite{alistarh2018space, Berenbrink_majority_2020}.

The main idea of our algorithm is to use these rules with a faster clock using only $O(1)$ time per hour. 
The $\fieldhour$ field of unbiased agents is synchronized to a separate subpopulation of clock agents, who use a field $\fieldminute$, with $k$ consecutive minutes per hour. 
Minutes advance by \emph{drip reactions}
$C_i, C_i \rightarrow C_{i}, C_{i+1}$, and catch up by \emph{epidemic reactions}
$C_i, C_j \rightarrow C_{\max(i,j)}, C_{\max(i,j)}$. See 
\cref{fig:ba-our-protocols}
for an illustration of the clock $\fieldminute$ and $\fieldhour$ dynamics.

\begin{figure}
     \centering
     \begin{subfigure}[t]{0.49\textwidth}
         \centering
         \includegraphics[trim={2cm 0 3cm 0}, width=\textwidth]{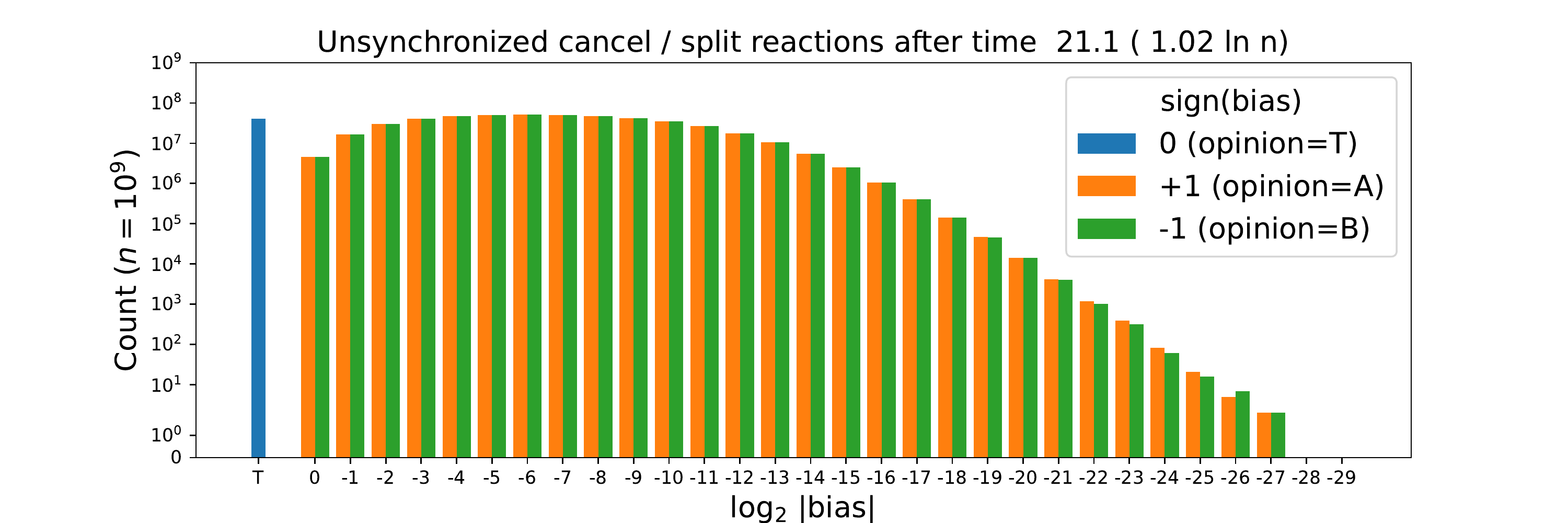}
         \caption{Cancel/split reactions with no synchronization. All states become present, many in about equal counts. 
         Rate of cancel reactions and fraction of 0 agents are $\Theta(\frac{1}{\log n})$.}
         \label{fig:ba-no-synchronization}
     \end{subfigure}
    \hfill
    \begin{subfigure}[t]{0.49\textwidth}
         \centering
         \includegraphics[trim={2cm 0 3cm 0}, width=\textwidth]{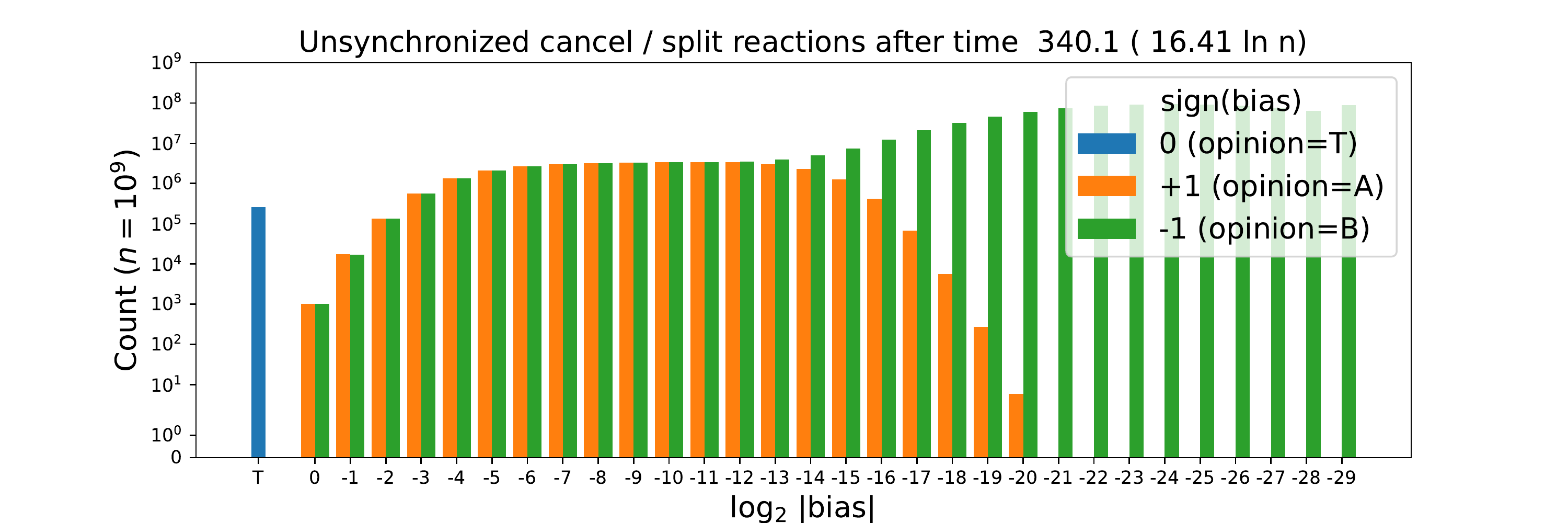}
         \caption{Later snapshot of the simulation in \cref{fig:ba-no-synchronization}. The initial minority $B$ now has a much larger count, because those agents happened to undergo more split reactions.}
         \label{fig:ba-no-synchronization2}
     \end{subfigure}
     \\
     \begin{subfigure}[t]{0.49\textwidth}
         \centering
         \includegraphics[trim={2cm 0 3cm 0}
         , width=\textwidth]{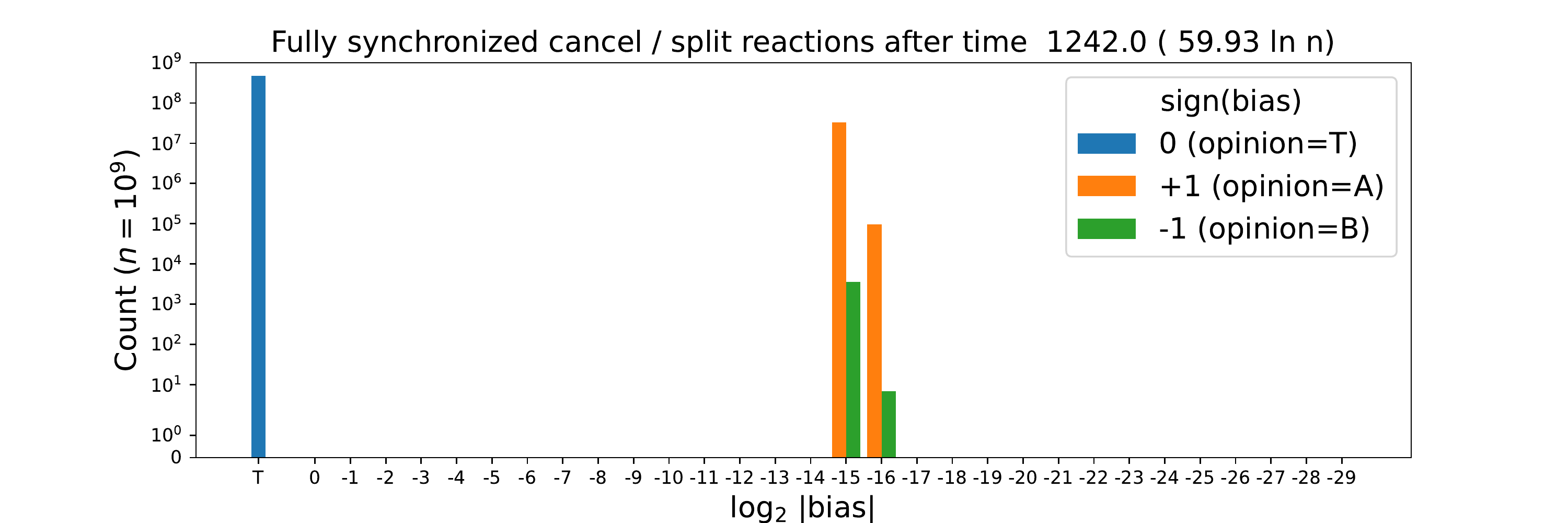}
         \caption{Cancel/split reactions, fully synchronized into $O(\log n)$ time hours, at the beginning of hour 16.
         All minority are eliminated by hour $\log n$ in $O(\log^2 n$) time.}
         \label{fig:ba-perfect-synchronization}
     \end{subfigure}
     \hfill
     \begin{subfigure}[t]{0.49\textwidth}
         \centering
         \includegraphics[trim={2cm 0 3cm 0}
         ,width=\textwidth]{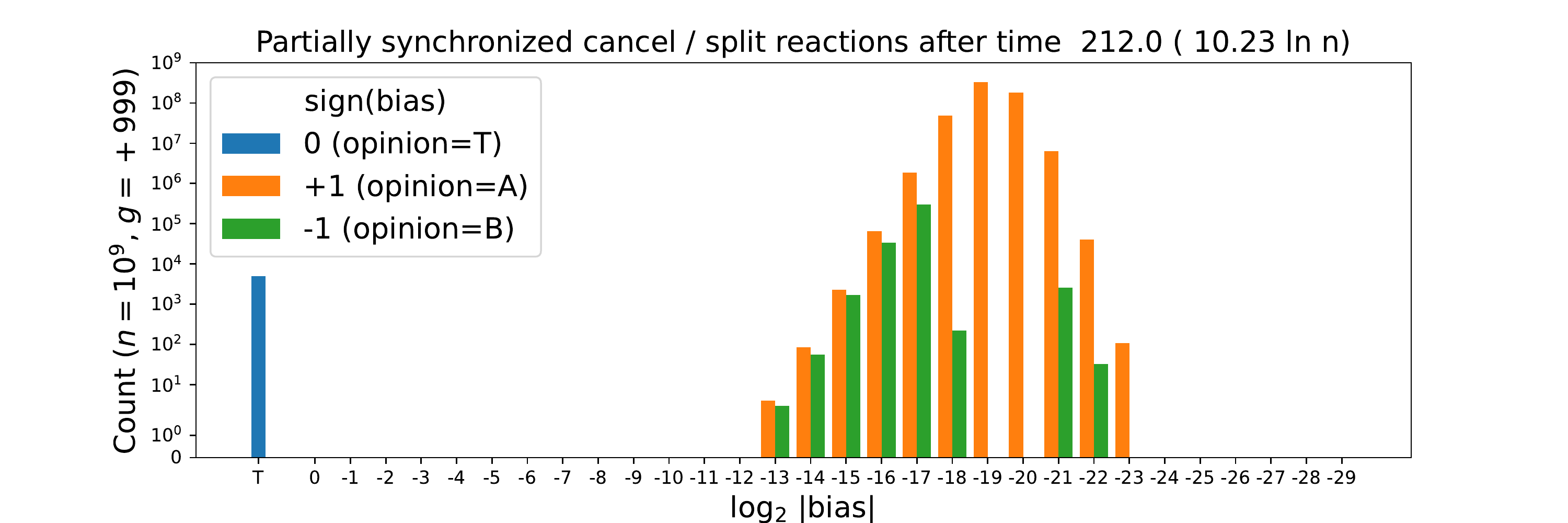}
         \caption{
         Main phase of our protocol,
         split reactions partially synchronized using the clock in~\cref{fig:ba-our-protocols}, 
         at the end of this $O(\log n)$ time phase. Most agents are left with 
         $\fieldbias \in \left\{
         +\frac{1}{2^{18}},
         +\frac{1}{2^{19}},
         +\frac{1}{2^{20}}
         \right\}$.
         Later phases eliminate the remaining minority agents.
         }
         \label{fig:ba-final-averaging-config}
     \end{subfigure}
  \caption{Cancel / split reactions with 
  no synchronization (\ref{fig:ba-no-synchronization},\ref{fig:ba-no-synchronization2}), 
  perfect synchronization
  (\ref{fig:ba-perfect-synchronization}),
  and
  partial synchronization 
  (\ref{fig:ba-final-averaging-config})
  via the fixed-resolution phase clock of our main protocol.
  Plots generated from~\cite{examplenotebook}.}
  \label{fig:ba-simple-examples}
\end{figure}

\begin{figure*}
     \centering
         \centering
         \includegraphics[trim={2cm 0 2cm 0}, width=0.79\textwidth]{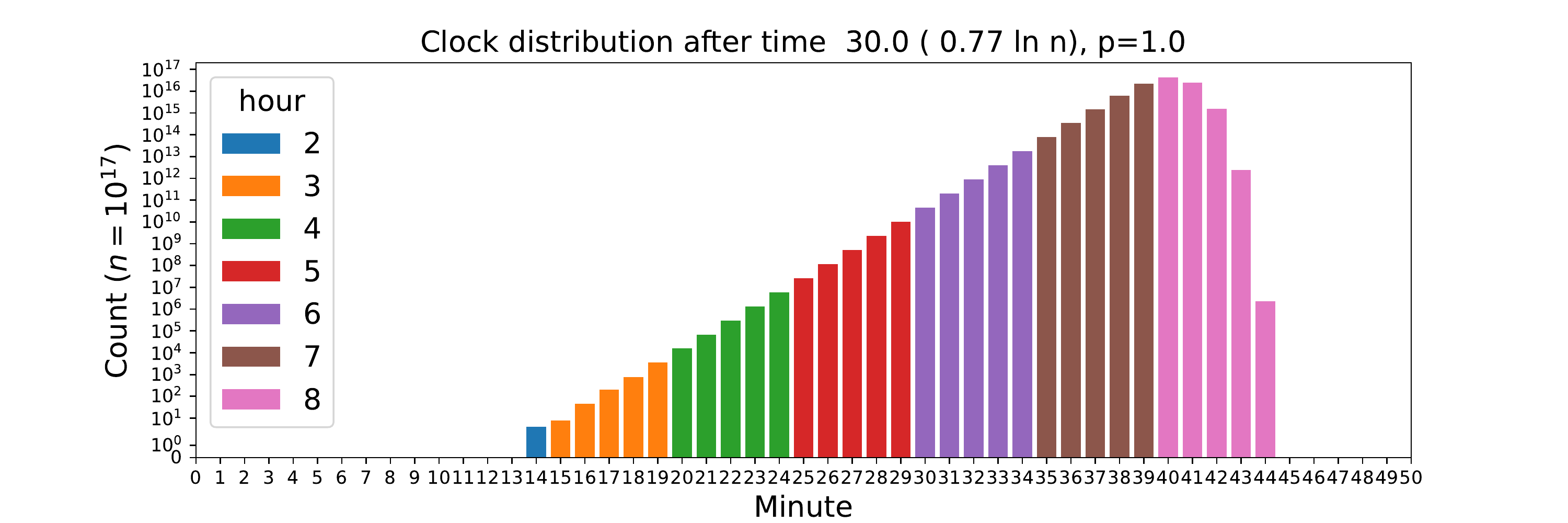}
     
        \caption{Clock rules 
        of our protocol, 
        showing a travelling wave distribution over minutes, on a larger population of size $n=10^{17}$ to emphasize the distribution.
        The distribution's back tail decays exponentially, 
        and its front tail decays doubly exponentially.
        A large constant fraction of agents are in the same two consecutive $\fieldhour$'s (here 7 and 8).
        Plot generated from~\cite{examplenotebook}. 
        }
        \label{fig:ba-our-protocols}
\end{figure*}

Since $O(1)$ time per hour is not sufficient to bring all agents up to the current hour before advancing to the next,
we now have only a large constant fraction of agents, rather than all agents, 
synchronized in the current hour.
Still, we prove this looser synchronization keeps the values of $\fieldhour$ and $\fieldbias$
relatively concentrated, so by the end of this phase, we reach a configuration as shown in~\cref{fig:ba-final-averaging-config}. 
Most agents have the majority opinion
(WLOG positive), 
with three consecutive biases 
$+\frac{1}{2^l},+\frac{1}{2^{l+1}},+\frac{1}{2^{l+2}}$.

\paragraph{Detecting ties.}
    This algorithm gives an elegant way to detect a tie with high probability.
    In this case, $g=0$, and with high probability, all agents will finish the phase with 
    $\fieldbias \in \left\{0, \pm\frac{1}{2^L} \right\}$.
    Checking this condition
    \emph{stably} detects a tie (i.e., with probability 1, if this condition is true, then there is a tie),
    because for any nonzero value of $g$, there must be some agent with $|\fieldbias| > \frac{1}{2^L}$.

\paragraph{Cleanup Phases.}
    We must next eliminate all minority opinions, while still relying on the invariant $g = \sum_v v.\fieldbias$ to ensure correctness. Note that is it possible with low probability to have a greater count of minority opinions (with smaller values of $\fieldbias$), so only relying on counts of positive and negative biases would give possibilities of error.
    
    We first remove any minority agents with large bias, by using an additional subpopulation of $\rolereserve$ agents that enable additional split reactions for large values of $|\fieldbias| > \frac{1}{2^l}$. Then after cancel reactions with the bulk of majority agents, the only minority agents left must have $|\fieldbias| < \frac{1}{2^{l+2}}$.
    
    To then remove minority agents with small bias, we allow agents with larger bias to ``consume'' agents with smaller bias, such as an interaction between agents $+\frac{1}{4}$ and $-\frac{1}{256}$. Here the positive agent can be thought to hold the entire bias $+\frac{1}{4} -\frac{1}{256} = +\frac{63}{256}$, but since this value is not in the allowable states, it can only store that its bias is in the range $+\frac{1}{8} \leq \fieldbias \leq +\frac{1}{4}$. Without knowing its exact $\fieldbias$, this agent cannot participate in future averaging interactions. However, we show there are enough available majority agents to eliminate all remaining minority via these \emph{consumption reactions}. Thus with high probability, all minority agents are eliminated.
    
    A final phase checks for the presence of both positive and negative $\fieldbias$, and if one has been completely eliminated, it stabilizes to the correct output. In the case where both are present, this is a detectable error, where we can move to a slow, correct backup that uses the original inputs.
    Due to the low probability of this case, it contributes negligibly to the total expected time.


\subsection{Intuitive description of each phase}
\label{subsec:overview}

Our full protocol is broken up into 11 consecutive phases.
We describe each phase intuitively before presenting full pseudocode in~\cref{subsec:pseudocode}. 
Note that some further separation of phases was done to create more straightforward proofs of correctness, so simplicity of the proofs was optimized over simplicity of the full protocol pseudocode. It is likely possible to have simpler logic that still solves majority via the same strategy.


\begin{description}
    \item[\phaseInitialize:]
    ``Population splitting''~\cite{alistarh2018recent} divides agents into roles used in subsequent phases:
    $\rolemain, \rolereserve, \roleclock$.
    In timed phases 
    (those not marked as \emph{Untimed} or \emph{Fixed-resolution clock}, 
    including the current phase),
    $\roleclock$ agents count from 
    $\Theta(\log n)$ to 0 to cause the switch to the next phase after $\Theta(\log n)$ time.
    
    ``Standard'' population splitting uses reactions such as $x,x \to r_1,r_2$ to divide agents into two roles $r_1,r_2$.
    This takes $\Theta(n)$ time to converge,
    which can be decreased to $\Theta(\log n)$ time via $r_1,x \to r_1,r_2$ and $r_2,x \to r_2,r_1$,
    while maintaining that $\# r_1$ and $\# r_2$ are both $n/2 \pm \sqrt{n}$ WHP.
    However, since all agents initially have an opinion,
    but $\roleclock$ and $\rolereserve$ agents do not hold an opinion,
    agents that adopt role $\roleclock$ or $\rolereserve$ must first pass off their opinion to a $\rolemain$ agent.
    
    From each interacting pair of unassigned agents, one will take the $\rolemain$ role and hold the opinions of both agents, interpreting $A$ as $+1$ and $B$ as $-1$. 
    This $\rolemain$ agent will then be allowed to take at most one other opinion
    (in an additional reaction that enables rapid convergence of the population splitting), 
    and holding 3 opinions can end up with a
    bias in the range $\{-3,-2,-1,0,+1,+2,+3\}$.

    \item[\phaseDiscreteAveraging:]
    Agents do ``integer averaging''~\cite{alistarh2015polylogarithmic} of biases in the set $\{-3,\ldots,+3\}$ via reactions 
    $i,j \to \floor{\frac{i+j}{2}}, \ceil{\frac{i+j}{2}}$.
    Although taking $\Theta(n)$ time to converge in some cases, 
    this process is 
    known~\cite{mocquard2020stochastic} to result in three consecutive values in $O(\log n)$ time.
    If those three values are detected to be $\{-1,0,+1\}$ in the next phase, the algorithm continues.

    \item[\phaseConsensus:]
    (\emph{Untimed})
    Agents propagate the set of opinions (signs of biases) remaining after \phaseDiscreteAveraging\ to detect if only one opinion remains.
    If so,
    we have converged on a majority consensus,
    and
    the algorithm halts here (see \cref{fig:simulation-linear}).
    At this point, this is essentially the exact majority protocol of~\cite{mertzios2014determining},
    which takes $O(\log n)$ time with an initial
    gap $\Omega(n)$, but longer for sublinear gaps (e.g., $\Omega(n)$ time for a gap of 1).
    Thus, if agents proceed beyond this phase
    (i.e., if both opinions $\A$ and $\B$ remain at this point),
    we will use later that
    the gap was smaller than $0.025\cdot\#\rolemain$.
    With low probability both opinions remain but some agent has $|\fieldbias| > 1$, in which case we proceed directly to a slow stable backup protocol in \phaseStableBackup.

    \item[\phaseMainAveraging:]
    (\emph{Fixed-resolution clock})
    The key goal at this phase is to use cancel and split reactions to average the bias across the population to give almost all agents the majority opinion.
    Biased agents hold a field $\fieldlevel\in\{-L,\ldots,-1,0\}$, describing the magnitude $|\fieldbias| = 2^{\fieldlevel}$,
    a quantity we call the agent's \emph{mass}. 
    Cancel reactions eliminate opposite biases
    $+\frac{1}{2^i},-\frac{1}{2^i}\rightarrow 0,0$ with the same $\fieldlevel$;
    cancel reactions strictly reduce total mass. 
    Split reactions $\pm\frac{1}{2^i},0 \rightarrow \pm\frac{1}{2^{i+1}},\pm\frac{1}{2^{i+1}}$ give half of the bias to an unbiased agent, decrementing the $\fieldlevel$;
    split reactions preserve the total mass. 
    The unbiased $\unbiased$ agents, with $\fieldrole = \rolemain$, $\fieldopinion = \fieldbias = 0$, act as the fuel for split reactions. 
    

    
    We want 
    to obtain tighter synchronization in the exponents than~\cref{fig:ba-no-synchronization},
    approximating the ideal synchronized behavior of the $O(\log^2 n)$ time algorithm of \cref{fig:ba-perfect-synchronization} while using only $O(\log n)$ time.
    To achieve this,
    the $\roleclock$ agents run a ``fixed resolution'' clock that keeps them roughly synchronized 
    (though not perfectly; see~\cref{fig:ba-our-protocols})
    as they count their ``minutes'' from 0 up to $L' = k L$,
    using $O(1)$ time per minute.
    This is done via ``drip'' reactions
    $C_i,C_i \to C_{i},C_{i+1}$ 
    (when minute $i$ gets sufficiently populated, pairs of $C_i$ agents meet with sufficient likelihood to increment the minute) and
    $C_{j},C_i \to C_{j},C_{j}$ for $i < j$
    (new higher minute propagates by epidemic).\footnote{
        This clock is similar to the power-of-two-choices leaderless phase clock of~\cite{alistarh2018space},
        where the agent with smaller (or equal) minute increments their clock ($C_{j},C_i \to C_{j},C_{i+1}$ for $i \leq j$), but increasing the smaller minute by only 1. 
        Similarly to our clock, the maximum minute can increase only with both agents at the same minute.
        A similar process was analyzed in~\cite{berenbrink2006balanced}, and in fact was shown to have the key properties needed for our clock to work---an exponentially-decaying back tail and a double-exponentially-decaying front tail---so it seems likely that a power-of-two-choices clock could also work with our protocol.
        
        The randomized variant of our clock with drip probability $p$ is also similar to the ``junta-driven'' phase clock of \cite{GS18}, 
        but with a linear number $2pn$ of agents in the junta, using $O(1)$ time per minute, rather than the $O(n^\epsilon)$-size junta of~\cite{GS18},
        which uses $O(\log n)$ time per minute.
        There, smaller minutes are brought up by epidemic, and only an agent in the junta seeing another agent at the same minute will increment. 
        The epidemic reaction is exactly the same in both rules. The probability $p$ of a drip reaction can be interpreted as the probability that one of the agents is in the junta.
        For similar rate of $O(1)$ time per minute phase clock construction see also Dudek and Kosowski work \cite{DBLP:conf/stoc/DudekK18}.
    }
    If randomized transitions are allowed,
    by lowering the probability $p$ of the drip reaction,
    the clock rate can be slowed by a constant factor.
    Although we prove a few lemmas about this generalized clock,
    and some of our simulation plots in~\cref{sec:simulation} use $p<1$,
    our proofs work even for $p=1$,
    i.e., a deterministic transition function,
    although this requires constant-factor more states (by increasing the number of ``minutes per hour'', explained next).

    Now the $\unbiased$ agents will use $\Theta(\log n)$ states to store an ``hour'', coupled to the $C$ clock agents via 
    $C_{\floor{i/k}},\unbiased_j \to C_{\floor{i/k}},\unbiased_{\floor{i/k}}$ if $\floor{i/k} > j$, i.e., every consecutive $k$ $\roleclock$ minutes corresponds to one $\rolemain$ hour, and clock agents drag $\unbiased$ agents up to the current hour (see \cref{fig:clock_minute_hour}). 
    Our proofs require $k = \kconstant$ minutes per hour when $p=1$, 
    but smaller values of $k$ work in simulation. 
    For example, the simulation in \cref{fig:ba-final-averaging-config} showing intended behavior of this phase used only $k=3$ minutes per hour with $p=1$.
    
    
    
    This clock synchronizes the $\fieldexponent$s because agents with $\fieldlevel = -i$ can only split down to $\fieldlevel = -(i+1)$ with an $\unbiased$ agent that has $\fieldTlevel \geq i+1$. 
    \todo{
        DD: I think I've brought this up before, but I still find it confusing that $\fieldexponent$ is defined for bias $2^{\fieldexponent}$ rather than $1 / 2^{\fieldexponent}$. I think the high-level discussion and the proofs would all be easier to read if we simply defined $\fieldexponent$ to be positive, and talked about $\fieldexponent$ increasing (along with $\fieldhour$), and didn't need so many negative signs everywhere.
        
        ES: I cannot remember why I ended up settling on bias $2^{\fieldexponent}$, but I don't have a good reason now.
        The biggest reason to keep it is that making sure we switch signs for everything uniformly (in both all formulas but also all english text around them) sounds difficult to do without missing any.
    }
    This prevents the biased agents from doing too many splits too quickly. As a result, during hour $i$, most of the biased agents have $|\fieldbias| = \frac{1}{2^i}$, so the cancel reactions $+\frac{1}{2^i},-\frac{1}{2^i}\rightarrow 0,0$ happen at a high rate, providing many $\unbiased$ agents as ``fuel'' for future split reactions.
    We tune the constants of the clock to ensure hour $i$ lasts long enough to bring most biased agents down to $\fieldlevel = -i$ via split reactions and then let a good fraction do cancel reactions (see \cref{fig:exponent}).
    
    The key property at the conclusion of this phase
    is that unless there is a tie,
    WHP most majority agents end up in three consecutive exponents $-l,-(l+1),-(l+2)$,
    with a negligible mass of any other $\rolemain$ agent 
    (majority agents at lower/higher exponents, minority agents at any exponent, or $\unbiased$ agents).\footnote{
        $l$ is defined such that 
        if all biased agents were at exponent $-l$,
        the difference in counts between majority and minority agents would be between $0.4\cdot\#\rolemain$ and $0.8\cdot\#\rolemain$.
    }
    Phases \ref{phase:reserve-sample}-\ref{phase:highminorityelimination} use this fact to quickly push the rest of the population to a configuration where \emph{all} minority agents have exponents strictly below $-(l+2)$;
    \phaseLowMinorityElimination\ then eliminates these minority agents quickly.
    
    \item[\phaseDetectTie:]
    (\emph{Untimed})
    The special case of a tie is detected by the fact that, 
    since the total bias remains the initial gap $g$,
    if all biased agents have minimal exponent $-L$, 
    $g$ has magnitude less than 1:
    \begin{align*}
    |g| 
    &= \qty|\sum_{a.\fieldrole = \rolemain} a.\fieldbias| 
    \leq \sum_{a.\fieldrole = \rolemain} |a.\fieldbias| \\
    &\leq \sum_{a.\fieldrole = \rolemain} \frac{1}{2^L} < \frac{n}{2^{\ceil{\log_2(n)}}} \leq 1.
    \end{align*}
    The initial gap $g$ is integer valued,
    so $|g| < 1 \implies g = 0$.
    Thus this condition implies there is a tie with probability 1;
    the converse that a tie forces all biased agents to exponent $-L$ holds with high probability.
    If only exponent $-L$ is detected,
    the algorithm halts here with all agents reporting output $\T$ (see \cref{fig:simulation-tie}).
    Otherwise, 
    the algorithm proceeds to the next phase.

    \item[\phaseReserveSample, \phaseReserveSplit:]
    Using the key property of \phaseMainAveraging,
    these phases WHP pull all biased agents above exponent $-l$ down to exponent $-l$ or below using the $\rolereserve$ $R$ agents.
    The $R$'s activate themselves in \phaseReserveSample\ by sampling the exponent of the first biased agent they meet.
    This ensures WHP that sufficiently many $\rolereserve$ agents exist with exponents $-l,-(l+1),-(l+2)$
    (distributed similarly to the agents with those exponents).
    Then in \phaseReserveSplit, 
    they act as fuel for splits, via
    $R_i,\pm\frac{1}{2^j} \to \pm\frac{1}{2^{j+1}},\pm\frac{1}{2^{j+1}}$ 
    when $|i| > |j|$.
    The reserve agents, unlike the $\unbiased$ agents in \phaseMainAveraging, do not change their exponent in response to interactions with $\roleclock$ agents.
    Thus sufficiently many reserve agents will remain to allow the small number of biased agents above exponent $-l$ to split down to exponent $-l$ or below.\footnote{
        The reason we do this in two separate phases is to ensure that the $\rolereserve$ agents have close to the same distribution of exponents that the $\rolemain$ agents have at the end of \phaseMainAveraging. If $\rolereserve$ agents allowed split reactions in the same phase that they sample the $\fieldlevel$ of $\rolemain$ agents, 
        then the splits would disrupt the distribution of the $\rolemain$ agents before all $\rolereserve$ agents have finished sampling.
        This would possibly give the $\rolereserve$ agents a significantly different distribution among levels than the $\rolemain$ agents had at the start. 
        While this may possibly work anyway, we find it is more straightforward to prove if the $\rolereserve$ agents have a close distribution over $\fieldlevel$ values to that of the $\rolemain$ agents.
    }

    \item[\phaseHighMinorityElimination:]
    This phase allows more general reactions to distribute the dyadic biases,
    allowing reactions between agents up to two exponents apart,
    to eliminate the opinion with smaller exponent:
    $\frac{1}{2^i}, -\frac{1}{2^{i+1}} \to \frac{1}{2^{i+1}},0$
    and
    $\frac{1}{2^i}, -\frac{1}{2^{i+2}} \to \frac{1}{2^{i+1}}, \frac{1}{2^{i+2}}$
    (and the equivalent with positive/negative biases swapped).
    Since all agents have exponent $-l$ or below,
    and many more majority agents exist at exponents $-l,-(l+1),-(l+2)$ than the total number of minority agents anywhere,
    these 
    (together with standard cancel reactions $\frac{1}{2^i}, - \frac{1}{2^i} \to 0,0$)
    rapidly eliminate all minority agents
    at exponents $-l,-(l+1),-(l+2)$,
    while maintaining $\Omega(n)$ majority agents at exponents 
    $\geq -(l+2)$
    and $< 0.01 n$ total minority agents, now all at exponents 
    $\leq -(l+3)$.

    \item[\phaseLowMinorityElimination:]
    This phase eliminates the last minority agents,
    while ensuring that if any error occurred in previous phases,
    some majority agents remain,
    to allow detecting the error by the presence of both opinions.\footnote{
        A na\"{i}ve idea to reach a consensus at this phase is to allow cancel reactions $\frac{1}{2^i}, -\frac{1}{2^j} \to 0,0$ between arbitrary pairs of exponents with opposite opinions.
        However, this has a positive probability of erroneously eliminating the majority.
        This is because the majority, 
        while it necessarily has larger mass than the minority at this point, 
        could have smaller \emph{count}.
        For example, we could have 16 $A$'s with $\fieldlevel=-2$ and 32 $B$'s with $\fieldlevel=-5$, so $A$'s have mass $16 \cdot 2^{-2} = 4$ and $B$'s have smaller mass $32 \cdot 2^{-5} = 1$, but larger count than $A$.
    }
    
    The biased agents add a Boolean field $\fieldfull$, initially $\False$, 
    and \emph{consumption} reactions that allow an agent at a larger exponent $i$ to consume 
    (set to mass 0 by setting it to be $\unbiased$) 
    an agent at an arbitrarily smaller exponent $j < i$.
    Now the remaining agent represents some non-power-of-two mass $m = 2^i - 2^j$, 
    which it lacks sufficient memory to track exactly. 
    Thus setting the flag $\fieldfull = \True$ corresponds to the agent having an uncertain mass $m$ in the range $2^{i-1} \leq m < 2^{i}$.
    Because of this uncertainty, full agents are not allowed to consume other smaller levels. 
    However, there are more than enough high-exponent majority agents by this phase to consume all remaining lower exponent minority agents.

    Crucially, 
    agents that have consumed another agent and set $\fieldfull=\True$ may themselves then be consumed by a third agent 
    (with $\fieldfull = \False$) 
    at an even larger exponent.
    This is needed because a minority agent at exponent 
    $i \leq -(l+3)$
    may consume a (rare) majority agent at exponent $j < i$, but the minority agent itself can be consumed by another majority agent with exponent $k > i$.

    \item[\phaseConsensusTwo:]
    (\emph{Untimed})
    This is identical to \phaseConsensus:
    it detects whether both biased opinions $\A$ and $\B$ remain.
    If not (the likely case), the algorithm halts,
    otherwise we proceed to the next phase.

    \item[\phaseStableBackup:]
    (\emph{Untimed})
    Agents execute a simple, slow stable majority protocol~\cite{BlackNinjasInTheDark},
    similar to that of~\cite{mertzios2014determining, draief2012convergence} but also handling ties.
    This takes $\Theta(n \log n)$ time, 
    but the probability that an earlier error forces us to this phase is 
    $O(1/n^2)$,
    so it contributes 
    negligibly
    to the total expected time.
\end{description}


\subsection{Simulation of full algorithm}
\label{sec:simulation}

\begin{figure}[!htbp]
    \vspace{-0.5cm}
    \centering
    \includegraphics[trim=130 0 130 0,clip,width=\textwidth]{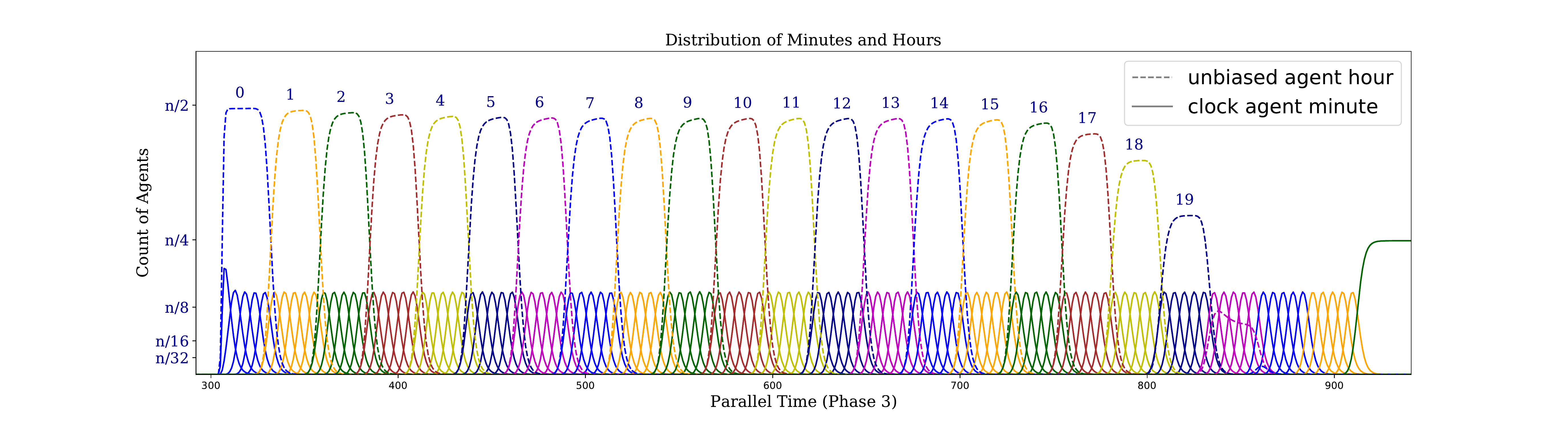}
    \caption{\footnotesize
    Fixed-resolution phase clock used in \phaseMainAveraging,
    with $n \approx 2^{23}$.
    All $\roleclock$ agents set $\fieldminute = 0$ and count up to $k\cdot L$ with special rules defined in \phaseMainAveraging. 
    The solid curves show the distributions of the counts at each value of $\fieldminute$ in the $\roleclock$ agents.
    Blocks of $k = 5$ consecutive minutes correspond to a $\fieldhour$ in the unbiased $\unbiased$ agents with $\fieldrole = \rolemain$ and $\fieldbias = 0$. 
    The dashed curves show the distribution of the counts of each value of $\fieldhour$, with the value written on top.
    We show every $k$ minutes with one color equal to its hour color.
    Near the end of \phaseMainAveraging, the $\unbiased$ count falls to $0$ as the majority count takes over.
    This plot used the value $k = 5$ to emphasize the clock behavior. Later plots will all use the weaker constant $k = 2$.
    In later plots we also omit the $\fieldminute$ distribution and only show the $\fieldhour$ from the $\unbiased$ agents.}
    \label{fig:clock_minute_hour}
\end{figure}

\begin{figure}[!htbp]
     \centering
     \begin{subfigure}[b]{\textwidth}
         \centering
         \includegraphics[trim=130 0 130 0,clip,width=\textwidth]{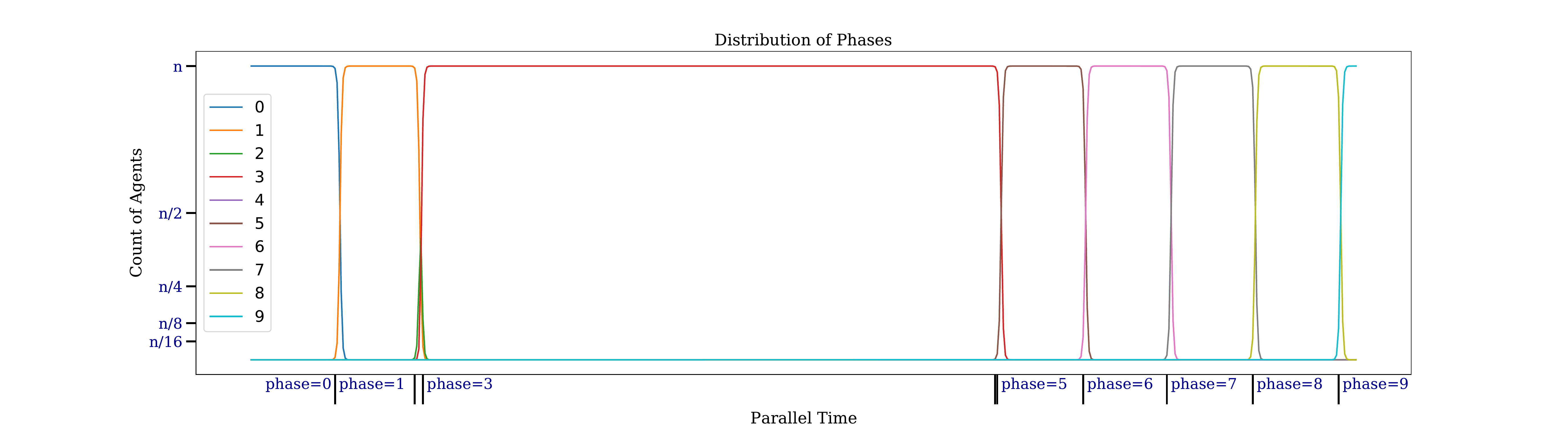}
         \caption{\footnotesize
         The $\fieldphase$ distribution, giving counts of the agents with each value of $\fieldphase$.
         We show these markers on the horizontal axis in later plots. We have set $\max\fieldcounter = 5\log_2(n)$, and removed the counting requirements for $\roleclock$ agents in \phaseInitialize. This makes all timed phases \ref{phase:initialize}, \ref{phase:discrete-averaging}, \ref{phase:reserve-sample}, \ref{phase:reserve-split}, \ref{phase:highminorityelimination}, \ref{phase:lowminorityelimination} take around the same parallel time $2.5\log_2(n)$. The fixed-resolution clock in \phaseMainAveraging\ uses $O(\log n)$ time with a larger constant.
         \phaseConsensus\ and \phaseDetectTie\ are untimed, so they end almost immediately and are not labeled above.}
         \label{fig:phase}
     \end{subfigure}
     \hfill
    
        
    \centering
     \begin{subfigure}[hb]{\textwidth}
         \centering
         \includegraphics[trim=130 0 130 0,clip,width=\textwidth]{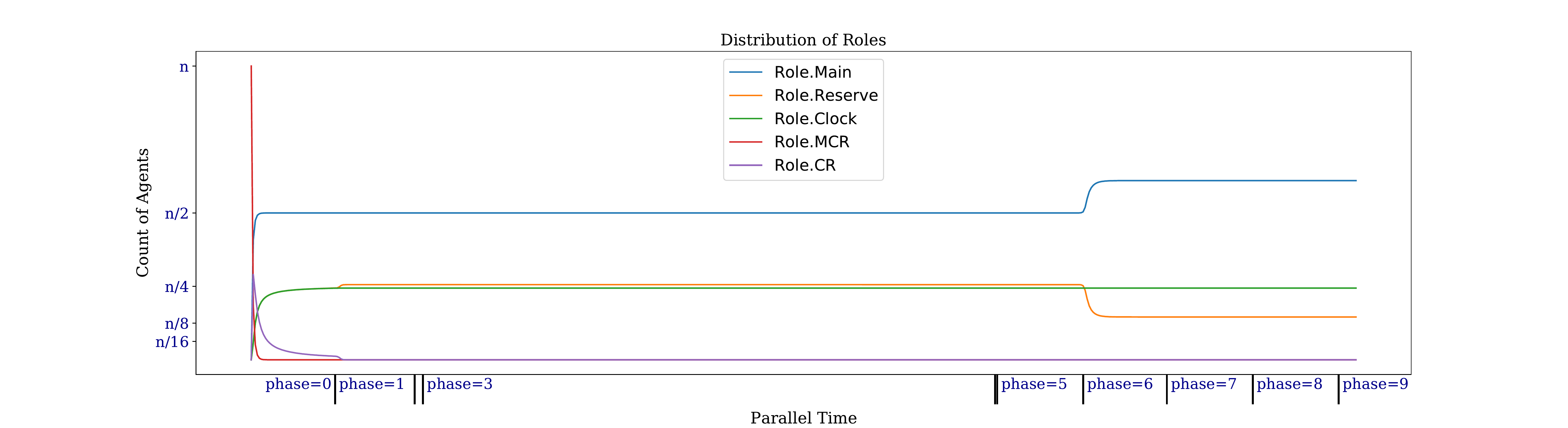}
         \caption{\footnotesize
         The $\fieldrole$ distribution. All agents start in role $\roleMCR$. By the end of \phaseInitialize\, almost all agents decide on a role. They enter \phaseDiscreteAveraging\ with $\approx\frac{n}{2}$, $\approx\frac{n}{4}$, and $\approx\frac{n}{4}$ agents in the respective roles $\rolemain$, $\roleclock$, and $\rolereserve$. An agent's role remain the same in the following phases except \phaseReserveSplit, where split reactions convert $\rolereserve$ agents into $\rolemain$ agents.}
         \label{fig:role}
     \end{subfigure}
     \label{fig:phase-role}
     \caption{The $\fieldphase$ and $\fieldrole$ distributions.}
\end{figure}

\begin{figure}[!htbp]
\ContinuedFloat
\begin{subfigure}[b]{\textwidth}
    \centering
    \includegraphics[trim=130 0 130 0,clip,width=\textwidth]{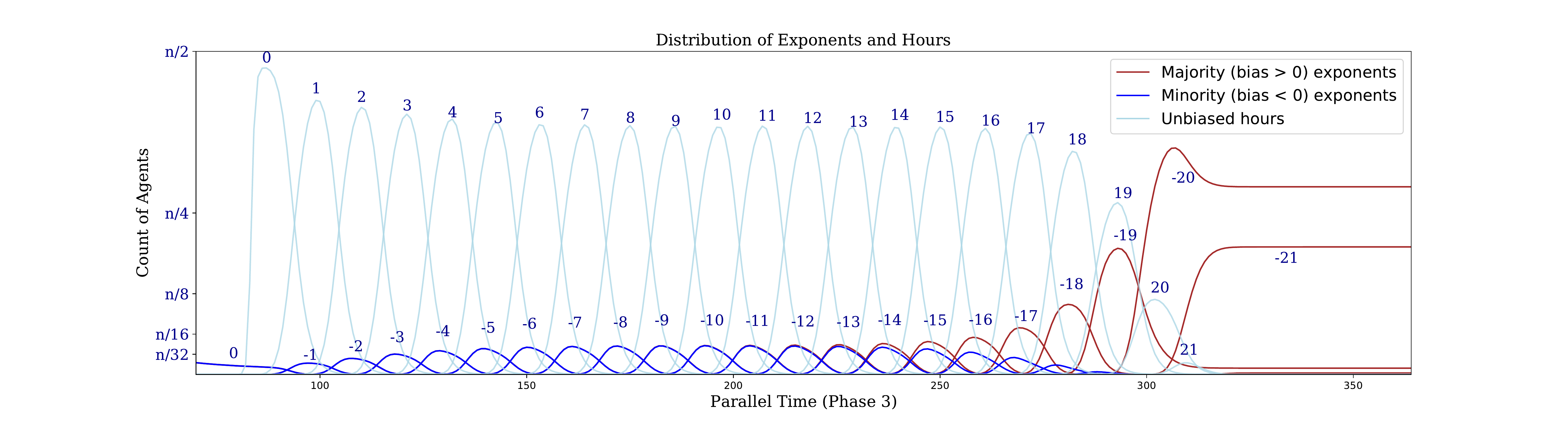}
     \caption{\footnotesize
        The distribution of $\fieldexponent$ and $\fieldhour$ in biased and unbiased $\rolemain$ agents. This plot only shows the time during \phaseMainAveraging, the only time $\fieldhour$ is used. The values for $\fieldexponent$ can decrease from $0$ to $-L$. As described in \phaseMainAveraging, during hour $h$, only agents with $\fieldexponent > -h$ are allowed to split and decrease their $\fieldexponent$ by one. Thus, the changes of $\fieldexponent$ are synchronized with the the changes in the $\fieldhour$ values. In this simulation, the \phaseMainAveraging\ stopped with majority of agents having $3$ consecutive values ($-19, -20, -21$) in their $\fieldexponent$. The $\fieldhour$ values are shown behind the next plot in \cref{fig:opinion-constant}, which shows the totals of these 3 types of $\rolemain$ agents.}
     \label{fig:exponent}
\end{subfigure}

\begin{subfigure}[b]{\textwidth}
    \centering
     \includegraphics[trim=100 40 130 0,clip,width=\textwidth]{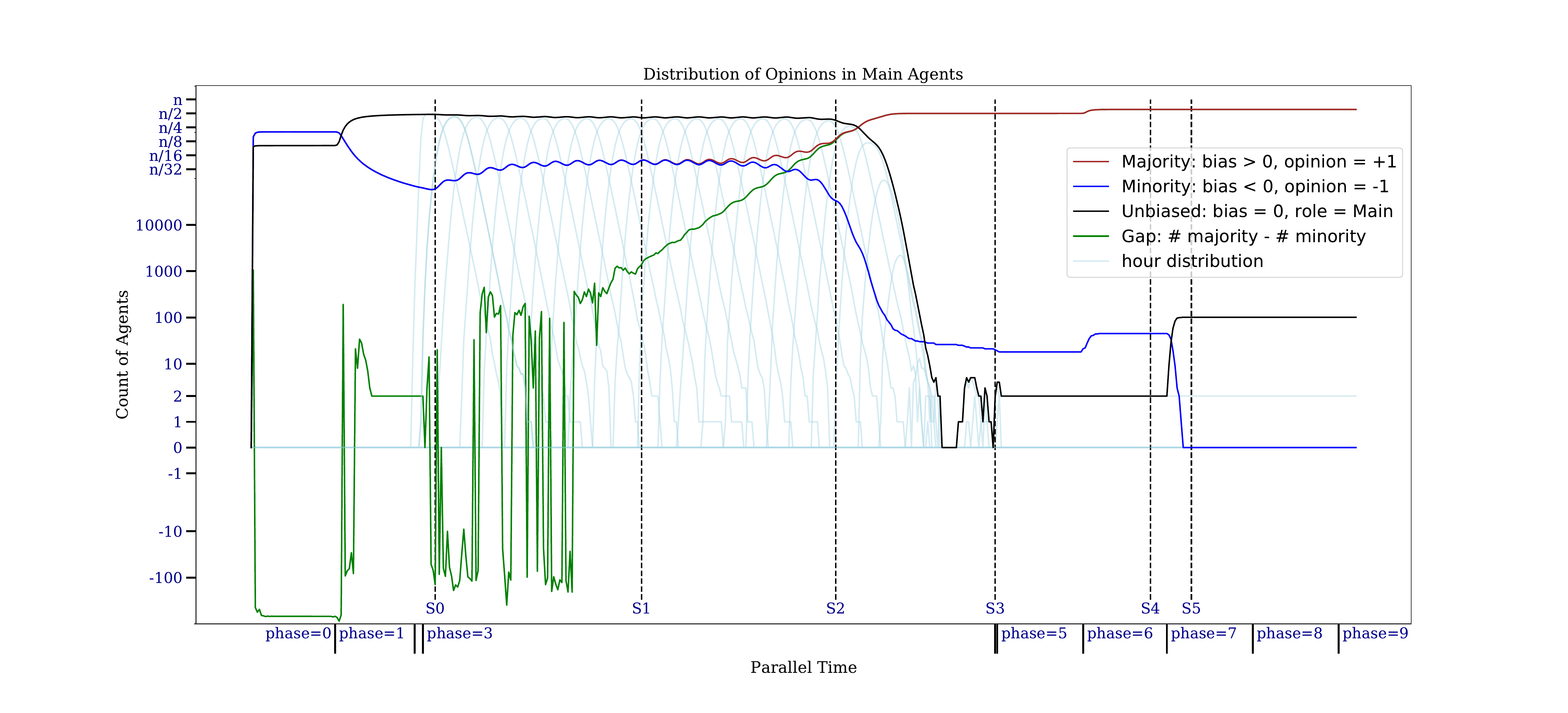}
     \caption{\footnotesize 
     The distribution of $\fieldopinion$ over all $\rolemain$ agents, with count shown on a log scale.
     The green line shows the difference in count between majority and minority agents. Special snapshots at times marked S0, S1, S2, S3, S4, S5 are shown in \cref{fig:snapshots-constant}.
     All $\rolemain$ agents are assigned in \phaseInitialize, with $\fieldbias \in \{0, \pm 1, \pm 2, \pm 3\}$, and all initial biases represented held by this $\rolemain$ subpopulation.
     Here the gap depends randomly on the distribution of $|\fieldbias|\in\{1,2,3\}$.
     Then \phaseDiscreteAveraging\ brings all $\fieldbias \in \{-1, 0, +1\}$, so the gap returns to exactly the initial gap of $2$, and the biased counts decrease polynomially like $\frac{1}{t}$ from cancel reactions.
     In the first part of \phaseMainAveraging\ the gap oscillates randomly about 0, (S0 in \cref{fig:snapshot_0_constant_gap}).
     Once we reach a high enough hour / low enough exponent, 
     the doubling trend takes over and the gap undergoes constant exponential growth (S1 in \cref{fig:snapshot_1_constant_gap}). 
     Finally, this becomes visible as a separation between counts of majority and minority agents (S2 in \cref{fig:snapshot_2_constant_gap}).
     \phaseMainAveraging\ ends with a small but nonzero count of minority agents and the count of unbiased $\unbiased$ agents brought near $0$ (S3 in \cref{fig:snapshot_3_constant_gap}). Then during \phaseReserveSplit, this minority count is amplified slightly by more split reactions bringing the minority exponents down (S4 in \cref{fig:snapshot_4_constant_gap}). 
     During \phaseHighMinorityElimination, additional cancel reactions bring the minority count to $0$ (S5 in \cref{fig:snapshot_5_constant_gap}).
     Since minority agents are gone, 
     \phaseLowMinorityElimination\ has no effect,
     and the protocol stabilizes to the correct majority output in \phaseConsensusTwo.}
     \label{fig:opinion-constant}
\end{subfigure}
    \caption{The case of constant initial gap $g=2$, from simulation with $n = 5122666 \approx 2^{23}$, $p = 0.1$ (drip probability), $k=2$ (number of minutes per hour). The horizontal axis is in units of parallel time, with the ranges corresponding to each phase marked. All agents converge to the correct majority output in \phaseConsensusTwo.}
    \label{fig:constant-gap}
\end{figure}

\begin{figure}[!htbp]
     \centering
     \begin{subfigure}[b]{0.49\textwidth}
         \centering
         \includegraphics[width=\textwidth, trim=10mm 0mm 10mm -2mm, clip=true]{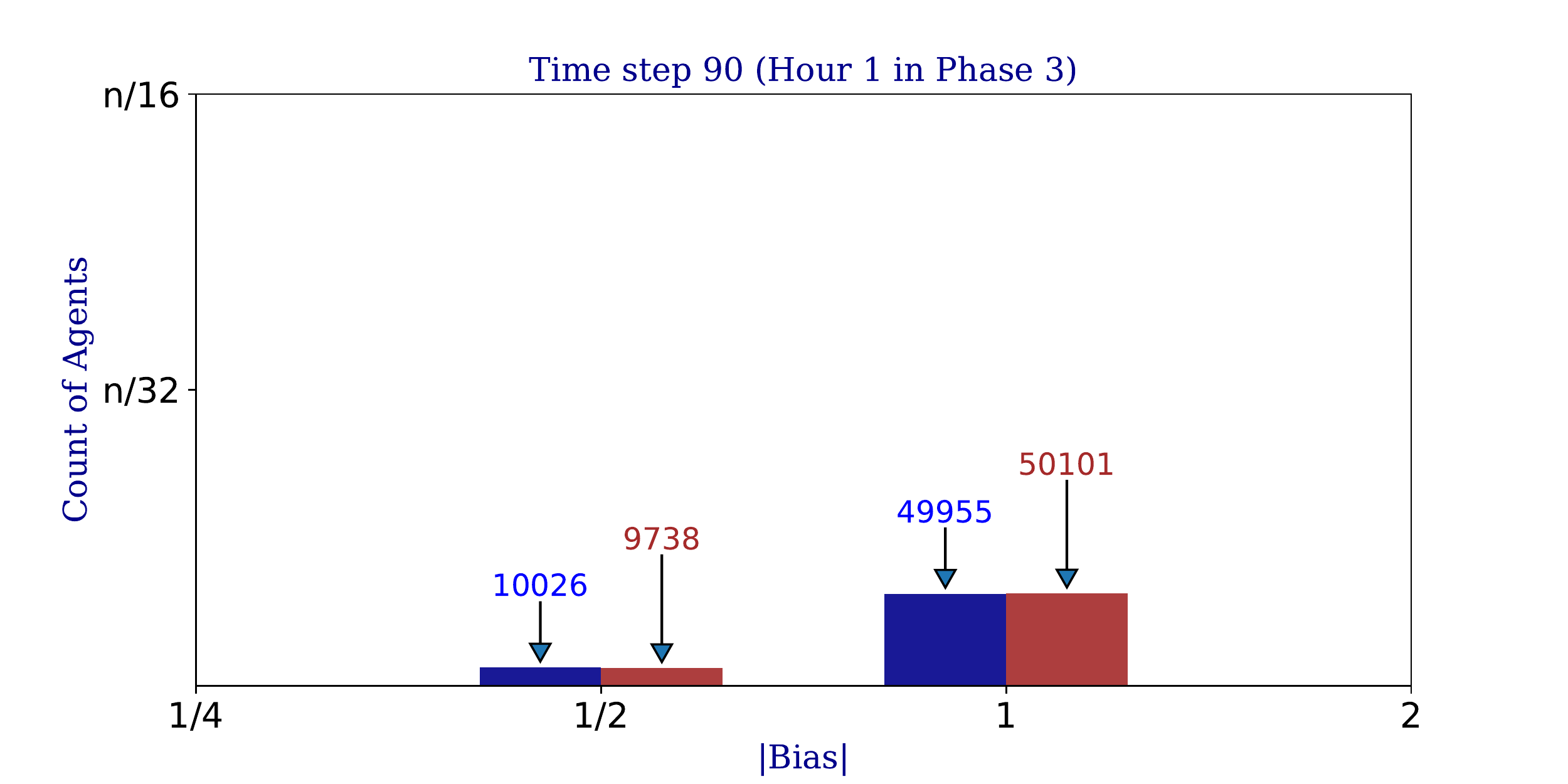}
         \caption{\footnotesize Snapshot S0. At the start of \phaseMainAveraging. The count of minority agents (blue) currently exceeds the count of majority agents because the minorities have done more split reactions. Summing the signed biases, however, gives $+50101 - 49955 + \frac{9738}{2} - \frac{10026}{2} = +146 - 144 = +2$, the invariant initial gap.}
         \label{fig:snapshot_0_constant_gap}
     \end{subfigure}
     \hfill
     \begin{subfigure}[b]{0.49\textwidth}
         \centering
         \includegraphics[width=\textwidth, trim=10mm 0mm 10mm -2mm, clip=true]{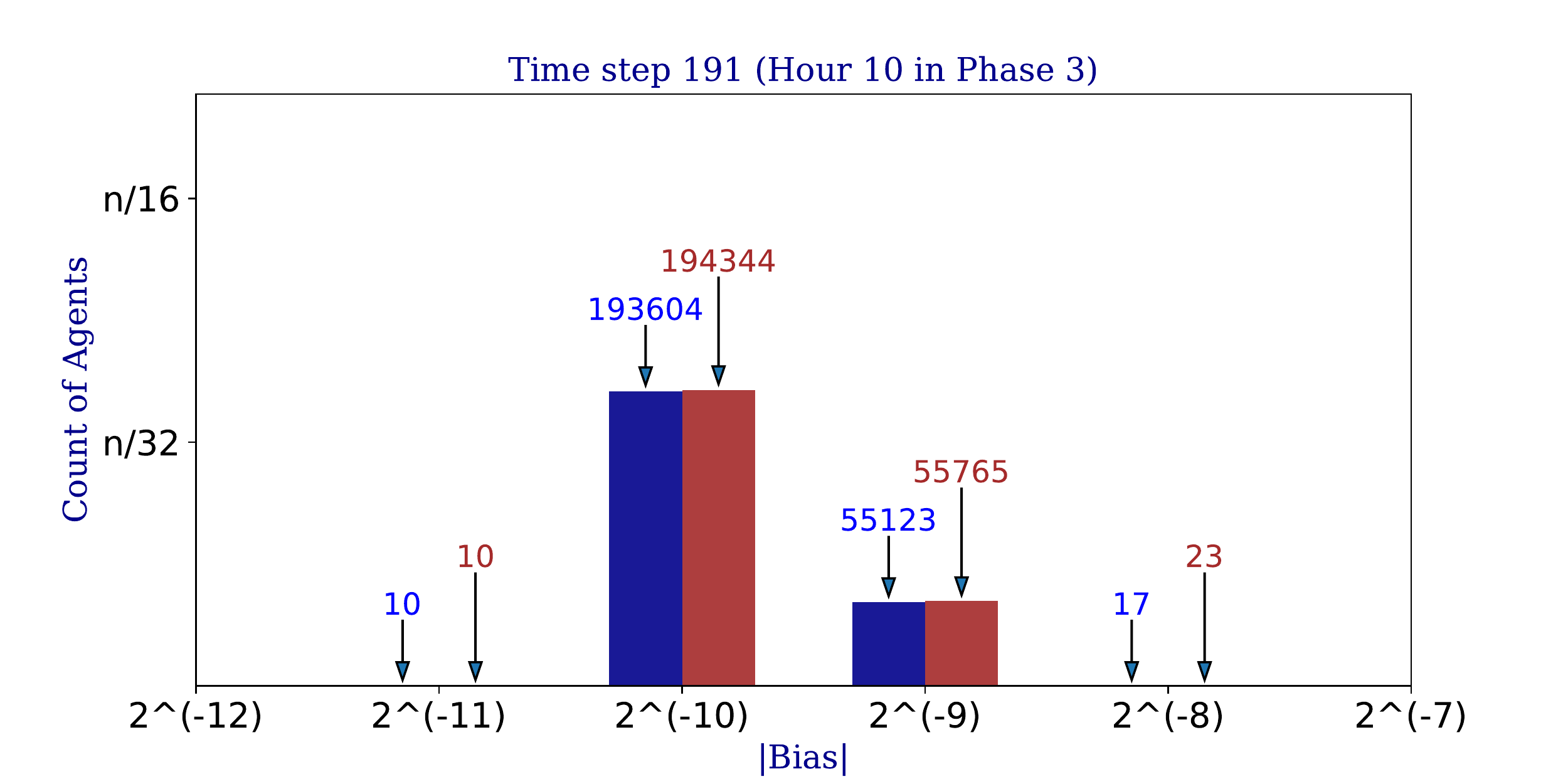}
         \caption{\footnotesize Snapshot S1. A typical distribution at $\fieldhour = 10$, when split reactions have brought most agents to $\fieldexponent = -10$. Only a few $\unbiased$ agents leaked ahead to $\fieldhour = 11$ and enabled splits down to $\fieldexponent = -11$, and no agents have leaked further ahead than this.}
         \label{fig:snapshot_1_constant_gap}
     \end{subfigure}
     \hfill
        
    \centering
     \begin{subfigure}[b]{0.49\textwidth}
         \centering
         \includegraphics[width=\textwidth, trim=10mm 0mm 10mm -2mm, clip=true]{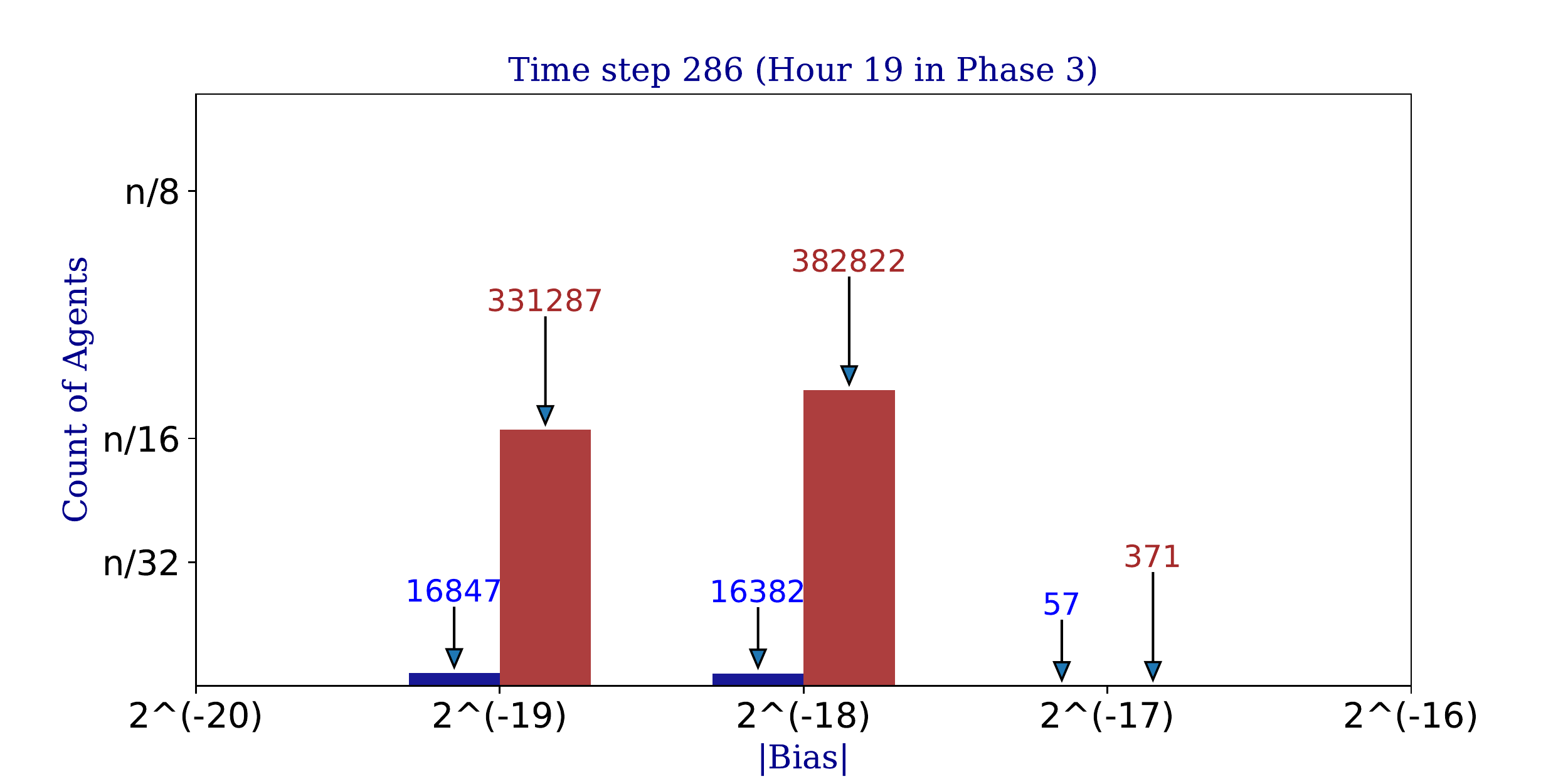}
         \caption{\footnotesize Snapshot S2. We have reached the special exponent $-l = -19$. The count of minority (blue) agents has vastly decreased over the last few hours, and now there will be few more cancel reactions to produce more $\unbiased$ agents.}
         \label{fig:snapshot_2_constant_gap}
     \end{subfigure}
     \hfill
     \begin{subfigure}[b]{0.49\textwidth}
         \centering
         \includegraphics[width=\textwidth, trim=10mm 0mm 20mm -2mm, clip=true]{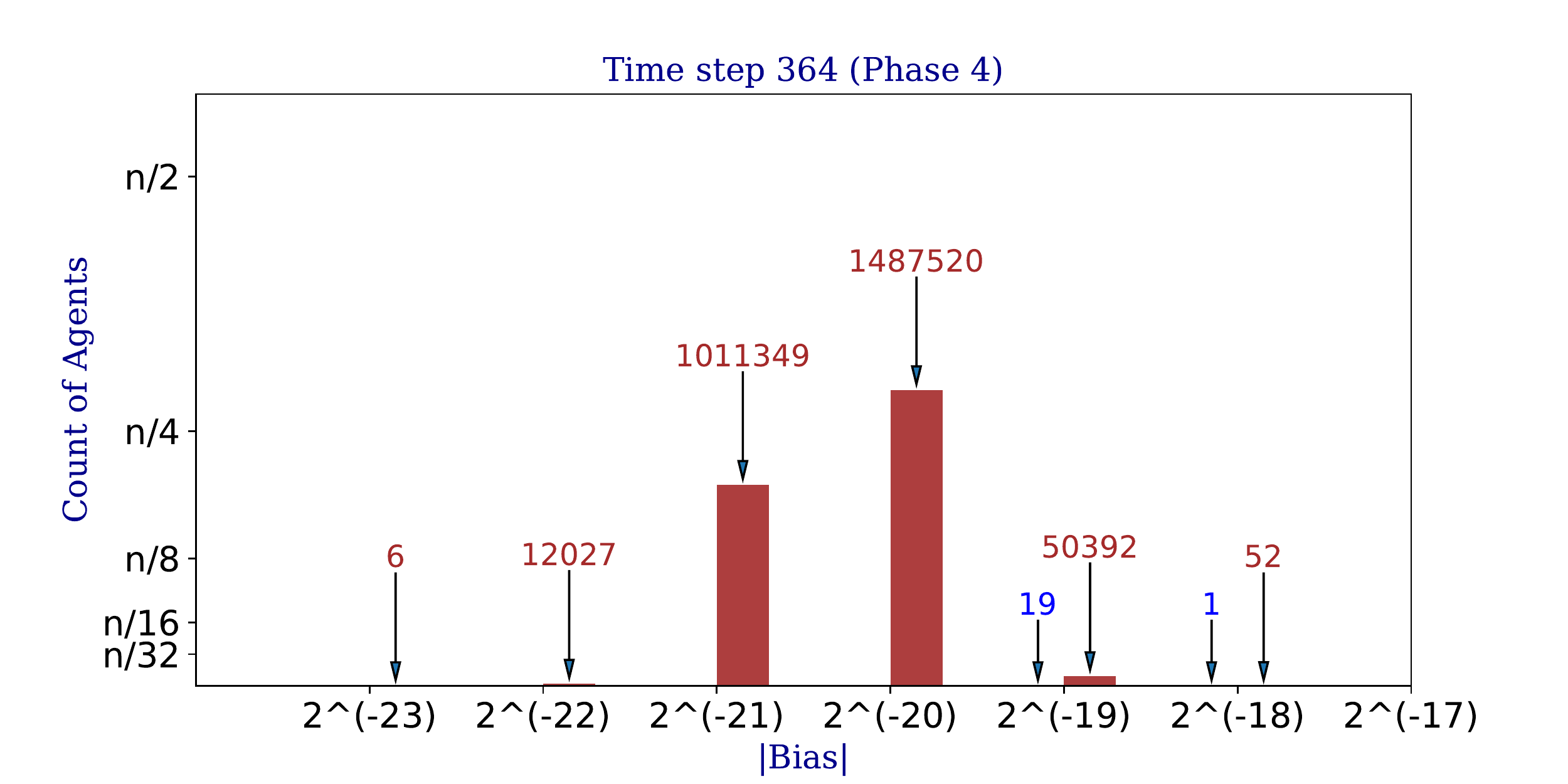}
         \caption{\footnotesize Snapshot S3. We end \phaseMainAveraging\ with most $\rolemain$ agents with the majority opinion, and $\fieldbias\in\{-19,-20,-21\}$ in a range of 3 consecutive values, as shown in \cref{thm:phase-3-majority-result}.
         Only a few minority agents are left.}
         \label{fig:snapshot_3_constant_gap}
     \end{subfigure}
        
    \centering
     \begin{subfigure}[b]{0.49\textwidth}
         \centering
         \includegraphics[width=\textwidth, trim=10mm 0mm 20mm -2mm, clip=true]{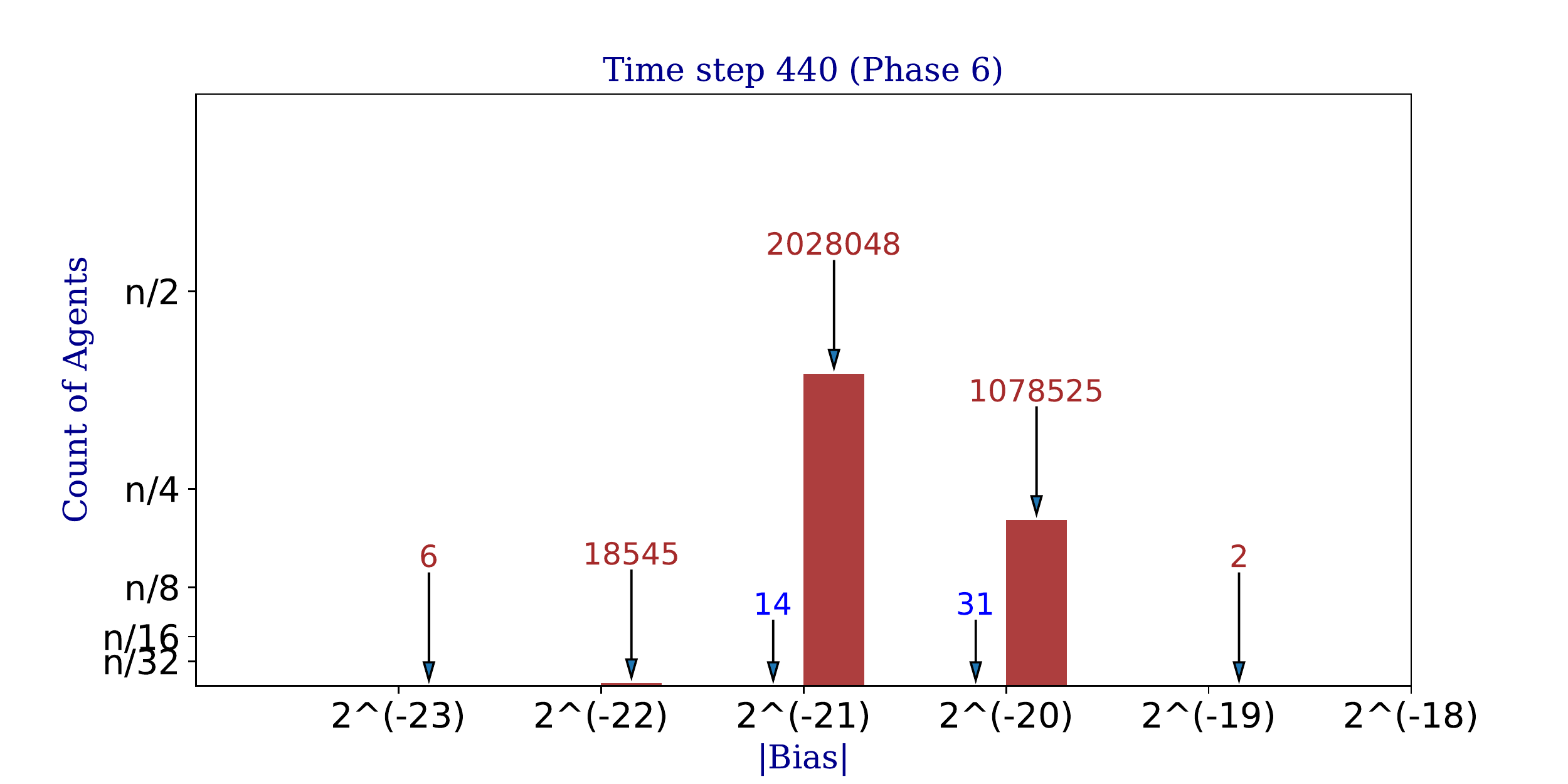}
         \caption{\footnotesize Snapshot S4. After \phaseReserveSplit, where $\rolereserve$ agents with $\fieldsample\in\{-19,-20,-21\}$ enabled additional split reactions that brought all minority agents down.}
         \label{fig:snapshot_4_constant_gap}
     \end{subfigure}
     \hfill
     \begin{subfigure}[b]{0.49\textwidth}
         \centering
         \includegraphics[width=\textwidth, trim=10mm 0mm 20mm -2mm, clip=true]{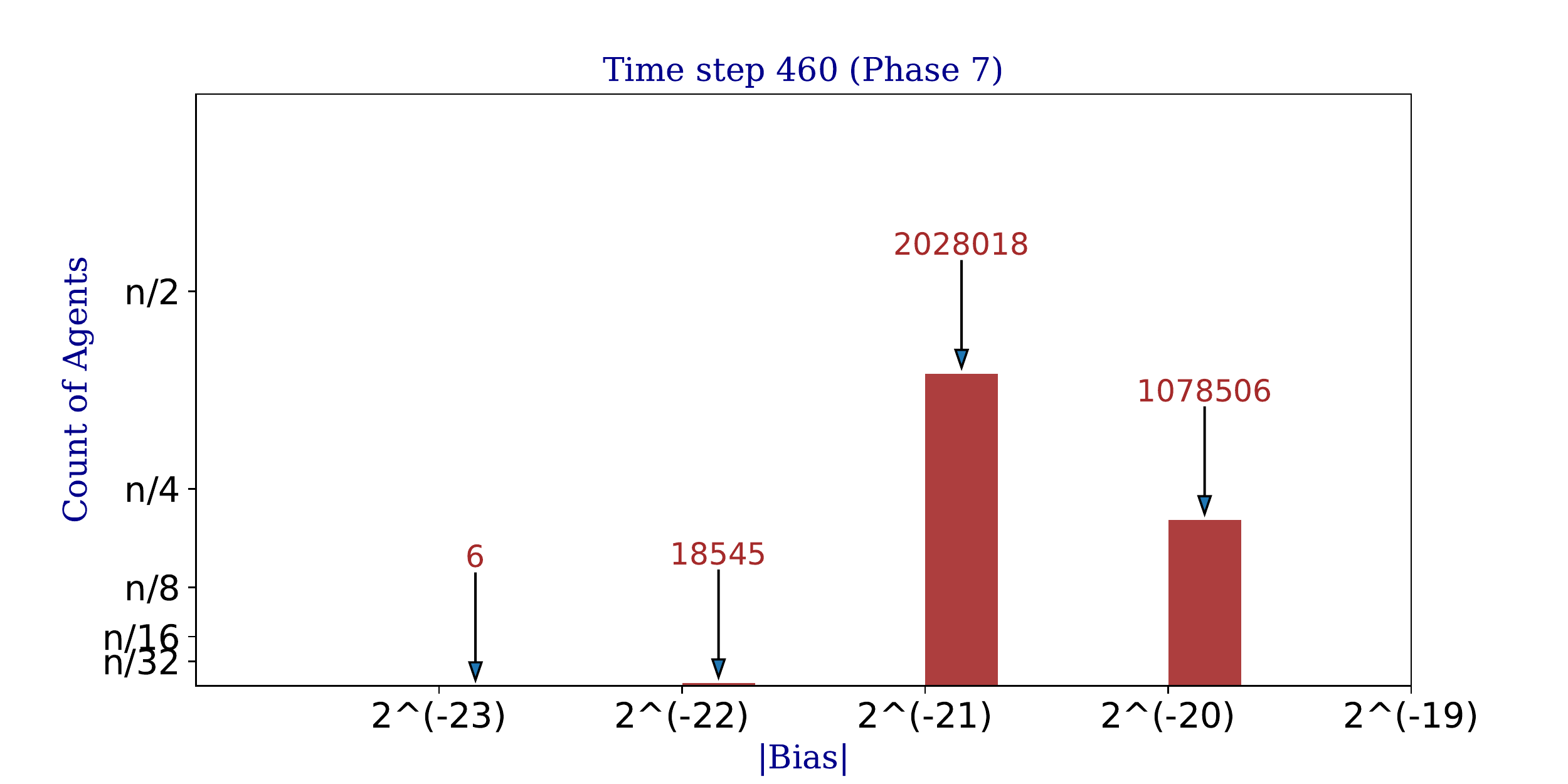}
         \caption{\footnotesize Snapshot S5. After \phaseHighMinorityElimination, additional generalized cancel reactions eliminate all minority agents at the higher $\fieldexponent$ values. At this point, there are no minority agents left, and we will stabilize to the correct output in \phaseConsensusTwo.}
         \label{fig:snapshot_5_constant_gap}
     \end{subfigure}
        \caption{Snapshots S0, S1, S2, S3, S4, S5 from \cref{fig:opinion-constant}, showing the distribution of $\fieldbias$ among agents with $|\fieldbias| > 0$. The red and blue bars give the count of $A$ (majority, with $\fieldbias > 0$) and $B$ (minority, with $\fieldbias < 0$) agents respectively. The exact counts are written above the bars, and everywhere not explicitly written the count of is 0.
        See the link to animations of these plots at the end of this section.
        }
        \label{fig:snapshots-constant}
\end{figure}


\begin{figure}[!htbp]
     \centering
     \begin{subfigure}[b]{0.99\textwidth}
         \centering
         \includegraphics[trim=30 4 30 0,clip,width=\textwidth]{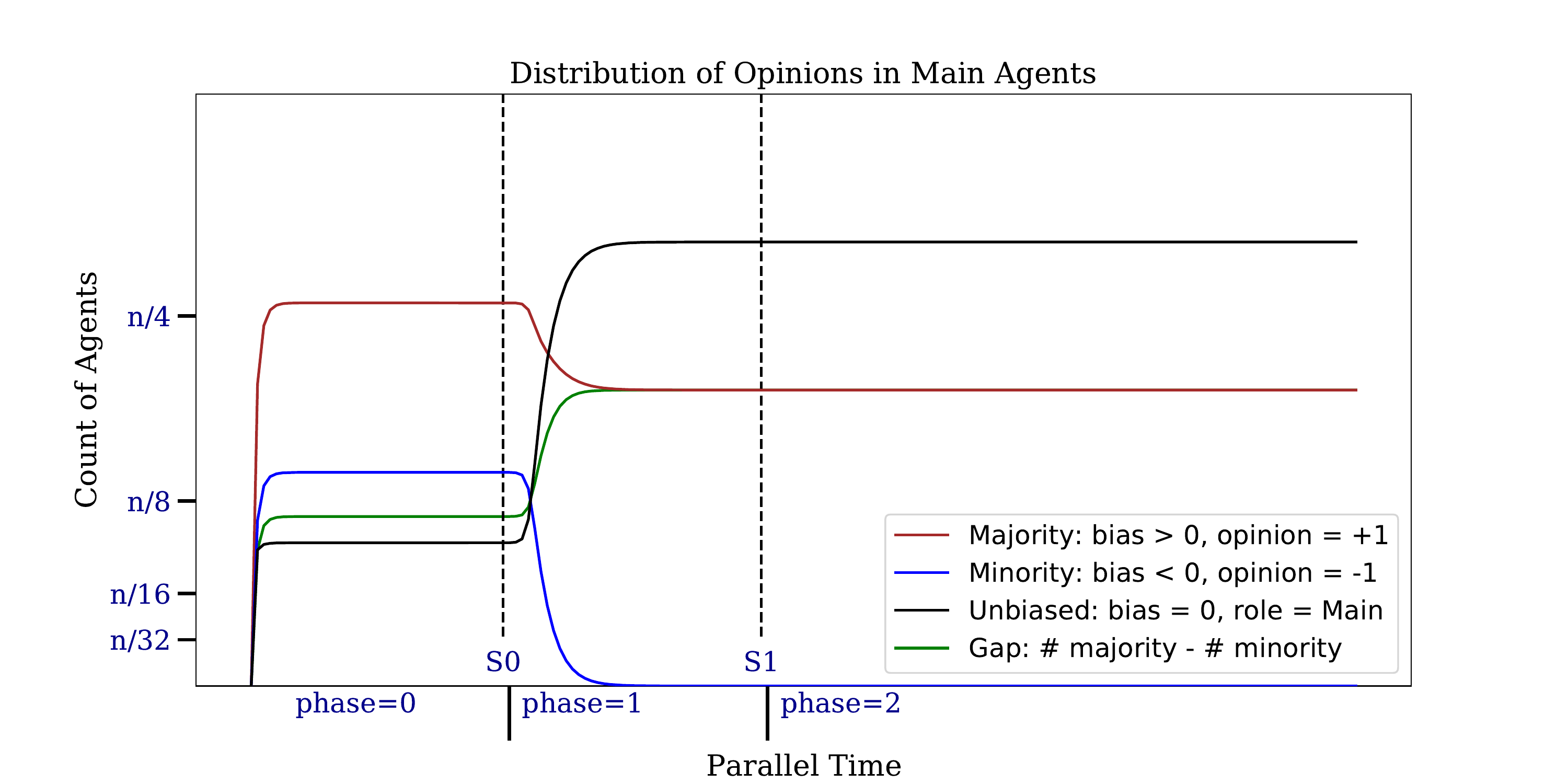}
         \caption{\footnotesize 
         The distribution of $\fieldopinion$ in the $\rolemain$ agents in case of a linear size gap $g=n/10$.
         The red line gives the majority count ($\fieldopinion = +1$), 
         the blue line the minority count ($\fieldopinion = -1$), and the green line their difference. The black line gives the unbiased $\unbiased$ agents.
         }
         \label{fig:opinion_linear}
     \end{subfigure}
     \hfill
     \centering
     \begin{subfigure}[b]{0.49\textwidth}
         \centering
         \includegraphics[width=\textwidth, trim=10mm 0mm 20mm -2mm, clip=true]{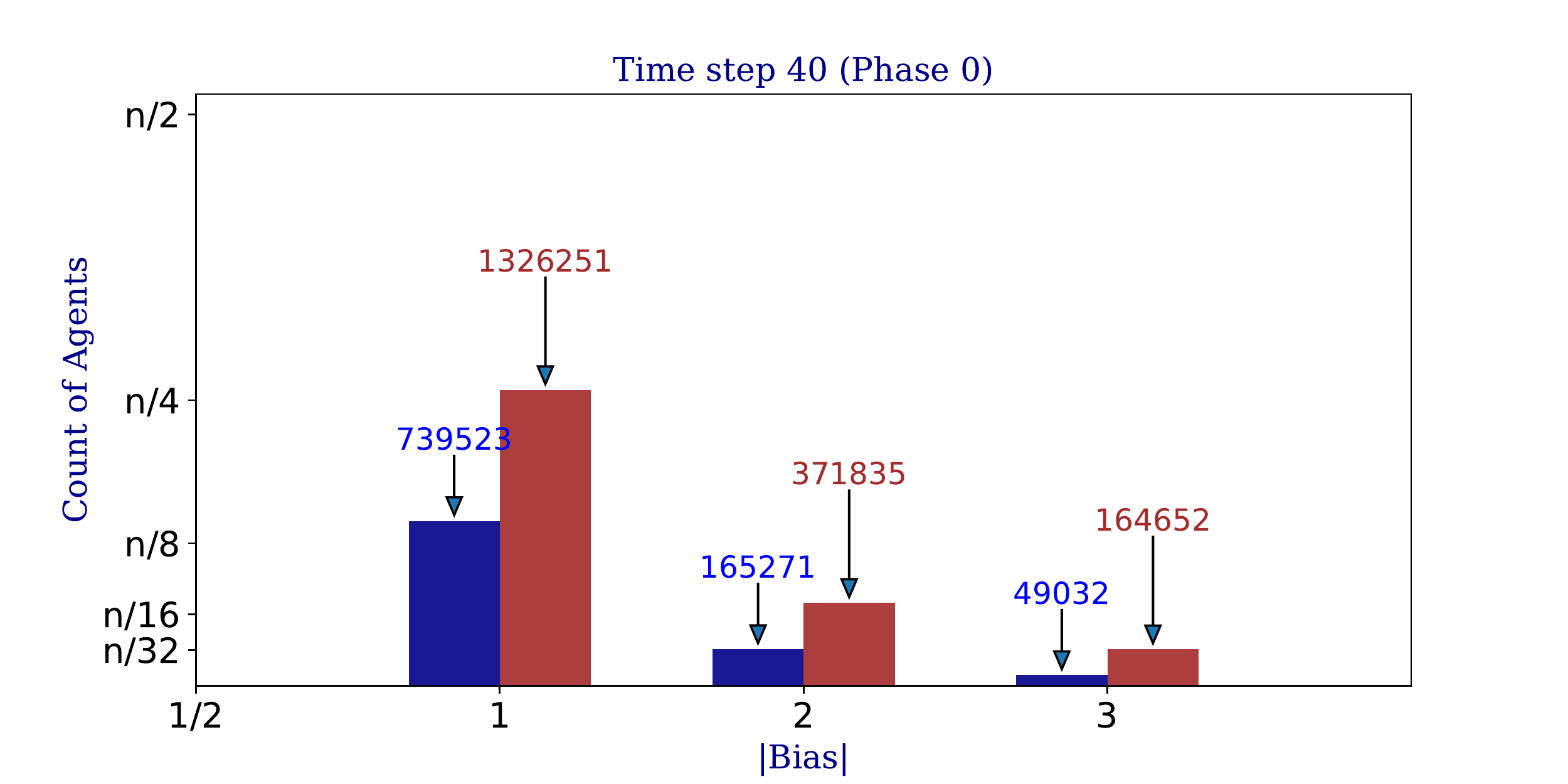}
         \caption{\footnotesize Snapshot S0. At the end of \phaseInitialize, the biased agents have $|\fieldbias|\in\{1,2,3\}$.}
         \label{fig:snapshot_0_linear_gap}
     \end{subfigure}
     \hfill
     \begin{subfigure}[b]{0.49\textwidth}
         \centering
         \includegraphics[width=\textwidth, trim=10mm 0mm 20mm -2mm, clip=true]{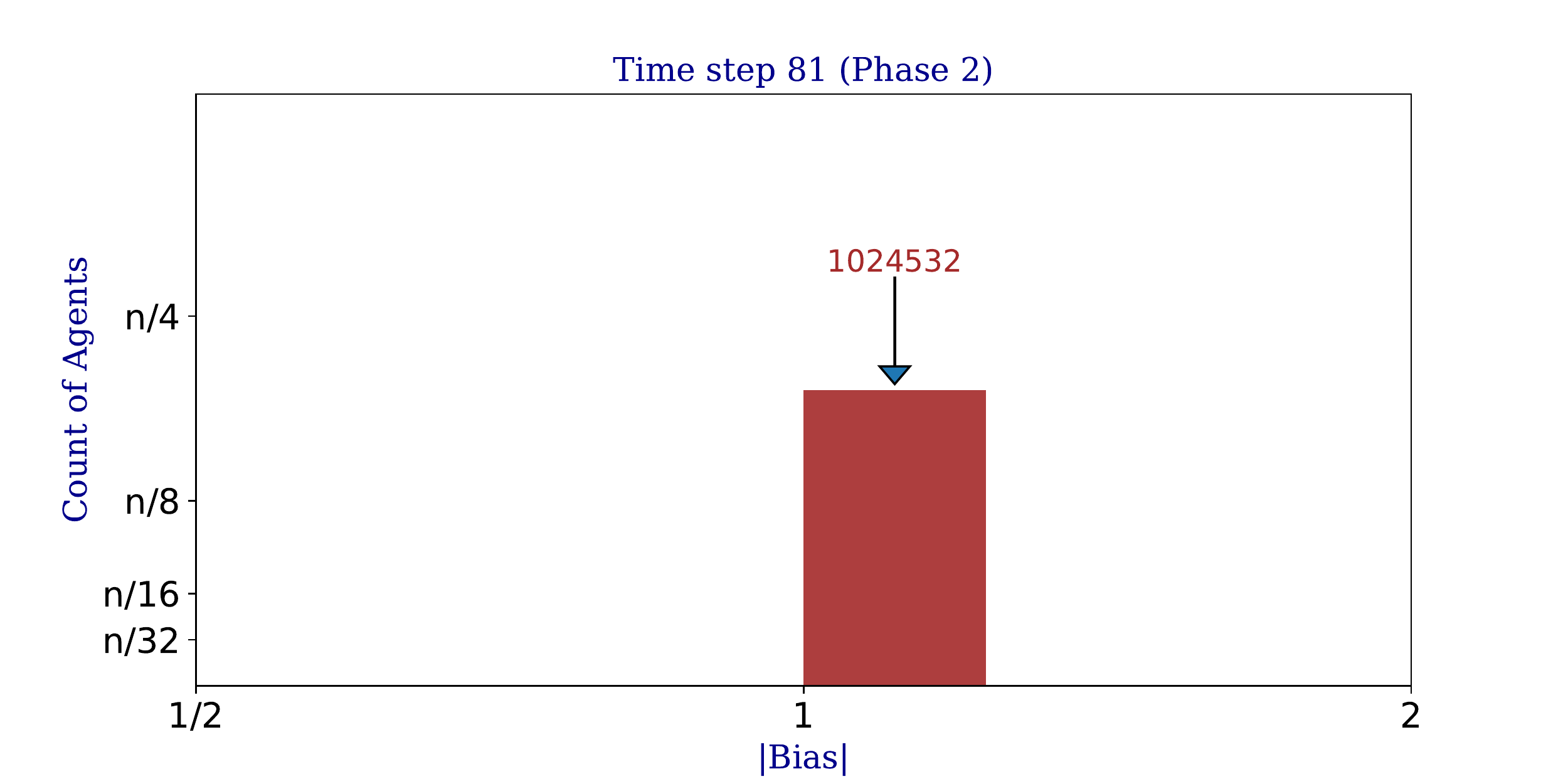}
         \caption{\footnotesize Snapshot S1. The discrete averaging of \phaseDiscreteAveraging\ first brings all $\fieldbias\in\{-1,0,+1\}$. Then all $\fieldbias = -1$ cancel, leaving $\fieldbias = +1$ as the only nonzero bias. With the minority opinion eliminated, we converge in \phaseConsensus.}
         \label{fig:snapshot_1_linear_gap}
     \end{subfigure}
     \hfill
        \caption{The case of linear size gap $g = n/10$, again with $n = 5122666 \approx 2^{23}$, $p  = 0.1$, $k=2$.
        With large initial gap, the simulation converges in \phaseConsensus. \cref{fig:opinion_linear} shows the counts of each opinion among the population of $\rolemain$ agents.
         Two special times S0 and S1, and the configurations of biased agents at these snapshots are shown in \cref{fig:snapshot_0_linear_gap,fig:snapshot_1_linear_gap}.
         See link to animations of these plots at the end of this section.}
        \label{fig:simulation-linear}
\end{figure}

\begin{figure}[!htbp]
     \centering
     \begin{subfigure}[b]{0.99\textwidth}
         \centering
         \includegraphics[trim=30 4 30 0,clip,width=\textwidth]{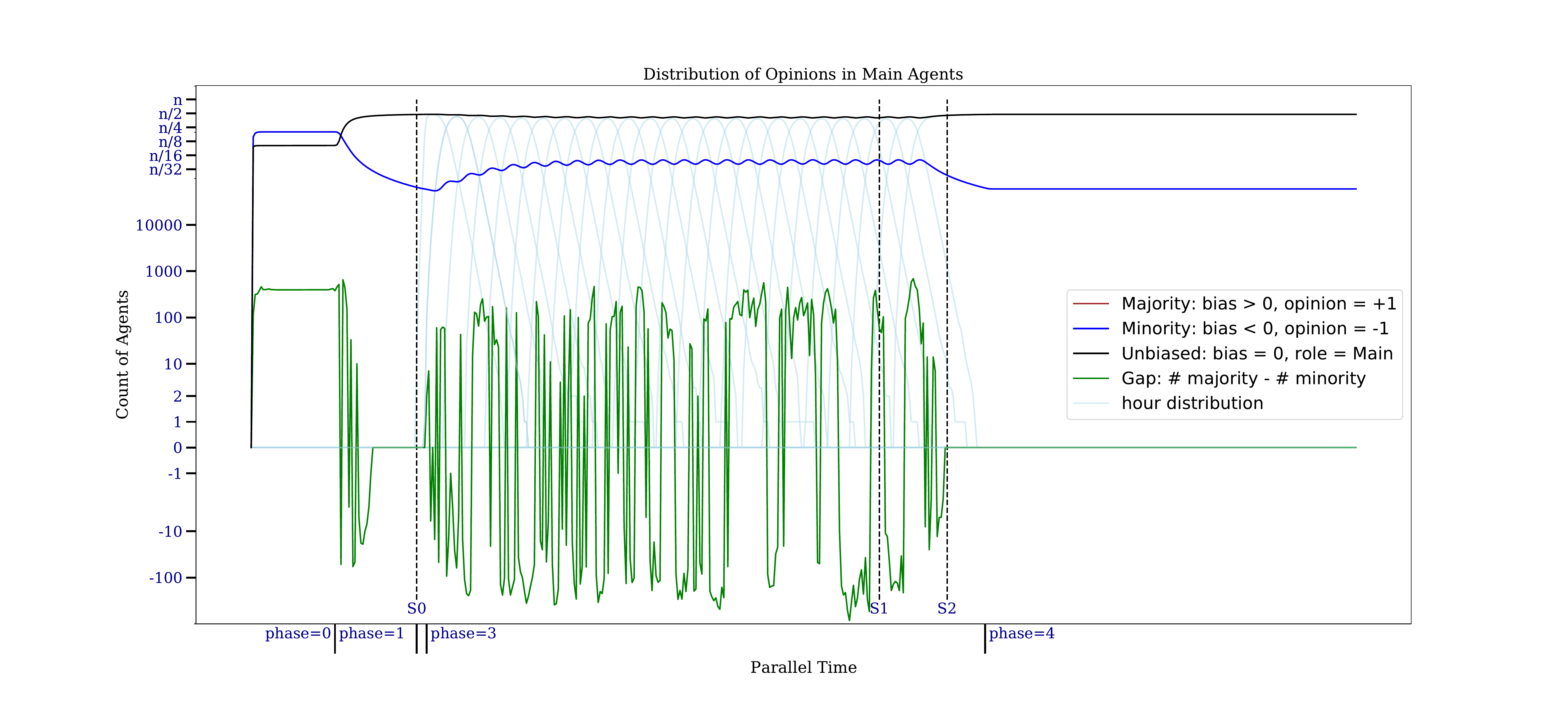}
         \caption{\footnotesize 
         The distribution of $\fieldopinion$ in the $\rolemain$ agents in case of a tie.
         The red line gives the majority count ($\fieldopinion = +1$), 
         the blue line the minority count ($\fieldopinion = -1$),
         though they overlap everywhere, 
         and the green line their difference. 
         The black line gives the unbiased $\unbiased$ agents.
         In \phaseMainAveraging\ we see the same qualitative behavior as the first part of \phaseMainAveraging\ with a constant initial gap in \cref{fig:opinion-constant}. Now this continues the whole phase, with the gap in counts oscillating about $0$ until finally reaching $0$ when all biased agents have $\fieldexponent = -23$ at time S2.
         With no $\fieldexponent > -23$, we stabilize to output $\T$ in \phaseDetectTie.
         }
         \label{fig:opinion_tie}
     \end{subfigure}
     \hfill
     \centering
     \begin{subfigure}[b]{0.49\textwidth}
         \centering
         \includegraphics[width=\textwidth, trim=10mm 0mm 20mm -2mm, clip=true]{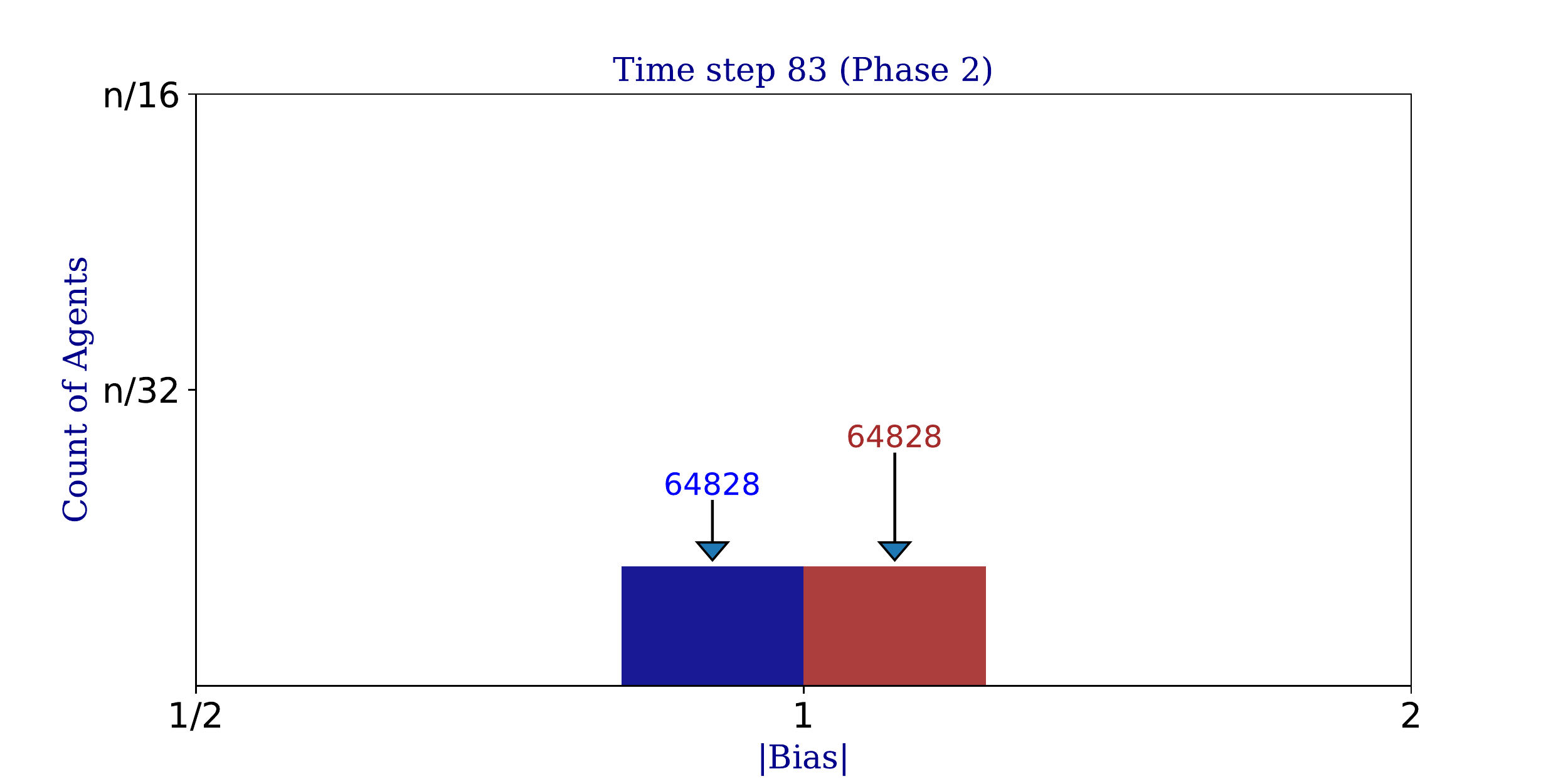}
         \caption{\footnotesize Snapshot S0. We enter \phaseMainAveraging\ with an equal number of $\fieldbias = \pm 1$. }
         \label{fig:snapshot_0_tie_gap}
     \end{subfigure}
     \hfill
     \begin{subfigure}[b]{0.49\textwidth}
         \centering
         \includegraphics[width=\textwidth, trim=10mm 0mm 20mm -2mm, clip=true]{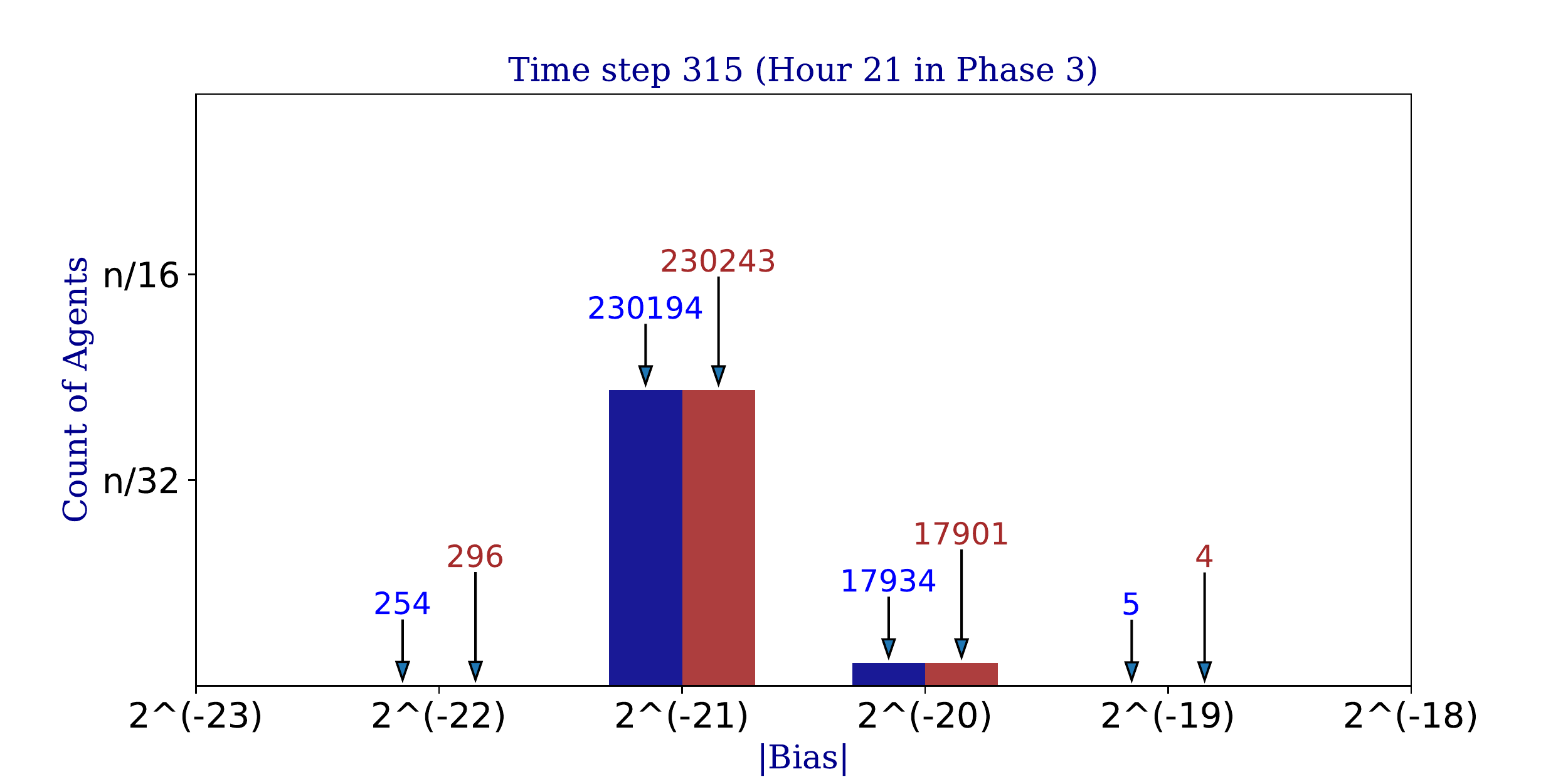}
         \caption{\footnotesize Snapshot S1. We see similar distributions as in \cref{fig:snapshot_1_constant_gap} all the way until the end of \phaseMainAveraging.}
         \label{fig:snapshot_1_tie_gap}
     \end{subfigure}
     \hfill
    \begin{subfigure}[b]{0.49\textwidth}
         \centering
         \includegraphics[width=\textwidth, trim=10mm 0mm 20mm -2mm, clip=true]{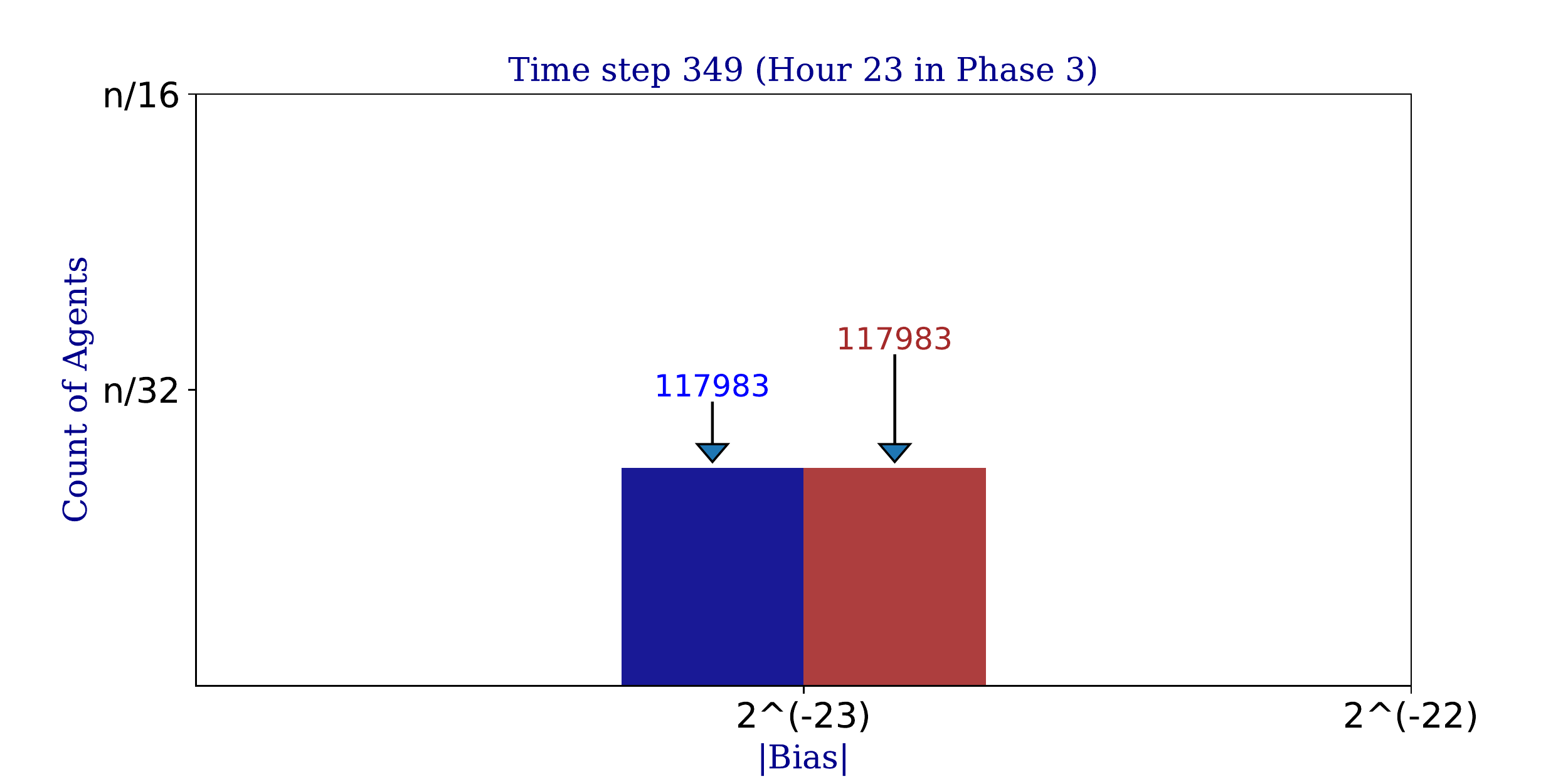}
         \caption{\footnotesize Snapshot S2. After reaching synchronous hour 23, split reactions bring all remaining biased agents down to $\fieldexponent = -23$, which is only possible with an initial tie.
         }
         \label{fig:snapshot_2_tie_gap}
     \end{subfigure}
     \hfill
        \caption{\footnotesize
        The case of a tie, with initial gap $g = 0$, again with $n = 5122666 \approx 2^{23}$, $p  = 0.1$, $k=2$.
        The simulation converges in \phaseConsensus. \cref{fig:opinion_linear} shows the counts of each opinion among the population of $\rolemain$ agents.
         Three special times S0, S1, S2, and the configurations of biased agents at these snapshots are shown in Figures~\ref{fig:snapshot_0_tie_gap}, \ref{fig:snapshot_1_tie_gap}, \ref{fig:snapshot_2_tie_gap}.
         See the link to animations of these plots at the end of this section.
         }
        \label{fig:simulation-tie}
\end{figure}

In this section we show simulation results, where the complete pseudocode of \cref{subsec:pseudocode} was translated into Java code available on 
GitHub~\cite{github_java_simulations}.
In these simulations, we stop the protocol once all agents reach \phaseConsensusTwo. For the low probability case that agents switch to the \stableBackup, the simulator prints an error indicating the agents should switch to slow back up, but as expected this was not observed in our simulations.
For all our plots, we collect data from simulations with $n \approx 2^{23}$, $p $ (drip probability) $ = 0.1 $. The first simulation in \cref{fig:clock_minute_hour} shows the relationship between $\fieldminute$ of $\roleclock$ agents and $\fieldhour$ of $\rolemain$ agents. 
Here we used $k = 5$ minutes per hour to show clearly the relationship and the discrete nature of the hours.

All remaining simulations used the even weaker value $k = 2$, to help see enough low probability behavior that the logic enforcing probability-1 correctness in later phases is necessary. 
We show 3 simulations corresponding to the 3 different types of initial gap. 
\cref{fig:constant-gap,fig:snapshots-constant} show constant initial gap $g = +2$. 
This is our ``typical'' case, where the simulation eventually stabilizes to the correct output in \phaseConsensusTwo. 
\cref{fig:simulation-linear} shows linear initial gap $g \approx \frac{n}{10}$, which stabilizes in \phaseConsensus\ after quickly cancelling all minority agents. 
\cref{fig:simulation-tie} shows initial tie $g = 0$, which stabilizes in \phaseDetectTie\ after all biased agents reach the minimum $\fieldexponent = -L = -23$.

All three simulations show various snapshots of configurations of the biased agents. These particular snapshots are at special times marked in \cref{fig:opinion-constant,fig:opinion_linear,fig:opinion_tie}. In all cases, an animation is available at GitHub~\cite{github_animation_simulations},
showing the full evolution of these distributions over all recorded time steps from the simulation.

\subsection{Algorithm pseudocode}
\label{subsec:pseudocode}

In this section we give a full formal description of the main algorithm.

Every agent starts with a read-only field $\fieldinput\in\{\A,\B\}$, a field $\fieldoutput \in \{\A,\B,\T\}$ corresponding to outputs that the majority is $\A$, $\B$, or a tie. The protocol is broken up into 11 consecutive phases, marked by the additional field $\fieldphase=0\in\{0,\ldots,10\}$. The phase updates via the epidemic reaction
    $u.\fieldphase, v.\fieldphase \gets \max(u.\fieldphase, v.\fieldphase).$
Some fields are only used in particular phases, to ensure the total state space is $\Theta(\log n)$.\footnote{
    Note that using two fields,
    both with $O(\log n)$ possible values,
    requires $O(\log^2 n)$ states,
    not $O(\log n)$.
} 
Such fields and the initial behavior of an agent upon entering a phase are described in the $\textbf{Init}$ section above each phase. 
Whenever an agent increments their $\fieldphase$, they execute $\textbf{Init}$ for the new phase (and sequentially for any phases in between if they happen to increment $\fieldphase$ by more than 1). We refer to the ``end of phase $i$'' to mean the time when the first agent sets $\fieldphase \gets i+1$.
Note that the agents actually enter each new phase by epidemic, so there is technically no well-defined ``beginning of phase $i$''. 
To simplify the analysis, we formally start our arguments for each phase assuming each agent is in the current phase,
although technically $\Theta(\log n)$ time will pass between the time the first agent enters phase $i$ and the last agent does.

Each timed phase $i$ each requires setting agents to count from $c_i \ln n$ down to 0, where the minimum required value of $c_i$ depends on the phase.
These constants can be derived from the technical analysis in Sections~\ref{sec:analysis-early-phases}-\ref{sec:analysis-final-phases}
but for brevity we avoid giving them concrete values in the pseudocode.
Our simulations (\cref{sec:simulation}) that used the same small constant $5\log_2(n)$ for all counters seem to work, but the proofs require larger constants to ensure the necessary behavior within each phase can complete with high probability $1-O(1/n^2)$. By increasing these constants $c_i$ (along with changing the phase clock constants $p$, $k$; see \phaseMainAveraging\ and \cref{thm:clock}), we could also push high probability bound to $1-O(1/n^c)$ for any desired constant $c$. For concreteness, use $1-O(1/n^2)$ for most high probability guarantees, since this is large enough to take appropriate union bounds and ensure the extra time from low probability failures does not contribute meaningfully to the total $O(\log n)$ time bound.

\begin{algorithm}
\caption*{\textbf{Nonuniform Majority}$_L(u,v)$.
Nonuniform majority algorithm for population sizes $n$ with $L = \ceil{\log n}$.
\\
{\bf Init:}
$\fieldphase \gets 0 \in \{0,1,\ldots,10\}$ and execute {\bf Init} for \phaseInitialize.
}
\label{alg:majority-nonuniform}
\begin{algorithmic}[1]
\If{$i.\fieldphase < j.\fieldphase$ where $\{i,j\} = \{u,v\}$}
    \For{$p = \{ i.\fieldphase + 1, \ldots, j.\fieldphase \}$}
        \State{execute {\bf Init} for {\bf Phase} $p$ on agent $i$}
    \EndFor
    \State{$i.\fieldphase \gets j.\fieldphase$}
\EndIf
\State{execute {\bf Phase} $u.\fieldphase(u,v)$}
\end{algorithmic}
\end{algorithm}

\phaseInitialize\ is a timed phase that splits the population into three subpopulations:
$\rolemain$ to compute majority,
$\roleclock$ to time the phases and the movement through exponents in \phaseMainAveraging,
and
$\rolereserve$ to aid in cleanup during \phaseReserveSplit.
An agent can only move into role $\roleclock$ or $\rolereserve$ by ``donating'' its opinion to a $\rolemain$ agent,
who can collect up to two other opinions in addition to their own,
leading to a \emph{bias} of up to $\pm 3$.
After this phase,
the populations of the three roles are near the expected one quarter $\rolemain$, one quarter $\roleclock$, and one half $\rolereserve$.
\cref{lem:phase-initialize} shows that all initial opinions have been given to assigned $\rolemain$ agents and these subpopulations are near their expected fractions, both with high probability.

It is likely that we could use a simpler population splitting scheme. 
For example, we could simply have each pair of agents with initially opposite opinions change to roles $\roleclock$ and $\rolereserve$.
However, this would mean the number of agents in the role $\rolemain$ after \phaseInitialize\ would depend on the initial gap.
The current method of population splitting gives us stronger guarantees on the number of each agent in each role,
simplifying subsequent analysis.

\begin{phase}[H]
\caption{Initialize Roles. Agent $u$ interacting with agent $v$.
\\
\textbf{Init} 
$\fieldrole \gets \roleMCR \in\{\rolemain,\roleclock,\rolereserve,
\roleMCR,
\roleCR\}$
\\
$\fieldassigned \gets \False \in \{\True,\False\}$
\\
if $\fieldinput = \A$, $\fieldbias \gets +1 \in \{-3,-2,-1, 0, +1, +2, +3\}$
\\
if $\fieldinput = \B$, $\fieldbias \gets -1 \in \{-3,-2,-1, 0, +1, +2, +3\}$
\\
we always maintain the invariant $\fieldopinion = \text{sign}(\fieldbias)\in\{-1,0,+1\}$
\\
if $\fieldrole = \roleclock$, $\fieldcounter \gets c_0\ln n \in \{0,\ldots,c_0\ln n\}$ only used in the current phase
}
\label{phase:initialize}
\begin{algorithmic}[1]
\If{$u.\fieldrole = v.\fieldrole = \roleMCR$}
\Comment{Allocate $\approx \frac{1}{2} \rolemain$ agents}
    \State{$u.\fieldrole \gets \rolemain$; $u.\fieldbias \gets u.\fieldbias + v.\fieldbias$}
    \label{line:phase-0-assign-main}
    \State{$v.\fieldbias \gets 0$; $v.\fieldrole \gets \roleCR$}
    \Comment{$v$ won't use its bias subsequently}
\EndIf
\If{$i.\fieldrole = \roleMCR$, $j.\fieldrole = \rolemain$, $j.\fieldassigned = \False$ where $\{i,j\} = \{u,v\}$}
    \State{$j.\fieldassigned \gets \True$; $j.\fieldbias \gets j.\fieldbias + i.\fieldbias$}
    \label{line:phase-0-main-takes-bias}
    \Comment{$\rolemain$ agents can assign 1 non-$\rolemain$ agent}
    \State{$i.\fieldbias \gets 0$; $i.\fieldrole \gets \roleCR$}
    \Comment{$i$ won't use its bias subsequently}
\EndIf
\If{$i.\fieldrole = \roleMCR$, $j.\fieldrole \neq \rolemain, \roleMCR$, $j.\fieldassigned = \False$ where $\{i,j\} = \{u,v\}$}
    \State{$j.\fieldassigned \gets \True$}
    \Comment{non-$\rolemain$ agents can assign 1 $\rolemain$ agent}
    \State{$i.\fieldrole \gets \rolemain$}
\EndIf
\If{$u.\fieldrole = v.\fieldrole = \roleCR$}
\Comment{Allocate $\approx \frac{1}{4} \roleclock$ agents, $\approx \frac{1}{4} \rolereserve$ agents}
    \State{$u.\fieldrole \gets \roleclock$; $u.\fieldcounter \gets \Theta(\log n)$}
    \State{$v.\fieldrole \gets \rolereserve$}
\EndIf
\If{$u.\fieldrole = v.\fieldrole = \roleclock$}
\Comment{time the phase once we have at least 2 $\roleclock$ agents}
    \State{execute \textbf{\countersubroutine}($u$), execute \textbf{\countersubroutine}($v$)}
\EndIf
\end{algorithmic}
\end{phase}

\begin{algorithm}
\caption*{\textbf{Standard Counter Subroutine}($c$). Agent $c$ with field $\fieldcounter$.}
\label{alg:standard-counter-subroutine}
\begin{algorithmic}[1]
\State{$c.\fieldcounter \gets c.\fieldcounter - 1$}
\If{$c.\fieldcounter = 0$}
    \State{$c.\fieldphase \gets c.\fieldphase + 1$}
    \Comment{move to next phase}
\EndIf
\end{algorithmic}
\end{algorithm}

\phaseDiscreteAveraging\ is a timed phase that averages the biases in $\rolemain$ agents;
with high probability at the end of the phase the $\fieldbias$ fields have three consecutive values, shown in \cref{lem:discrete-averaging}.
\phaseDiscreteAveraging\ expects no agent to remain in role $\roleMCR$ by this point,
signaling an error otherwise.\footnote{
    $\fieldrole = \roleMCR$ is an error because we need all agents not in role $\rolemain$ to have ``donated'' their bias to a $\rolemain$ agent.
    It is okay for $\roleCR$ agents to be undecided about $\roleclock$ versus $\rolereserve$.
    All such agents because $\rolereserve$, which WHP leaves sufficiently many agents in role $\roleclock$ to make the clock interactions sufficiently fast.
}

\begin{phase}[H]
\caption{Discrete Averaging. Agent $u$ interacting with agent $v$.
\\
\textbf{Init} 
if $\fieldrole = \roleMCR$, $\fieldphase \gets 10$ (error, skip to stable backup)
\\
if $\fieldrole = \roleCR$, $\fieldrole \gets \rolereserve$
\\
if $\fieldrole = \roleclock$, $\fieldcounter \gets c_1\ln n \in \{0,\ldots,c_1\ln n\}$ only used in the current phase
}
\label{phase:discrete-averaging}
\begin{algorithmic}[1]
\If{$u.\fieldrole = v.\fieldrole = \rolemain$}
    \State{$u.\fieldbias \gets \floor{\frac{u.\fieldbias + v.\fieldbias}{2}}$; $v.\fieldbias \gets \ceil{\frac{u.\fieldbias + v.\fieldbias}{2}}$}
\EndIf
\For{$c\in\{u,v\}$ with $c.\fieldrole = \roleclock$}
    \State{execute \textbf{\countersubroutine}(c)}
\EndFor
\end{algorithmic}
\end{phase}


\phaseConsensus\ (an untimed phase) checks to see if the entire minority population was eliminated in \phaseDiscreteAveraging\ by checking whether both positive and negative biases still exist. 
It assumes a starting condition where all $\fieldbias \in\{-1,0,+1\}$ (like the initial condition, but allowing some cancelling to have already happened). So the \textbf{Init} checks if any $|\fieldbias|>1$, which can only happen with low probability (since \phaseDiscreteAveraging\ WHP reaches the three consecutive values $\{-1,0,+1\}$), 
and we consider this an error and simply proceed immediately to \phaseStableBackup.
If not, the minority opinion is gone,
and the protocol will stabilize here to the correct output.
Otherwise, we proceed to the next phase;
note this phase is untimed and proceeds immediately upon detection of conflicting opinions. \cref{lem:discrete-averaging} shows that starting from a large initial gap, we will stabilize here with high probability.

\begin{phase}[H]
\caption{Output the Consensus. Agent $u$ interacting with agent $v$.
\\
\textbf{Init} 
if $|\fieldbias|>1$, $\fieldphase \gets 10$ (error, skip to stable backup)
\\
$\fieldbiases \gets \{\fieldopinion\} \subseteq\{-1, 0, +1\}$
}
\label{phase:consensus}
\begin{algorithmic}[1]
\State{$u.\fieldbiases, v.\fieldbiases \gets u.\fieldbiases \cup v.\fieldbiases$}
\Comment{union $\fieldbiases$}
\If{$\{-1,+1\}\subseteq\fieldbiases$}
    \State{$u.\fieldphase, v.\fieldphase \gets u.\fieldphase + 1$}
    \Comment{no consensus, move to next phase}
\ElsIf{$+1\in\fieldbiases$}
    \State{$u.\fieldoutput, v.\fieldoutput \gets \A$}
    \Comment{current consensus is $\A$}
\ElsIf{$-1\in\fieldbiases$}
    \State{$u.\fieldoutput, v.\fieldoutput \gets \B$}
    \Comment{current consensus is $\B$}
\ElsIf{$\fieldbiases = \{0\}$}
    \State{$u.\fieldoutput, v.\fieldoutput \gets \T$}
    \Comment{current consensus is $\T$}
\EndIf
\end{algorithmic}
\end{phase}

\phaseMainAveraging\ is where the bulk of the work gets done.
The biased agents (with non-zero $\fieldopinion$) have an additional field $\fieldlevel \in \{-L,\ldots,0\}$,
initially 0, 
where $L = \lceil\log_2(n)\rceil$, 
corresponding to holding $2^{\fieldlevel}$ units of mass. 
The unbiased $\unbiased$ agents have an additional field $\fieldTlevel = L \in \{0,\ldots,L\}$, and will only participate in split reactions with $-\fieldlevel > \fieldTlevel$. 
The field $\fieldTlevel$ is set by the $\roleclock$ agents, who have a field $\fieldminute = 0 \in \{0,\ldots,kL\}$,
which counts up as \phaseMainAveraging\ proceeds. 
Intuitively, there are $k$ minutes in an hour, so $\fieldhour = \lceil\frac{\fieldminute}{k}\rceil$ ranges from $0$ to $L$.\footnote{
    The drip reaction $C_i,C_i \to C_{i},C_{i+1}$ implemented by line~\ref{line:clock-drip} is asymmetric, but could be made symmetric, i.e., $C_i,C_i \to C_{i+1},C_{i+1}$ without affecting the analysis meaningfully.
}
\cref{fig:clock_minute_hour} shows how nonconsecutive minutes in the $\roleclock$ agents may overlap significantly, but non-consecutive hours in the $\unbiased$ agents have negligible overlap.
Once a $\roleclock$ agent has $\fieldminute$ at its maximum value $kL$, 
they initialize a new field 
$\fieldcounter = \Theta(\log n) \in \{0, \ldots, \Theta(\log n)\}$ 
to wait for the end of the phase
(waiting for any remaining $\unbiased$ agents to reach $\fieldhour = L$ and the distribution to settle).

See the overview in \cref{subsec:overview} for an intuitive description of this phase. \cref{thm:phase-3-tie-result} gives the main result of \phaseMainAveraging\ in the case of an initial tie, where all remaining biased agents are at the minimal $\fieldexponent = -L$. In the other case, \cref{thm:phase-3-majority-result} gives the main result that most of the population settle on the majority output with $\fieldexponent \in \{-l,-(l+1),-(l+2)\}$.

\begin{phase}[H]
\caption{Synchronized Rational Averaging. Agent $u$ interacting with agent $v$.
\\
\textbf{Init}
if $\fieldrole = \rolemain$ and $\fieldopinion \in \{-1,+1\}$, $\fieldlevel \gets 0 \in \{-L, \ldots, -1, 0\}$, and we define $\fieldbias = \fieldopinion \cdot 2^{\fieldlevel}$
\\
if $\fieldrole = \rolemain$ and $\fieldopinion = 0$, $\fieldTlevel \gets 0 \in \{0,\ldots,L\}$
\\
if $\fieldrole = \roleclock$, $\fieldminute \gets 0 \in \{0,\ldots,kL\}$ and $\fieldcounter \gets c_3\ln n \in \{0, \ldots, c_3\ln n\}$
}
\label{phase:main-averaging}
\begin{algorithmic}[1]
\If{$u.\fieldrole = v.\fieldrole = \roleclock$}
     \If{$u.\fieldminute \neq v.\fieldminute$}
        \State{$u.\fieldminute, v.\fieldminute \gets \max(u.\fieldminute, v.\fieldminute)$}
        \Comment{clock epidemic reaction}
        \label{line:clock-epidemic}
    \ElsIf{$u.\fieldminute < kL$} 
        \State{$u.\fieldminute \gets u.\fieldminute + 1$ with probability $p$}
            \Comment{clock drip reaction, $p=1$ in \cref{thm:clock}}
            \label{line:clock-drip}
    \Else
    \label{line:if-both-clocks-finished}
    \Comment{count only when both clocks finished}
        \State{execute \textbf{\countersubroutine}(u), \textbf{\countersubroutine}(v)}
    \EndIf
\EndIf
\If{$m.\fieldrole = \rolemain, m.\fieldopinion = 0$ and $c.\fieldrole = \roleclock$ where $\{m,c\}=\{u,v\}$}
    \State{$m.\fieldTlevel \gets \max\big(m.\fieldTlevel,\floor{\frac{c.\fieldminute}{k}}\big)$}
    \label{line:clock-update-reaction}
    \Comment{clock update reaction}
\ElsIf{$u.\fieldrole = v.\fieldrole = \rolemain$}
    \If{$\{u.\fieldopinion,v.\fieldopinion\} = \{-1,+1\}$ and $u.\fieldlevel = v.\fieldlevel = -h$}
        \State{$u.\fieldopinion, v.\fieldopinion \gets 0$; $u.\fieldhour, v.\fieldhour \gets h$}
        \Comment{cancel reaction}
    \EndIf
    \If{$t.\fieldopinion = 0$, $i.\fieldopinion \in \{-1,+1\}$ and $|t.\fieldTlevel| > |i.\fieldlevel|$, where $\{t,i\}=\{u,v\}$}
        \State{$t.\fieldopinion \gets i.\fieldopinion$}
        \Comment{split reaction}
        \State{$i.\fieldlevel, t.\fieldlevel \gets i.\fieldlevel - 1$}
        \Comment{sets $i.\fieldbias,t.\fieldbias \gets i.\fieldbias / 2$}
    \EndIf
\EndIf
\end{algorithmic}
\end{phase}

In \phaseDetectTie\ (an untimed phase), the population checks if all $\rolemain$ agents have reached the minimum $\fieldlevel = -L$, 
which only happens in the case of a tie. 
If so, the population will stabilize to the tie output. 
Otherwise, any $\rolemain$ agent with $\fieldlevel$ above $L$ can trigger the move to the next phase.

\begin{phase}[H]
\caption{Output Tie. Agent $u$ interacting with agent $v$.
\\
\textbf{Init} $\fieldoutput \gets \T$ }
\label{phase:detecttie}
\begin{algorithmic}[1]
\If{$|m.\fieldbias| > 2^{-L}$ where $m\in\{u,v\}$}
\Comment{stable if all $\fieldbias \in \left\{ -\frac{1}{2^L}, 0, +\frac{1}{2^L} \right\}$}
    \State{$u.\fieldphase, v.\fieldphase \gets u.\fieldphase + 1$}
    \Comment{end of this phase}
\EndIf
\end{algorithmic}
\end{phase}

If there was not a tie detected in \phaseDetectTie, then most agents should have the majority opinion, with $\fieldlevel \in \{-l, -(l+1), -(l+2)\}$ in a small range. 
The next goal is to bring all agents with $\fieldlevel > -l$ down to $\fieldlevel \leq -l$. This will be accomplished by the $\rolereserve$ agents, whose goal is to let exponents above $l$ do split reactions.

The $\rolereserve$ agents do this across two consecutive phases.
In \phaseReserveSample, the $\rolereserve$ agents become active, and will set $\fieldsample = \bot \in \{\bot,0,\ldots,L\}$ to the exponent of the first biased agent they meet, which is likely in $ \{-l, -(l+1), -(l+2)\}$. 
The $\roleclock$ agents now only hold a field $\fieldcounter = \Theta(\log n)\in\{0,\ldots,\Theta(\log n)\}$ to act as a simple timer for how long to wait until moving to the next phase.
This allows the $\rolereserve$ agents to adopt a distribution of $\fieldlevel$ values approximately equal to that of the $\rolemain$ agents.
This behavior is proven in \cref{lem:reserve-sample} and \cref{lem:reserve-split}.

\begin{phase}[H]
\caption{Reserves Sample exponent. Agent $u$ interacting with agent $v$.
\\
\textbf{Init}
if $\fieldrole = \rolereserve$, $\fieldsample \gets \bot \in \{\bot,-L,\ldots,-1,0\}$
\\
if $\fieldrole = \roleclock$, $\fieldcounter \gets c_5\ln n\in\{0,\ldots,c_5\ln n\}$
}
\label{phase:reserve-sample}
\begin{algorithmic}[1]
\If{$r.\fieldrole = \rolereserve$ and $m.\fieldrole = \rolemain, m.\fieldopinion \in \{-1,+1\}$ where $\{r,m\}=\{u,v\}$}
    \If{$r.\fieldsample = \bot$}
    \Comment{sample exponent of biased agent $m$}
        \State{$r.\fieldsample \gets m.\fieldlevel$}
    \EndIf
\EndIf
\For{$c\in\{u,v\}$ with $c.\fieldrole = \roleclock$}
    \State{execute \textbf{\countersubroutine}(c)}
\EndFor
\end{algorithmic}
\end{phase}

In \phaseReserveSplit,
the $\rolereserve$ agents can help facilitate more split reactions, 
with any agent at an exponent above their sampled exponent.
Because they have approximately the same distribution across all exponents as $\rolemain$ agents,
particular exponents $-l, -(l+1), -(l+2)$,
this allows them to bring all agents above $\fieldlevel = -l$ down to $-l$ or below.
Again, the $\roleclock$ agents keep a $\fieldcounter \in \{0,\ldots,c_6 \ln n\}$. \cref{lem:reserve-split} proves that \phaseReserveSplit\ works as intended, bringing all agents down to $\fieldlevel \leq -l$ with high probability.

\begin{phase}[H]
\caption{Reserve Splits. Agent $u$ interacting with agent $v$.
\\
\textbf{Init}
if $\fieldrole = \roleclock$, $\fieldcounter \gets c_6 \ln n \in \{0,\ldots,c_6 \ln n\}$
}
\label{phase:reserve-split}
\begin{algorithmic}[1]
\If{$r.\fieldrole = \rolereserve$ and $m.\fieldrole = \rolemain, m.\fieldopinion \in \{-1,+1\}$ where $\{r,m\}=\{u,v\}$}
    \If{$r.\fieldsample \neq \bot$ and $r.\fieldsample < m.\fieldlevel$}
        \State{$r.\fieldrole \gets \rolemain$; $r.\fieldopinion \gets m.\fieldopinion$}
        \Comment{split reaction}
        \label{line:reserve-split}
        \State{$r.\fieldlevel, m.\fieldlevel \gets m.\fieldlevel - 1$}
        \Comment{sets $r.\fieldbias,m.\fieldbias \gets m.\fieldbias / 2$}
    \EndIf
\EndIf
\For{$c\in\{u,v\}$ with $c.\fieldrole = \roleclock$}
    \State{execute \textbf{\countersubroutine}(c)}
\EndFor
\end{algorithmic}
\end{phase}

Now that all agents are $\fieldlevel \leq -l$, 
the goal of \phaseHighMinorityElimination\ is to eliminate any minority agents with exponents $-l, -(l+1), -(l+2)$.
This is done by letting the biased agents do generalized cancel reactions that allow their difference in exponents to be up to 2, 
while still preserving the mass invariant.
Again, the $\roleclock$ agents keep a $\fieldcounter \in \{0,\ldots, c_7 \ln n \}$.
\cref{lem:high-minority-elimination} shows that by the end of this phase, any remaining minority agents must have $\fieldexponent < -(l+2)$, with high probability.


\begin{phase}[H]
\caption{High-exponent Minority Elimination. Agent $u$ interacting with agent $v$
\\
\textbf{Init}
if $\fieldrole = \roleclock$, $\fieldcounter \gets c_7 \ln n \in\{0,\ldots, c_7 \ln n \}$
}
\label{phase:highminorityelimination}
\begin{algorithmic}[1]
\If{$u.\fieldrole = v.\fieldrole = \rolemain$ and $\{u.\fieldopinion,v.\fieldopinion\} = \{-1,+1\}$}
    \If{$u.\fieldlevel = v.\fieldlevel$}
        \State{$u.\fieldopinion, v.\fieldopinion \gets 0$}
        \Comment{cancel reaction}
    \ElsIf{$i.\fieldlevel = j.\fieldlevel + 1$ where $\{i,j\} = \{u,v\}$}
        \State{$i.\fieldlevel \gets i.\fieldlevel - 1$}
        \Comment{gap-1 cancel reaction}
        \State{$j.\fieldopinion \gets 0$}
        \Comment{example $\fieldbias$ update:  $+\frac{1}{4}, -\frac{1}{8} \to +\frac{1}{8}, 0$} 
    \ElsIf{$i.\fieldlevel = j.\fieldlevel + 2$ where $\{i,j\} = \{u,v\}$}
        \State{$j.\fieldopinion \gets i.\fieldopinion$}
        \Comment{gap-2 cancel reaction}
        \State{$i.\fieldlevel \gets i.\fieldlevel - 1$}
        \Comment{example $\fieldbias$ update: $+\frac{1}{4}, -\frac{1}{16} \to +\frac{1}{8}, +\frac{1}{16}$} 
        \State{$j.\fieldlevel \gets i.\fieldlevel - 2$}
    \EndIf
\EndIf
\For{$c\in\{u,v\}$ with $c.\fieldrole = \roleclock$}
    \State{execute \textbf{\countersubroutine}(c)}
\EndFor
\end{algorithmic}
\end{phase}

Now that all minority agents occupy exponents below $-(l+2)$,
yet a large number of majority agents remain at exponents $-l, -(l+1), -(l+2)$,
in \phaseLowMinorityElimination,
the algorithm eliminates the last remaining minority opinions at any exponent.
It allows opposite-opinion agents of \emph{any} two exponents to react and eliminate the smaller-exponent opinion.
The larger exponent $\frac{1}{2^i}$ ``absorbs'' the smaller $-\frac{1}{2^j}$, setting the smaller to mass $0$;
the larger now represents mass $\frac{1}{2^i} - \frac{1}{2^j}$,
which it lacks the memory to track exactly,
so it cannot absorb any further agents 
(though it can itself be absorbed by $-\frac{1}{2^m}$ for $m > i$).
\cref{lem:low-minority-elimination} shows that \phaseLowMinorityElimination\ eliminates any remaining minority agents with high probability.
A key property is that it cannot violate correctness since, although the biases held by some agents becomes unknown, 
the allowed transitions maintain the sign of the bias.
Thus the only source of error is failing to eliminate all minority agents,
detected in the next phase.

\begin{phase}[H]
\caption{Low-exponent Minority Elimination. Agent $u$ interacting with agent $v$
\\
\textbf{Init}
if $\fieldrole = \roleclock$, $\fieldcounter \gets c_8 \ln n \in\{0,\ldots, c_8 \ln n \}$
}
\label{phase:lowminorityelimination}
\begin{algorithmic}[1]
\If{$u.\fieldrole = v.\fieldrole = \rolemain$ and $\{u.\fieldbias,v.\fieldbias\} = \{-1,+1\}$}
    \If{$i.\fieldlevel > j.\fieldlevel$ and $i.\fieldfull = \False$ where $\{i,j\}=\{u,v\}$}
        \State{$i.\fieldfull \gets \True$}
        \Comment{consumption reaction}
        \State{$j.\fieldopinion \gets 0$}
    \EndIf
\EndIf
\For{$c\in\{u,v\}$ with $c.\fieldrole = \roleclock$}
    \State{execute \textbf{\countersubroutine}(c)}
\EndFor
\end{algorithmic}
\end{phase}

\phaseConsensusTwo\ (an untimed phase) acts exactly as \phaseConsensus, to check that agents have reached consensus.
Note that the initial check for $|\fieldbias| > 1$ is not required, since it is guaranteed to pass if we reach this point:
it passed in \phaseConsensus,
and the population-wide maximum $|\fieldbias|$ could only have decreased in subsequent phases.

\begin{phase}[H]
\caption{Output the Consensus. Exact repeat of \phaseConsensus.}
\label{phase:consensus2}
\end{phase}


In \phaseStableBackup, 
the agents give up on the fast algorithm,
having determined in \phaseConsensusTwo\ that it failed to reach consensus, or detected an earlier error with $\fieldrole$ or $\fieldbias$ assignment.
Instead they rely instead on a slow stable backup protocol.
This is a 6-state protocol that stably decides between the three cases of majority $\A$, $\B$, and tie $\T$. 
They only use the fields $\fieldoutput = \fieldinput \in \{\A,\B,\T\}$, and the initial field $\fieldactive = \True \in \{\True,\False\}$. \cref{lem:stable-backup} proves this 6-state protocol stably computes majority in $O(n\log n)$ time.



\begin{phase}[H]
\caption{Stable Backup. Agent $u$ interacting with agent $v$.
Similar to 6-state algorithm from \cite{BlackNinjasInTheDark},
slight modification of 4-state algorithm from~\cite{mertzios2014determining, draief2012convergence}.
\\
\textbf{Init}
$\fieldoutput \gets \fieldinput$, $\fieldactive \gets \True$}
\label{phase:backup}
\begin{algorithmic}[1]
\If{$u.\fieldactive = v.\fieldactive = \True$}
    \If{$\{u.\fieldoutput,v.\fieldoutput\} = \{\A,\B\}$}
        \State{$u.\fieldoutput, v.\fieldoutput \gets \T$}
        \Comment{cancel reaction}
        \label{line:stable-cancel-reaction}
    \ElsIf{$i.\fieldoutput\in\{\A,\B\}$ and $t.\fieldoutput = \T$ where $\{i,t\}=\{u,v\}$}
        \State{$t.\fieldoutput \gets i.\fieldoutput$, $t.\fieldactive \gets \False$}
        \Comment{biased converts unbiased}
        \label{line:stable-convert-unbiased}
    \EndIf
\EndIf
\If{$a.\fieldactive = \True$ and $p.\fieldactive = \False$ where $\{a,p\}=\{u,v\}$}
\label{alg:slow-stable-majority}
    \State{$p.\fieldoutput \gets a.\fieldoutput$}
    \Comment{active converts passive}
    \label{line:stable-convert-passive}
\EndIf
\end{algorithmic}
\end{phase}

\section{Useful time bounds}
\label{sec:timing-lemmas}



This section introduces various probability bounds which will be used repeatedly in later analysis.

We will use the following standard multiplicative Chernoff bound:

\begin{theorem}
\label{thm:chernoff-bound-multiplicative}
Let $X = X_1 + \ldots + X_k$ be the sum of independent $\{0,1\}$-valued random variables, with $\mu = \E[X]$. Then for any $\delta > 0$, $\Pr[X \geq (1 + \delta)\mu] \leq \exp(-\frac{\delta^2 \mu}{2 + \delta})$.
\end{theorem}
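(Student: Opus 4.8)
The plan is to use the standard exponential-moment (Bernstein--Chernoff) method. Fix an arbitrary $t > 0$. Since $X \geq (1+\delta)\mu$ holds exactly when $e^{tX} \geq e^{t(1+\delta)\mu}$, Markov's inequality applied to the nonnegative random variable $e^{tX}$ gives $\Pr[X \geq (1+\delta)\mu] \leq e^{-t(1+\delta)\mu}\,\E[e^{tX}]$. By independence of the $X_i$ we have $\E[e^{tX}] = \prod_{i=1}^{k}\E[e^{tX_i}]$, and writing $p_i = \Pr[X_i = 1]$ we compute $\E[e^{tX_i}] = 1 + p_i(e^t - 1) \leq \exp\!\big(p_i(e^t-1)\big)$ using $1 + x \leq e^x$. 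Multiplying over $i$ and using $\sum_i p_i = \mu$ yields $\E[e^{tX}] \leq \exp\!\big(\mu(e^t-1)\big)$, hence $\Pr[X \geq (1+\delta)\mu] \leq \exp\!\big(\mu(e^t - 1 - t(1+\delta))\big)$ for every $t > 0$.

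Next I would optimize the exponent over $t$. The function $t \mapsto e^t - 1 - t(1+\delta)$ is minimized at $t = \ln(1+\delta)$, which is positive since $\delta > 0$; substituting this value gives $\Pr[X \geq (1+\delta)\mu] \leq \big(e^{\delta}(1+\delta)^{-(1+\delta)}\big)^{\mu}$. Taking logarithms, it suffices to prove the real-variable inequality $\delta - (1+\delta)\ln(1+\delta) \leq -\frac{\delta^2}{2+\delta}$ for all $\delta > 0$, equivalently $(1+\delta)\ln(1+\delta) - \delta \geq \frac{\delta^2}{2+\delta}$.

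For this last step I would establish the elementary bound $\ln(1+x) \geq \frac{2x}{2+x}$ for $x \geq 0$: set $g(x) = \ln(1+x) - \frac{2x}{2+x}$, note $g(0)=0$, and compute $g'(x) = \frac{1}{1+x} - \frac{4}{(2+x)^2} = \frac{x^2}{(1+x)(2+x)^2} \geq 0$, so $g$ is nondecreasing on $[0,\infty)$ and hence $g \geq 0$ there. Plugging $\ln(1+\delta) \geq \frac{2\delta}{2+\delta}$ into the quantity of interest gives $(1+\delta)\ln(1+\delta) - \delta \geq (1+\delta)\tfrac{2\delta}{2+\delta} - \delta = \frac{2\delta(1+\delta) - \delta(2+\delta)}{2+\delta} = \frac{\delta^2}{2+\delta}$, which is exactly what is needed, completing the bound.

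The argument is routine, and the only mildly delicate point is that last elementary inequality relating $\ln(1+\delta)$ to $\frac{2\delta}{2+\delta}$: it must be verified for all $\delta > 0$ (not merely small $\delta$) so that the stated bound is valid over the full range, and the monotonicity argument via $g'$ above is what secures it uniformly. Everything else is a mechanical application of Markov's inequality, independence, and $1+x\leq e^x$, so I expect no real obstacle beyond that inequality.
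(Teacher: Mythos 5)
Your proof is correct and complete: the exponential-moment/Markov step, the per-term bound $\E[e^{tX_i}] \leq \exp(p_i(e^t-1))$, the choice $t = \ln(1+\delta)$, and the Pad\'e-type inequality $\ln(1+x) \geq \frac{2x}{2+x}$ verified via $g'(x) = \frac{x^2}{(1+x)(2+x)^2} \geq 0$ all check out, and the algebra $(1+\delta)\frac{2\delta}{2+\delta} - \delta = \frac{\delta^2}{2+\delta}$ is exact. The paper states this as a standard Chernoff bound without proof, so there is no in-paper argument to compare against, but your derivation is the canonical one.
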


We use the standard Azuma inequality for supermartingales:

\begin{theorem}
\label{thm:azuma}
Let $X_0,X_1,X_2,\ldots$ be a supermartingale such that,
for all $i \in \mathbb{N}$,
$|X_{i+1} - X_i| \leq c_i$.
Then for all $n \in \mathbb{N}$ and $\epsilon > 0$,
$
    \Pr\qty[(X_n - X_0) - \E[X_n - X_0] \geq \epsilon]
    \leq
    \exp(- \frac{2 \epsilon^2}{\sum_{i=0}^n c_i^2}).
$
\end{theorem}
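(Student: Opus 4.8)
The plan is to prove this by the classical exponential-moment (Chernoff) method that underlies Azuma--Hoeffding; since the statement is standard, one could alternatively just cite a textbook, but the self-contained argument is short. Write $\mathcal{F}_i$ for the $\sigma$-algebra generated by $X_0,\dots,X_i$, let $\Delta_i = X_{i+1}-X_i$ denote the $i$-th increment (so $|\Delta_i|\le c_i$), and set $S = (X_n - X_0) - \E[X_n - X_0] = \sum_{i=0}^{n-1}\bigl(\Delta_i - \E[\Delta_i]\bigr)$. First I would apply the exponential Markov inequality: for every $\lambda > 0$,
\[
    \Pr[S \ge \epsilon] \le e^{-\lambda\epsilon}\,\E\!\left[e^{\lambda S}\right],
\]
reducing the problem to bounding the moment generating function $\E[e^{\lambda S}]$ and then optimizing over $\lambda$.

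The core step is to estimate $\E[e^{\lambda S}]$ by peeling off the increments one at a time via the tower property. Conditioning on $\mathcal{F}_{n-1}$ and using that $e^{\lambda\sum_{i=0}^{n-2}(\Delta_i-\E[\Delta_i])}$ is $\mathcal{F}_{n-1}$-measurable,
\[
    \E\!\left[e^{\lambda S}\right]
    = \E\!\left[e^{\lambda\sum_{i=0}^{n-2}(\Delta_i-\E[\Delta_i])}\;\E\!\left[e^{\lambda(\Delta_{n-1}-\E[\Delta_{n-1}])}\mid\mathcal{F}_{n-1}\right]\right].
\]
Given $\mathcal{F}_{n-1}$, the variable $\Delta_{n-1}-\E[\Delta_{n-1}]$ ranges over an interval of length at most $2c_{n-1}$, and the supermartingale hypothesis gives $\E[\Delta_{n-1}\mid\mathcal{F}_{n-1}]\le 0$; applying Hoeffding's lemma to the inner conditional expectation and using that a non-positive drift can only shrink it, I would bound the inner factor by $\exp\!\bigl(\tfrac{\lambda^2}{2}c_{n-1}^2\bigr)$ (up to the bookkeeping for the centering term, discussed below). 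Iterating from $i=n-1$ down to $i=0$ then gives $\E[e^{\lambda S}]\le \exp\!\bigl(\tfrac{\lambda^2}{2}\sum_{i=0}^{n-1}c_i^2\bigr)$; substituting into the Chernoff bound and taking $\lambda = \epsilon/\sum_i c_i^2$ produces a sub-Gaussian tail of the stated form (enlarging the sum to $\sum_{i=0}^{n}c_i^2$ is harmless since $c_n^2\ge 0$).

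The one genuinely delicate point — and the step I expect to be the main obstacle — is the interaction between the supermartingale drift and the centering by $\E[X_n-X_0]$: the conditional mean of the centered increment $\Delta_i - \E[\Delta_i]$ equals $\E[\Delta_i\mid\mathcal{F}_i] - \E[\Delta_i]$, which need not be non-positive, so one cannot simply invoke ``the drift helps'' as in the uncentered bound $\Pr[X_n - X_0 \ge \epsilon]$. Making the peeling argument go through requires tracking these drift fluctuations carefully (for instance via the Doob decomposition, absorbing the predictable non-positive compensator) so that the final estimate is for the upper tail of the centered process. Everything else — the Chernoff inequality, the tower-property induction, and the one-variable estimate $\E[e^{\lambda Y}]\le e^{\lambda^2(b-a)^2/8}$ for a mean-zero $Y\in[a,b]$ — is entirely routine.
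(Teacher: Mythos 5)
The paper does not prove \cref{thm:azuma}; it is cited as ``the standard Azuma inequality for supermartingales,'' so there is no paper proof to compare your proposal against. Your framework (Chernoff bound, tower-property peeling, Hoeffding's lemma) is the right one, and you correctly single out the centering by $\E[X_n-X_0]$ as the delicate point. But you are too optimistic that this can be patched by ``tracking drift fluctuations carefully'': the statement as written is in fact false. Take $X_0=0$, $X_1 = \pm 1$ with probability $\tfrac12$ each, then $X_2 = X_1 \pm 1$ with probability $\tfrac12$ each if $X_1=1$ and $X_2 = -2$ if $X_1 = -1$. This is a supermartingale with $c_0 = c_1 = 1$, and $\E[X_2 - X_0] = -\tfrac12$, so $(X_2 - X_0) - \E[X_2-X_0] = X_2 + \tfrac12$ equals $\tfrac52$ with probability $\tfrac14$; yet the claimed bound at $\epsilon=\tfrac52$ is $\exp(-25/4) \approx 0.002$ (and even the standard constant $\exp(-\epsilon^2/(2\sum c_i^2)) = \exp(-25/16) \approx 0.21$ is exceeded). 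The mechanism is exactly the one you identified: the heterogeneous drift $\E[X_{i+1}-X_i\mid\mathcal F_i]$ can be much more negative on some paths than on others, and centering by the \emph{unconditional} average recredits the upper tail more than the increments can absorb. No Doob-decomposition bookkeeping can rescue the theorem as stated, because the compensator $A_n$ is random and $A_n - \E[A_n]$ is predictable but not a martingale, so it admits no sub-Gaussian bound of the required strength from $|X_{i+1}-X_i|\le c_i$ alone.

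A second, independent issue: your own derivation would not reproduce the constant in the statement. Hoeffding's lemma with $|Y|\le c$ (so $Y$ ranges over an interval of length $2c$) gives $\E[e^{\lambda Y}] \le e^{\lambda^2 c^2/2}$, hence $\E[e^{\lambda S}] \le e^{\lambda^2\sum c_i^2/2}$ and, after optimizing $\lambda$, the tail $\exp(-\epsilon^2/(2\sum c_i^2))$ --- weaker by a factor of $4$ in the exponent than the stated $\exp(-2\epsilon^2/\sum c_i^2)$. The latter constant is obtained only under the alternative convention that $c_i$ is the \emph{length} of the interval containing the increment, not a bound on its absolute value; the paper appears to have mixed the two conventions. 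Your claim that the peeling argument ``produces a sub-Gaussian tail of the stated form'' glosses over this. For what it's worth, the paper's two invocations (in \cref{lem:mains-small-hour} and \cref{lem:induction-upper-tail}) do not actually need the false centered statement: in \cref{lem:mains-small-hour} the process starts at $\Phi_0=0$ and only the uncentered $\Pr[\Phi_n-\Phi_0\ge\delta]$ is required, while in \cref{lem:induction-upper-tail} the drift bound $\E[\Delta\Phi\mid\mathcal F_i]\le -a$ holds \emph{pointwise}, so one can apply standard Azuma to $Y_i = X_i + ia$ (a supermartingale with $|Y_{i+1}-Y_i|\le c_i+a$) and obtain the needed estimate without any centering by an unconditional expectation.
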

In application we will consider potential functions $\phi$ that decay exponentially, with $\E[\phi_{j+1}] \leq (1-\epsilon)\phi_j$. In this case, we will take the logarithm $\Phi = \ln(\phi)$, which we will show is a supermartingale upon which we can apply Azuma's Inequality to conclude that $\phi$ achieves a requisite amount of exponential decay.

We will also use the following two concentration bounds on heterogeneous sums of geometric random variables, due to Janson~\cite[Theorems 2.1, 3.1]{janson2018tail}:

\begin{theorem}
\label{thm:janson}
    Let $X$ be sum of $k$ independent geometric random variables with success probabilities $p_1,...,p_k$,
    let $\mu = \E[X] = \sum_{i=1}^k \frac{1}{p_i},$
    and let $p^* = \min\limits_{1 \leq i \leq k} p_i$.
    For all $\lambda \geq 1,$
    $\Pr[X \geq \lambda \mu] 
    \leq e^{- p^* \mu (\lambda - 1 - \ln \lambda)}.$
    For all $\lambda \leq 1,$
    $\Pr[X \leq \lambda \mu] \leq e^{- p^* \mu (\lambda - 1 - \ln \lambda)}.
    $
\end{theorem}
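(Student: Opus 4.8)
The plan is to prove Theorem~\ref{thm:janson} by the standard exponential-moment (Chernoff) method, exploiting the fact that a geometric random variable, after a convenient rescaling of the Chernoff parameter, has a moment generating function dominated by that of a scaled exponential; the sum $X$ then behaves no worse in its tails than a Gamma random variable of shape $p^*\mu$, whose tail is exactly $e^{-p^*\mu(\lambda - 1 - \ln\lambda)}$.

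The first step is a one-variable estimate: if $G$ is geometric with success probability $q$, then $\E[e^{tG}] \le \frac{1}{1 - t/q}$ for $0 \le t < q$, and $\E[e^{-tG}] \le \frac{1}{1 + t/q}$ for all $t \ge 0$. Using $\E[e^{tG}] = q e^{t}/(1-(1-q)e^{t})$ and clearing the (positive) denominators, each of these reduces to the elementary inequality $e^{s}(1-s) \le 1$, valid for all real $s$ (apply it with $s = t$, respectively $s = -t$); the same inequality also shows $(1-q)e^{t} < 1$ whenever $t < q$, so the MGF is finite on the stated range.

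Second, for the upper tail, I would set $t = s\,p^*$ with $s \in [0,1)$, so that $t/p_i \le s$ for every $i$; writing $a_i := p^*/p_i \in (0,1]$ with $\sum_i a_i = p^*\mu$, independence and the one-variable bound give $\E[e^{tX}] \le \prod_i (1 - s a_i)^{-1}$. Since $a \mapsto -\ln(1 - s a)$ is convex on $[0,1]$ and vanishes at $a = 0$, it is at most $-a\ln(1-s)$, whence $\E[e^{tX}] \le (1-s)^{-p^*\mu}$. Markov's inequality then gives $\Pr[X \ge \lambda\mu] \le \big(e^{-s\lambda}(1-s)^{-1}\big)^{p^*\mu}$, and optimizing over $s$ (the minimizer being $s = 1 - 1/\lambda \in [0,1)$, using $\lambda \ge 1$) produces exactly $e^{-p^*\mu(\lambda - 1 - \ln\lambda)}$. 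The lower tail is the mirror image: with $t = s\,p^*$ and $s > 0$, the one-variable bound and convexity of $a \mapsto -\ln(1 + s a)$ give $\E[e^{-tX}] \le (1+s)^{-p^*\mu}$, and applying Markov to $e^{-tX}$ and optimizing (now $s = 1/\lambda - 1 > 0$ since $\lambda < 1$) yields the same expression.

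Since the statement is quoted from Janson~\cite{janson2018tail}, this is really a reconstruction of his argument rather than new work, and there is no genuine obstacle. The one point needing a little care is the common rescaling $t = s\,p^*$, which lets the heterogeneous factors all be controlled by the single parameter $s$; after that, the convexity step is exactly what collapses the product of individual MGF bounds into the clean $(1 \mp s)^{-p^*\mu}$ form matching the Gamma tail, and the remainder is the routine one-variable optimization.
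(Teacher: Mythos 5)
The paper does not prove this theorem—it is quoted as black-box from Janson~\cite{janson2018tail} (Theorems 2.1 and 3.1)—so there is no internal proof to compare against. Your reconstruction is correct and is essentially Janson's own argument: the one-variable MGF bounds $\E[e^{tG}]\le(1-t/q)^{-1}$ and $\E[e^{-tG}]\le(1+t/q)^{-1}$ follow, as you say, from $e^{s}(1-s)\le1$; the rescaling $t=sp^*$ makes $t/p_i\le s$ uniformly; the convexity of $a\mapsto-\ln(1\mp sa)$ on $[0,1]$ together with $\sum_i p^*/p_i=p^*\mu$ collapses the product into $(1\mp s)^{-p^*\mu}$; Markov and the one-dimensional optimization $s=1-1/\lambda$ (resp.\ $s=1/\lambda-1$) deliver the stated exponent $\lambda-1-\ln\lambda$. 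The ranges all check out: $\lambda\ge1$ puts $s\in[0,1)$ so the upper-tail MGF exists, and $\lambda\le1$ gives $s\ge0$ for the lower tail, where the MGF is finite for all $t\ge0$. This is a clean, self-contained derivation of the cited bound and would serve as a valid proof if one wanted the paper to be self-contained rather than citing Janson.
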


The expression $\lambda - 1 - \ln \lambda$ is hard to work with if we are not fixing exact constant values for $\lambda$. The following Corollary gives asymptotic approximation to the error bound:

\begin{corollary}
\label{cor:janson-wvhp}
    Let $X$ be sum of $k$ independent geometric random variables with success probabilities $p_1,...,p_k$,
    let $\mu = \E[X] = \sum_{i=1}^k \frac{1}{p_i},$
    and let $p^* = \min\limits_{1 \leq i \leq k} p_i$.
    For any $0 < \epsilon < 1$, we have
    $(1-\epsilon)\mu \leq X \leq (1+\epsilon)\mu$ with probability $1 - \exp(-\Theta(\epsilon^2 p^* \mu))$.
\end{corollary}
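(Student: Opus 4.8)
The plan is to read the corollary straight off of \cref{thm:janson} by substituting $\lambda = 1+\epsilon$ and $\lambda = 1-\epsilon$ and then estimating the function $f(\lambda) = \lambda - 1 - \ln \lambda$ near $\lambda = 1$. The only actual content is the elementary fact that $f(1\pm\epsilon) = \Theta(\epsilon^2)$ for $0 < \epsilon < 1$; the rest is a union bound over the two tail events.

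For the lower tail I would take $\lambda = 1-\epsilon$ in the second inequality of \cref{thm:janson}, giving $\Pr[X \le (1-\epsilon)\mu] \le \exp(-p^* \mu \, f(1-\epsilon))$, and expand $\ln(1-\epsilon) = -\sum_{n \ge 1}\epsilon^n/n$ to get $f(1-\epsilon) = -\epsilon - \ln(1-\epsilon) = \sum_{n \ge 2}\epsilon^n/n \ge \epsilon^2/2$; hence $\Pr[X \le (1-\epsilon)\mu] \le \exp(-p^*\mu\epsilon^2/2)$. For the upper tail I would take $\lambda = 1+\epsilon$ in the first inequality, giving $\Pr[X \ge (1+\epsilon)\mu] \le \exp(-p^*\mu\, f(1+\epsilon))$ with $f(1+\epsilon) = \epsilon - \ln(1+\epsilon) = \sum_{n\ge 2}(-1)^n \epsilon^n/n$. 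Since $0 < \epsilon < 1$, the terms $\epsilon^n/n$ are positive and decreasing in $n$, so this alternating series lies between its partial sums $\tfrac{\epsilon^2}{2}-\tfrac{\epsilon^3}{3}$ and $\tfrac{\epsilon^2}{2}$; in particular $f(1+\epsilon) \ge \tfrac{\epsilon^2}{2}-\tfrac{\epsilon^3}{3} \ge \tfrac{\epsilon^2}{6}$, so $\Pr[X \ge (1+\epsilon)\mu] \le \exp(-p^*\mu\epsilon^2/6)$.

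Finally I would union-bound: $\Pr[X \notin ((1-\epsilon)\mu,(1+\epsilon)\mu)] \le \exp(-p^*\mu\epsilon^2/2) + \exp(-p^*\mu\epsilon^2/6) \le 2\exp(-p^*\mu\epsilon^2/6) = \exp(-\Theta(\epsilon^2 p^* \mu))$, which is the claim. There is no real obstacle: the only point needing a moment of care is the alternating-series estimate for the upper tail, where one must check that the $\epsilon^2$ term dominates uniformly for all $\epsilon \in (0,1)$ (not merely asymptotically as $\epsilon \to 0$); the hypothesis $\epsilon < 1$ is exactly what makes $\epsilon^3 \le \epsilon^2$ and keeps the constant clean. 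Everything else is a routine rephrasing of Janson's bound into the form used repeatedly in the later analysis.
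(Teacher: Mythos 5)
Your proposal is correct and follows essentially the same route as the paper's proof: substitute $\lambda = 1 \pm \epsilon$ into \cref{thm:janson} and expand $\lambda - 1 - \ln\lambda$ via the Taylor series of $\ln(1+a)$ to see that it is $\Theta(\epsilon^2)$. You are a bit more explicit than the paper, separating the two tails and spelling out the alternating-series estimate and the final union bound, which the paper leaves implicit, but the underlying argument is identical.
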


\begin{proof}
Setting $\lambda = 1 + a$ in \cref{thm:janson}, where $a = \epsilon$ in the upper bound and $a = -\epsilon$ in the lower bound, 
by Taylor series approximation of $\ln(1+a)$,  
$\lambda - 1 - \ln(\lambda) = a - \ln(1+a) \geq a - (a - \frac{a^2}{2} + \frac{a^3}{3}) = a^2(\frac{1}{2}-\frac{a}{3}) = \Theta(\epsilon^2)$. 
The stated inequality then follows from \cref{thm:janson}.
\end{proof}

The following lemmas give applications of \cref{thm:janson} and \cref{cor:janson-wvhp} to common processes that we will repeatedly analyze. When the fractions we consider are distinct and independent of $n$, \cref{cor:janson-wvhp} applies to give tight time bounds with very high probability. We also consider cases where we run a process to completion and bring a count to $0$. Here we must use \cref{thm:janson} and only get a high probability bound with times at most a constant factor above the mean.

First we consider the epidemic process, $i,s\rightarrow i,i$, moving from a constant fraction infected to another constant fraction infected.



\begin{lemma}
\label{lem:epidemic}
Let $0 < a < b < 1$.
Consider the epidemic process starting from a count of $a\cdot n$ infected agents.
The expected parallel time $t$ until there is a count $b\cdot n$ of infected agents is
\[
\E[t] 
= \frac{\ln(b-\frac{1}{n}) - \ln(1-b+\frac{1}{n}) - \ln(a) + \ln(1-a)}{2} 
\sim \frac{\ln(b) - \ln(1-b) - \ln(a) + \ln(1-a)}{2}.
\]
Let $c = \min(a,1-b+\frac{1}{n})$ and $0 < \epsilon < 1$.
Then $(1-\epsilon)\E[t] < t < (1+\epsilon)\E[t]$ 
with probability at least 
$1 - \exp[-\Theta(\epsilon^2 \E[t] n c)].$
\end{lemma}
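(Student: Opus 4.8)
The plan is to recognize the epidemic as a pure–birth Markov chain on the number of infected agents, reduce the whole question to a sum of independent geometric random variables, and then read off the concentration directly from Janson's bound, \cref{cor:janson-wvhp}.

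\textbf{Step 1 (reduction to geometrics).} When exactly $k$ agents are infected, a single interaction creates a new infection precisely when the interacting pair consists of one infected and one susceptible agent, which happens with probability $p_k=\frac{2k(n-k)}{n(n-1)}$; otherwise the count is unchanged. Since the count of infected agents only increases and the chain is memoryless in each state, the number of interactions $T$ needed to move from $an$ infected to $bn$ infected is $T=\sum_{k=an}^{bn-1}G_k$, a sum of independent geometrics with $G_k\sim\mathrm{Geom}(p_k)$, and the parallel time is $t=T/n$.

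\textbf{Step 2 (the expectation).} Using $\E[G_k]=1/p_k$ and the partial fraction $\frac{1}{k(n-k)}=\frac1n\big(\frac1k+\frac1{n-k}\big)$, we get $\E[t]=\frac1n\sum_{k=an}^{bn-1}\frac1{p_k}=\frac{n-1}{2n}\sum_{k=an}^{bn-1}\big(\frac1k+\frac1{n-k}\big)$. Reindexing the second sum by $j=n-k$ and comparing each harmonic sum to $\int\frac{dx}{x}$ (equivalently, $H_m=\ln m+\gamma+O(1/m)$) yields $\sum_{k=an}^{bn-1}\frac1k=\ln(bn-1)-\ln(an)+o(1)=\ln(b-\tfrac1n)-\ln a+o(1)$ and $\sum_{k=an}^{bn-1}\frac1{n-k}=\ln(n(1-a))-\ln(n(1-b)+1)+o(1)=\ln(1-a)-\ln(1-b+\tfrac1n)+o(1)$; since $\frac{n-1}{n}\to1$ this gives the displayed expression, and the $\sim$ follows because the $\tfrac1n$ terms vanish in the limit.

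\textbf{Step 3 (the concentration).} Apply \cref{cor:janson-wvhp} to $T=\sum_{k=an}^{bn-1}G_k$ with $\mu=\E[T]=n\,\E[t]$ and $p^*=\min_{an\le k\le bn-1}p_k$: for $0<\epsilon<1$ it gives $(1-\epsilon)\mu\le T\le(1+\epsilon)\mu$, equivalently $(1-\epsilon)\E[t]\le t\le(1+\epsilon)\E[t]$, with probability $1-\exp(-\Theta(\epsilon^2 p^*\mu))$. It remains to show $p^*=\Omega(c)$ with $c=\min(a,1-b+\tfrac1n)$. As a function of $k$, $p_k=\frac{2k(n-k)}{n(n-1)}$ is a downward parabola maximized at $k=n/2$, so its minimum over $[an,bn-1]$ is attained at one of the two endpoints; a short case check (according to whether $b\le\tfrac12$, $a\ge\tfrac12$, or $a<\tfrac12<b$) shows $p_{an}=2a(1-a)\tfrac{n}{n-1}=\Omega(a)$ whenever $a\le\tfrac12$ and $p_{bn-1}=\frac{2(bn-1)(n-bn+1)}{n(n-1)}=\Omega(1-b+\tfrac1n)$ whenever $b\ge\tfrac12$, with the complementary endpoint only larger, so $p^*=\Omega(c)$. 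Hence the failure probability is $\exp(-\Theta(\epsilon^2 p^*\mu))\le\exp(-\Theta(\epsilon^2 c\,n\,\E[t]))$, which is the claimed bound.

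\textbf{Main obstacle.} Nothing here is deep once \cref{cor:janson-wvhp} is available; the only fiddly bookkeeping is (i) extracting the exact $\tfrac1n$-corrections in $\E[t]$ from the harmonic-sum/integral comparison, and (ii) the endpoint case analysis certifying $p^*=\Theta(c)$, since $c$ is precisely the quantity that must appear in the exponent delivered by Janson's bound.
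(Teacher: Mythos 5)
Your proof is correct and follows essentially the same route as the paper's: reduce to a sum of independent geometrics with success probability $p_k = \frac{2k(n-k)}{n(n-1)}$, compute $\E[t]$ by the harmonic-sum/integral approximation, and apply \cref{cor:janson-wvhp} with the observation that $p^*$ is attained at an endpoint of the interval, giving $p^* = \Theta(c)$. The only cosmetic differences are that you make the partial-fraction/harmonic-sum bookkeeping and the endpoint case analysis for $p^*=\Theta(c)$ slightly more explicit, whereas the paper passes directly through $\int \frac{dx}{x(1-x)}$ and simply asserts the $p^*$ bound.
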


\begin{proof}
When there are $i$ infected agents, the probability the next reaction infects another agent and increases this count is $\frac{i(n-i)}{\binom{n}{2}}$. The number of interactions $T$ for the count to increase from $a\cdot n$ to $b\cdot n$ is a sum of geometric random variables, with expected value
\begin{align*}
\E[T] &= \sum_{i = a \cdot n}^{b\cdot n - 1}
\frac{\binom{n}{2}}{i(n-i)}
\sim
\frac{1}{2}
\sum_{i = a \cdot n}^{b\cdot n - 1}
\frac{1}{\frac{i}{n}(1-\frac{i}{n})}
\sim \frac{n}{2}\int_{x = a}^{b-\frac{1}{n}}\frac{dx}{x(1-x)}
\\
&=  \frac{n}{2}\qty[\ln(x)-\ln(1-x)]_{a}^{b-\frac{1}{n}}
= n \cdot \frac{\ln(b-\frac{1}{n}) - \ln(1-b+\frac{1}{n}) - \ln(a) + \ln(1-a)}{2},
\end{align*}
giving the stated value of $\E[t]$ after converting from interactions to parallel time.
The minimum probability $p^* = \Theta(\min(a, 1-b+\frac{1}{n}))$, so using \cref{cor:janson-wvhp} we have Then $(1-\epsilon)\E[t] < t < (1+\epsilon)\E[t]$ 
with probability at least 
$1 - \exp[-\Theta(\epsilon^2 \E[T] p^*)].$
\end{proof}

Note that if $a, (1-b), \epsilon$ are all constants independent of $n$, then the bound above is with very high probability. If we wanted to consider the complete epidemic process (which starts with $a=1/n$ and ends with $(1-b) = 1/n$), we would have $\E[t] = \Theta(\log n)$ and minimum probability $p^* = \Theta(\frac{1}{n})$. Then the probability bound would become $1 - \exp[-\Theta(\log n)] = 1 - n^{-\Theta(1)}$, which is high probability, but requires carefully considering the constants to get the exact polynomial bound. In our proofs, we only use the very high probability case. For tight  bounds on a full epidemic with precise constants, see \cite{mocquard2016analysis, burman2021self}.

Next we consider the cancel reactions $a,b \rightarrow 0,0$, which are key to the majority protocol.

\begin{lemma}
\label{lem:cancel-reactions}
Consider two disjoint subpopulations $A$ and $B$ of initial sizes $|A|=a\cdot n$ and $|B| = b\cdot n$, where $0<b<a<1$. An interaction between an agent in $A$ and an agent in $B$ is a \emph{cancel reaction} which removes both agents from their subpopulations. Thus after $i$ cancel reactions we have $|A|=a\cdot n - i$ and $|B|=b\cdot n - i$.

The expected parallel time $t$ until $d\cdot n$ cancel reactions occur, where $d < b$ is
\[
\E[t] = \frac{\ln(b) - \ln(a) - \ln(b-d+\frac{1}{n}) + \ln(a-d+\frac{1}{n})}{2(a-b)}.
\]

Let $c = (a - d + \frac{1}{n})(b- d + \frac{1}{n})$ and $0 < \epsilon < 1$.
Then $(1-\epsilon)\E[t] < t < (1+\epsilon)\E[t]$ 
with probability at least 
$1 - \exp[-\Theta(\epsilon^2 \E[t] n c)]$.

The parallel time $t$ until all $b\cdot n$ cancel reactions occur has $\E[t] \sim \frac{\ln n}{2(a-b)}$ and satisfies $t \leq \frac{5\ln n}{2(a-b)}$ with high probability $1-O(1/n^2)$.
\end{lemma}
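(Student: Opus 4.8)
The plan is to mirror the argument of \cref{lem:epidemic}, tracking the number $i$ of cancel reactions performed so far. After $i$ cancel reactions there are $an-i$ agents in $A$ and $bn-i$ in $B$, hence $(an-i)(bn-i)$ ordered pairs whose interaction is a cancel reaction, so the next interaction is a cancel reaction with probability $\frac{(an-i)(bn-i)}{\binom n2}$, independently of the past. Thus the number of interactions $T$ to reach $dn$ cancel reactions is a sum of $dn$ independent geometric random variables, and
\[
\E[T] \;=\; \sum_{i=0}^{dn-1}\frac{\binom n2}{(an-i)(bn-i)}
\;\sim\; \frac n2\int_{0}^{\,d-\frac1n}\frac{dx}{(a-x)(b-x)}.
\]
I would evaluate this by partial fractions, $\frac{1}{(a-x)(b-x)}=\frac{1}{a-b}\big(\frac{1}{b-x}-\frac{1}{a-x}\big)$, which gives antiderivative $\frac{1}{a-b}\ln\frac{a-x}{b-x}$; evaluating between $0$ and $d-\frac1n$ and dividing by $n$ to pass to parallel time yields exactly the stated $\E[t]$.

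For the concentration in the regime $d<b$, the per-step success probability $\frac{(an-i)(bn-i)}{\binom n2}$ is decreasing in $i$, so its minimum over the $dn$ steps is at $i=dn-1$, where it equals $\frac{((a-d)n+1)((b-d)n+1)}{\binom n2}=\Theta\big((a-d+\tfrac1n)(b-d+\tfrac1n)\big)=\Theta(c)$. Applying \cref{cor:janson-wvhp} with $p^\ast=\Theta(c)$ and $\mu=\E[T]=n\,\E[t]$ then gives $(1-\epsilon)\E[t]<t<(1+\epsilon)\E[t]$ with probability $1-\exp[-\Theta(\epsilon^2\E[T]p^\ast)]=1-\exp[-\Theta(\epsilon^2\E[t]\,n\,c)]$, as claimed.

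For the last claim I would take $d=b$, i.e.\ run the process until $B$ is exhausted. The sum $\E[T]=\sum_{i=0}^{bn-1}\frac{\binom n2}{(an-i)(bn-i)}$ is dominated by its last $\Theta(n)$ terms, where $bn-i$ is small; substituting $j=bn-i$ turns it into a harmonic-type sum $\frac n2\sum_{j=1}^{bn}\frac{1}{(a-b+\frac jn)\,j}\sim\frac{n}{2(a-b)}\sum_{j=1}^{bn}\frac1j\sim\frac{n\ln n}{2(a-b)}$, so $\E[t]\sim\frac{\ln n}{2(a-b)}$. Here the minimum success probability is $p^\ast=\frac{((a-b)n+1)\cdot 1}{\binom n2}\sim\frac{2(a-b)}{n}$, hence $p^\ast\E[T]\sim\ln n$. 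Taking $\lambda=\frac{5\ln n}{2(a-b)\E[t]}\to 5$ in \cref{thm:janson} gives
\[
\Pr\!\left[t>\tfrac{5\ln n}{2(a-b)}\right]\le \exp\!\big(-p^\ast\E[T]\,(\lambda-1-\ln\lambda)\big)=n^{-(4-\ln 5)+o(1)}=O(n^{-2}),
\]
since $4-\ln 5>2$ (and $\lambda-1-\ln\lambda$ is increasing, so the bound is safe once $\lambda$ is close enough to $5$).

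The main obstacle, and the reason the last claim needs separate treatment, will be the degeneracy at the endpoint $d=b$: there the per-step success probabilities shrink to $\Theta(1/n)$, so \cref{cor:janson-wvhp} only yields a weak $n^{-\Theta(\epsilon^2)}$ bound and cannot deliver $O(n^{-2})$ for a fixed multiplicative slack. One must instead invoke \cref{thm:janson} with the concrete constant $\lambda\approx 5$ and check that $4-\ln 5>2$. A secondary point requiring care is the integral/harmonic approximation of $\E[T]$ near the singularity at $x=b$, where it is precisely the $-\frac1n$ cutoff in the upper limit that produces the $\ln n$ factor.
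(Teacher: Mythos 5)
Your proposal is correct and follows essentially the same route as the paper's proof: decompose $T$ into a sum of independent geometric waiting times with success probabilities $\sim 2(a-\tfrac in)(b-\tfrac in)$, pass to an integral and evaluate by partial fractions to get $\E[t]$, then apply \cref{cor:janson-wvhp} with $p^*=\Theta(c)$ for the first concentration claim and \cref{thm:janson} with $\lambda=5$ (using $\lambda-1-\ln\lambda>2$) for the endpoint case $d=b$. The only cosmetic differences are writing $\binom n2$ in place of $n^2/2$ and your sharper $n^{-(4-\ln 5)}$ exponent where the paper simply bounds by $n^{-2}$.
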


Again, the first bound is with very high probability if all constants $a,b,d$ are independent of $n$.

\begin{proof}
After $i$ cancel reactions, the probability of the next cancel reaction is $p \sim \frac{2|A|\cdot|B|}{n^2} = 2(a - \frac{i}{n})(b - \frac{i}{n})$, and the number of interactions until this cancel reaction is a geometric random variable with mean $p$. The number of interactions $T$ for $d\cdot n$ cancel reactions to occur is a sum of geometrics with mean
\begin{align*}
\E[T] &= \sum_{i = 0}^{d\cdot n - 1}\frac{1}{2(a - \frac{i}{n})(b - \frac{i}{n})}
\sim n\int_{x = 0}^{d-\frac{1}{n}}\frac{dx}{2(a-x)(b-x)}
= \frac{n}{2(a-b)}\int_{x = 0}^{d-\frac{1}{n}}\qty(\frac{1}{b-x}-\frac{1}{a-x})dx
\\
& = \frac{n}{2(a-b)}\bigg[-\ln(b-x)+\ln(a-x)\bigg]_0^{d-\frac{1}{n}}
= n\cdot\frac{\ln(b)-\ln(a)-\ln(b-d+\frac{1}{n})+\ln(a - d + \frac{1}{n})}{2(a-b)}.
\end{align*}
Translating to parallel time and using \cref{cor:janson-wvhp} gives the first result, where minimum probability $p^* = \Theta(c)$.

In the second case where $d = b$ and we are waiting for all cancel reactions to occur, then
\[
\E[t] \sim \frac{\ln(b) - \ln(a) + \ln(n) + \ln(a-b)}{2(a-b)} = \frac{\ln n}{2(a-b)},
\]
and $\mu = \E[T] \sim \frac{n\ln n}{2(a-b)}$.
Now the minimum geometric probability is when $i = bn - 1$ and $p^* \sim \frac{2(a-b)}{n}$. Choosing $\lambda = 5$ so that $\lambda - 1 - \ln\lambda > 2$, by \cref{thm:janson} we have
\[
\Pr\qty[t \geq \frac{5\ln n}{2(a-b)}]
\leq
\Pr\qty[T \geq \lambda\mu]
\leq
e^{-p^*\mu(\lambda - 1 - \ln\lambda)}
= \exp(-\frac{2(a-b)}{n}\cdot\frac{n\ln n}{2(a-b)}\cdot2)
= n^{-2}.
\]
Thus $t \leq \frac{5\ln n}{2a}$ with high probability.

Note we get the same result for this second case also by assuming a fixed minimal fraction $a-b$ in $|A|$ is always there to cancel the agents from $|B|$, and using the next \cref{lem:one-sided-cancel}.
\end{proof}

Now we consider a ``one-sided'' cancel process, where reactions $a,b \rightarrow a,0$ change only the $b$ agents into a different state. This process happens for example when $\roleclock$ agents change the $\fieldhour$ of $\unbiased$ agents.

\begin{lemma}
\label{lem:one-sided-cancel}
Let $0 < a, b_1, b_2, \epsilon < 1$ be constants with $b_1 > b_2$. Consider a subpopulation $A$ maintaing its size above $a\cdot n$, and $B$ initially of size $b_1\cdot n$. Any interaction between an agent in $A$ and in $B$ is \emph{meaningful}, and forces the agent in $B$ to leave its subpopulation.

The expected parallel time $t$ until the subpopulation $B$ reaches size $b_2\cdot n$ is 
\[
\E[t] = \frac{\ln(b_1) - \ln(b_2)}{2a},
\]
and satisfies $(1-\epsilon)\E[t] < t < (1+\epsilon)\E[t]$ with probability at least 
$1 - \exp[-\Theta(\epsilon^2 \E[t] n a b_2)]$.

The parallel time $t$ until the subpopulation $B$ reaches size $0$ has $\E[t] \sim \frac{\ln n}{2a}$ and satisfies $t \leq \frac{5\ln n}{2a}$ with high probability $1- O(1/n^2)$.
\end{lemma}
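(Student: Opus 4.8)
The plan is to follow the template already established by \cref{lem:epidemic} and \cref{lem:cancel-reactions}: write the relevant time as a sum of independent geometric random variables, evaluate its expectation by approximating the sum with an integral, and then obtain concentration from \cref{cor:janson-wvhp} in the partial-process case (where the relevant minimum success probability is a constant) and from \cref{thm:janson} directly in the complete-process case (where it is only $\Theta(1/n)$). Throughout we may treat $|A|$ as being $a n(1\pm o(1))$, which is what happens in our applications; in general the hypothesis $|A|\ge a n$ suffices for all the upper bounds on time that we actually invoke.

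First I would set up the geometric decomposition. While $|B|=j$ and $|A|\ge a n$, the probability that the next interaction is a meaningful $A$--$B$ interaction is $p_j = \frac{2|A|\,j}{n(n-1)} \sim \frac{2aj}{n}$, and successive such events are independent, so the number of interactions $T$ needed to bring $|B|$ from $b_1 n$ down to $b_2 n$ is a sum of independent geometrics with parameters $p_j$ for $j\in\{b_2 n+1,\dots,b_1 n\}$. Hence $\E[T]\sim \sum_{j=b_2 n+1}^{b_1 n}\frac{n}{2aj}\sim \frac{n}{2a}\int_{b_2}^{b_1}\frac{dx}{x} = \frac{n(\ln b_1-\ln b_2)}{2a}$; dividing by $n$ gives the stated $\E[t]$. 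For the complete process ($b_2$ replaced by $1/n$, i.e.\ driving $|B|$ to $0$) the same integral gives $\E[T]\sim\frac{n}{2a}\ln(b_1 n)\sim\frac{n\ln n}{2a}$, so $\E[t]\sim\frac{\ln n}{2a}$.

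For the concentration claim in the partial case I would apply \cref{cor:janson-wvhp} with $\mu=\E[T]=n\E[t]$ and $p^\ast=\min_j p_j$, which is attained at the smallest count $j=b_2 n$, so $p^\ast=\Theta(ab_2)$ is a constant; the corollary then yields $(1-\epsilon)\E[t]<t<(1+\epsilon)\E[t]$ with failure probability $\exp(-\Theta(\epsilon^2 p^\ast\mu)) = \exp(-\Theta(\epsilon^2\E[t]\,n\,a\,b_2))$, exactly as stated. For the complete process the minimum probability is now attained at $j=1$, where $p^\ast\sim\frac{2a}{n}$ is no longer constant, so instead I would invoke \cref{thm:janson} with $\lambda=5$ (so that $\lambda-1-\ln\lambda>2$), getting $\Pr[T\ge 5\mu]\le e^{-p^\ast\mu(\lambda-1-\ln\lambda)}\le \exp\!\big(-\tfrac{2a}{n}\cdot\tfrac{n\ln n}{2a}\cdot 2\big)=n^{-2}$, hence $t\le\frac{5\ln n}{2a}$ with probability $1-O(1/n^2)$ --- verbatim the argument used in the last paragraph of the proof of \cref{lem:cancel-reactions}.

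The main obstacle is essentially bookkeeping rather than anything deep: one must keep careful track that $|A|$ is only bounded \emph{below}, so the $p_j$ are only bounded below and the stated $\E[t]$ is really an upper bound that becomes an equality in the regime $|A|\sim a n$; and one must notice that the relevant minimum success probability $p^\ast$ switches from the constant $\Theta(ab_2)$ in the partial process to $\Theta(1/n)$ in the complete process, which is precisely why the two halves of the statement require \cref{cor:janson-wvhp} and \cref{thm:janson} respectively.
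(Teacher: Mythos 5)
Your proposal follows essentially the same route as the paper's proof: decompose $T$ as a sum of independent geometrics with $p_j \sim 2aj/n$, approximate $\E[T]$ by the integral $\frac{n}{2a}\int_{b_2}^{b_1}dx/x$, invoke \cref{cor:janson-wvhp} for the partial process with $p^\ast=\Theta(ab_2)$ constant, and switch to \cref{thm:janson} with $\lambda=5$ for the complete process where $p^\ast\sim 2a/n$. The remark on $|A|$ being only bounded below is a helpful clarification but does not change the argument.
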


Again, the first bound is with very high probability if constants $a,b_2$ are independent of $n$.

\begin{proof}
When $|B| = i$, then the probability of a meaningful interaction $p \sim \frac{2ia}{n}$. Then the number of interactions before the next meaningful interaction is a geometric with probability $p$, and the total number $T$ of interaction is a sum of geometrics with mean
\[
\E[T] = \sum_{i = b_2n + 1}^{b_1n}\frac{n}{2ia}
= \frac{n}{2a}\qty(\sum_{i= 1}^{b_1n} \frac{1}{i} - \sum_{i=1}^{b_2n+1}\frac{1}{i})
\sim \frac{n}{2a}(\ln(b_1n) - \ln(b_2n+1))
\sim n\frac{\ln(b_1) - \ln(b_2+\frac{1}{n})}{2a}.
\]

Translating to parallel time and using \cref{cor:janson-wvhp} gives the first result, where minimum probability $p^* = \Theta(a \cdot b_2)$.

In the case where $b_2 = 0$, we have $\mu = \E[T] \sim \frac{n\ln n}{2a}$ and $\E[t] \sim \frac{\ln(n)}{2a}$. Now the minimum geometric probability $p^* = \frac{2a}{n}$. Choosing $\lambda = 5$ so that $\lambda - 1 - \ln\lambda > 2$, by \cref{thm:janson} we have
\[
\Pr\qty[t \geq \frac{5\ln n}{2a}]
\leq
\Pr\qty[T \geq \lambda\mu]
\leq
e^{-p^*\mu(\lambda - 1 - \ln\lambda)}
= \exp(-\frac{2a}{n}\cdot\frac{n\ln n}{2a}\cdot2)
= n^{-2}.
\]
Thus $t \leq \frac{5\ln n}{2a}$ with high probability.
\end{proof}
\section{Analysis of initial phases}
\label{sec:analysis-early-phases}

Define $\mains$, $\clocks$, and $\reserves$ to be the sub-populations of agents with roles $\rolemain$, $\roleclock$, $\rolereserve$ when they first set $\fieldphase = 1$. Let $|\mains| = m\cdot n, |\clocks| = c \cdot n, |\reserves| = r \cdot n$, where $m + c + r = 1$.

The population splitting of \phaseInitialize\ will set the fractions $m \approx \frac{1}{2}$, $c \approx \frac{1}{4}$, and $r \approx \frac{1}{4}$. The rules also ensure deterministic bounds on these fractions once all $\roleMCR$ agents have been assigned. The probability 1 guarantees on subpoplation sizes using this method will be key for a later uniform adaptation of the protocol. 
We first prove the behavior of the rules used for this initial top level split in \phaseInitialize\ between the roles $\rolemain$ and $\roleCR$.

\begin{lemma}
\label{lem:split-deterministic}
    Consider the reactions 
    \begin{align*}
        U,U   &\to S_f,M_f
        \\
        S_f,U &\to S_t,M_f
        \\
        M_f,U &\to M_t,S_f
    \end{align*}
    starting with $n$ $U$ agents.
    Let $u = \# U$, $s = \#S_f + \#S_t$, and $m = \#M_f + \#M_t$.
    This converges to $u = 0$ in expected time at most $2.5\ln n$ and in $12.5\ln n$ time with high probability $1 - O(1/n^2)$.
    Once $u = 0$, 
    $\frac{n}{3} \leq s,m \leq \frac{2n}{3}$ with probability 1,
    and for any $\epsilon > 0$,
    $\frac{n}{2}(1-\epsilon) \leq s,m \leq \frac{n}{2}(1+\epsilon)$ 
    with very high probability.
\end{lemma}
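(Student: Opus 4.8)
The plan is to prove the three parts in turn, using the following bookkeeping. Let $k$, $j_2$, $j_3$ be the numbers of reactions of the first, second, and third types that have occurred, and let $a := \#S_f + \#M_f$ be the number of ``active'' agents. Counting how each state is produced and consumed gives the identities $a = 2k$, $\#S_f = k + j_3 - j_2 \ge 0$, $\#M_f = k + j_2 - j_3 \ge 0$, and, summing in the passive states $S_t$ (created only by reaction~2) and $M_t$ (created only by reaction~3), $s = k + j_3$, $m = k + j_2$, and $n - u = 2k + j_2 + j_3$. In particular $a$ is non-decreasing (only reaction~1 changes it, by $+2$), $s + m = n$ once $u = 0$, and $m - s = j_2 - j_3$.

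\emph{Convergence.} I would argue in two stages. In the bootstrap stage, as long as $u \ge 2a + 2$ a meeting of two $U$'s is at least as likely as an epidemic reaction (compare $\binom{u}{2}$ with $au$), so among non-null reactions reaction~1 stochastically dominates reactions~2--3; an Azuma bound on $k - (j_2 + j_3)$ over the $\Theta(n)$ reactions in this regime shows that with failure probability $n^{-\omega(1)}$ the process reaches a configuration with $a \ge n/5$ (in fact $\approx n/3$) while $u$ is still $\Theta(n)$, which takes only $O(1)$ parallel time since non-null reactions fire at rate $\Theta(n)$ there. In the epidemic stage, $a \ge n/5$ holds from then on (as $a$ is non-decreasing) and every non-null reaction consumes a $U$, so \cref{lem:one-sided-cancel}, applied with the active agents as the fixed-size population $A$ (fraction $\ge 1/5$) and the $U$ agents as the shrinking population $B$ run down to size $0$, gives $u \to 0$ in expected time $\le 2.5\ln n$ and in time $\le 12.5\ln n$ with probability $1 - O(1/n^2)$; the $O(1)$ bootstrap time and the polynomially small failure probability contribute negligibly to the expectation.

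\emph{Deterministic balance.} Once $u = 0$ we have $2k + j_2 + j_3 = n$, and $\#S_f = k + j_3 - j_2 \ge 0$ forces $j_2 \le \tfrac{n-k}{2}$, so $m = k + j_2 \le \tfrac{n+k}{2}$; also $m = n - k - j_3 \le n - k$. Hence $m \le \min\!\big(\tfrac{n+k}{2},\, n - k\big)$, and this minimum, as a function of $k$, is maximized at $k = n/3$ with value $2n/3$, so $m \le 2n/3$ for every value of $k$. The reaction set is invariant under swapping $S \leftrightarrow M$ (which interchanges reactions~2 and~3), so the identical argument gives $s \le 2n/3$; with $s + m = n$ this yields $n/3 \le s, m \le 2n/3$ with probability~$1$.

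\emph{Near-even split.} Let $D_t := \#S_f - \#M_f$; it changes only on reactions~2 and~3, by $-2$ and $+2$ respectively, and there are $j_2 + j_3 \le n$ such steps, so $D_{\mathrm{final}} = -2(j_2 - j_3) = -2(m - s)$. Since $\Pr[\text{reaction 2}\mid\text{epidemic}] = \#S_f / a$, the increments of $D$ point toward $0$ whenever $D \ne 0$, so writing $D$ as a martingale plus a zero-ward drift and applying Azuma to the martingale part gives $|D_{\mathrm{final}}| = O(\sqrt{n}\log n)$ with very high probability. Then $|m - s| = O(\sqrt n \log n) = o(\epsilon n)$, hence $s, m \in [\tfrac n2(1-\epsilon), \tfrac n2(1+\epsilon)]$ for all large $n$, with very high probability. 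The main obstacle is the bootstrap stage of the convergence argument: carefully showing, with a sufficiently small failure probability, that a constant fraction of the population becomes active within $O(1)$ time — the per-interaction probabilities depend on the random history, so this needs a martingale/Azuma argument rather than a plain Chernoff bound — and extracting the precise constants $2.5$ and $12.5$ in the time bound; the remaining two parts are, respectively, elementary optimization and routine concentration.
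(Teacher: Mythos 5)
Your argument is correct and follows the same overall strategy as the paper's proof: a bootstrap phase that drives a constant fraction of agents out of $U$ (your stopping condition $u \geq 2a+2$ vs.\ the paper's fixed threshold $u = 2n/3$, but the same effect), a one-sided epidemic/coupon-collector bound via \cref{lem:one-sided-cancel} for the rest, a linear invariant for the probability-1 bounds (your reaction-count bookkeeping $k, j_2, j_3$ is algebraically equivalent to the paper's invariant $s_f + 2s_t = m_f + 2m_t$), and a drift-toward-zero concentration argument for the near-even split. One minor remark: in the bootstrap, since the per-step probability of reaction~1 among non-null interactions is uniformly at least $1/2$ in that regime, a Chernoff bound via stochastic domination by i.i.d.\ fair coins (as the paper uses) suffices; the Azuma machinery you anticipate is not strictly needed there, though it also works.
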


\begin{proof}
Let $s_f = \#S_f, s_t = \#S_t, m_f = \#M_f, m_t = \#M_t.$

First we consider the interactions that happen until $u = 2n / 3$. Note that while $u \geq 2n / 3$, the probability of the first reaction is $\sim\qty(\frac{u}{n})^2 \geq \frac{4}{9}$ and the probability of the other two reactions is $\sim 2\qty(\frac{u}{n})\qty(\frac{m_f + s_f}{n}) \leq 2\cdot \frac{2}{3} \cdot \frac{1}{3} = \frac{4}{9}$. Thus until $u = 2n/3$, each non-null interaction is the top reaction with at least probability $1/2$. Then by standard Chernoff Bounds,
with very high probability,
at least a fraction $\frac{1}{2} - \epsilon$ of these reactions are the top reaction, which create a count $s_f + m_f \geq \frac{2}{9}(1 - \epsilon)n > \frac{n}{5}$.

Now notice that this count $s_f + m_f$ can never decrease, so we have $s_f + m_f > \frac{n}{5}$ for all future interactions. Now we can use the rate of the second and third reactions to bound the completion time. The probability of decreasing $u$ is at least $2\qty(\frac{u}{n})(\frac{1}{5})$, so the number of interactions it takes to decrement $u$ is stochastically dominated by a geometric random variable with probability $p = \frac{2u}{5n}$. Then the number of interactions for $u$ to decrease from $\frac{2n}{3}$ down to $0$ is dominated by a sum $T$ of geometric random variables with mean
\[
\E[T] = \sum_{u = 1}^{2n / 3}\frac{5n}{2u} = \frac{5n}{2}\sum_{u = 1}^{2n / 3}\frac{1}{u} \sim \frac{5n}{2}\ln(2n/3) \sim \frac{5}{2}n \ln n.
\]

Now we will apply the upper bound of \cref{thm:janson}, using $\mu = \frac{5}{2}n\ln n$, $p^* = \frac{2}{5n}$ (when $u = 1$) and $\lambda = 5$, where $\lambda - 1 - \ln(\lambda) > 2$, so we get
\[
\Pr[T \geq \lambda \mu] \leq \exp(-p^* \mu (\lambda - 1 - \ln \lambda))
\leq \exp(-\frac{2}{5n} \cdot \frac{5}{2}n\ln n \cdot 2) = n^{-2}.
\]
Thus with high probability $1 - O(1/n^2)$, this process converges in at most $\frac{\lambda\mu}{n} = 12.5\ln n$ parallel time.

Observe that the reactions all preserve the following invariant:
$s_f + 2s_t = m_f + 2m_t$.
The probability-1 count bounds follow by observing that when $u = 0$ (i.e., we have converged) we have $s_f+s_t+m_f+m_t=n$.
Maximizing $s=s_f+s_t$ and minimizing $m=m_f+m+t$ is achieved by setting $s_f=2n/3, s_t=0, m_f=0, m_t=n/3$,
and symmetrically for maximizing $m$.

Assume we have $s > m$ at some point in the execution, so
$s_f + s_t > m_f + m_t$.
Subtracting the invariant equation gives
$-s_t > -m_t$,
which implies
$s_t < m_t$.
Together with the first inequality this implies that
$s_f > m_f$. 
Thus the rate of second reaction is higher than the rate of the third reaction,
so it is more likely that the next reaction changing the value $s-m$ decreases it.

By symmetry the opposite happens when $m > s$, so we conclude that the absolute value $|m - s|$ is more likely to decrease than increase. Thus we can stochastically dominate $|m - s|$ by a sum of independent coin flips.
The high probability count bounds follow by standard Chernoff bounds.
\end{proof}

Now we can prove bounds on the sizes $|\mains| = mn, |\clocks| = cn, |\reserves| = rn$ of $\rolemain$, $\roleclock$, and $\rolereserve$ agents from the population splitting of \phaseInitialize.

\begin{lemma}
\label{lem:phase-initialize}
    For any $\epsilon > 0$,
    with high probability $1 - O(1/n^2)$, 
    by the end of \phaseInitialize, 
    $|\roleMCR| = 0$, 
    $\frac{n}{2}(1-\epsilon) \leq |\mains| \leq \frac{n}{2}(1+\epsilon)$
    and
    $|\clocks|,|\reserves| \geq \frac{n}{4}(1-\epsilon)$.
    If $|\roleMCR| = 0$ 
    and \phaseDiscreteAveraging\ initializes without error, then with probability $1$, 
    $\frac{1}{3}n \leq |\mains|    \leq \frac{2}{3}n$, 
    $\frac{1}{6}n \leq |\reserves| \leq \frac{2}{3}n$, and 
    $2 \leq |\clocks|   \leq \frac{1}{3}n$.
\end{lemma}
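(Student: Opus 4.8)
The plan is to reduce the whole lemma to \cref{lem:split-deterministic}. Coarsen each agent's state during \phaseInitialize\ by recording only whether its role is $\roleMCR$, $\rolemain$, or ``non-main'' (one of $\roleCR,\roleclock,\rolereserve$), together with its Boolean $\fieldassigned$ flag. Under this coarsening the first three reactions of \phaseInitialize\ become exactly the three reactions $U,U\to S_f,M_f$; $S_f,U\to S_t,M_f$; $M_f,U\to M_t,S_f$ of \cref{lem:split-deterministic}: take $U=\roleMCR$, let $M_f,M_t$ be $\rolemain$ agents with $\fieldassigned=\False,\True$, and $S_f,S_t$ the non-main agents with $\fieldassigned=\False,\True$ (every $\roleMCR$ agent has $\fieldassigned=\False$, and promotion to $\rolemain$ or to $\roleCR$ never flips $\fieldassigned$). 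The point making this legitimate is that the remaining two reactions of \phaseInitialize\ --- $\roleCR,\roleCR\to\roleclock,\rolereserve$ and the clock-counter reaction between two $\roleclock$ agents --- never change $\fieldassigned$ and never move an agent between the three coarse classes, so they are null transitions for the coarsening; hence \cref{lem:split-deterministic} applies verbatim to $u=\#\roleMCR$, $m=\#\rolemain$, and $s=$ (number of non-main agents).

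Next I would lower-bound the length of \phaseInitialize. A given agent's $\fieldcounter$ can only be decremented in an interaction in which that agent is one of the two chosen agents, which has probability $2/n$; so the number of decrements it receives through parallel time $t$ is stochastically dominated by $\mathrm{Binomial}(tn,2/n)$, regardless of when $\roleclock$ agents first appear. A Chernoff bound plus a union bound over the $\le n$ agents then shows that, taking the counter constant $c_0$ large enough, w.h.p.\ $1-O(1/n^2)$ no $\fieldcounter$ reaches $0$ before a chosen time $T^\ast=\Theta(\log n)$ that we fix larger than $12.5\ln n$ plus an additional $O(1)$. Combining this with \cref{lem:split-deterministic}: w.h.p.\ we reach $u=\#\roleMCR=0$ within $12.5\ln n<T^\ast$ time, and at that moment --- which equals $\#\roleMCR,\#\rolemain$, and the non-main count for the rest of the phase, since every reaction that changes $\#\rolemain$ consumes a $\roleMCR$ agent --- we have $\frac n2(1-\epsilon)\le|\mains|\le\frac n2(1+\epsilon)$ and the same bounds for the non-main count, with very high probability. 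After $u=0$ the reaction $\roleCR,\roleCR\to\roleclock,\rolereserve$ runs on a population that can only shrink, so a sum-of-geometrics estimate as in \cref{lem:cancel-reactions} (all relevant fractions constant) drives $\#\roleCR$ below $\epsilon n$ within $O(1)$ further parallel time w.v.h.p.; by the choice of $T^\ast$ this is complete by the end of \phaseInitialize. Since $\#\roleclock=\#\rolereserve$ throughout \phaseInitialize\ (each firing of that reaction makes exactly one of each, and nothing else creates or destroys such an agent), writing $q$ for their common value gives $|\clocks|=q=\tfrac12(\text{non-main}-\#\roleCR)\ge\frac n4(1-\epsilon)$ and, after \phaseDiscreteAveraging's $\textbf{Init}$ converts the surviving $\roleCR$ agents to $\rolereserve$, $|\reserves|=q+\#\roleCR=\text{non-main}-q\ge\tfrac12(\text{non-main})\ge\frac n4(1-\epsilon)$, all after renaming $\epsilon$; a union bound over the finitely many $O(1/n^2)$ (or better) failure events yields the first statement.

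For the probability-$1$ part, the hypothesis $|\roleMCR|=0$ is precisely $u=0$, so the probability-$1$ conclusion of \cref{lem:split-deterministic} gives $\frac n3\le|\mains|\le\frac{2n}3$ and $\frac n3\le\text{non-main}\le\frac{2n}3$ directly, and no reaction of \phaseDiscreteAveraging\ changes $\#\rolemain$ or the non-main count. With $q=\#\roleclock=\#\rolereserve$ at the end of \phaseInitialize\ and $\text{non-main}=\#\roleCR+2q$, after \phaseDiscreteAveraging\ turns $\roleCR$ into $\rolereserve$ we have $|\clocks|=q$ and $|\reserves|=q+\#\roleCR=\text{non-main}-q$. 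From $q\le\tfrac12(\text{non-main})$ we get $\frac n6\le\tfrac12(\text{non-main})\le|\reserves|\le\text{non-main}\le\frac{2n}3$ and $|\clocks|=q\le\tfrac12(\text{non-main})\le\frac n3$. Finally $|\clocks|=q\ge2$ because \phaseInitialize\ can end only if some $\roleclock$ counter was decremented, which requires two $\roleclock$ agents to have coexisted, hence at least two firings of $\roleCR,\roleCR\to\roleclock,\rolereserve$, and $\#\roleclock$ never decreases during \phaseInitialize.

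The main obstacle is the phase-length lower bound: it is the one step that is not a black-box appeal to \cref{lem:split-deterministic}, and it must handle the feedback that the timer driving the phase (the $\roleclock$ counters) starts running only after the role split has itself produced $\roleclock$ agents. This is exactly why the crude stochastic domination by $\mathrm{Binomial}(tn,2/n)$, which ignores when the clock agents appear, is the right tool, used together with the observation that after $\#\roleMCR=0$ the $\roleclock$/$\rolereserve$ split needs only $O(1)$ additional parallel time.
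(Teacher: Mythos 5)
Your proof is correct and follows essentially the same structure as the paper's: reduce the top-level $\roleMCR\to\{\rolemain,\roleCR\}$ split to \cref{lem:split-deterministic}, analyze the secondary $\roleCR\to\{\roleclock,\rolereserve\}$ split separately, and derive the probability-$1$ bounds from the invariants $\#\roleclock=\#\rolereserve$ (during \phaseInitialize), the fact that each firing of $\roleCR,\roleCR\to\roleclock,\rolereserve$ consumes two $\roleCR$ agents, and the requirement of two coexisting $\roleclock$ agents to decrement a counter. Your coarsening that lumps $\roleCR,\roleclock,\rolereserve$ together as $S$ with the $\fieldassigned$ bit as the $f/t$ subscript is a cleaner way of stating the paper's identification (the paper sets $S=\roleCR$ and then has to add the caveat ``if there were no further splitting of $\roleCR$''), and your stochastic-domination-by-$\mathrm{Binomial}(tn,2/n)$ argument makes explicit a timing step that the paper leaves as ``for appropriate choice of $c_0$.'' One small imprecision: ``$q$ at the end of \phaseInitialize'' is not exactly $|\clocks|$, because additional $\roleCR,\roleCR\to\roleclock,\rolereserve$ firings can occur between the moment the first agent sets $\fieldphase=1$ and the moment a given agent does; but the bound you actually use, $|\clocks|\le\tfrac12(\text{non-main})$, follows directly from counting firings (each produces one $\roleclock$ and consumes two agents that are non-main), so the conclusion is unaffected.
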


\begin{proof}
The top level of splitting of $\roleMCR$ into $\roleCR$ and $\rolemain$ is equivalent to the reactions of \cref{lem:split-deterministic},
with $U = \roleMCR$,
$M = \rolemain$,
$S = \roleCR$,
and the $f$ and $t$ subscripts representing the Boolean value of the field $\fieldassigned$.
\cref{lem:split-deterministic} gives the stated bounds on $\rolemain$,
if there were no further splitting of $\roleCR$.

\cref{lem:split-deterministic} gives that with high probability,
all $\roleMCR$ are converted to $\roleCR$ and $\rolemain$ in $12.5\ln n$ time;
we begin the analysis at that point,
letting $s$ be the number of $\roleCR$ agents produced,
noting $n/3 \leq s \leq 2n/3$ with probability 1,
and 
for any $\epsilon'$
$\frac{n}{2}(1-\epsilon') \leq s \leq \frac{n}{2}(1+\epsilon')$ with high probability.

The splitting of $\roleCR$ into $\roleclock$ and $\rolereserve$ follows a simpler process that we analyze here.
Let $r = |\reserves|$ and $c = |\clocks|$ at the end of \phaseInitialize.
To see the high probability bounds on $r$ and $c$,
we model the splitting of $\roleCR$ by the reaction $U,U \to R,C$ during \phaseInitialize\ 
and $U \to R$ at the end of \phaseInitialize,
since all un-split $\roleCR$ agents become $\rolereserve$ agents upon leaving \phaseInitialize.

The reaction $U,U \to R,C$ reduces the count of $U$ from its initial value $s$ ($=n/2 \pm \epsilon' n / 2$  WHP)
to $\epsilon' s$,
with the number of interactions between each reaction when $\# U = l$ governed by a geometric random variable with success probability $O(l^2 / n^2)$.
Applying \cref{cor:janson-wvhp} with 
$k = s - \epsilon' s$,
$p_i = O((i+\epsilon' s)^2 / n^2)$ for $i \in \{1,\ldots,k\}$,
the reaction $U,U \to R,C$ takes $O(1)$ time to reduce the count of $U$ from its initial value $m$ ($=n/2 \pm \epsilon' n / 2$  WHP)
to $\epsilon' m$ with very high probability.
This implies that after $O(1)$ time, 
$r$ and $c$ are both at least 
\begin{align*}
    s/2 - \epsilon' s
    = 
    s (1/2 - \epsilon')
    &\geq 
    (n/2 - \epsilon' n / 2) (1/2 - \epsilon')
    \\&= 
    n/4 - \epsilon' n / 2 -  \epsilon' n / 4 + (\epsilon')^2 n /2
    \\&>
    n/4 - \epsilon' n 
    =
    n/4 (1 - 4 \epsilon')
\end{align*}
with very high probability.
Choosing $\epsilon = \epsilon' / 4$ gives the high probability bounds on $r$ and $c$. Thus we require $12.5\ln n + O(1) \leq 13\ln n$ time, and for appropriate choice of $\fieldcounter$ constant $c_0$, this happens before the first $\roleclock$ agent advances to the next phase with high probability.

We now argue the probability-1 bounds.
The bound $r \leq 2n/3$ follows from $|\mains| \geq n/3$.
The bound $r \geq n/6$ follows from $|\mains| \leq 2n/3$, so $\roleCR \geq n/3$ if no $U,U \to R,C$ splits happen, 
and the fact that at least half of $\roleCR$ get converted to $\rolereserve$:
exactly half by $U,U \to R,C$ and the rest by $U \to R$.

Although the reactions 
$U,U \to R,C$ and $U \to R$ can produce only a single $C$,
there must be at least two $\roleclock$ agents for \countersubroutine\ to count at all and end \phaseInitialize, so if \phaseDiscreteAveraging\ initializes, $c \geq 2$.
The bound $c \leq n/3$ follows from the fact that $c$ is maximized when $|\mains| = n/3$ and no $U \to R$ reactions happen, i.e., all $2n/3$ $\roleCR$ agents are converted via $U,U \to R,C$,
leading to $c = n/3$.
\end{proof}

We now reason about \phaseDiscreteAveraging, which has different behavior based on the initial gap $g$.

\begin{lemma}
\label{lem:discrete-averaging}
If the initial gap $|g|\geq 0.025|\mains|$, then we stabilize to the correct output in \phaseConsensus. If $|g| < 0.025|\mains|$, then at the end of \phaseDiscreteAveraging, all agents have $\fieldbias \in \{-1,0,+1\}$, and the total count of biased agents is at most $0.03|\mains|$. Both happen with high probability $1-O(1/n^2)$.
\end{lemma}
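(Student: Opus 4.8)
The plan is to analyze \phaseDiscreteAveraging\ using three ingredients: (i) the known behavior of integer averaging — within $O(\log n)$ parallel time, with probability $1-O(1/n^2)$, all values $v.\fieldbias$ lie in three consecutive integers~\cite{mocquard2020stochastic}, a property that persists once achieved; (ii) the invariants of the reaction $i,j\to\floor{\frac{i+j}{2}},\ceil{\frac{i+j}{2}}$ — it preserves the population sum of biases (so the mean bias is pinned at $\bar x:=g/|\mains|$), never widens $[\min_v v.\fieldbias,\max_v v.\fieldbias]$ (which starts inside $\{-3,\dots,3\}$), and never increases the number of biased agents; and (iii) the cancel-reaction time bounds of \cref{lem:cancel-reactions}. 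By \cref{lem:phase-initialize} I may take $\frac n3\le|\mains|\le\frac{2n}3$, and following the convention stated after \cref{thm:chernoff-bound-multiplicative} I analyze the phase as if every agent is in it from time $0$, with $\Theta(\log n)$ parallel time to spare once the counter constant $c_1$ is large enough; there are $O(1)$ failure events, each $O(1/n^2)$, so a final union bound yields the claim. The key structural observation is that once all biases lie in $\{c,c+1,c+2\}$ with $c=\min_v v.\fieldbias$, then $c\le\bar x\le c+2$, so $c\le\floor{\bar x}$ and $c\ge\bar x-2$: when $|g|<0.025|\mains|$ this forces the biases into $\{-2,-1,0\}$, $\{-1,0,1\}$, or $\{0,1,2\}$; when $|g|\ge0.025|\mains|$ (WLOG $g>0$) it forces $c\ge-1$, so the biases are either in $\{-1,0,1\}$ or all nonnegative.

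\emph{Large-gap case.} If all biases are nonnegative the minority opinion is already gone. If they lie in $\{-1,0,1\}$, the only non-null reaction among such values is the cancel reaction $+1,-1\to0,0$, and $\#(+1)-\#(-1)=g\ge0.025|\mains|=\Omega(n)$, so \cref{lem:cancel-reactions} (complete-cancellation form, with gap parameter $a-b=g/n=\Omega(1)$) removes every $\#(-1)$ agent within $O(\log n)$ time whp. Either way, by the end of \phaseDiscreteAveraging\ no agent has $\fieldopinion=-1$, so in \phaseConsensus\ the set $\fieldbiases$ never contains $-1$, the ``move to next phase'' branch never fires, and all agents stabilize outputting $M(\vi)$; in the edge case that some surviving agent has $|\fieldbias|>1$ — which, after the same fast extreme-value elimination used below, can only happen when $\bar x$ is close to or above $1$ — the \textbf{Init} of \phaseConsensus\ instead diverts everyone to \phaseStableBackup, which reproduces the correct answer from $\fieldinput$ (\cref{lem:stable-backup}). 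In both cases the correct output is secured at \phaseConsensus\ without using Phases~3--9.

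\emph{Small-gap case.} I would first remove any extreme value: if the biases lie in $\{0,1,2\}$ (the case $\{-2,-1,0\}$ being symmetric), then $\#(0)-\#(+2)=|\mains|-g=|\mains|(1-\bar x)>0.975|\mains|$, and since $+2,0\to+1,+1$ is exactly a cancel reaction between the $(+2)$- and $0$-subpopulations, \cref{lem:cancel-reactions} eliminates all $(+2)$-agents within $O(\log n)$ time whp; afterwards all biases lie in $\{0,1\}$ and, by the sum invariant, $\#(+1)=\bar x|\mains|=g<0.025|\mains|$, which finishes this case. This reduces to biases already in $\{-1,0,1\}$, where the biased count only decreases, through $+1,-1\to0,0$. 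If $\#(-1)$ (the minority count, taking $g\ge0$) hits $0$ first, the biased count equals $g<0.025|\mains|$. Otherwise, while the biased count stays $\ge0.03|\mains|$ we have $\#(+1)\ge0.015|\mains|$ and $\#(-1)\ge0.0025|\mains|$, so cancel reactions fire at rate $\Omega(1)$ and, by \cref{lem:cancel-reactions} with $d$ chosen to leave exactly $0.03|\mains|$ biased agents (its parameters $a-d,b-d$ both $\Omega(1)$), the biased count drops to $0.03|\mains|$ in $O(1)$ extra time with very high probability; being monotone it then stays $\le0.03|\mains|$. The $0.025$-vs-$0.03$ margin is precisely what keeps these rate parameters bounded away from $0$.

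\emph{Main obstacle.} No new idea beyond \cref{lem:cancel-reactions} and the monotonicity invariants is needed; the real work is (a) invoking the ``three consecutive values'' result of~\cite{mocquard2020stochastic} in the explicit ``$O(\log n)$ time, probability $1-O(1/n^2)$'' form and composing it with the $\Theta(\log n)$-wide window over which agents trickle into the phase, and (b) the timing book-keeping in the small-gap case (extreme-value elimination, then partial cancellation), keeping every rate parameter fed to \cref{lem:cancel-reactions} bounded away from $0$. The constants that emerge are large, which is why $c_1\ln n$ must use a large $c_1$ even though the simulations of \cref{sec:simulation} tolerate a small one.
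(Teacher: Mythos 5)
Your approach is essentially the paper's, built on the same three ingredients: the Mocquard--Sericola convergence to three consecutive integers in $O(\log n)$ time with probability $1-O(1/n^2)$, the cancel-reaction time bounds of \cref{lem:cancel-reactions}, and the preserved-sum invariant. Where you diverge is in what you extract from~\cite{mocquard2020stochastic}: the paper uses the stronger form of Corollary~1, namely that the three surviving values are centered at the rounded mean $\mu = \lfloor g/|\mains| \rceil$, which immediately forces the window to be exactly $\{-1,0,+1\}$ whenever $|g| \le 0.5|\mains|$ (since then $\mu=0$) and forces all biases to share the majority sign whenever $|\mu|\geq 1$. You only use the weaker ``three consecutive integers containing the mean,'' which leaves open the windows $\{-2,-1,0\}$ and $\{0,1,2\}$ even when the gap is tiny, and you compensate with an extra extreme-value-elimination step (the reaction $0,2\to1,1$ viewed as a cancel process). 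Your extra step is arithmetically correct, and your choice of $d$ in the partial-cancellation argument reproduces the paper's $0.0025m/0.0275m$ calculation, so the route goes through; it just does more work than the paper needs. The paper's cleaner route is worth knowing: once you pin the window to $\{\mu-1,\mu,\mu+1\}$, the small-gap case is immediately the $\{-1,0,+1\}$ case and no extreme-value elimination is required.

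One substantive observation in your favor: in the large-gap regime with $\bar x = g/|\mains|$ near or above $1$, the window is all-nonnegative but contains values $\geq 2$, and the extreme-value elimination you rely on is no longer fast (its gap parameter $|\mains|(1-\bar x)$ is no longer $\Omega(n)$). Such agents enter \phaseConsensus\ with $|\fieldbias| > 1$, triggering the \textbf{Init} branch that jumps to \phaseStableBackup. You flag this explicitly; the paper's proof asserts ``we will stabilize in \phaseConsensus'' for $|g|>0.5|\mains|$ without addressing it. This is only a wording issue (the backup is still correct, and with a linear gap the backup's cancel and epidemic stages finish in $O(\log n)$ time by \cref{lem:cancel-reactions} and \cref{lem:epidemic}), but it does mean the lemma's ``stabilize ... in \phaseConsensus'' clause is not literally delivered by either argument in that sub-regime — worth a remark if the lemma statement is taken at face value elsewhere. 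Your closing sentence ``the correct output is secured at \phaseConsensus'' has the same looseness, since the output is secured by \phaseStableBackup\ in the edge case.
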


\begin{proof}
In the case where all agents enter \phaseDiscreteAveraging, none are still in $\roleMCR$, so every non-$\rolemain$ agent has given their $\fieldbias$ to a $\rolemain$ agent via \cref{line:phase-0-assign-main} or \cref{line:phase-0-main-takes-bias} of \phaseInitialize. Thus the initial gap $g = \sum_{m.\fieldrole = \rolemain} m.\fieldbias$.

\newcommand{\avgbias}{\lfloor \frac{g}{|\mains|} \rceil}
Let $\mu = \avgbias$ be the average $\fieldbias$ among all $\rolemain$ agents, rounded to the nearest integer. 
By \cite{mocquard2020stochastic}, we will converge to have all $\fieldbias \in \{\mu - 1, \mu, \mu +1 \}$, in $O(\log n)$ time with high probability $1 - O(1/n^2$). We use Corollary 1 of \cite{mocquard2020stochastic}, where the constant $K=O(\sqrt{n})$ and $\delta = \frac{1}{n^2}$. This gives that with probability $1-\delta$, all $\fieldbias \in \{\mu - 1, \mu, \mu +1 \}$ after a number of interactions 
\[
t \geq (n-1)(2\ln(K+\sqrt{n}) - \ln(\delta) - \ln(2)) 
\sim n(2\ln(\sqrt{n}) + \ln(n^2)) = 3n\ln n.
\]
Thus after time $3\ln n$, all $\fieldbias \in \{\mu - 1, \mu, \mu +1 \}$ with high probability $1-1/n^2$.

If $|g| > 0.5|\mains|$, then $|\mu| \geq 1$, so all remaining biased agents have the majority opinion, and we will stabilize in \phaseConsensus\ to the correct majority output.

If $|g| \leq 0.5|\mains|$, then $\mu = 0$, so now all $\fieldbias \in \{-1, 0, +1\}$. We will use \cref{lem:cancel-reactions}, with the sets of biased agents $A = \{a:a.\fieldbias = +1\}$ and $B = \{b:b.\fieldbias = -1\}$, which have initial sizes $|A| = a\cdot n$ and $|B| = b\cdot n$. 

In the first case where $0.025|\mains| \leq |g| \leq 0.5|\mains|$, we have $a - b \geq 0.025m$ (assuming WLOG that $A$ is the majority). Then by \cref{lem:cancel-reactions}, with high probability $1-O(1/n^2)$, the count of $B$ becomes $0$ in at most time $\frac{5\ln n}{2(a-b)} = \ln n\frac{5}{2\cdot 0.025m} = \frac{100}{m}\ln n \leq 201\ln n$. With all minority agents eliminated, we will again stabilize in \phaseConsensus\ with the correct output.

In the second case where $|g| < 0.025|\mains|$, we can use \cref{lem:cancel-reactions} with constant $d = b - 0.0025m$. Then even with maximal gap $a - b = 0.025m$, with very high probability in constant time we bring the counts down to $b = 0.0025m$ and $a = 0.0275m$. Thus the total count of biased agents is at most $(0.0025m + 0.0275m)n = 0.03|\mains|$.

Since all the above arguments take at most $O(\log n)$ time, for appropriate choice of counter constant $c_1$, the given behavior happens before the first $\roleclock$ agent advances to the next phase with high probability.
\end{proof}

\section{Analysis of main averaging \phaseMainAveraging}
\label{sec:analysis-main-phase}

The longest part in the proof is analyzing the behavior of the main averaging \phaseMainAveraging. The results of this section culminate in the following two theorems, one for the case of an initial tie, and the other for an initial biased distribution.

In the case of an initial tie, we will show that all biased agents have minimal $\fieldlevel= -L$ by the end of the phase. 

\begin{restatable}{theorem}{replicatedThmPhaseThreeTieResult}
\label{thm:phase-3-tie-result}
If the initial configuration was a tie with gap 0, then by the end of \phaseMainAveraging, all biased agents have $\fieldlevel = -L$, with high probability $1-O(1/n^{2})$.
\end{restatable}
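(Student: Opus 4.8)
The plan is to track the exponent distribution of the biased $\rolemain$ agents through \phaseMainAveraging, using the fixed‑resolution clock to force a gradual ``descent'' (level $-i$ can only become occupied once the clock reaches hour $i$), and to exploit the tie hypothesis $g=0$ to guarantee that a constant fraction of $\rolemain$ agents always stays unbiased to fuel the split reactions. First I would pin down the entry configuration: since a tie has $|g| = 0 < 0.025|\mains|$, \cref{lem:discrete-averaging} gives that, with high probability, upon entering \phaseMainAveraging\ every agent has $\fieldbias\in\{-1,0,+1\}$, the total number of biased agents is at most $0.03|\mains|$, and (by the invariant $g=\sum_{a.\fieldrole=\rolemain}a.\fieldbias=0$) the counts of $+1$ and $-1$ agents are exactly equal. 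If instead both opinions fail to survive \phaseDiscreteAveraging, the population stabilizes with output $\T$ in \phaseConsensus\ and there are no biased agents, so the claim is vacuous. I would then invoke the clock theorem (\cref{thm:clock}) to obtain, with high probability: the phase lasts $\Theta(\log n)$ time; the clock passes through every hour $0,1,\dots,L$; during hour $h$ there is a window of $\Omega(1)$ time in which $\Omega(n)$ $\roleclock$ agents are at minute $\ge kh$ (so line~\ref{line:clock-update-reaction} drags $\unbiased$ agents up to $\fieldhour\ge h$, a one‑sided‑cancel process, \cref{lem:one-sided-cancel}) while only a negligible count are past hour $h{+}1$; and, after all $\roleclock$ agents reach minute $kL$, there is a further $\Theta(\log n)$‑time sub‑window (the $\fieldcounter=c_3\ln n$ timer, with $c_3$ chosen large) during which first all $\unbiased$ agents reach $\fieldhour=L$ and then the phase ends.

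The core of the argument is a \emph{fuel invariant}: with high probability, at least $\delta|\mains|$ of the $\rolemain$ agents are unbiased at every point of \phaseMainAveraging, for a fixed constant $\delta>0$. This is exactly where $g=0$ is used. A biased agent performs at most one split per hour (it then sits one level lower until the clock advances), so the biased count can grow by at most a factor of $2$ per hour; meanwhile same‑exponent cancel reactions fire at a rate quadratic in the biased count divided by $n$. By the clock's concentration (exponentially decaying back tail, doubly‑exponentially decaying front tail, \cref{thm:clock}), all but a negligible count of biased agents share one exponent, and since $g=0$ this forces $c_+$ and $c_-$ at that exponent to stay balanced; hence whenever the biased count is $\Omega(n)$ the cancel rate is $\Theta(\mathrm{count}^2/n)$ while the split rate is $O(\mathrm{count}\cdot(\text{\# unbiased})/n)$. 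Choosing the minutes per hour $k=\kconstant$ makes each hour long enough that, once the biased count would approach $(1-\delta)|\mains|$ (unbiased count near $\delta|\mains|$), cancellation strictly dominates splitting, so the number of unbiased agents has positive drift there. I would formalize this by showing the unbiased count, viewed near that boundary, dominates a supermartingale and applying \cref{thm:azuma} (after taking logarithms, as the paper does for exponentially decaying potentials), together with a bound that the agents ``leaked'' two or more exponents from the bulk carry only a negligible fraction of the count. Contrast the non‑tie case, where the minority count collapses at exponent $-l$, cancellation shuts off, and the majority fills the population.

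Granted the fuel invariant and the clock, I would finish as follows. During the final $\Theta(\log n)$‑time sub‑window all $\unbiased$ agents have $\fieldhour=L$, so by the split rule ($|\fieldTlevel|>|\fieldlevel|$) every biased agent at level $>-L$ may split down. Each such agent needs at most $L$ further splits to reach level $-L$; with $\Omega(|\mains|)$ unbiased fuel available throughout (fuel invariant), the waiting time for each of its remaining splits is dominated by a geometric with success probability $\Omega(1)$, so by concentration of a sum of $\le L$ geometrics it completes in $O(\log n)$ time with failure probability $O(n^{-3})$; a union bound over the $\le n$ agents drives \emph{all} biased agents to level $-L$ within the sub‑window. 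Level $-L$ is absorbing: no split is possible below it, and the remaining reactions (same‑level cancels) change no agent's level, so the property persists until the phase ends. Taking a union bound over this event together with the $O(\log n)$ per‑hour events of the clock/fuel analysis and the $O(1)$ events from \cref{lem:discrete-averaging} and \cref{thm:clock}, each tuned to failure probability $O(n^{-3})$ by the choice of the clock constants $p,k$ and the counter constants, yields the claimed probability $1-O(1/n^2)$.

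\textbf{Main obstacle.} The delicate step is the fuel invariant — keeping the biased count bounded away from $|\mains|$ for the whole phase. It is precisely here that $g=0$ enters (a nonzero gap would eventually collapse $\min(c_+,c_-)$ and kill the cancel rate), and it requires both (a) the tight exponent synchronization supplied by the clock, so that cancellations really occur between equal‑exponent opposite‑opinion agents rather than being blocked, and (b) a quantitative comparison of the quadratic cancel rate against the linear split rate once the count is $\Theta(n)$ — the inequality that fixes the value $\kconstant$ of minutes per hour. The rest (entry configuration, clock behavior, the final mop‑up to level $-L$) is comparatively routine given the machinery already set up in the excerpt.
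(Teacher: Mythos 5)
Your plan has the right overall shape and correctly isolates the delicate step (the ``fuel invariant''), but it underestimates the work in exactly that step and takes a different route for the final mop-up. In the paper, the fuel invariant is not proved directly on the unbiased count via a positive-drift supermartingale near a boundary; instead it falls out of an hour-by-hour induction (\cref{thm:induction-on-hours}, via \cref{lem:induction-t-count,lem:induction-upper-tail,lem:induction-total-mass}) on the monotone total mass $\mu(t)=\sum|\fieldbias|$ and the tail mass $\mu_{(>-h)}$, with the tie case getting the strong bound $\tau_h\ge 0.77$ at every hour because the relative gap $g_h$ is identically zero. Tracking $\mu$ is cleaner than tracking the unbiased count, because $\mu$ only decreases (splits preserve it, cancels reduce it), whereas within an hour the unbiased count first drops (splits consume fuel) and then recovers (cancels replenish it), so your ``cancel rate dominates split rate once unbiased is near $\delta|\mains|$'' comparison would need to be made over a whole synchronous hour rather than instantaneously, and would require the same painstaking constant bookkeeping (the table of $\rho_h$'s and $a,b,d$ values in \cref{lem:induction-total-mass}) and the tail-control lemmas (\cref{lem:mains-small-hour,lem:level-lower-tail,lem:upper-levels-empty}) before the claim that ``all but a negligible count share one exponent'' is available. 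For the mop-up, the paper re-uses the potential $\phi_{(>-L)}(t)=\sum_{a.\fieldlevel>-L}4^{a.\fieldlevel+L-1}$ already built for the induction: it shows $\E[\phi_{(>-L)}]$ decays geometrically during the $16\ln n$ tail window and applies Markov's inequality to get $\phi_{(>-L)}=0$ (hence no agent above $-L$) with probability $1-O(n^{-2})$, rather than your per-agent Janson-plus-union-bound argument. Your argument is plausible (each split is a $\Omega(1)$-rate geometric and there are $\le L$ of them per agent, Janson with $p^*\mu=\Theta(\log n)$ gives a polynomial failure probability) but the paper's version comes for free from machinery already in place. Finally, your ``with $c_3$ chosen large'' glosses over a step the paper treats explicitly: a $\roleclock$ agent that leaks to $\fieldminute=kL$ early could start counting down before $\hourend_{L-1}$, so the paper bounds the expected number of counter decrements before $\hourend_L+16\ln n$ in two regimes (before $\hourend_{L-1}$, when at most $0.001|\clocks|$ other clocks are at the max minute, and after) and applies a Chernoff bound to pin down $c_3=26$. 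In summary: sound plan, same qualitative ideas, but the central induction and the phase-duration bound are the bulk of the proof and need to be carried out in the mass/potential framework to close the gaps you leave as sketches.
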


Note that the where all biased agents have $\fieldlevel = -L$ gives a stable configuration in the next \phaseDetectTie, with all agents having $\fieldoutput = \T$. Thus from \cref{thm:phase-3-tie-result}, we conclude that from an intial tie, the protocol will stabilize in \phaseDetectTie\ with high probability. Conversely, we have already observed that this configuration can only be reached in the case of a tie, since the sum of all biased agents would bound the magnitude of the initial gap $|g| < 1$. Thus in the other case of a majority initial distribution, the agents will proceed through to \phaseReserveSample\ with probability 1.

In this majority case, we will show that a large majority of the $\rolemain$ agents have $\fieldopinion$ set to the majority opinion and $\fieldlevel \in \{-l, -(l+1), -(l+2)\}$ is in a consecutive range of 3 possible exponents, where the value $-l$ depends on the initial distribution. In addition, we show the upper tail above this exponent $-l$ is very small.

\begin{restatable}{theorem}{replicatedThmPhaseThreeMajorityResult}
\label{thm:phase-3-majority-result}
Assume the initial gap $|g| < 0.025|\mains|$. Let the exponent $-l = \floor{\log_2(\frac{g}{0.4|\mains|})}$.
Let $i = \text{sign}(g)$ be the majority opinion and $M$ be the set of all agents with $\fieldrole = \rolemain$, $\fieldopinion = i$, $\fieldlevel \in \{-l, -(l+1), -(l+2)\}$. Then at the end of \phaseMainAveraging, $|M| \geq 0.92|\mains|$ with high probability $1-O(1/n^{2})$.

In addition, the total mass above exponent $-l$ is $\mu_{(>-l)} = \sum\limits_{a.\fieldlevel > -l}|a.\fieldbias| \leq 0.002|\mains|2^{-l}$, and the total minority mass is $\beta_- = \sum\limits_{a.\fieldopinion = -i}|a.\fieldbias| \leq 0.004|\mains|2^{-l}$.
\end{restatable}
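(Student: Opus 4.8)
The plan is to follow the $\rolemain$ subpopulation hour by hour of the fixed-resolution clock of \phaseMainAveraging, using throughout the two conservation facts --- the signed mass $\sum_{a.\fieldrole=\rolemain}a.\fieldbias$ stays equal to $g$, and the total mass $\sum_a|a.\fieldbias|$ only decreases (cancel reactions strictly decrease it, split reactions preserve it) --- together with the trivial cap that the number of biased agents never exceeds $|\mains|$. The inputs are \cref{lem:phase-initialize,lem:discrete-averaging} for the starting configuration (all biased agents at exponent $0$, count difference exactly $g$, at most $0.03|\mains|$ biased agents, $|\mains|$ and $|\unbiased|$ both $\Omega(n)$), and \cref{thm:clock} for the clock's behaviour: every hour lasts $\Theta(1)$ parallel time, at each moment a $(1-\gamma)$-fraction of $\roleclock$ agents --- hence, via the clock-update reaction and \cref{lem:one-sided-cancel}, also of $\unbiased$ agents --- lie in the current or previous hour, the leading (ahead) tail over hours is doubly-exponentially thin while the lagging tail shrinks geometrically, with $k=\kconstant$ chosen large enough to make these decay rates as favourable as the constants in the statement require, and the clock persists for $\Theta(\log n)$ further time after saturating at hour $L$.

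The heart of the argument is an induction over hours $h=0,1,2,\dots$ up to the hour $\approx l$ at which the count difference $D=\#\{a:a.\fieldopinion=\mathrm{sign}(g)\}-\#\{a:a.\fieldopinion=-\mathrm{sign}(g)\}$ first reaches a constant fraction of $|\mains|$. The inductive hypothesis, failing with probability $O(1/(n^2L))$ per step, is: (i) essentially all biased mass sits at exponents $\ge-h$, with a \emph{doubly-exponentially} thin leading tail below $-h$ (an agent reaches exponent $-(j+1)$ only after meeting an $\unbiased$ agent whose $\fieldhour$ has reached $j+1$) and a lagging tail at exponents $>-h$ whose \emph{total mass} is only a small constant fraction of the mass at exponent $-h$ --- this is where $k=\kconstant$ is used, since it makes the per-hour shrink factor of the lagging-biased population (driven by splits against the caught-up $\unbiased$ agents, \cref{lem:one-sided-cancel}, with cancels at exponent $-h$ running concurrently, \cref{lem:cancel-reactions}) smaller than $1/2$, so the geometric series of lagging mass sums to a small multiple of the bulk; (ii) hence $D=(1\pm o(1))2^hg$, since $g$ is the signed mass of a distribution concentrated near exponent $-h$, so $D$ doubles each hour; (iii) the $\unbiased$ fuel stays $\Omega(|\mains|)$ so the next hour's splits are never starved, until $D$ reaches $0.4|\mains|$, which by the definition $-l=\floor{\log_2(g/(0.4|\mains|))}$ happens at hour $l$ with $D_l\in[0.4,0.8)|\mains|$. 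The per-step concentration comes from \cref{sec:timing-lemmas} (ultimately \cref{cor:janson-wvhp,thm:chernoff-bound-multiplicative}), and a union bound over the $\le L$ hours keeps the total failure probability $O(1/n^2)$.

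For the remaining $O(\log n)$ time I would switch to a cleanup argument. After hour $l$ the majority count exceeds $0.4|\mains|$, so cancel reactions fire at rate $\Omega(1)$ per unit time --- a constant fraction of majority agents is always available to annihilate the remaining minority --- and the ``run-to-completion'' case of \cref{lem:cancel-reactions} drives the total mass down to $(1+o(1))g$ within $O(\log n)$ time, giving $\beta_-=\tfrac12(\text{total mass}-g)\le0.004|\mains|2^{-l}$. Meanwhile further clock advancement keeps splitting biased agents down until the $\unbiased$ fuel is exhausted, so the biased count ends at $(1-o(1))|\mains|$; with total mass $(1+o(1))g\in(1\pm o(1))[0.4,0.8)|\mains|2^{-l}$ and the width-$O(1)$ synchronization still in force, the bulk is pinned to exponents $\{-l,-(l+1),-(l+2)\}$ and the leading-tail bound (refreshed through the settling time) yields $\mu_{(>-l)}\le0.002|\mains|2^{-l}$. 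Finally $|M|\ge0.92|\mains|$ follows by subtracting from $|\mains|$: the $o(|\mains|)$ $\unbiased$ agents; the doubly-exponentially few biased agents below exponent $-(l+2)$; the lagging biased agents, of count $\le\mu_{(>-l)}/(2\cdot2^{-l})\le0.001|\mains|$ since each carries per-agent mass $\ge2\cdot2^{-l}$; and the minority agents, confined by the same synchronization to exponents $\ge-(l+2)$ up to a doubly-exponential tail and hence of count $\le\beta_-\cdot2^{l+2}+o(|\mains|)\le0.016|\mains|+o(|\mains|)$.

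The main obstacle is making the inductive step rigorous: one must simultaneously control \emph{both} tails of the exponent distribution --- the doubly-exponential leading tail inherited from the clock, and the geometric lagging tail whose summed mass must be held to a small constant via $k=\kconstant$ --- \emph{and} the count difference $D_h$, while the absolute number of biased agents swings by large factors over the phase, and with an error budget tight enough to survive a union bound over $\Theta(\log n)$ hours. The most delicate piece is the coupled feedback between splits, which consume $\unbiased$ fuel, and cancels, which regenerate it: one must show the fuel is never exhausted before $D$ saturates at $\Theta(|\mains|)$, yet is exhausted by the end of the phase, and that the transition from the multiplicative-doubling regime ($h\le l$) to the saturated, cancel-dominated regime ($h>l$) is clean enough for the target constants $0.4$, $0.92$, $0.002$, $0.004$ to come out with room to spare.
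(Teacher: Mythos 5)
Your overall scaffolding --- an hour-by-hour induction driven by the clock, with explicit control of a doubly-exponential leading tail and a geometric lagging tail of the exponent distribution, followed by a count-by-elimination at the end --- is indeed the paper's plan (via \cref{thm:induction-on-hours}, i.e.\ \cref{lem:induction-t-count,lem:induction-upper-tail,lem:induction-total-mass}, plus \cref{lem:induction-minority-mass,lem:large-majority-count}). But your central inductive invariant~(ii), that the \emph{count difference} $D=\#\{\fieldopinion=\mathrm{sign}(g)\}-\#\{\fieldopinion=-\mathrm{sign}(g)\}$ satisfies $D=(1\pm o(1))2^h g$, is not just unproved but false. In early hours the relative gap $g\cdot 2^h$ is $O(1)$ while the biased counts are $\Theta(n)$ and subject to fluctuations far larger than $g\cdot 2^h$: the signed-mass invariant $\sum_v v.\fieldbias=g$ does \emph{not} pin down $D$ unless every biased agent is at exactly exponent $-h$, and the tails you allow make $D$ free to deviate wildly (it can even be negative --- the paper's own snapshot S0 and the caption of \cref{fig:opinion-constant} show the minority \emph{count} exceeding the majority count during the early part of \phaseMainAveraging). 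The paper's argument deliberately never claims anything about the count difference until the very end. It tracks the total mass $\mu=\beta_++\beta_-$ (via \cref{lem:induction-total-mass}, with explicit constants $\rho_h$), using that $\mu-g=2\beta_-$; and only in the last five hours $h=l-5,\dots,l$, when the \emph{deterministic} relative gap $g_h=g\cdot 2^h$ is $\Omega(|\mains|)$, does it run a second explicit induction (\cref{lem:induction-minority-mass}, constants $\xi_h$) to drive $\beta_-$ down to $0.004|\mains|2^{-l}$. Your cleanup step ``mass $\to(1+o(1))g$'' is too coarse to recover the constant $0.004$ (it needs $\mu\le 1.01g$ concretely), and replacing it essentially forces you back into the paper's hour-by-hour $\xi_h$ computation.

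A second genuine gap: the theorem asserts $|M|\ge 0.92|\mains|$ \emph{at the end of} \phaseMainAveraging, i.e.\ whenever the first agent increments $\fieldphase$, not at some earlier hitting time. Since the $\roleclock$ counters finish at random times, you must show that once $|M|$ is large it \emph{cannot} drop below $0.92|\mains|$ through any further reactions within the phase. The paper does this by a reachability (probability-$1$) mass-accounting argument in \cref{lem:large-majority-count}: first $|M|\ge 0.96|\mains|$ with high probability, and then that no sequence of split/cancel reactions can lower it past $0.92|\mains|$ because there would not be enough $\rolemain$ agents to carry the required majority mass at lower exponents. Your proposal waits for fuel exhaustion and asserts the bulk ``is pinned,'' but gives no argument that this persists until whichever clock agent ends the phase; without the reachability step, the conclusion is only shown at a particular time, not at the end of the phase as required.
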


Note the assumption of small initial gap $g$ that we get from \phaseDiscreteAveraging\ is just for convenience in the proof, to reason uniformly about the base case behavior at hour $h=0$ for our inductive argument. The rules of \phaseMainAveraging\ would also work as intended with even larger initial gaps, just requiring a variant of the later analysis to acknowledge that the initial $\fieldexponent = 0$ is quite close to the final range 
$\fieldlevel \in \{-l,-(l+1),-(l+2)\}$.

\subsection{Clock synchronization theorems}

We will first consider the behavior of the $\roleclock$ agents in \phaseMainAveraging. The goal is to show their $\fieldminute$ fields remain tightly concentrated while moving from $0$ up to $kL$, summarized in \cref{thm:clock}. See \cref{fig:clock_distribution} for simulations of the clock distribution. The back tail behind the peak of the $\fieldminute$ distribution decays exponentially, since each agent is brought ahead by epidemic at a constant rate and thus their counts each decay exponentially. The front part of the distribution decays much more rapidly. With a fraction $f$ of agents at $\fieldminute = i$, the rate of the drip reaction is proportional to $pf^2$. This repeated squaring leads to the concentrations at the leading minutes decaying doubly exponentially. The rapid decay is key to showing that very few $\roleclock$ agents can get very far ahead of the rest.

\begin{figure}[!htbp]
     \centering
     \begin{subfigure}[b]{0.95\textwidth}
         \centering
         \includegraphics[width=\textwidth]{figures/clock_distribution_p_1.pdf}
         \caption{\footnotesize
         Simulating the clock rules with $p=1$.}
         \label{fig:clock_distribution_p_1}
     \end{subfigure}
     \hfill
        
     \centering
     \begin{subfigure}[b]{0.95\textwidth}
         \centering
         \includegraphics[width=\textwidth]{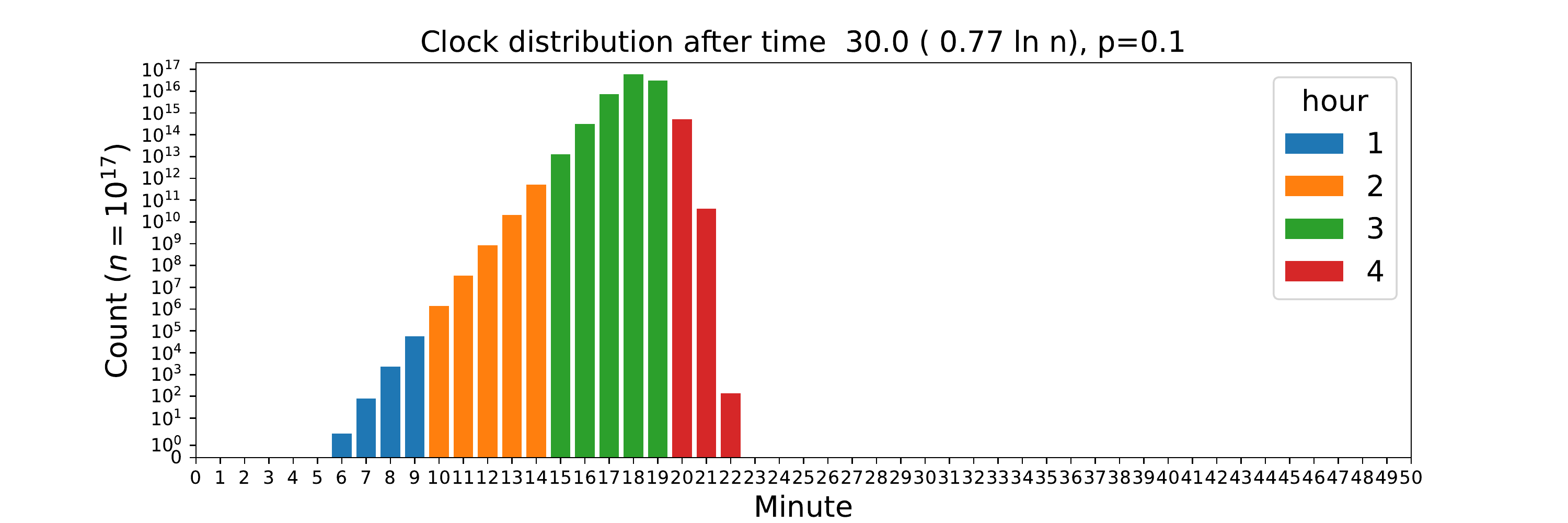}
         \caption{\footnotesize
         Simulating the clock rules with $p=\frac{1}{10}$.}
         \label{fig:clock_distribution_p_10}
     \end{subfigure}
     \hfill
     
    \centering
     \begin{subfigure}[b]{0.95\textwidth}
         \centering
         \includegraphics[width=\textwidth]{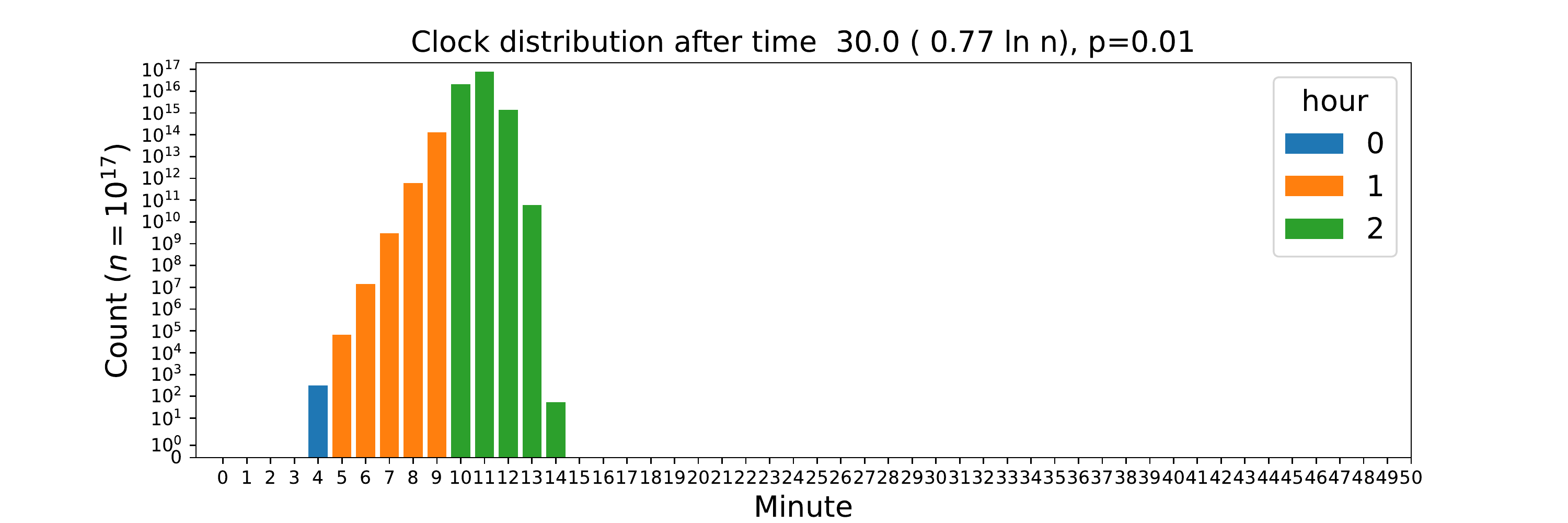}
         \caption{\footnotesize
         Simulating the clock rules with $p=\frac{1}{100}$.}
         \label{fig:clock_distribution_p_100}
     \end{subfigure}
     \caption{Simulating the clock rules on a large population of size $n = 10^{17}$, with $k=5$ minutes per hour and multiple values of $p$. 
     The $\fieldminute$ distribution gets tighter for smaller values of $p$. To make the $\fieldhour$ distribution tighter, we can also simply make $k$ larger.
     The full evolution of these distributions can be seen in the example notebook~\cite{examplenotebook}, along with the code that ran the large simulation to generate the data.}
     \label{fig:clock_distribution}
\end{figure}

While our algorithm runs for only $O(\log n)$ minutes,
the clock behavior we require can be made to continue for $O(n^c)$ minutes for arbitrary $c$.
Our proofs rely on induction on minutes, and all results in this section hold with very high probability. Thus, we can take a union bound over polynomially many minutes and still keep the very high probability guarantees. If the clock runs for a superpolynomial number of minutes, the results of this section no longer hold. 

We start by consider an entire population running the clock transitions (lines 1-5 of \phaseMainAveraging), so $|\clocks| = n$. The following lemmas describe the behavior of just this clock protocol. In our actual protocol, the clock agents are a subpopulation $|\clocks| = c \cdot n$, and these clock transitions only happen when two clock agents meet with probability $\binom{|\clocks|}{2}\big / \binom{n}{2}\sim \frac{1}{c^2}$. This more general situation is handled in later theorems applying to our exact protocol.

\paragraph{Definitions of values used in subsequent lemma statements.}
Throughout this section, we reference the following quantities.
For each minute $i$ and parallel time $t$, define $c_{\geq i}(t) = |\{c:c.\fieldminute \geq i\}| / |\clocks|$ to be the fraction of clock agents at minute $i$ or beyond at time $t$. Then define $t^+_{\geq i} = \min\{t:c_{\geq i}(t) > 0\}$, $t^{0.1}_{\geq i} = \min\{t:c_{\geq i}(t) \geq 0.1\}$ and $t^{0.9}_{\geq i} = \min\{t:c_{\geq i}(t) \geq 0.9\}$ to be the first times where this fraction becomes positive, hits $0.1$ and hits $0.9$.

We first show that the $c_{\geq i+1}$ is significantly smaller than $c_{\geq i}$ while both are still increasing, so the front of the clock distribution decays very rapidly. In our argument, we consider three types of reactions that change the counts at minutes $i$ or above:
\begin{enumerate}[(i)]
    \item \label{rxn:drip} Drip reactions $i,i \to i,(i+1)$ (\cref{line:clock-drip} in \phaseMainAveraging)
    \item \label{rxn:i-1epidemic} Epidemic reactions $j,k \to j,j$, for any minutes $j,k$ with $k \leq i < j$ (\cref{line:clock-epidemic} in \phaseMainAveraging)
    \item \label{rxn:iepidemic} Epidemic reactions $j,k \to j,j$, for any minutes $j,k$ with $k < i \leq j$ (\cref{line:clock-epidemic} in \phaseMainAveraging)
\end{enumerate}
Note we ignore the drip reactions $(i-1),(i-1) \to (i-1),i$, which would only help the argument, but we do not have guarantees on the count at minute $i-1$.

We will try to show the relationship pictured in~\cref{fig:clock_distribution}, that $c_{\geq (i+1)}(t) \approx c_{\geq i}(t)^2$. One challenge here is that this relationship will no longer hold with high probability for small values of $c_{\geq i}$. For example, if $c_{\geq i}(t) = n^{-1/2 - \epsilon}$, the desired relationship would require $c_{\geq i+1}(t) < \frac{1}{n}$, meaning the count above minute $i$ is 0. A drip reaction at minute $i$ could happen with non-negligible probability $\approx n^{-2\epsilon}$.

To handle this difficulty, we define the set of \emph{early drip agents} 
$D_{\geq i+1}(t)$,
the set of agents that moved above minute $i$ via a drip reaction at a time $\leq t$ when $c_{\geq i}(t) < n^{-0.45}$,
or that were brought above minute $i$ via an epidemic reaction with another early drip agent in $d_{\geq i+1}$. 
Note that the latter group of agents can move to minute $\geq i+1$ \emph{after} time $t^+_{\geq i+1}$.
Thus, this set represents the effect of any drip reactions that happen before $c_{\geq i}(t)$ grows large enough for our large deviation bounds to work. We then define $d_{\geq i+1}(t) = |D_{\geq i+1}(t)| / |\clocks|$ to be the fraction of early drip agents
(including effects that persist beyond this time due to epidemic reactions).
The set of agents above minute $i$ that comprise the fraction $c_{\geq (i+1)}$ is then partitioned into the early drip agents that comprise $D_{\geq i+1}$, and the rest. By first ignoring these early drip agents $D_{\geq i+1}$, we can show that the rest of $c_{\geq (i+1)}$ stays small compared to $c_{\geq i}$.

\begin{lemma}
\label{lem:clock-front-tail}
With very high probability, if $n^{-0.45} \leq c_{\geq i}(t) \leq 0.1$, then $c_{\geq i+1}(t) \leq 0.9pc_{\geq i}(t)^2 + d_{\geq i+1}(t)$.
\end{lemma}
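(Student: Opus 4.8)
The plan is to control the ``non-early'' part of the count above minute $i$, namely $\tilde{c}_{\geq i+1}(t) := c_{\geq i+1}(t) - d_{\geq i+1}(t)$, and show $\tilde{c}_{\geq i+1}(t) \leq 0.9\,p\,c_{\geq i}(t)^2$ (which is the claim, since $c_{\geq i+1} = d_{\geq i+1} + \tilde{c}_{\geq i+1}$ exactly: an agent first reaching minute $\geq i+1$ via an epidemic with an early agent is early, otherwise non-early). By definition of $D_{\geq i+1}$, at the first time $c_{\geq i}$ reaches $n^{-0.45}$ we have $\tilde{c}_{\geq i+1} = 0$. Afterward the non-early population above minute $i$ grows only through two mechanisms: (a) a \emph{non-early drip} $i,i\to i,(i{+}1)$, which (since a drip needs two agents at minute exactly $i$) occurs in a given interaction with probability at most $(1+o(1))\,p\,c_{\geq i}^2$ and adds one new agent (a \emph{seed}); and (b) an epidemic reaction in which an existing non-early minute-$\geq(i{+}1)$ agent drags a lower agent up, which happens for each such agent with probability at most $(1+o(1))\tfrac{2}{n}$ per interaction. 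Drips $(i{-}1),(i{-}1)\to(i{-}1),i$ and epidemics with early agents only feed $c_{\geq i}$ or $d_{\geq i+1}$ and are ignored. Meanwhile $c_{\geq i}$ itself grows at least as fast as a standard epidemic (drips only accelerate it), so $\E[\Delta c_{\geq i}\mid\mathcal{F}]\geq(1+o(1))\tfrac{2 c_{\geq i}(1-c_{\geq i})}{n}\geq\tfrac{1.8\,c_{\geq i}}{n}$ while $c_{\geq i}\leq 0.1$; by \cref{lem:epidemic} (whose error term is $\exp(-\Omega(n^{0.55}\log n))$ here, hence very small), with very high probability $c_{\geq i}$ does not grow much slower than this, so I may assume $c_{\geq i}(t_\beta)=\beta$ implies $t_\beta - t_\gamma \leq \tfrac{1+o(1)}{1.8}\ln(\beta/\gamma)$ whenever $c_{\geq i}(t_\gamma)=\gamma$.

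On this good trajectory I run a ``seed-and-amplify'' estimate. By mechanism (b) each seed generates a family of non-early descendants stochastically dominated by an \emph{independent} Yule (pure-birth) process of rate $2(1+o(1))$; a seed born when $c_{\geq i}=\gamma$ has, by the time $c_{\geq i}$ reaches $\beta$, been amplified for at most $\tfrac{1+o(1)}{1.8}\ln(\beta/\gamma)$ parallel time, giving expected size at most $(1+o(1))(\beta/\gamma)^{10/9}$. By mechanism (a), the number of seeds born while $c_{\geq i}\in[\gamma,\gamma+d\gamma]$ is dominated by a Poisson count of mean at most $p\gamma^2\cdot\tfrac{n\,d\gamma}{1.8\gamma}=\tfrac{pn}{1.8}\gamma\,d\gamma$ (drip rate times interactions spent near level $\gamma$). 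Summing contributions, with $\beta = c_{\geq i}(t)$,
\[
\tilde{c}_{\geq i+1}(t)\,n \;\leq\; (1+o(1))\int_{n^{-0.45}}^{\beta}\frac{pn}{1.8}\,\gamma\left(\frac{\beta}{\gamma}\right)^{\!10/9}d\gamma \;=\;(1+o(1))\,\frac{pn}{1.8}\cdot\frac{9}{8}\,\beta^{2}\;\leq\;0.63\,p\,n\,\beta^{2},
\]
comfortably below $0.9\,p\,c_{\geq i}(t)^2\,n$. The slack between $0.63$ and $0.9$ absorbs all the $(1+o(1))$ factors, the concentration losses below, and a union bound over the $O(\log n)$ relevant minutes and the $O(n)$ times at which $c_{\geq i}$ changes (using monotonicity of $c_{\geq i}$ and $\tilde c_{\geq i+1}$ between those times).

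The main obstacle is the concentration: a single Yule process started from one seed does \emph{not} concentrate (its size is geometric-like, with fluctuations of the same order as its mean), so the bound must come from the sum, together with the observation that no single seed dominates. This is precisely the purpose of the threshold $n^{-0.45}$: the earliest possible non-early seed is born at level $\gamma = n^{-0.45}$ and thus contributes at most $(0.1/n^{-0.45})^{10/9}=n^{1/2+o(1)}$ descendants, whereas the total is $\Theta(pn)$, so each summand is an $O(n^{-1/2+o(1)})$ fraction of the whole; splitting the seeds into the few ``early/large'' ones (at most $n^{o(1)}\cdot n^{0.12}$ of them, each crudely $\leq n^{1/2+o(1)}$ w.v.h.p., hence total $\leq n^{0.63+o(1)}\ll n$) and the many ``late/small'' ones (born near level $0.1$, each with $O(1)$ mean, $\Theta(n)$ of them, independent) and applying a Bernstein/sub-exponential tail bound to the latter sum gives the very-high-probability guarantee. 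Two further points are routine: justifying the stochastic domination of the true finite, shared-susceptible-pool multitype epidemic by independent rate-$2(1+o(1))$ Yule processes plus Poisson immigration (the true per-agent reproduction probability is bounded uniformly by the Yule rate, and pool depletion only helps), and replacing the continuous integral estimate by the discrete interaction-level process (standard, since the increments are $O(1/n)$). Handling the tiny-count regime via $d_{\geq i+1}$ rather than attempting a martingale argument from count $0$ is what makes the scheme work, because near the start the relevant ratio has increments of size $n^{-\Omega(1)}$ that are too coarse for an Azuma bound over the $\Theta(n)$ interactions of the phase.
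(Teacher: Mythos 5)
Your proposal uses the same decomposition $c_{\geq i+1} = d_{\geq i+1} + \tilde c_{\geq i+1}$ as the paper, but the proof route is genuinely different. The paper's argument is a short-interval bootstrap: it compares the configuration at time $t$ to that at $t - 0.1$, lower-bounds epidemic growth of $c_{\geq i}$ so that $c_{\geq i}(t-0.1) < 0.84\,c_{\geq i}(t)$, upper-bounds the number of drips in the window by $0.11\,p\,c_{\geq i}(t)^2$ (Chernoff), upper-bounds epidemic amplification of the non-early count by a factor $1.23$, and then closes the loop using the inductive hypothesis at $t - 0.1$. Every step is a very-high-probability statement over a fixed $\Theta(1)$ window, so \cref{lem:epidemic} applies with $\exp(-\Omega(n^{0.55}))$ error. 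Your approach is global: each non-early drip is treated as a Poisson immigration event whose progeny form a Yule process, and you integrate over birth-levels. The integral is correct and yields $\approx 0.63\,p\,n\,\beta^2$, a sharper constant than the paper's $0.9$. The Yule-domination and Poisson-immigration reductions you sketch are also sound in spirit.

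The gap, which you rightly flag as the crux, is the concentration step. As stated, the early/late dichotomy is incomplete. ``Early'' seeds (born at $\gamma \approx n^{-0.45}$) are few and each is $n^{1/2+o(1)}$-sized; ``late'' seeds (born at $\gamma = \Theta(1)$) are many and $O(1)$-sized. But seeds born at intermediate levels, say $\gamma \approx n^{-0.3}$, are neither: there are $\Theta(n^{0.4})$ of them, each with expected progeny $\approx (\beta/\gamma)^{10/9} \approx n^{1/3}$ at $\beta = 0.1$. Your two buckets cover only the two ends of the $\gamma$-range. Closing this requires either a full dyadic decomposition over $\gamma$-levels (handle each dyadic band separately, then union-bound over the $O(\log n)$ bands), or a Bernstein bound for a sum of sub-exponential variables with \emph{heterogeneous} parameters applied to all seeds at once — the latter does work (total variance $\Theta(pn^{1.1}\beta^{20/9})$, worst single parameter $n^{1/2+o(1)}$, giving failure probability $\exp(-\Omega(n^{1/2}))$), but it must be stated carefully, and the two-bucket description you wrote does not substitute for it. A secondary issue: the uniform time-to-level estimate $t_\beta - t_\gamma \leq \frac{1+o(1)}{1.8}\ln(\beta/\gamma)$ cannot be read off \cref{lem:epidemic} for all pairs simultaneously, since the concentration degrades as $\beta/\gamma \to 1$; again a dyadic treatment is needed, and the slack you gain from $0.63$ vs.\ $0.9$ must be budgeted against the resulting roundings. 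The paper's proof sidesteps both difficulties by never invoking concentration over a window shorter than $\Theta(1)$ parallel time, which is why it is so much shorter.
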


\begin{proof}
The proof will proceed by induction on time $t$. As a base case, for all $t$ such that $c_{\geq i}(t) < n^{-0.45}$, the statement holds simply by definition of $d_{\geq i+1}(t)$, which is equal to $c_{\geq i+1}(t)$.

For the inductive step, to show the relationship $c_{\geq (i+1)}(t) < 0.9pc_{\geq i}(t)^2 + d_{\geq i+1}(t)$ holds at time $t$, we will use the inductive hypothesis that $c_{\geq (i+1)}(t - 0.1) < 0.9pc_{\geq i}(t - 0.1)^2 + d_{\geq i+1}(t - 0.1)$. Let $x(t) = c_{\geq i}(t)$ and $y(t) = c_{\geq i+1}(t) - d_{\geq i+1}(t)$, so we need to show $y(t) < 0.9px(t)^2$.

We first lower bound how much $x(t)$ grows by epidemic reactions, in order to show $x(t-0.1) < 0.84x(t)$. Using \cref{lem:epidemic}, the expected amount of time for an epidemic to grow from fraction $0.84x$ to $x$ is
$$
\frac{1}{2}\qty[\ln(x) - \ln(0.84x) + \ln(\frac{1-0.84x}{1-x})]
\leq
\frac{1}{2}\qty[- \ln(0.84) + \ln(\frac{1-0.84\cdot 0.1}{1-0.1})]
< 0.096,
$$
where we used the fact that $\frac{1-0.84x}{1-x}$ is nondecreasing and $x(t) \leq 0.1$.

The minimum probability $p^*$ is $\Theta(x(t)) = \Omega(n^{-0.45})$, and the expected number of interactions $\mu$ is $\Theta(n)$. So the application of \cref{thm:janson} will give probability $1-e^{-\Omega(n^{0.55})}$ for the epidemic to have grown enough within time $0.1$. Thus $x(t-0.1) < 0.84x(t)$ with very high probability.

Next we bound how much $y(t)$ grows, by both epidemic reactions and drip reactions. The probability of a drip reaction is at most $px(t)^2$, so the expected number of drip reactions in time $0.1$ is $0.1px(t)^2$. A standard Chernoff bound then gives that there are at most $0.11px(t)^2$ drip reactions with very high probability. We assume in the worst case all these drip reactions happen at time $t-0.1$, and then $y$ grows by epidemic starting from $z = y(t-0.1) + 0.11px(t)^2$.

By \cref{lem:epidemic}, the expected amount of time for an epidemic to grow from fraction $z$ to $1.23z$ is

$$
\frac{1}{2}\qty[\ln(1.23z) - \ln(z) + \ln(\frac{1-z}{1-1.23z})]
\geq
\frac{1}{2}\ln(1.23)
> 0.103.
$$
The minimum probability $p^* = \Omega(x(t)^2) = \Omega(n^{-0.9})$, and the expected number of interactions $\mu = \Theta(n)$. So the application of \cref{thm:janson} will give probability $1-e^{-\Omega(n^{0.1})}$.
Thus with very high probability
\begin{align*}
    y(t) & \leq 1.23[y(t-0.1) + 0.11px(t)^2]
    \\
    & \leq 1.23[0.9px(t-0.1)^2 + 0.11px(t)^2]
    \\
    & \leq 1.23[0.9p(0.84x(t))^2 + 0.11px(t)^2]
    < 0.9px(t)^2. \qedhere
\end{align*}
\end{proof}

In order to bound $c_{\geq (i+1)}(t)$ as a function of $c_{\geq i}(t)$ only (\cref{thm:clock-front-tail}),
we need to bound how large the set $D_{\geq i+1}$ of early drip agents can be. 
The strategy will be to show there is not enough time for the set $D_{\geq i+1}$ to grow very large starting from $t_{\geq i + 1}^+,$ 
the first time any agent appears at minute $\geq i+1$ (i.e., the first drip reaction into minute $i+1$), 
because there are only $O(\log\log n)$ minutes in the front tail (see~\cref{fig:clock_distribution}), 
so the time between $t^+_{\geq i + 1}$ and $t^{0.1}_{\geq i + 1}$ is $O(\log\log n)$.

We will first need an upper bound on how long it takes the clock to move from one minute to the next.

\begin{lemma}
\label{lem:clock-upper-bound}
$t_{\geq i + 1}^{0.1}-t_{\geq i}^{0.1}\leq 2.11 +\frac{1}{2}\ln\qty(\frac{1}{p})$ with very high probability.
\end{lemma}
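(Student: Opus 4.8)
The plan is to track, starting from time $t^{0.1}_{\geq i}$, the two fractions $c_{\geq i}$ and $c_{\geq i+1}$, and to show that the second one climbs from (essentially) $0$ to $0.1$ within the claimed window. The first ingredient is that $c_{\geq i}$ behaves like an epidemic: by the epidemic rule on \cref{line:clock-epidemic}, every interaction between an agent at minute $\geq i$ and one at minute $<i$ pulls the latter up to minute $\geq i$, and no reaction ever decreases $c_{\geq i}$ (the only other source of increase, a drip $C_{i-1},C_{i-1}\to C_{i-1},C_i$, only speeds it up). Hence, by \cref{lem:epidemic}, $c_{\geq i}$ grows from $0.1$ to a suitable constant $\rho$ within a constant amount of parallel time (e.g.\ $\rho=\tfrac12$ in time $\tfrac12\ln 9 + o(1)$) with very high probability, and stays $\geq\rho$ forever after. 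I may assume $c_{\geq i+1}$ stays below $0.1$ during this first stretch, since otherwise $t^{0.1}_{\geq i+1}$ is already within a constant of $t^{0.1}_{\geq i}$ and the bound holds trivially.

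The second ingredient governs $c_{\geq i+1}$. Once $c_{\geq i}\geq\rho$ and while $c_{\geq i+1}\leq 0.1$, the number of agents at minute exactly $i$ is at least $(\rho-0.1)n=\Omega(n)$, so each interaction is a minute-$i$ drip $C_i,C_i\to C_i,C_{i+1}$ with probability $\Omega(p)$; since a drip permanently moves an agent to minute $\geq i+1$ (minutes only increase), a Chernoff bound (\cref{thm:chernoff-bound-multiplicative}) on the number of such drips --- already counting those that happen during the first stretch --- shows $c_{\geq i+1}\geq c_0 p$ for an explicit constant $c_0>0$ with very high probability. From there, $c_{\geq i+1}$ grows at least as fast as a pure epidemic: drips only add more ``infected'' agents, and the ``susceptible'' set (agents at minute $\leq i$) stays nonempty while $c_{\geq i+1}\leq 0.1$. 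Coupling to a pure epidemic and invoking \cref{lem:epidemic} with starting fraction $a=c_0 p$ and target $b=0.1$, it therefore reaches $0.1$ within a further $\tfrac12\ln(1/p)+O(1)$ time, with very high probability. Summing the two stretches and tuning the threshold $\rho$ and the constant $c_0$ to minimize the $p$-independent part gives
\[
t^{0.1}_{\geq i+1}-t^{0.1}_{\geq i}\;\leq\;2.11+\tfrac12\ln(1/p),
\]
as claimed. The probability bookkeeping is routine: each invoked event (the epidemic bound on $c_{\geq i}$, the Chernoff bound on a drip count of mean $\Theta(n)$, the epidemic-domination bound on $c_{\geq i+1}$) fails with probability $e^{-\Omega(n)}$ since all relevant fractions are constants (here $p$ is a fixed parameter), so a union bound --- and a further union over the polynomially many minutes the clock runs, as remarked in the text --- preserves very high probability.

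\textbf{The main obstacle} is the handoff between the drip-fed regime (where $c_{\geq i+1}$ is a tiny $\Theta(p)$ and drips dominate its growth) and the epidemically-amplified regime, and in particular getting the leading coefficient of $\ln(1/p)$ down to exactly $\tfrac12$. A pure epidemic has instantaneous growth rate $\approx 2y$ only when its infected fraction $y$ is small, so the coefficient $\tfrac12$ emerges only if one argues that the $\Theta(\ln(1/p))$-long ``ramp'' runs from the $\Theta(p)$-sized drip floor up to a fraction that is itself a vanishing-in-$p$ constant (e.g.\ $\sqrt p$), with the remaining ramp up to $0.1$ being a $p$-independent $O(1)$; keeping the accumulated $O(1)$ terms (including the constant time for the $c_{\geq i}$ epidemic) below $2.11$ is where the quantitative care is needed. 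Everything else --- the epidemic estimates, the Chernoff bound, the union bounds --- is standard given \cref{lem:epidemic} and \cref{thm:chernoff-bound-multiplicative}.
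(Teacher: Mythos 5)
Your approach is correct and reaches the stated bound, but its structure differs from the paper's in a genuine way. You first grow $c_{\geq i}$ via \cref{lem:epidemic} from $0.1$ to a larger constant $\rho$, then let drips run at the improved rate $\Omega((\rho-0.1)^2 p)$ to reach a floor $c_0 p$, then run the $c_{\geq i+1}$ epidemic. The paper never grows $c_{\geq i}$ at all: it argues by contradiction that if $c_{\geq i+1}(t^{0.1}_{\geq i}+\tfrac12)\leq 0.0045p$, then by monotonicity of $c_{\geq i+1}$ in time, the fraction at minute exactly $i$ stays $\geq 0.1-0.0045 = 0.0955$ throughout the half-unit window, forcing (via Chernoff) more than $0.0045pn$ drips, a contradiction. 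It then applies \cref{lem:epidemic} once from $a=0.0045p$ to $b=0.1$. The paper's contradiction trick is cleaner because it requires tracking only $c_{\geq i+1}$ (whose monotonicity is free) and not the growth of $c_{\geq i}$; your version needs an extra high-probability epidemic estimate for $c_{\geq i}$, but in exchange gives a larger minute-$i$ fraction and hence a larger drip floor $c_0$, and indeed (tuning $\rho\approx 0.5$ and drip time $\approx 0.5$) yields a $p$-independent constant closer to $1.8$, comfortably under $2.11$.

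One small correction: the ``main obstacle'' you describe --- extracting the coefficient $\tfrac12$ on $\ln(1/p)$ from a two-regime analysis of the ramp --- is not actually an obstacle given \cref{lem:epidemic}. That lemma already states $\E[t]=\tfrac12\bigl[\ln\tfrac{b}{1-b}-\ln\tfrac{a}{1-a}\bigr]$, so plugging in $a=c_0 p$, $b=0.1$ immediately yields $\tfrac12\ln\tfrac1p$ plus a $p$-independent constant, with no need to split the ramp at $\sqrt p$ or argue separately about the small-$y$ growth rate; the paper simply invokes the lemma directly and reads off $1.603-\tfrac12\ln p$.
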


\begin{proof}
First we argue that $c_{\geq i+1}(t_{i}^{0.1} + \frac{1}{2}) > 0.0045p$. If not, the count at minute $i$, $c_{\geq i}(t) - c_{\geq i+1}(t) > 0.1 - 0.0045 = 0.0955$ for all $t_{i}^{0.1} < t < t_{i}^{0.1} + 0.5$. Then, the probability of a drip reaction is at least $0.0955^2p > 0.0091p$. By standard Chernoff bounds, we then have that in time $\frac{1}{2}$, there are at least $0.0045p$ drip reactions with very high probability. Thus $c_{\geq i+1}(t_{i}^{0.1} + \frac{1}{2}) > 0.0045p$ from just those drip reactions alone.

Now we argue that the amount of time it takes for epidemic reactions to bring $c_{\geq (i+1)}$ up to $0.1$.
By \cref{lem:epidemic}, the expected amount of time for an epidemic to grow from fraction $0.0045p$ to $0.1$ is

$$
\frac{1}{2}\qty[\ln(0.1) - \ln(0.0045p) +
\ln(1-0.0045p) - \ln(0.9)]
<
\frac{\ln(0.1)-\ln(0.0045) - \ln(0.9)}{2}
-\frac{\ln p}{2}
<
1.603 - \frac{\ln p}{2}.
$$

As long as $p = \Theta(1)$, then the minimum probability $p^* = \Theta(p)$ is constant, and by \cref{lem:epidemic}, the epidemic takes at most time $1.61 - \frac{\ln p}{2}$ with very high probability.

In total, we then get that $t_{i+1}^{0.1}-t_{i}^{0.1}\leq 0.5+1.61 - \frac{\ln p}{2} = 2.11 +\frac{1}{2}\ln\qty(\frac{1}{p})$ with very high probability.
\end{proof}

Proving that the set $D_{\geq i+1}$ remains small will let us prove the main theorem about the front tail of the clock distribution:

\begin{theorem}
\label{thm:clock-front-tail}
With very high probability, if $n^{-0.4} \leq c_{\geq i}(t) \leq 0.1$, then $c_{\geq (i+1)}(t) < pc_{\geq i}(t)^2$.
\end{theorem}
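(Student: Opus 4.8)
The plan is to reduce the theorem, via \cref{lem:clock-front-tail}, to a bound on the early-drip fraction, and then to control that fraction by showing it forms and spreads only over an $O(\log\log n)$-length time window. Since \cref{lem:clock-front-tail} already gives $c_{\geq i+1}(t) \le 0.9\,p\,c_{\geq i}(t)^2 + d_{\geq i+1}(t)$ with very high probability, and since the hypothesis $n^{-0.4}\le c_{\geq i}(t)\le 0.1$ forces $0.1\,p\,c_{\geq i}(t)^2\ge 0.1\,p\,n^{-0.8}$, it suffices to prove $d_{\geq i+1}(t)\le 0.1\,p\,n^{-0.8}$ whenever $c_{\geq i}(t)\ge n^{-0.4}$. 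I would carry out the whole argument by induction on the minute index $i$, so that when analyzing minute $i+1$ the theorem is already available for all smaller minutes; in particular the front tail strictly below minute $i+1$ decays doubly exponentially, $c_{\geq j+1}(t)<p\,c_{\geq j}(t)^2$ for $j\le i$.

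Next I would bound the \emph{direct} early-drip agents: those produced by a drip reaction (\cref{line:clock-drip} of \phaseMainAveraging) at minute $i$ at a time when $c_{\geq i}<n^{-0.45}$. The per-interaction probability of such a drip is at most $p\,(\text{fraction at minute }i)^2\le p\,c_{\geq i}^2< p\,n^{-0.9}$. These drips can occur only between $t^+_{\geq i+1}$ and $\tau:=\min\{t:c_{\geq i}(t)=n^{-0.45}\}$ (using monotonicity of $c_{\geq i}$), a window I claim has length $O(\log\log n)$; hence there are $O(n\log\log n)$ interactions in it and the expected number of direct early drips is $O(p\,n^{0.1}\log\log n)=\omega(1)$, so a Chernoff bound (\cref{thm:chernoff-bound-multiplicative}) gives at most $p\,n^{0.1}\,\polylog(n)$ of them, i.e. a fraction $\le p\,n^{-0.9}\polylog(n)$, with very high probability.

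The heart of the proof is establishing that the relevant window $[t^+_{\geq i+1},\,t^{0.1}_{\geq i+1}]$ has length $O(\log\log n)$ (note any evaluation time $t$ with $c_{\geq i}(t)\le 0.1$ satisfies $t\le t^{0.1}_{\geq i}\le t^{0.1}_{\geq i+1}$, and all early-drip activity lies in this interval). Here I would use the inductive doubly-exponential bound: iterating $c_{\geq j+1}<p\,c_{\geq j}^2$ down from $0.1$ shows that there is $K=O(\log\log n)$ with $c_{\geq i+1-K}(t)\ge 0.1$ already at the moment $c_{\geq i+1}$ first becomes positive — equivalently, $t^+_{\geq i+1}\ge t^{0.1}_{\geq i+1-K}$ with very high probability, since reaching minute $i+1$ requires a chain of $K$ drips whose sources are successively populated and such a chain cannot complete before minute $i+1-K$ has filled to a constant fraction. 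Then \cref{lem:clock-upper-bound} bounds each step $t^{0.1}_{\geq j+1}-t^{0.1}_{\geq j}$ by $2.11+\tfrac12\ln(1/p)=O(1)$, so $t^{0.1}_{\geq i+1}-t^{0.1}_{\geq i+1-K}=O(K)=O(\log\log n)$, giving the claimed window length.

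Finally, after the last direct early drip the set $D_{\geq i+1}$ grows only by epidemic, and over a window of length $O(\log\log n)$ an epidemic multiplies the infected fraction by at most $e^{O(\log\log n)}=\polylog(n)$: the number of interactions needed to grow the epidemic by each constant factor stochastically dominates a sum of geometrics (\cref{lem:epidemic}), and because the early-drip count stays $\ge\polylog(n)$ the minimum success probability in \cref{thm:janson} keeps the deviation probability super-polynomially small. Combining, $d_{\geq i+1}(t)\le \big(p\,n^{-0.9}\polylog(n)\big)\cdot\polylog(n)=p\,n^{-0.9}\polylog(n)=o\!\big(p\,n^{-0.8}\big)$, which is $\le 0.1\,p\,c_{\geq i}(t)^2$ for large $n$; a union bound over the $O(\log n)$ relevant minutes preserves the very-high-probability guarantee. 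The main obstacle is exactly the third step — rigorously lower-bounding $t^+_{\geq i+1}$ (ruling out an anomalously fast drip chain reaching minute $i+1$) and controlling the bottom of the front tail, where counts are only polylogarithmic and standard concentration is weak — which is precisely why the early-drip set $D_{\geq i+1}$ is introduced and must be tracked alongside the doubly-exponential decay.
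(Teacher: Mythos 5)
Your overall strategy---reduce via \cref{lem:clock-front-tail} to bounding $d_{\geq i+1}$, induct on the minute index, argue the relevant window has length $O(\log\log n)$ via the doubly-exponential front tail plus the per-minute constant time bound of \cref{lem:clock-upper-bound}, and then control direct drips and epidemic growth within that window---is exactly the decomposition the paper uses. However, the step you yourself flag as the main obstacle, lower-bounding $t^+_{\geq i+1}$, has a genuine gap as written. You assert that ``iterating $c_{\geq j+1}<p\,c_{\geq j}^2$ down from $0.1$'' shows $c_{\geq i+1-K}(t^+_{\geq i+1})\geq 0.1$ for some $K=O(\log\log n)$. But the inductive inequality is only available when $n^{-0.4}\le c_{\geq j}(t)\le 0.1$; once the iteration reaches a minute whose count drops below $n^{-0.4}$---which happens within $O(\log\log n)$ steps---it says nothing further, and in particular does not rule out the front tail creeping an arbitrary distance through the sub-$n^{-0.4}$ region, where counts are so small that concentration is unavailable and a short anomalous drip chain could in principle reach minute $i+1$ early. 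The phrase ``such a chain cannot complete before minute $i+1-K$ has filled to a constant fraction'' is the claim you need, not an argument for it.

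The paper plugs this exact hole with a separate probabilistic contradiction. Setting $k=i-\log\log n$, it first shows $c_{\geq k}(t^+_{\geq i})\geq n^{-0.4}$ directly: supposing otherwise, during the interval $[t^+_{\geq k},\,t^{n^{-0.4}}_{\geq k}]$ (of length $O(\log\log n)$ by the inductive time claim for minute $k$) the count $c_{\geq k+1}$ stays $O(n^{-0.8})$ (combining \cref{lem:clock-front-tail} with the inductive bound on $d_{\geq k+1}$), so the per-interaction probability of a drip strictly above minute $k+1$ is $O(n^{-1.6})$; a union bound over the $O(n\log\log n)$ interactions then shows fewer than $\log\log n$ such drips occur with very high probability---too few for any agent to have climbed from minute $k+1$ up to minute $i=k+\log\log n$, contradicting $t^+_{\geq i}<t^{n^{-0.4}}_{\geq k}$. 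Only once this floor is in place can one safely iterate the inductive inequality upward from $j=i-2\log\log n$ to $k$, since all intermediate counts then lie in $[n^{-0.4},0.1]$, yielding $c_{\geq j}(t^+_{\geq i})\geq 0.1$ and hence the $O(\log\log n)$ window via \cref{lem:clock-upper-bound}. The rest of your sketch (Chernoff on direct drips, $\polylog(n)$-factor epidemic growth over an $O(\log\log n)$ window) matches the paper's treatment of the second inductive claim.
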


\begin{proof}
The proof will proceed by induction on the minute $i$, where the base case is vacuous because $c_{\geq 0}(0) = 1$, so $c_{\geq 0}(t) > 0.1$ for all times $t$.

The inductive hypothesis will use two claims. The first is that the time $t^{0.1}_{\geq i} - t^+_{\geq i} = O(\log\log n)$ for minute $i$. The second is that $d_{\geq i+1}(t^{0.1}_{\geq i}) = O(n^{-0.85})$. Note that using this second claim along with \cref{lem:clock-front-tail} proves the Theorem statement at minute $i$:
when $n^{-0.4} \leq c_{\geq i}(t) \leq 0.1$, by \cref{lem:clock-front-tail} we have
$$c_{\geq (i+1)}(t) < 0.9pc_{\geq i}(t)^2 + d_{\geq i+1}(t) \leq pc_{\geq i}(t)^2,$$
because $c_{\geq i}(t)^2 \geq n^{-0.8}$, so the $d_{\geq i+1}$ term is negligible for sufficiently large $n$.

We will prove the first claim in two parts. First we argue that $t_{\geq i}^{0.1}-t_{\geq i}^{+} = O(\log \log n)$, because the width of the front tail is at most $2\log \log n$. We will show that at $t_{\geq i}^{+}$, when the first agent arrives at $\fieldminute = i$, we already have $c_{\geq j}(t_{\geq i}^{+}) \geq 0.1$, where $j = i - 2\log\log n$ (for $i < 2\log\log n$, we just have $j=0$ and there is nothing to show because the width of the front tail can be at most $i$). First we move $\log\log n$ levels back to $k = i - \log\log n$, to show $c_{\geq k}(t_{\geq i}^{+}) \geq n^{-0.4}$. 

Assume, for the sake of contradiction that $c_{\geq k}(t_{\geq i}^{+}) < n^{-0.4}$. 
Between $t_{\geq k}^+$ and $t_{\geq k}^{n^{-0.4}}$, consider the number of drips that happen from levels $k+1$ and above. By the inductive hypothesis, we have $c_{\geq k+1}(t)\leq p c_{\geq k}(t)^2 < n^{-0.8}$ during this whole time. Thus in any interaction of this period the probability of a drip above level $k+1$ is at most $p \cdot (n^{-0.8})^2 \leq n^{-1.6}$. The interval length is $t_{\geq k}^{n^{-0.4}}- t_{\geq k}^+ = O(\log\log n)$ by the inductive hypothesis, so
the probability of having at least $\log\log n$ drips during this interval is at most

\[
\binom{O(n\log\log n)}{\log\log n}(n^{-1.6})^{\log\log n}
\leq
\left(\frac{O(n\log\log n)}{n^{1.6}}\right)^{\log \log n}
= n^{-\omega(1)}.
\]
This implies $c_{\geq i}(t_{\geq k}^{n^{-0.4}})=0$ with very high probability.

Thus with very high probability, we already have $c_{\geq k}(t_{\geq i}^{+}) \geq n^{-0.4}$. Then we can iterate the inductive hypothesis $c_l(t) \leq p(c_{l-1}(t))^2$ for the $\log\log n$ minutes $l=k, k-1, \ldots, j$, which implies $c_{\geq j}(t_{\geq i}^{+}) \geq 0.1$. Now that we have shown the width of the front tail is at most $2\log\log n$, we use \cref{lem:clock-upper-bound}, which shows that each minute takes $O(1)$ time, so it takes $O(\log \log n)$ time between $t_{\geq i}^{+}$ when the first agent gets to $\fieldminute = i$ and $t_{\geq i}^{0.1}$, when the fraction at $\fieldminute \geq i$ reaches $0.1$.

Now we prove the second claim, arguing that in this $O(\log\log n)$ time, $d_{\geq i+1}$ can grow to at most $O(n^{-0.85})$. By definition of $d_{\geq i+1}$, the drip reactions that increase $d_{\geq i+1}$ happen with probability at most $p(n^{-0.45})^2 = pn^{-0.9}$. By a standard Chernoff bound, the number of drip reactions in $O(\log\log n)$ time is $O(\log\log n \cdot n^{-0.9}) = O(n^{-0.89})$. Then we assume in the worst case this maximal number of drip reactions happen, and then $d_{\geq i+1}$ can grow by epidemic. By \cref{lem:epidemic}, the time for an epidemic to grow from fraction $O(n^{-0.89})$ to $\Omega(n^{-0.85})$ is $\Omega(\log n) > O(\log\log n)$ with very high probability. Thus, with very high probability, we still have $d_{\geq i+1} = O(n^{-0.85})$ within $O(\log\log n)$ time.
\end{proof}

Now the relationship proven in \cref{thm:clock-front-tail} implies that $c_{\geq (i+1)}(t_{\geq i}^{0.1}) \leq 0.01p$. We can now use this bound to get lower bounds on the time required to move from one minute to the next.

\begin{lemma}
\label{lem:clock-lower-bound}
With very high probability, $t_{i+1}^{0.1} - t_{i}^{0.1} \geq \frac{1}{2}\ln(1+\frac{2}{9p})-0.01$.
\end{lemma}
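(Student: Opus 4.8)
The plan is to track the fraction $x(t) := c_{\geq i+1}(t)$ on the interval $[t_i^{0.1}, t_{i+1}^{0.1})$ and argue it cannot climb from its (small) starting value to $0.1$ quickly. For the starting value, observe that at $t_i^{0.1}$ the fraction $c_{\geq i}$ has just reached $0.1$, so $c_{\geq i}(t_i^{0.1}) \le 0.1 + 1/|\clocks|$; applying \cref{thm:clock-front-tail} at this time gives $x_0 := x(t_i^{0.1}) < p\, c_{\geq i}(t_i^{0.1})^2 \le 0.01 p\,(1+o(1))$ with very high probability. On the whole interval $x < 0.1$ by definition of $t_{i+1}^{0.1}$, and since an agent's $\fieldminute$ never decreases, $x|\clocks|$ is nondecreasing, so reaching $x = 0.1$ requires $(0.1 - x_0)|\clocks| \sim 0.1|\clocks|$ upward steps, each moving exactly one agent across minute $i+1$.

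The next step is to bound, uniformly over all configurations, the per-interaction probability of an upward step by a function of $x$ alone. An agent crosses into minute $\ge i+1$ only via (a) a drip at minute exactly $i$, or (b) an epidemic in which the lower-minute agent sits at a minute $\le i$ and the higher-minute agent at a minute $\ge i+1$. The count at minute exactly $i$ is at most $|\clocks|$ minus the count at minutes $\ge i+1$, i.e.\ at most $(1-x)|\clocks|$, so (a) has probability at most $p(1-x)^2(1+o(1))$; and (b) has probability at most $2x(1-x)(1+o(1))$, since there are $x|\clocks|$ possible ``winners'' and $(1-x)|\clocks|$ possible ``losers.'' These events are mutually exclusive, so an upward step has probability at most $\hat p(x) := (1-x)\big(p+(2-p)x\big)(1+o(1))$. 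As $\hat p$ depends only on $x$, the standard coupling argument shows that the number of interactions until $x$ reaches $0.1$ stochastically dominates a sum of independent geometric random variables with parameters $\hat p(j/|\clocks|)$ for $j$ ranging over the relevant counts.

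I would then evaluate the mean of the dominating sum via the integral $\sum_j \tfrac{1}{\hat p(j/|\clocks|)} \ge (1-o(1))|\clocks|\int_{x_0}^{0.1}\frac{dx}{(1-x)(p+(2-p)x)}$, using the partial fractions $\frac{1}{(1-x)(p+(2-p)x)} = \frac{1/2}{1-x} + \frac{(2-p)/2}{p+(2-p)x}$, whose antiderivative is $\frac12\ln\frac{p+(2-p)x}{1-x}$. Since $|\clocks| = n$ in this setting, dividing by $n$ and plugging in $x_0 \le 0.01 p$ gives a mean parallel time of at least $\frac12\big[\ln(p+\tfrac29) - \ln\tfrac{p+(2-p)x_0}{1-x_0}\big] = \frac12\ln\!\big(1+\tfrac{2}{9p}\big) - \tfrac12\ln\tfrac{1.02-0.01p}{1-0.01p} \ge \frac12\ln\!\big(1+\tfrac{2}{9p}\big) - 0.01 - o(1)$ for $p \in (0,1]$. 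For the concentration, since $p = \Theta(1)$ the smallest geometric parameter is $p^* = \min_{x_0 \le x \le 0.1}\hat p(x) = \Theta(p)$ and the mean is $\Theta(n)$, so \cref{thm:janson} (lower tail, with $\lambda$ a constant slightly below $1$) bounds the probability that the interaction count — hence $t_{i+1}^{0.1}-t_i^{0.1}$ — falls below $(1-\epsilon)$ times its mean by $e^{-\Theta(n)} = n^{-\omega(1)}$, giving the claim with very high probability.

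I expect the main obstacle to be the tightness of the constant: the asserted bound is essentially the exact value of the above integral, so there is almost no slack — one must use the sharp estimate $c_{\geq i}(t_i^{0.1}) \le 0.1$ rather than a cruder bound, keep the $(1-x)$ factors in both the drip and epidemic rates, and verify that the multiplicative Janson bound (applied with mean $\Theta(n)$, so a constant $\epsilon$ costs only an additive constant and can be pushed down) does not consume more than can be absorbed. A secondary point requiring care is the rigorous justification of the stochastic domination, which hinges on the observation that the per-step upward probability is controlled by $c_{\geq i+1}$ alone, with the drip bound $p(1-x)^2$ — obtained from the count at minute $i$ being at most $(1-x)|\clocks|$ — the crucial ingredient.
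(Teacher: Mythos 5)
Your proof matches the paper's: both start from $c_{\geq i+1}(t_i^{0.1})\le 0.01p$ via \cref{thm:clock-front-tail}, dominate the per-interaction probability of increasing $c_{\geq i+1}=y$ by the drip bound $p(1-y)^2$ plus the epidemic bound $2y(1-y)$, integrate $\frac{1}{(1-y)(p+(2-p)y)}$ with the same partial-fraction split, and conclude by Janson's lower-tail bound. Your arithmetic at the end is in fact a bit cleaner than the paper's: writing the integral as $\tfrac12\ln\bigl(1+\tfrac{2}{9p}\bigr) - \tfrac12\ln\tfrac{1.02-0.01p}{1-0.01p}$ and bounding the second term by $0.01$ gives the stated constant directly, whereas the paper's intermediate chain through $\ln\tfrac{0.9p+0.2}{1.02p}$ and then $\ln\bigl(1+\tfrac{2.04}{9p}\bigr)$ has minor constant slips (the $2.04$ should be $2$), though the lemma's stated bound is unaffected.
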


\begin{proof}
We start at time $t_{\geq i}^{0.1}$, where by \cref{thm:clock-front-tail} we have $c_{\geq i+1}(t_{\geq i}^{0.1}) \leq 0.01p$ with very high probability.

Define $x = x(t) = c_{\geq i}(t)$ and $y = y(t) = c_{\geq i+1}(t)$.
The number of interactions for $Y = yn$ to increase by $1$ is a geometric random variable with mean $\frac{1}{\Pr[\text{\ref{rxn:drip}}] + \Pr[\text{\ref{rxn:i-1epidemic}}]}$, where
the drip reaction \ref{rxn:drip} has probability $\Pr[\text{\ref{rxn:drip}}]\sim p(x-y)^2 \leq p(1-y)^2$ and the epidemic reaction \ref{rxn:i-1epidemic} has probability $\Pr[\text{\ref{rxn:i-1epidemic}}]\sim 2y(1-y)$. Assuming in the worst case that $y(t_{i}^{0.1}) = 0.01p$, then the number of interactions $T = (t_{i+1}^{0.1}-t_{i}^{0.1})n$ for $Y$ to increase from $0.1pn$ to $0.1n$ is a sum of independent geometric random variables with mean 
\begin{align*}
\E[T] &\geq \sum_{Y=0.01pn}^{0.1n - 1} \frac{1}{p(1 - Y/n)^2 + 2(Y/n)(1-Y/n)}
\sim n\int_{0.01p}^{0.1} \frac{dy}{p(1-y)^2 + 2y(1-y)}
\\
&= n\int_{0.01p}^{0.1} \frac{dy}{(1-y)(p + (2-p)y)}
=
n\int_{0.01p}^{0.1} \frac{1/2}{1-y} + \frac{1-p/2}{p+(2-p)y}dy
\\
&= n 
\qty[-\frac{1}{2}\ln(1-y)+\frac{1}{2}\ln\qty(p+(2-p)y)]_{0.01p}^{0.1}
\\
&= \frac{n}{2}
\qty[-\ln(0.9)+\ln(1-0.01p)+\ln(p+0.1(2-p)) - \ln(p + 0.01p(2-p))]
\\
&= \frac{n}{2}
\qty[-\ln(0.9)+\ln(\frac{1-0.01p}{p-0.01p^2+0.02p})+\ln(0.9p+0.2)]
\\
&\geq\frac{n}{2}
\qty[-\ln(0.9)+\ln(\frac{0.9p+0.2}{1.02p})]
\geq
\frac{n}{2} \qty[-0.02 + \ln(1+\frac{2.04}{9p})]
\end{align*}

Note that the probability in the geometric includes the term $p(1-y)^2 \geq 0.81p$ since $y\leq 0.1$, thus the minimum geometric probability $p^* \geq 0.81p$ is bounded by a constant. Also the mean $\mu = \Theta(n)$ so by \cref{cor:janson-wvhp},
$\Pr[T \leq n(-0.01 + \frac{1}{2}\ln(1+\frac{2}{9p})] \leq \exp(-\Theta(n))$, so the time $t_{i+1}^{0.1}-t_{i}^{0.1} = T/n \geq \frac{1}{2}\ln(1+\frac{2}{9p})-0.01$ with very high probability.
\end{proof}

The worst case upper bound for the dripping probability, $p(1-y)^2$, used in the above Lemma, is weakest when $p=1$. We now give a special case lower bound that is stronger in the deterministic case with $p=1$.

\begin{lemma}
\label{lem:clock-lower-bound-p-1}
With very high probability, $t_{i+1}^{0.1} - t_{i}^{0.1} \geq 0.45$.
\end{lemma}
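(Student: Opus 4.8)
The plan is to re-run the argument of \cref{lem:clock-lower-bound} verbatim except for one sharpened estimate. We start at time $t_i^{0.1}$, where by \cref{thm:clock-front-tail} and $p=1$ we have $c_{\geq i+1}(t_i^{0.1}) \leq c_{\geq i}(t_i^{0.1})^2 = 0.01$, and we lower-bound the time for the count $Y = c_{\geq i+1}\,n$ to grow from $0.01 n$ to $0.1 n$ (taking the worst case $c_{\geq i+1}(t_i^{0.1}) = 0.01$, which only shortens the time). As in \cref{lem:clock-lower-bound}, the number of interactions for $Y$ to increase by one is a geometric random variable, now with success probability $\sim (c_{=i})^2 + 2 c_{\geq i+1}(1-c_{\geq i+1})$, where $c_{=i} := c_{\geq i} - c_{\geq i+1}$ is the fraction at exactly minute $i$. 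The entire improvement over \cref{lem:clock-lower-bound} is to replace its crude worst-case drip rate $p(1-y)^2 = (1-y)^2$ (which pretends essentially the whole population sits at minute $i$) by a genuine bound on $c_{=i}$ valid throughout the short window $[t_i^{0.1},t_{i+1}^{0.1}]$.

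To control $c_{=i}(t)$, note $c_{=i}(t_i^{0.1}) \leq c_{\geq i}(t_i^{0.1}) = 0.1$, and that $c_{=i}\,n$ grows only via inflow to minute $i$: drip reactions $i-1 \to i$ at rate $\sim (c_{=i-1})^2 \leq (1-c_{\geq i})^2$, plus epidemic reactions pulling a below-$i$ agent up to minute $i$ at rate $\leq 2 c_{<i}\, c_{=i}$. This gives a differential inequality of the shape $\frac{d}{dt} c_{=i} \lesssim (1-c_{\geq i})^2 + 2 c_{=i}$, whose solution keeps $c_{=i}(t)$ below an explicit, slowly growing function of $t-t_i^{0.1}$, in particular bounded away from $1$ for $t-t_i^{0.1}\leq 0.45$; where the crude bound $c_{=i-1}\leq 1-c_{\geq i}$ is too weak, \cref{thm:clock-front-tail} applied at minute $i-1$ can sharpen it. Substituting this improved drip-rate bound $\rho(y)$ (a slowly increasing function, much smaller than the constant used in \cref{lem:clock-lower-bound}) gives $\E[(t_{i+1}^{0.1}-t_i^{0.1})n] \geq n\int_{0.01}^{0.1}\frac{dy}{\rho(y)+2y(1-y)}$, and the contribution of the region of small $y$ (where $\rho$ is still tiny and the integrand is largest) already exceeds $0.45$ for the constants as stated.

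Since $y = c_{\geq i+1}$ ranges over $[0.01,0.1]$, the epidemic term $2y(1-y)$ alone keeps every geometric success probability at least a constant, and the mean number of interactions is $\Theta(n)$, so \cref{cor:janson-wvhp} upgrades the expectation bound to $t_{i+1}^{0.1}-t_i^{0.1} \geq 0.45$ with very high probability, exactly as at the end of the proof of \cref{lem:clock-lower-bound}. The one delicate step is the bound on $c_{=i}(t)$ over the window: epidemic growth is immediately handled by \cref{lem:epidemic}, but the fraction at minute $i$ is also fed by drips from minute $i-1$, whose rate is not a priori small, so one must either accept the estimate $c_{=i-1}\leq 1-c_{\geq i}$ and check that the resulting integral still clears $0.45$, or use the front-tail concentration of \cref{thm:clock-front-tail} at earlier minutes to argue minute $i-1$ is not overloaded during so short an interval. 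Making this accounting tight enough that the integral reaches the claimed constant is the only real work; everything else is a mechanical repeat of the \cref{lem:clock-lower-bound} computation.
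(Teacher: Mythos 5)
Your plan has a real gap: the differential inequality you propose for $c_{=i}$ is far too loose to recover the constant $0.45$, and the suggested patch via \cref{thm:clock-front-tail} at minute $i-1$ is inapplicable.

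Concretely, your bound $\dot c_{=i} \lesssim (1-c_{\geq i})^2 + 2c_{=i}$, with the only available lower bound $c_{\geq i}\geq 0.1$, gives $\dot c_{=i}\lesssim 0.81+2c_{=i}$, hence $c_{=i}(t_i^{0.1}+s)\lesssim 0.505\,e^{2s}-0.405$, about $0.84$ at $s=0.45$. That is exponential in $s$. Plugging this into $\dot y \lesssim c_{=i}(s)^2 + 2y(1-y)$ with $y(0)=0.01$ and integrating yields $y(0.45)\lesssim 0.15$, which does not show $y<0.1$ at time $0.45$; the lemma would not follow. The alternative you mention — invoking \cref{thm:clock-front-tail} at minute $i-1$ to sharpen $c_{=i-1}$ — cannot help: that theorem applies only while $c_{\geq i-1}\leq 0.1$, and by time $t_i^{0.1}$ we necessarily have $c_{\geq i-1}\geq c_{\geq i}=0.1$ and typically much larger, so its hypothesis fails. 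Also, the integral-over-$y$ framework is structurally awkward here because the drip-rate bound is naturally a function of elapsed time $s$, not of $y$, so $\rho(y)$ is not a legitimate deterministic function and the sum-of-geometrics reduction does not go through as stated.

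The paper's proof uses a different and much cleaner observation that your write-up misses: every clock interaction changes at most one agent's $\fieldminute$, so the nondecreasing quantity $c_{\geq i}$ can increase by at most $1/n$ per interaction, giving the \emph{deterministic, linear} bound $c_{\geq i}(t_i^{0.1}+s)\leq 0.1+s$ (hence $c_{=i}\leq 0.1+s$ and drip rate $\leq (0.1+s)^2$). This is markedly tighter than your exponential bound. The paper then avoids the $y$-integral altogether: it fixes the time window of length $0.45$ and directly upper-bounds $c_{\geq i+1}(t_i^{0.1}+0.45)$ by summing (i) the initial $0.01$ seed amplified by epidemic by a factor $\leq e^{2\cdot0.45}$, and (ii) for each arrival time $s$ of a new drip cohort, the cohort size $\lesssim (0.1+s)^2\,ds$ amplified by epidemic for the remaining $0.45-s$ time, i.e., an integral $e^{0.9}\int_0^{0.45}(0.1+s)^2e^{-2s}\,ds$. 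The sum comes out just below $0.1$, giving the claim. Your proposal would need both the one-agent-per-interaction observation and the switch from a $y$-parameterized integral to a time-parameterized direct bound in order to close.
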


\begin{proof}
We assume in the worst case that $p=1$. Then by \cref{thm:clock-front-tail}, we have $c_{\geq i+1}(t_{i}^{0.1}) \leq 0.01$. This initial fraction will grow by epidemic to at most $0.01 \cdot e^{2 \cdot 0.45}(1+\epsilon)$ with very high probability by \cref{lem:epidemic}. We must also consider all agents that later make it to minute $i+1$ by a drip reaction, and how much they grow by epidemic. 

By time $t_i^{0.1} + s$, the fraction $c_{\geq i}$ could have increased to at most $0.1 + s$, since at most 1 agent can increase its minute in each interaction. Then the probability of a drip reaction at this time is at most $(0.1 + s)^2$. By standard Chernoff bounds, there will be at most $(1+\epsilon)(0.1 + s)^2n^{0.5}$ drip reactions in the next $n^{0.5}$ interactions with very high probability. Then by \cref{lem:epidemic}, these agents can grow by epidemic by at most a factor $(1+\epsilon)e^{2 \cdot (0.45 - s)}$ by time $t_i^{0.1} + 0.45$, with very high probability. Summing over consecutive groups of $n^{0.5}$ interactions at time $i\cdot \frac{n^{0.5}}{n}$ and using the union bound, we get a total bound
\begin{align*}
c_{\geq i+1}(t_{i}^{0.1} + 0.45) &\leq 0.01 \cdot e^{2 \cdot 0.45}(1+\epsilon)
+
\sum_{i=0}^{0.45n^{0.5}} (1+\epsilon)(0.1 + n^{-0.5}i)^2n^{0.5}\cdot e^{2 \cdot (0.45 - n^{-0.5}i)}
\\
&\sim (1+\epsilon)\cdot e^{0.9} \qty
[
0.01 + \int_0^{0.45}(1+s)^2 e^{-2s} ds
]
\leq
0.45.
\end{align*}

\end{proof}

We now summarize all bounds on the length of a clock minute in a single theorem:

\begin{theorem}
\label{thm:clock-bounds}
Let $t_{\geq i + 1}^{0.1}-t_{\geq i}^{0.1}$ be the time between when a fraction $0.1$ of agents have $\fieldminute \geq i$ and when a fraction $0.1$ of agents have $\fieldminute \geq i+1$.
Then with very high probability,
\[
\max\qty(0.45, \frac{1}{2}\ln(1+\frac{2}{9p})-0.01)
\leq
t_{\geq i + 1}^{0.1}-t_{\geq i}^{0.1}
\leq 
2.11 +\frac{1}{2}\ln\qty(\frac{1}{p})
\]
\end{theorem}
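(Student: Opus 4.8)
The statement is a direct consolidation of the three immediately preceding lemmas, so the plan is simply to assemble them and handle the quantifier over minutes. First I would invoke \cref{lem:clock-upper-bound}, which already gives the claimed upper bound $t_{\geq i+1}^{0.1} - t_{\geq i}^{0.1} \leq 2.11 + \tfrac12\ln(1/p)$ with very high probability for a fixed minute $i$. For the lower bound, I would combine \cref{lem:clock-lower-bound}, giving $t_{\geq i+1}^{0.1} - t_{\geq i}^{0.1} \geq \tfrac12\ln(1+\tfrac{2}{9p}) - 0.01$, with \cref{lem:clock-lower-bound-p-1}, giving $t_{\geq i+1}^{0.1} - t_{\geq i}^{0.1} \geq 0.45$; both bounds hold with very high probability, so on the intersection of the two corresponding events we get the maximum of the two expressions as a valid lower bound. (The second bound is the binding one exactly when $p$ is close to $1$, where the worst-case dripping probability $p(1-y)^2$ used in \cref{lem:clock-lower-bound} is weakest; for small $p$ the first bound dominates.)

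The only point requiring a word of care is that each of these lemmas is stated for a single, fixed minute $i$, whereas the clock is run for a range of minutes (up to $kL = \Theta(\log n)$ in the protocol, or even polynomially many minutes as noted in the surrounding discussion). Since each lemma fails only with probability $n^{-\omega(1)}$, I would take a union bound over all $O(\log n)$ (or $\mathrm{poly}(n)$) relevant values of $i$ and over the three events, which still leaves a very-high-probability guarantee that all three bounds hold simultaneously for every minute. This is exactly the union-bound-over-minutes argument already flagged in the paragraph preceding \cref{lem:clock-front-tail}.

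There is essentially no technical obstacle here: all of the genuine work — the doubly-exponential front-tail decay of \cref{thm:clock-front-tail}, the epidemic timing estimates of \cref{lem:epidemic}, and the Janson concentration bounds — has been carried out in establishing the three component lemmas. The proof of \cref{thm:clock-bounds} itself is just the observation that $\max\!\big(0.45,\ \tfrac12\ln(1+\tfrac{2}{9p})-0.01\big) \leq t_{\geq i+1}^{0.1} - t_{\geq i}^{0.1} \leq 2.11 + \tfrac12\ln(1/p)$ holds on the intersection of the (very-high-probability) events supplied by \cref{lem:clock-upper-bound,lem:clock-lower-bound,lem:clock-lower-bound-p-1}, which I would state in one or two lines.
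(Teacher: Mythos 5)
Your proposal is correct and matches the paper's (implicit) reasoning: \cref{thm:clock-bounds} is stated in the paper as a summary of \cref{lem:clock-upper-bound,lem:clock-lower-bound,lem:clock-lower-bound-p-1} with no separate proof given, and the three lemmas supply exactly the upper bound and the two lower bounds whose maximum appears in the statement. Your remark about the union bound over the $O(\log n)$ (or polynomially many) minutes is the right point of care, and it is addressed by the paper in the paragraph preceding \cref{lem:clock-front-tail}; your one- or two-line assembly is exactly what the paper intends.
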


These bounds are shown in \cref{fig:clock-constants}, along with sampled minute times from simulation. This suggests the actual time per minute is roughly $0.75 + \frac{1}{2}\ln(\frac{1}{p})$.

\begin{figure}[!htbp]
    \centering
    \includegraphics[width=\textwidth]{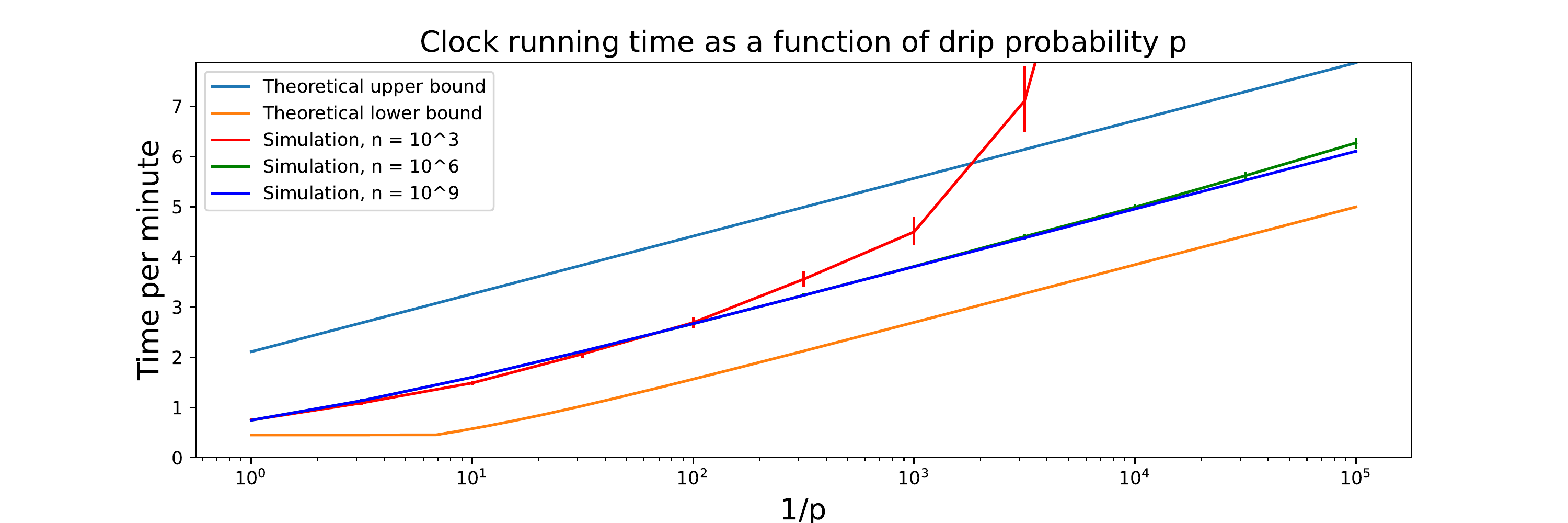}
    \caption{
    The upper and lower bounds from \cref{thm:clock-bounds} for the time of one clock minute, along with samples from simulation.
    For each value of $n$, 100 minute times were sampled, taking $t_{i+1}^{0.1} - t_i^{0.1}$ for $i=9, \ldots, 18$
    over 10 independent trials.
    All our proofs assume $p$ is constant, and for any fixed value of $p$, will only hold for sufficiently large $n$. The case $n=10^3$ shows that when $p = O(1/n)$, the bounds no longer hold. This is to be expected because the expected number of drips becomes too small for large deviation bounds to still hold.
    }
    \label{fig:clock-constants}
\end{figure}

We can now build from these theorems to get bounds on the values of $\fieldhour$.
For a clock agent $a$, define $a.\fieldhour = \floor{\frac{a.\fieldminute}{k}}$.
Define $\hourstart_h = \min\big(t: |\{a:a(t).\fieldhour \geq h\}| \geq 0.9|\clocks| \big)$ be the first time when the fraction of clock agents at hour $h$ or beyond reaches $0.9$ and $\hourend_h = \min\big(t: |\{a:a(t).\fieldhour > h\}| \leq 0.001|\clocks|\big)$ be the first time when the fraction of clock agents beyond hour $h$ reaches $0.001$. 
Define the \emph{synchronous hour h} to be the parallel time interval 
$[\hourstart_h, \hourend_h]$, 
i.e. when fewer than 0.1\%  are in any hour beyond $h$ and at least (90 - 0.1) = 89.9\% are in hour $h$. Note that if $\hourend_h < \hourstart_h$ then this interval is empty, but we show this happens with low probability.
We choose the threshold $0.001|\clocks|$ to be a sufficiently small constant for later proofs.

Recall $c = |\clocks|/n$ is the fraction of clock agents and
 $k$ is the number of minutes per hour.

\begin{theorem}
\label{thm:clock}
Consider a fraction $c$ of agents running the clock protocol, with $p=1$. Then
for every synchronous hour $h$, its length $\emph{\hourend}_h - \emph{\hourstart}_h \geq \frac{1}{c^2}[0.45k - 3.1]$, with very high probability.
The time between consecutive synchronous hours $\emph{\hourstart}_{h+1} - \emph{\hourstart}_h \leq \frac{1}{c^2}[2.11k + 2.2]$ with very high probability.
\end{theorem}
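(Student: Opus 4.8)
The plan is to reduce Theorem~\ref{thm:clock} to the pure-clock estimates of the previous subsection (Theorems~\ref{thm:clock-front-tail} and~\ref{thm:clock-bounds}, Lemmas~\ref{lem:clock-upper-bound} and~\ref{lem:clock-lower-bound-p-1}) by a time-rescaling argument, and then to chain the per-minute bounds over the $k$ minutes making up one hour.

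\textbf{Rescaling.} Only the $|\clocks|=cn$ clock agents execute lines~1--5 of \phaseMainAveraging, so the clock sub-configuration changes only when two clock agents meet, which happens with probability $\binom{|\clocks|}{2}/\binom{n}{2}\sim c^{2}$ in each step; conditioned on this, the interacting pair is uniform among clock pairs and the clock transition is applied. Hence, observed at its own interaction times, the clock sub-population is distributed exactly as the standalone clock of the previous subsection, and all of that analysis transfers after dilating parallel time by the factor $\sim 1/c^{2}$ coming from the rarity of clock--clock interactions. Concretely, for $p=1$: consecutive minutes satisfy $0.45\le t^{0.1}_{\ge i+1}-t^{0.1}_{\ge i}\le 2.11$ after multiplying by $1/c^{2}$ (Lemmas~\ref{lem:clock-lower-bound-p-1} and~\ref{lem:clock-upper-bound}, or Theorem~\ref{thm:clock-bounds}); the fraction $c_{\ge i}$ is nondecreasing and grows at least as fast as a pure epidemic, since drips at minute $i-1$ only add to it, so it moves from $0.1$ to $0.9$ in at most $\tfrac{1}{c^{2}}\ln 9<\tfrac{1}{c^{2}}\cdot 2.2$ time (Lemma~\ref{lem:epidemic}); and by the front-tail bound $c_{\ge i+1}\le p\,c_{\ge i}^{2}$ (Theorem~\ref{thm:clock-front-tail}), $c_{\ge i+1}$ cannot reach $0.001$ while $c_{\ge i}\le\sqrt{0.001}$, so $t^{0.001}_{\ge i+1}\ge t^{0.1}_{\ge i}-O(1/c^{2})$.

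\textbf{Translating the hour endpoints and chaining.} Since $a.\fieldhour=\floor{a.\fieldminute/k}$ we have $\hourstart_h=t^{0.9}_{\ge hk}$ and $\hourend_h=t^{0.001}_{\ge(h+1)k}$. For the upper bound, write $\hourstart_{h+1}-\hourstart_h=(t^{0.1}_{\ge(h+1)k}-t^{0.1}_{\ge hk})+(t^{0.9}_{\ge(h+1)k}-t^{0.1}_{\ge(h+1)k})-(t^{0.9}_{\ge hk}-t^{0.1}_{\ge hk})$, bound the first bracket by $k$ times the per-minute upper bound $\tfrac{1}{c^{2}}\cdot 2.11$, the second by $\tfrac{1}{c^{2}}\cdot 2.2$, and discard the nonnegative third bracket, giving $\hourstart_{h+1}-\hourstart_h\le\tfrac{1}{c^{2}}[2.11k+2.2]$. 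For the length lower bound, use $\hourstart_h=t^{0.9}_{\ge hk}\le t^{0.1}_{\ge hk}+\tfrac{1}{c^{2}}\cdot 2.2$ together with $\hourend_h=t^{0.001}_{\ge(h+1)k}\ge t^{0.1}_{\ge(h+1)k-1}-O(1/c^{2})$ from the front-tail observation above; chaining the $k-1$ per-minute lower bounds $\tfrac{1}{c^{2}}\cdot 0.45$ to estimate $t^{0.1}_{\ge(h+1)k-1}-t^{0.1}_{\ge hk}$ then yields $\hourend_h-\hourstart_h\ge\tfrac{1}{c^{2}}[0.45k-3.1]$, with the additive $3.1$ absorbing the two $O(1)$ threshold corrections. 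All cited bounds hold with very high probability and only $\le kL=O(\log n)$ minutes are involved, so a union bound gives the statement for every synchronous hour simultaneously.

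\textbf{Main obstacle.} The delicate part is the length lower bound: the synchronous-hour endpoints sit at the nonstandard fraction thresholds $0.9$ and $0.001$ rather than at the $0.1$ threshold around which Lemmas~\ref{lem:clock-upper-bound}--\ref{lem:clock-lower-bound-p-1} are phrased, so one has to pass between thresholds without spending more than the allotted additive slack $3.1$, and one must also verify that the interval $[\hourstart_h,\hourend_h]$ is nonempty --- i.e., that at the moment a $0.9$ fraction of clock agents reach minute $hk$, fewer than a $0.001$ fraction have reached minute $(h+1)k$. This is exactly where the doubly-exponential decay of the clock's front tail (Theorem~\ref{thm:clock-front-tail}) is needed, together with the requirement that $k$ be a large enough constant: already $0.45k-3.1>0$ forces $k\ge 7$, and the protocol takes $k=\kconstant$.
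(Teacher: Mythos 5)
Your overall scheme is the same as the paper's: rescale by $1/c^{2}$ to reduce to the standalone clock, identify $\hourstart_h = t^{0.9}_{\geq hk}$ and $\hourend_h = t^{0.001}_{\geq (h+1)k}$, chain per-minute bounds, and correct for the nonstandard thresholds. Your upper-bound telescoping is exactly the paper's argument. The place where your version comes up short is the step crossing the $0.001$ threshold at $\hourend_h$.

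You apply the front-tail squaring once: ``$c_{\geq i+1}$ cannot reach $0.001$ while $c_{\geq i}\leq\sqrt{0.001}$, so $t^{0.001}_{\geq(h+1)k}\geq t^{0.1}_{\geq(h+1)k-1}-O(1/c^2)$.'' The hidden $O(1/c^2)$ is the epidemic time for $c_{\geq (h+1)k-1}$ to grow from $\sqrt{0.001}\approx 0.0316$ to $0.1$, which by Lemma~\ref{lem:epidemic} is about $\tfrac{1}{2}[\ln(0.1)-\ln(0.9)-\ln(0.0316)+\ln(0.968)]\approx 0.61$, not $0.45$. Your accounting is then roughly $0.45(k-1) - 2.2 - 0.62 \approx 0.45k - 3.27$, which is strictly weaker than the stated $0.45k - 3.1$. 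The paper instead evaluates the front tail at $t^{0.1}_{\geq(h+1)k-2}$: there $c_{\geq(h+1)k-2}=0.1$, so two applications of Theorem~\ref{thm:clock-front-tail} give $c_{\geq(h+1)k}\leq (0.1^2)^2 = 10^{-4} < 0.001$ outright, with no epidemic correction. This costs two lost minutes ($0.9/c^2$) instead of your one lost minute plus a $0.62/c^2$ epidemic correction ($\approx 1.07/c^2$), and so yields exactly $0.45(k-2)-2.2 = 0.45k-3.1$. You correctly identified that the doubly-exponential front tail is the crux; the fix is to actually square twice (go back two minutes), which removes the epidemic correction entirely and recovers the stated constant.
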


\begin{proof}
Note that the previous lemmas assumed $|\clocks| = n$, so the entire population was running the clock algorithm. In reality, we have a fraction $c = |\clocks| / n$ of clock agents. The reactions we considered only happen between two clock agents, which interact with probability $\sim c^2$. Thus we can simply multiply the bounds from our lemmas by the factor $\frac{1}{c^2}$ to account for the number of regular interactions for the requisite number of clock interactions to happen, which is very close to its mean with very high probability by standard Chernoff bounds.

Because our definition references time $t_{\geq i}^{0.9}$ when the fraction reaches $0.9$, we will first bound the time it takes an epidemic to grow from $0.1$ to $0.9$. By \cref{lem:epidemic}, this takes parallel time $t$, where
\[
\E[t]\sim\frac{\ln(0.9) - \ln(1 - 0.9) - \ln(0.1) + \ln(1-0.1)}{2} = \ln(9) < 2.2.
\]
Since $\ln(9)(1+\epsilon) < 2.2$, this completes within parallel time $2.2$ with very high probability.

Since $c.\fieldhour = h \iff hk \leq c.\fieldminute < (h+1)k$, the times
$\hourstart_h = t_{h k}^{0.9}$ and $\hourend_{h}=t_{h k}^{0.001}$.
For the upper bound, by \cref{lem:clock-upper-bound} we have $t_{i+1}^{0.1}-t_{i}^{0.1}\leq 2.11 / c^2$ with very high probability. We start at $t_{hk}^{0.9} \geq t_{hk}^{0.1}$, and sum over the $k$ minutes in hour $h$, then add at most time $\frac{2.2}{c^2}$ between $t_{(h+1)k}^{0.1}$ and $t_{(h+1)k}^{0.1}$. This gives that $\emph{\hourstart}_{h+1} - \emph{\hourstart}_h \leq \frac{1}{c^2}[2.11k + 2.2]$ with very high probability.

For the lower bound, by \cref{thm:clock-front-tail}, at time $t_{(h+1)k - 2}^{0.1}$, we have $c_{\geq (h+1)k} \leq (0.1^2)^2 = 10^{-4} < 0.001$, so $t_{(h+1)k - 2}^{0.1} < \hourend_{h}$. Then $\hourstart_{h} \leq t_{hk}^{0.1} + \frac{2.2}{c^2}$. 
Using
 \cref{lem:clock-lower-bound-p-1}, we have $t_{i+1}^{0.1}-t_{i}^{0.1}\leq 0.45 / c^2$ with very high probability, for each $i=hk, \ldots, hk + k - 3$. All together, this gives
 $\emph{\hourend}_h - \emph{\hourstart}_h \geq \frac{1}{c^2}[0.45(k - 2) - 2.2] = \frac{1}{c^2}[0.45k - 3.1]$.
\end{proof}

We will set $k = \kconstant$ to give us sufficiently long synchronous hours for later proofs. 
Since every hour takes constant time with very high probability, all $L$ hours will finish within $O(\log n)$ time.

We finally show one more lemma concerning how the $\roleclock$ agents affect the $\rolemain$ agents.
The $\roleclock$ agents will change the $\fieldTlevel$ of the $\unbiased$ agents via \cref{line:clock-update-reaction} of \phaseMainAveraging. There are a small fraction $0.001|\clocks|$ of $\roleclock$ agents that might be running too fast and thus have a larger $\fieldhour$ than the synchronized hour. We must now show these agents are only able to affect a small fraction of the $\rolemain$ agents. Intuitively, the $\roleclock$ agents with $\fieldhour > h$ bring up the hour of both $\unbiased$ agents and other $\roleclock$ agents. The following lemma will show they don't affect too many $\rolemain$ agents before also bringing in a large number of $\roleclock$ agents.

We will redefine $c_{> h} = c_{> h}(t) = |\{c:c.\fieldhour > h\}| / |\clocks|$ to be the fraction of $\roleclock$ agents beyond hour $h$, and similarly $m_{>h} = m_{>h}(t) = |\{m:m.\fieldhour > h\}| / |\mains|$ to be the fraction of $\rolemain$ agents beyond hour $h$.

\begin{lemma}
\label{lem:mains-small-hour}
For all times $t \leq \hourend_h$, we have
$m_{>h}(t) \leq 1.2 c_{> h}(\hourend_h) = 0.0012$ with very high probability.
\end{lemma}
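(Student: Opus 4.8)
The plan is to track the fraction $m_{>h}(t)$ of $\rolemain$ agents that have been dragged to an hour strictly above $h$ via the clock-update reaction (\cref{line:clock-update-reaction}), and show it cannot outrun $c_{>h}(t)$ by more than a small constant factor before $\hourend_h$. The key structural observation is that an $\unbiased$ agent's hour can only be increased by an interaction with a $\roleclock$ agent whose $\fieldhour$ is already above $h$ (once we are asking about crossing the threshold $h$), so the ``source'' of the infection $m_{>h}$ is exactly the population $c_{>h}$, which itself grows by the clock dynamics analyzed in \cref{thm:clock-front-tail} and the surrounding lemmas. By definition of $\hourend_h$, we have $c_{>h}(\hourend_h) = 0.001$, and for all $t \le \hourend_h$, $c_{>h}(t) \le 0.001$.

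First I would set up the comparison between the two ``epidemic-like'' processes. The rate at which $m_{>h}$ grows is at most proportional to $c_{>h}(t) \cdot (1 - m_{>h}(t)) \le c_{>h}(t)$ interactions (only $\rolemain$--$\roleclock$ pairs with the clock agent beyond hour $h$ matter), while $c_{>h}$ grows at rate at least proportional to $c_{>h}(t)(1-c_{>h}(t)) \gtrsim c_{>h}(t)(1-0.001)$ from the clock epidemic reactions \ref{rxn:i-1epidemic} alone (plus drip reactions that only help). Since both ``infected'' fractions start near $0$ at roughly the same time $t_{\geq hk}^+$ (the first drip into minute $hk$; I would need to argue $m_{>h}$ cannot start growing before any $\roleclock$ agent reaches hour $h$, which is immediate from \cref{line:clock-update-reaction}), and the clock process's growth rate is at least about the same as — in fact modestly faster than — the $\rolemain$ process's growth rate while both fractions are $\le 0.001$, a coupling/Grönwall-type argument shows $m_{>h}(t) \le 1.2\, c_{>h}(t)$ for all $t \le \hourend_h$. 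Concretely: condition on the very-high-probability event that the clock behaves as in \cref{thm:clock-front-tail} and \cref{lem:clock-lower-bound-p-1}, so that $c_{>h}$ follows its expected epidemic trajectory up to $(1\pm\epsilon)$ factors via \cref{lem:epidemic}/\cref{cor:janson-wvhp}; then feed that trajectory as the forcing term for $m_{>h}$ and bound $m_{>h}$ by a one-sided epidemic process (\cref{lem:one-sided-cancel}-style, or a direct geometric-sum / Chernoff estimate) with source fraction $c_{>h}(t)$. The factor $1.2$ absorbs the $(1\pm\epsilon)$ slack and the constant difference in the two rate constants.

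The main obstacle I anticipate is handling the very beginning of the process, when the counts involved (number of $\roleclock$ agents above hour $h$, number of affected $\rolemain$ agents) are tiny and large-deviation bounds are weak — exactly the same subtlety that forced the introduction of ``early drip agents'' in \cref{lem:clock-front-tail} and \cref{thm:clock-front-tail}. I would deal with this by not trying to prove the $1.2$ ratio pointwise from $t_{\geq hk}^+$ onward, but instead waiting until $c_{>h}$ first reaches some small constant-times-$n^{-0.45}$ threshold (invoking \cref{thm:clock-front-tail} to say that before that point $c_{>h}$ is at most $pc_{\geq (h)k}(t)^2$-tiny and hence the number of affected $\rolemain$ agents is a negligible $O(n^{0.55})$ or so), and from that threshold onward both processes are in the regime where \cref{lem:epidemic} and \cref{cor:janson-wvhp} give $(1\pm\epsilon)$ concentration with very high probability. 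A union bound over the $O(\log n)$ hours and over the polynomially many time steps keeps everything very-high-probability. A secondary subtlety: I must make sure the comparison is valid even though $\roleclock$ agents above hour $h$ also drag up \emph{other} $\roleclock$ agents (not just $\rolemain$ agents) — but this only increases the denominator-side growth of $c_{>h}$ relative to $m_{>h}$, so it strengthens the desired inequality rather than threatening it.
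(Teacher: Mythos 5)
Your plan is sound in outline, but it takes a genuinely different route from the paper. The paper does not compare two epidemics and does not need any small-count threshold. It defines a single potential $\Phi(t) = m_{>h}(t) - 1.1\, c_{>h}(t)$, notes $\Phi(0)=0$, and computes the per-interaction drift directly:
\[
\E[\Delta\Phi] \;=\; \frac{2c \cdot c_{>h}}{n}\Bigl[(1-m_{>h}) - 1.1\,(1-c_{>h})\Bigr] \;\le\; \frac{2c\cdot c_{>h}}{n}\bigl[1 - 1.1\cdot 0.999\bigr] \;<\; 0,
\]
using only that $c_{>h}\le 0.001$ for $t\le\hourend_h$. Crucially this inequality holds at every instant, regardless of how small $c_{>h}$ or $m_{>h}$ currently are, so $\Phi$ is a supermartingale from time $0$ onward with $O(1/n)$ step size; one application of Azuma (\cref{thm:azuma}) over the $O(n)$ interactions up to $\hourend_h$ (bounded via \cref{thm:clock}) gives $\Phi(\hourend_h)\le 0.0001$ with very high probability, and monotonicity of $m_{>h}$ finishes it.

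Your rate bookkeeping is correct — both fractions drift at rate $\propto 2c\cdot c_{>h}$ with brake factors $(1-m_{>h})$ versus $(1-c_{>h})$, and drips only help the clock side — and your instinct that the small-count regime is the delicate part is exactly right. What the paper's choice of potential buys is precisely that there is nothing delicate left: the $-1.1\,c_{>h}$ term makes the drift negative unconditionally, so you never need a threshold at $c_{>h}\approx n^{-0.45}$, never need to invoke \cref{thm:clock-front-tail} inside this proof, and never need to condition on a concentrated $c_{>h}$-trajectory before feeding it into a second process. Your two-stage Grönwall/coupling argument should work if carried out carefully, but it would be longer (two regimes to stitch together, and the ``condition on the clock trajectory then analyze $m_{>h}$'' step needs to be made precise, e.g.\ via a union bound over checkpoints). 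One small simplification you could take: the lemma only needs $m_{>h}(\hourend_h)\le 1.2\,c_{>h}(\hourend_h)$ plus monotonicity of $m_{>h}$, not the pointwise-in-$t$ bound $m_{>h}(t)\le 1.2\,c_{>h}(t)$ you aim for, so you can target the single time $\hourend_h$.
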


\begin{proof}
We have $c_{>h}(\hourend_h) = 0.001$ by the definition of synchronous hour $h$.
Thus it suffices to show that $m_{>h}(t) \leq 1.2 c_{> h}(\emph{\hourend}_h)$.

Recall $c = |\clocks| / n$ and $m = |\mains| / n$, 
so $c \cdot c_{>h}$ and $m \cdot m_{>h}$ are rescaled to be fractions of the whole population. 

We will assume in the worst case that every $\rolemain$ agent can participate in the clock update reaction
\[
C_h, M_j \to C_h, M_h \text{ where } h > j,
\]
so the probability of clock update reaction is $2 c \cdot c_{>h} \cdot m(1-m_{>h})$.
Among the $\roleclock$ agents, we will now only consider the epidemic reactions $C_h, C_j \to C_h, C_h$ between agents in different hours $h > j$, so the probability of the clock epidemic reaction is $2 c^2 \cdot c_{>h}(1-c_{>h})$.

We use the potential $\Phi(t) = m_{> h}(t) - 1.1\cdot c_{> h}(t)$. Note that initially $\Phi(0) = 0$ since both terms are $0$. The desired inequality is $\Phi(\hourend_h) \leq 0.1c_{>h}(\hourend_h) = 0.0001$, and we will show this holds with very high probability by Azuma's Inequality because $\Phi$ is a supermartingale. The clock update reaction increases $\Phi$ by $\frac{1}{|\mains|} = \frac{1}{mn}$. The clock epidemic reaction decreases $\Phi$ by $\frac{1.1}{|\clocks|} = \frac{1.1}{cn}$. This gives an expected change
\begin{align*}
\E[\Phi(t+1/n)-\Phi(t)] &= \frac{1}{mn}\qty [ 2 c \cdot c_{>h} \cdot m(1-m_{>h})] - \frac{1.1}{cn}\qty [ 2 c^2 \cdot c_{>h}(1-c_{>h})]
\\
& = \frac{2c \cdot c_{>h}}{n}\qty[(1-m_{> h}) - 1.1(1- c_{>h})]
\\
& \leq \frac{2c \cdot c_{\geq i}}{n}\qty[1 - 1.1(0.999)] < 0.
\end{align*}

Thus the sequence $\big(\Phi_j = \Phi(\frac{j}{n})\big)_{j=0}^{n \cdot \hourend_h}$ is a supermartingale, with bounded differences
$|\Phi_{j+1}-\Phi_j| \leq 
\max\qty(\frac{1}{mn}, \frac{r}{cn}) =
O(\frac{1}{n})$. 
So we can apply Azuma's Inequality (\cref{thm:azuma}) to conclude
\[
\Pr[\Phi_{n \cdot \hourend_h} \geq \delta] \leq \exp \qty(-\frac{\delta^2}{2\sum_{j=0}^{n \cdot \hourend_h} O(\frac{1}{n})^2})
= \exp(-O(n)\delta^2),
\]
since by \cref{thm:clock}, we have time $\hourend_h = O(1)$ with very high probability. Thus we can choose $\delta = 0.1c_{>h}(\hourend_h) = 0.0001$ to conclude that $m_{>h}(\hourend_h) \leq 1.2c_{>h}(\hourend_h)$ with very high probability $1 - \exp(-\Omega(n)) = 1 - O(n^{-\omega(1)})$.

The lemma statement for times $t \leq \hourend_h$ simply follows from the monotonicity of the value $m_{>h}$, since agents only decrease their $\fieldhour$ field.
\end{proof}

\subsection{\phaseMainAveraging\ exponent dynamics}
We now analyze the behavior of the $\rolemain$ agents in \phaseMainAveraging. We will show their $\fieldlevel$ fields stay roughly synchronized with the $\fieldhour$ of the $\roleclock$ agents, decreasing from $0$ toward $-L$. 
We first use the above results on the clock to conclude that during synchronous hour $h$, the tail of either $\unbiased$ agents with $\fieldTlevel > h$ or biased agents with $\fieldlevel < -h$ is sufficiently small.

\begin{lemma}
\label{lem:level-lower-tail}
Until time $\emph{\hourend}_h$, the count $|\{\unbiased:\unbiased.\fieldTlevel > h\} \cup \{b:b.\fieldlevel < -h\}| \leq 0.0024|\mains|$ with very high probability.
\end{lemma}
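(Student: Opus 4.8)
The plan is to bound the two disjoint pieces of the set separately. Write $U(t)=\{\unbiased:\unbiased.\fieldTlevel>h\}$ and $B(t)=\{b:b.\fieldlevel<-h\}$; both are empty at the start of \phaseMainAveraging, since $\fieldTlevel$ and $-\fieldlevel$ begin at $0$. The bound $|U(t)|\le 0.0012|\mains|$ is immediate: by \cref{line:clock-update-reaction} an unbiased $\rolemain$ agent raises $\fieldTlevel$ only through a clock-update reaction, so $|U(t)|=m_{>h}(t)|\mains|$, and \cref{lem:mains-small-hour} already gives $m_{>h}(t)\le 1.2\,c_{>h}(\hourend_h)=0.0012$ for every $t\le\hourend_h$. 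It therefore remains to show $|B(t)|\le 0.0012|\mains|$ for $t\le\hourend_h$; a union bound then gives the claim.

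For the biased piece the key structural observation, from inspecting \phaseMainAveraging, is that a biased agent enters $B$ only through a split reaction, and a split whose two product agents land at level $\le -(h+1)$ must have the biased participant at level $\le -h$, hence requires the unbiased partner to have $\fieldTlevel\ge h+1$, i.e.\ to lie in $U$ at that instant; the reaction removes that partner from $U$ and creates at most two new members of $B$. Consequently $|B(\hourend_h)|\le 2\cdot(\#\text{ such splits})\le 2\cdot(\#\text{ agent-entries into }U)$. An agent enters $U$ either (i) via a clock-update reaction with a $\roleclock$ agent at hour $>h$, or (ii) when two opposite-biased agents at a common level $-j$ with $j>h$ cancel, becoming two unbiased agents with $\fieldhour=j>h$. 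Type-(i) entries are few: the clock-update rate is at most $2c\cdot c_{>h}(t)\cdot m$ per interaction, and the front-tail and timing results (\cref{thm:clock-front-tail}, \cref{lem:clock-upper-bound}, \cref{thm:clock}, \cref{lem:epidemic}) give $\int_0^{\hourend_h}c_{>h}(t)\,dt=O(1)$ — $c_{>h}$ is $0$ until the first $\roleclock$ agent reaches hour $h+1$, and thereafter, being doubly-exponentially squeezed below $0.001$, its time-integral is dominated by a geometric series and is in fact $O(0.001)$ — so by \cref{thm:chernoff-bound-multiplicative} there are $O(0.001\,|\mains|)$ type-(i) entries with very high probability, with a small enough constant that twice this is below, say, $0.0011|\mains|$. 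Type-(ii) entries are negligible: such a cancel occurs at rate $O(|B(t)|^2/n^2)$ per interaction, so its expected total count over $[0,\hourend_h]$ is $O\!\big(\int|B(t)|^2\,dt/n\big)$, which is $o(|\mains|)$ because $|B(t)|$ stays well below $0.0012|\mains|$ except in an $O(1)$-length window near $\hourend_h$; re-substituting this lower-order feedback into the count of $U$-entries converges geometrically. Collecting constants gives $|B(\hourend_h)|\le 0.0012|\mains|$.

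To turn the rate statements into a rigorous very-high-probability bound I would package the accounting as a supermartingale/Azuma estimate in the style of the proof of \cref{lem:mains-small-hour}: track $\big(|U(t)|+|B(t)|\big)$ against monotone counters for the type-(i) and type-(ii) reactions and against $c_{>h}(t)$, verify each reaction changes the resulting potential by $O(1/n)$ with nonpositive expected change once $|U|,|B|=O(0.0012|\mains|)$, and apply \cref{thm:azuma} over the $O(\log n)$ duration of the phase. The main obstacle I expect is exactly the type-(ii) feedback: cancels below level $-h$ recycle $B$-members back into $U$, so $|U|+|B|$ is not itself a supermartingale and a bare event count does not close (the feedback has coefficient $\ge 1$); closing it needs a genuinely rate-based argument showing the recycling loop runs at rate quadratic in the already-small counts and so cannot compound over the short window during which any $\roleclock$ agent is ahead of hour $h$. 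A secondary, more routine point is that $\hourend_h$ can be $\Theta(\log n)$ time into the phase, so the estimate $\int_0^{\hourend_h}c_{>h}(t)\,dt=O(1)$ must be extracted from the front-tail decay rather than from the trivial bound $0.001\cdot\hourend_h$.
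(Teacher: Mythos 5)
Your proposal follows essentially the same route as the paper's two-sentence proof: use \cref{lem:mains-small-hour} to bound the cumulative number of $\rolemain$ agents that a $\roleclock$ agent at hour $>h$ has touched, then charge every member of the target set to one such agent plus at most one split partner, doubling the $0.0012|\mains|$ bound. Your accounting is more careful, and in doing so you surface a genuine subtlety that the paper's proof elides. Writing $U$ for the unbiased agents with $\fieldTlevel>h$ and $B$ for the biased agents with $\fieldlevel<-h$: a cancel reaction at any level $-j$ with $j>h$ converts two $B$-members into two $\unbiased$ agents with $\fieldhour=j>h$, i.e.\ fresh members of $U$, who can then serve as the unbiased partner in \emph{further} splits with level-$-h$ agents. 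So a single clock-updated agent can be ``charged'' more than once, and the static double-counting used by both you and the paper does not literally close. Your diagnosis of this is correct --- $|U|+|B|$ is not monotone against the clock-update event count, and controlling the recycling loop needs a genuinely rate-based argument (the cancel rate at levels below $-h$ is $O(|B|^2/n^2)$ per interaction, hence negligible while $|B|$ is still small, so a bootstrap or supermartingale argument in the spirit of \cref{lem:mains-small-hour} closes it) --- but you leave that step open, so your proof is no more complete than the paper's on this point, only more explicit about what is being swept under the rug.

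Two secondary remarks. First, you re-derive the type-(i) (clock-update) count from an estimate of $\int c_{>h}\,dt$; this is unnecessary, since \cref{lem:mains-small-hour} already gives the cumulative bound $0.0012|\mains|$ directly by exactly the supermartingale argument you would be replicating. Second, your decomposition into $|U|\le 0.0012|\mains|$ and $|B|\le 0.0012|\mains|$ separately overconstrains the type-(i) count: the chain $|B|\le 2\cdot(\#\text{entries into }U)$ then needs the type-(i) entries to be about $0.0006|\mains|$, which \cref{lem:mains-small-hour} does not give. The paper avoids this by bounding $|U|+|B|$ jointly by $2m_{>h}|\mains|$ rather than bounding the two pieces separately, and you should do the same.
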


\begin{proof}
By \cref{lem:mains-small-hour}, the count of $\unbiased$ agents that have been pulled up by a $\roleclock$ agent to $\fieldhour > h$ is at most $0.0012|\mains|$ with very high probability. 
Each of these agents could participate in a split reaction that results in two biased agents with $\fieldlevel < -h$, increasing the total count of $|\{\unbiased:\unbiased.\fieldTlevel > h\} \cup \{b:b.\fieldlevel < -h\}|$ by 1. Thus this total count can be at most twice as large: $\leq 0.0024|\mains|$.
\end{proof}

Our main argument will proceed by induction on synchronous hours.
During each synchronous hour $h$, we will first show that at least a constant fraction $0.77|\mains|$ of agents have $\fieldbias = \T$ with $\fieldhour = h$. Then we will show that split reactions bring most biased agents down to $\fieldlevel = -h$. Finally we will show that cancel reactions will reduce the count of biased agents, leaving sufficiently many $\unbiased$ agents for the next step of the induction.

Recall that the initial gap 
\[g = (|\{a:a.\fieldinput = \A\}| - |\{b:b.\fieldinput = \B\}|) = \sum_{m.\fieldrole = \rolemain} m.\fieldbias
\]
is both the difference between the counts of $\A$ and $\B$ agents in the initial configuration and the invariant value of the total bias.
We define $\beta_+ = \beta_+(t) = \sum_{a.\fieldopinion = +1} |a.\fieldbias|$ and $\beta_- = \beta_-(t) = \sum_{b.\fieldopinion = -1} |b.\fieldbias|$ to be the total unsigned bias of the positive $\A$ agents at time $t$ and negative $\B$ agents at time $t$. Then the net bias $g = \beta_+ - \beta_-$ is the invariant.

We also define $\mu = \mu(t) = \beta_+ + \beta_-$ to be the total unsigned bias, which we interpret as ``mass''. Note that split reactions preserve $\mu$, while cancel reactions strictly decrease $\mu$. The initial configuration has mass $\mu(0) = n$, and if we eliminate all of the minority opinion then we will get $\beta_- = 0$ and $\mu = g$.
Thus decreasing the mass shows progress toward reaching consensus. 
Also, if every biased agent decreased their exponent by 1, this would exactly cut the $\mu$ in half. We will show an upper bound on $\mu$ that cuts in half after each synchronous hour, which implies that on average all biased agents are moving down one exponent. 

We also define $\mu_{(>-i)}(t) = \sum_{m.\fieldlevel > -i}|m.\fieldbias|$ as the total mass of all biased agents above exponent $-i$. Note that a bound $\mu_{(>-i)} \leq x2^{-i+1}$ gives a bound on total count $|\{a:a.\fieldlevel > -i\}| \leq x$, since even if all agents above exponent $-i+1$ split down to exponent $-i + 1$, they would have count at most $x$. Also note that $\mu(t)$ and $\mu_{(>-i)}(t)$ are nonincreasing in $t$, since the mass above a given exponent can never increase.

This inductive argument on synchronous hours will stop working once we reach a low enough exponent that the gap has been sufficiently amplified. 
We define $g_i = g \cdot 2^{i}$, which we call the relative gap to hour $i$ / exponent $-i$, because if all agents had $\fieldlevel = -i$, then a gap $g_i = |\{a:a.\fieldopinion = +1\}| - |\{b:b.\fieldopinion = -1\}|$ would maintain the invariant $g = \sum_v v.\fieldbias = \sum_{a.\fieldopinion = +1} \frac{1}{2^i} - \sum_{b.\fieldopinion = -1} \frac{1}{2^i} = \frac{g_i}{2^i}$. Note that $g_0 = g$, and the relative gap doubles as we increment the hour and decrement the exponent. So if the $\rolemain$ agents have roughly synchronous exponents, there will be some minimal exponent where the relative gap has grown to exceed the number of $\rolemain$ agents $|\mains|$, and there are not enough agents available to maintain the invariant using lower exponents.

We now formalize this idea of a minimal exponent. In the case where there is an initial majority, $g_i \neq 0$, and because we assume the majority is $\A$, we have $g_i > 0$. Define the minimal exponent $-l = \floor{\log_2(\frac{g}{0.4|\mains|})}$ to be the unique exponent corresponding to relative gap $0.4|\mains| \leq g_l < 0.8|\mains|$. For larger exponents $-i \geq -l+5 = -(l - 5)$, we have $g_i \leq g_{l-5} < 0.025|\mains|$. Thus for hours $0,1,\ldots, l-5$ and exponents $0, -1, \ldots, -l + 5$, the gap is still very small. The small gap makes the inductive argument stronger, and we will use this small gap to show a high rate of cancelling keeps shrinking the mass $\mu$ in half each hour and keeps the count of $\unbiased$ agents large, above a constant fraction $0.77|\mains|$.

For the last few hours $l-4, \ldots, l$ and exponents $-l+4, \ldots, -l$, the doubling gap weakens the argument. Thus we have separate bounds for each of these hours. These weaker bounds acknowledge the fact that the majority count is starting to increase while the count of $\unbiased$ agents is starting to decrease.

\begin{theorem}
\label{thm:induction-on-hours}
With very high probability the following holds.
For all synchronous hours $h = 0, \ldots, l$, during times $[\emph{\hourstart}_h + \frac{2}{c}, \emph{\hourend}_h]$, the count $\unbiased_h$ of $\unbiased$ agents at $\fieldTlevel = h$,
obeys 
$|\unbiased_h| \geq \tau_h|\mains|$. 
Then by time $\emph{\hourstart}_h + \frac{2}{c} + \frac{41}{m}$, the mass $\mu_{(>-h)}$ above exponent $-h$ satisfies $\mu_{(>-h)} \leq 0.001 \cdot 2^{-h+1}$. Then by time $\emph{\hourstart}_h + \frac{2}{c} + \frac{47}{m} \leq \emph{\hourend}_h$, the total mass $\mu(\emph{\hourend}_h) \leq \rho_h|\mains| 2^{-h}$.

The constant $\rho_h = 0.1$ for $h \leq l-5$. Then we have $\rho_{l-4} = 0.104$, $\rho_{l-3} = 0.13$, $\rho_{l-2} = 0.212$, $\rho_{l-1} = 0.408$, and $\rho_{l} = 0.808$.

The constant $\tau_h \geq 0.97 - 2\rho_{(h-1)}$, so $\tau_h \geq 0.77$ for $h \leq l-4$, and the minimum value $\tau_l \geq 0.15$.
\end{theorem}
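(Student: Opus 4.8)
The plan is to prove \cref{thm:induction-on-hours} by induction on the synchronous hour $h$, carrying as inductive hypothesis the state at the end of hour $h-1$: $\mu(\hourend_{h-1}) \le \rho_{h-1}|\mains|2^{-(h-1)}$, $\mu_{(>-(h-1))} \le 0.002|\mains|2^{-(h-1)}$, and the leakage bound of \cref{lem:level-lower-tail}. The base case $h=0$ is essentially free: in the regime of \cref{lem:discrete-averaging} all $\rolemain$ agents enter \phaseMainAveraging\ with $\fieldlevel = 0$ and at most $0.03|\mains|$ of them biased, so $\mu(0) \le 0.03|\mains| = \rho_0|\mains|\cdot 2^{0}$, at least $0.97|\mains|$ agents are unbiased with $\fieldTlevel = 0$, and the ``mass above exponent $0$'' is vacuously $0$; since $\mu$ and $\mu_{(>-h)}$ are nonincreasing, these persist over hour $0$. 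The inductive step splits into three interlocking sub-arguments, each feeding the next, all to be carried out within the constant-length synchronous hour guaranteed by \cref{thm:clock}.

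\emph{(i) Fuel supply / hour propagation.} By the inductive hypothesis the biased mass at the start of hour $h$ is $\le \rho_{h-1}|\mains|2^{-(h-1)}$ and (by the $\mu_{(>-(h-1))}$ bound) concentrated at exponent $-(h-1)$, so the biased count is at most $\rho_{h-1}|\mains| + 0.0024|\mains|$ using \cref{lem:level-lower-tail}, leaving $\ge (1-\rho_{h-1}-o(1))|\mains|$ unbiased agents. During synchronous hour $h$ a fraction $\ge 0.899$ of $\clocks$ sits at $\fieldhour \ge h$ and $\le 0.001$ beyond it (\cref{thm:clock}, \cref{lem:mains-small-hour}); these clock agents pull $\unbiased$ agents up to $\fieldhour = h$ exactly as a one-sided/epidemic process, so by \cref{lem:one-sided-cancel} (equivalently \cref{lem:epidemic}) within time $\tfrac{2}{c}$ all but an $o(1)$ fraction of the $\unbiased$ agents have $\fieldhour \ge h$, and \cref{lem:mains-small-hour} keeps all but $\le 0.0012|\mains|$ of them exactly at $\fieldhour = h$. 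The only drain on the $\unbiased$ count over the hour is split reactions (one consumed per split; cancels restore two), and the total number of splits needed to move all biased mass down one exponent is at most the biased count $\le \rho_{h-1}|\mains|$; hence $|\unbiased_h| \ge (1 - 2\rho_{h-1} - o(1))|\mains| \ge \tau_h|\mains|$ throughout $[\hourstart_h + \tfrac{2}{c}, \hourend_h]$. The expected-rate statements are upgraded to very-high-probability ones by tracking the split count (and hence $\#\unbiased_h$) via \cref{thm:chernoff-bound-multiplicative} or as a supermartingale via \cref{thm:azuma}.

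\emph{(ii)--(iii) Pushing mass to $-h$ and halving it.} With $\ge \tau_h|\mains|$ units of fuel at $\fieldhour = h$ available throughout, every biased agent at exponent $> -h$ is enabled to split; these splits form a one-sided cancel process, and reducing the count at exponents $> -h$ from $\le \rho_{h-1}|\mains|$ to $\le 0.001|\mains|$ (i.e.\ $\mu_{(>-h)} \le 0.002|\mains|2^{-h}$) is only a constant-factor reduction, so by \cref{lem:one-sided-cancel} it completes within $\tfrac{41}{m}$ additional time with very high probability (the negligible cascade from exponents above $-(h-1)$, bounded by the inductive $\mu_{(>-(h-1))}$ bound, needs $\le 2$ splits each). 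After this, essentially all biased mass sits at exponent $-h$: the part above is $\le 0.002|\mains|2^{-h}$, the part strictly below is negligible by \cref{lem:level-lower-tail}, and the mass at $-h$ is $\le 2\rho_{h-1}|\mains|2^{-h}$ since splits conserve mass. Cancel reactions at exponent $-h$ then reduce this, and the behavior splits on the relative gap $g_h = g\cdot 2^h$: for $h \le l-5$ we have $g_h < 0.025|\mains|$, so the two opinion counts at $-h$ are each $\Theta(|\mains|)$ and \cref{lem:cancel-reactions} brings the total count below $0.1|\mains|$ (so $\mu(\hourend_h) \le \rho_h|\mains|2^{-h}$ with $\rho_h = 0.1$) in $O(1/m)$ time; for $l-4 \le h \le l$ the minority count $n_-$ may be small, but the targets $\rho_{l-4} = 0.104, \ldots, \rho_l = 0.808$ are chosen so that only a small amount of cancelling (bounded by the available $n_-$, using $g_h < 0.8|\mains|$) is needed to go from $\le 2\rho_{h-1}|\mains|2^{-h}$ down to $\rho_h|\mains|2^{-h}$, again in $O(1/m)$ time via \cref{lem:cancel-reactions} (splits from $-h$ to $-(h+1)$ being negligible by \cref{lem:mains-small-hour}). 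The elapsed time is at most $\tfrac{2}{c} + \tfrac{47}{m} \le \hourend_h - \hourstart_h$ by \cref{thm:clock} with $k = \kconstant$, so all of this lies within the synchronous hour; one then checks the recursions close ($\tau_h \ge 0.97 - 2\rho_{h-1}$ and the stated $\rho_h$) and takes a union bound over the $l = O(\log n)$ hours, each sub-event failing with probability $n^{-\omega(1)}$ (\cref{cor:janson-wvhp}, \cref{thm:azuma}, \cref{lem:epidemic}--\cref{lem:one-sided-cancel}, \cref{thm:clock}, \cref{lem:mains-small-hour}, \cref{lem:level-lower-tail}), giving probability $1 - O(1/n^2)$ overall.

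\paragraph{Main obstacle.} The crux is the circular interdependence of the three bounds: (i) for hour $h$ relies on (iii) from hour $h-1$, while (ii) and (iii) for hour $h$ rely on (i) of hour $h$ continuing to hold as fuel is consumed, so the $\unbiased$-count bound and the two mass bounds must be proved \emph{simultaneously} within each hour rather than sequentially. This is compounded by the fact that the entire argument for one hour must fit inside a synchronous hour of only constant length, which is exactly what forces the large constant $k = \kconstant$ and leaves almost no slack in the last hours $l-4, \ldots, l$, where the doubling relative gap $g_h$ shrinks $n_-$ and the clean ``mass halves each hour'' heuristic fails; choosing the constants $\rho_h, \tau_h$ so that every inequality in (i)--(iii) stays consistent across that transition is the delicate part.
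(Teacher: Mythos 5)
Your overall structure—induction on synchronous hours, with three interlocking sub-claims per hour (hour propagation, pushing mass down to $-h$, mass reduction via cancels), proved within a constant-length synchronous hour and closed via a union bound—matches the paper's decomposition into \cref{lem:induction-t-count,lem:induction-upper-tail,lem:induction-total-mass}. Steps (i) and (iii) are in the spirit of the paper's arguments, modulo the exact constants.

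The gap is in step (ii). You treat the reduction of the count at exponents $> -h$ as a one-sided cancel process amenable to \cref{lem:one-sided-cancel}, but that lemma requires the target subpopulation $B$ to \emph{shrink monotonically}, one agent per meaningful interaction. Here that fails: a split at exponent $-i$ with $-i > -(h-1)$ turns one agent at exponent $-i$ into two agents at $-(i+1)$, and if $-(i+1) > -h$ both are still in $B$, so $|B|$ \emph{increases} by one. Only splits at exactly $-(h-1)$ strictly shrink $B$. You wave at this with ``the cascade needs $\le 2$ splits each,'' but that is not what makes the difficulty: it is that the monotonicity assumption of \cref{lem:one-sided-cancel} is simply violated, so its concentration bound cannot be invoked. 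The paper's fix is precisely the potential function $\phi_{(>-h)}(a) = 4^{-i+h-1}$ for an agent at exponent $-i$: a split at \emph{any} exponent $-i > -h$ multiplies that agent's potential by $1/2$ (since $2 \cdot 4^{-i-1} = \tfrac12 \cdot 4^{-i}$), so $\phi_{(>-h)}$ decays monotonically with splits irrespective of cascade. The paper then runs an Azuma supermartingale argument on $\ln\phi_{(>-h)}$ (\cref{lem:induction-upper-tail}), and this in turn requires \cref{lem:upper-levels-empty}—that by hour $h$ no biased agent sits above exponent $-h + q$ with $q = \lfloor \ln n / 3 \rfloor$—to cap the single-step change $|\Delta\Phi|$ at $O(n^{-0.53})$ and obtain the very-high-probability bound. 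Without this weighted potential and the accompanying cap on the maximum exponent, the concentration argument for bringing $\mu_{(>-h)}$ down to $0.001|\mains|2^{-h+1}$ within $\tfrac{41}{m}$ time does not go through.
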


Note that in the case of a tie, $l$ is undefined since we always have gap $g_i = 0$. Here the stronger inductive argument will hold for all hours and exponents. In the tie case, we will technically define $l = L+5$ so the stronger $h \leq l-5$ bounds apply to all hours.

We will prove the three sequential claims via three separate lemmas. The first argument of \cref{lem:induction-t-count}, where the clock brings a large fraction of $\unbiased$ agents up to hour $h$, will need parallel time $\frac{2}{c}$. The second argument of \cref{lem:induction-upper-tail}, where the $\unbiased$ agents reduce the mass above exponent $-h$ by split reactions, will need parallel time $\frac{41}{m}$. The third argument of \cref{lem:induction-total-mass}, where cancel reactions at exponent $-h$ reduce the total mass, will need parallel time $\frac{6}{m}$. Thus the total time we need in a synchronous hour is
\[
\frac{2}{c} + \frac{47}{m}
\leq
\frac{49}{c}
\leq
\frac{17}{c^2}
\leq
\frac{0.45\cdot 45 - 3.1}{c^2},
\]
where we use that $c < \frac{1}{3} < m$ by \cref{lem:phase-initialize}. Thus by \cref{thm:clock}, since we have constant $k = \kconstant$ minutes per hour, each synchronous hour is long enough with very high probability.

The argument proceeds by induction, with each lemma using the previous lemmas and the inductive hypotheses at previous hours. The base case comes from \cref{lem:discrete-averaging}, where we have that the initial gap $|g| < 0.025m$, so $l-5 \geq 0$, and the starting mass is at most $0.03|\mains|$. This mass bound gives the base case for \cref{lem:induction-total-mass}, whereas since $h=0$ is the minimum possible hour, the base cases for \cref{lem:induction-t-count} and \cref{lem:induction-upper-tail} are trivial.

\begin{lemma}
\label{lem:induction-t-count}
With very high probability the following holds.
For each hour $h = 0, \ldots, l$, during the whole synchronous hour $[\emph{\hourstart}_h, \emph{\hourend}_h]$, the count of $\unbiased$ agents $|\unbiased| \geq (0.9976 - 2\rho_{(h-1)})|\mains|$. Then during times $[\emph{\hourstart}_h + \frac{2}{c}, \emph{\hourend}_h]$, the count of $\unbiased$ agents at $\fieldTlevel = h$, $|\unbiased_h| \geq (0.97 - 2\rho_{(h-1)})|\mains|$.
\end{lemma}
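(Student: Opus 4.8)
The plan is to prove the two bounds together by induction on the hour $h$, using the inductive hypotheses (all three of \cref{lem:induction-t-count}, \cref{lem:induction-upper-tail}, \cref{lem:induction-total-mass}) at hours $0,\dots,h-1$. The base case $h=0$ is immediate: every $\rolemain$ agent enters \phaseMainAveraging\ at exponent $0$ with $\fieldTlevel=0$, so by \cref{lem:discrete-averaging} both counts are at least $0.97|\mains|$ from the start, with no pulling-up needed. All failure events are collected by a union bound over the polynomially many hours, preserving ``very high probability''.

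For the total $\unbiased$ count I would use the mass invariant. Split reactions preserve the total mass $\mu=\beta_++\beta_-$ and cancel reactions strictly decrease it, so $\mu$ is nonincreasing; hence for $t\in[\hourstart_h,\hourend_h]$ we have $\mu(t)\le\mu(\hourend_{h-1})\le\rho_{h-1}|\mains|2^{-(h-1)}$ by (the inductive form of) \cref{lem:induction-total-mass} at hour $h-1$, using that $\hourend_{h-1}\le\hourstart_h$ since the fraction of clock agents past hour $h-1$ is nondecreasing. By \cref{lem:level-lower-tail} at most $0.0024|\mains|$ biased agents have $\fieldlevel<-h$ during this interval, and every other biased agent carries mass $|\fieldbias|\ge2^{-h}$, so there are at most $\mu(t)/2^{-h}\le2\rho_{h-1}|\mains|$ of them; adding, the biased count is at most $(2\rho_{h-1}+0.0024)|\mains|$, and since biased and $\unbiased$ $\rolemain$ agents partition $\mains$, $|\unbiased|\ge(0.9976-2\rho_{h-1})|\mains|$. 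This half is routine.

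For the count at $\fieldTlevel=h$ I would decompose $|\unbiased|=|\unbiased_{<h}|+|\unbiased_h|+|\unbiased_{>h}|$; \cref{lem:mains-small-hour} gives $|\unbiased_{>h}|\le0.0012|\mains|$, so together with the previous paragraph it suffices to show $|\unbiased_{<h}(t)|\le 0.0264|\mains|$ for all $t\ge\hourstart_h+\frac{2}{c}$. The main drain on $\unbiased_{<h}$ is the clock-update reaction: during synchronous hour $h$ at least $0.9|\clocks|=0.9cn$ clock agents have hour $\ge h$, and an interaction between one of them and an $\unbiased$ agent of hour $<h$ removes the latter from $\unbiased_{<h}$, giving exactly the one-sided process of \cref{lem:one-sided-cancel} with parameter $a=0.9c$ --- absent new arrivals this empties $\unbiased_{<h}$ exponentially, below the required threshold within the time budget $\frac{2}{c}$ that \cref{thm:clock} (with $k=\kconstant$) makes available. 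The only way $\unbiased_{<h}$ is replenished is through cancel reactions at exponents $-j$ with $j<h$ (which create two $\unbiased$ agents at $\fieldTlevel=j<h$, one hour or more behind): these are dominated by cancels at exponent $-(h-1)$, whose rate is controlled by the biased count there (at most $\mu(t)2^{h-1}\le\rho_{h-1}|\mains|$, itself draining over the hour via those cancels and via split reactions), while cancels at smaller exponents involve only the few biased agents above exponent $-(h-1)$, whose count --- counting all agents produced as they split down --- is at most $2^{h-1}$ times the mass above exponent $-(h-1)$, hence at most about $0.002|\mains|$ by \cref{lem:induction-upper-tail} at hour $h-1$. I would combine the exponential drain with these bounded replenishment sources by a supermartingale/Azuma argument (\cref{thm:azuma}) on a potential such as $|\unbiased_{<h}|+2\,|\{\text{biased at exponent }-(h-1)\}|$, which rises only through the rare small-exponent cancels/splits and falls quickly whenever large (either the clock empties the $\unbiased$ part at a linear rate, or the near-quadratic rate of cancels/splits at exponent $-(h-1)$ empties the biased part). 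Two regimes must be checked separately: for $h\le l-5$ one uses the small-gap bound $g_{h-1}<0.025|\mains|$ from \cref{lem:discrete-averaging}, while for the last few hours the relative gap $g_{h-1}$ is large, which makes the minority at exponent $-(h-1)$ --- and hence the replenishment --- small.

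I expect this last step to be the crux. Bounding $|\unbiased_{<h}|$ requires simultaneously tracking three competing processes --- the clock pulling $\unbiased$ agents up to hour $h$, split reactions consuming them, and cancel reactions re-creating them behind the current hour --- and showing their net effect keeps the standing stock below $0.0264|\mains|$ at \emph{every} time in $[\hourstart_h+\frac{2}{c},\hourend_h]$, uniformly in $h$. This is exactly what pins down the choice of constants (the $0.9/0.001$ synchronous-hour thresholds, the $\frac{2}{c}$ budget, the minutes-per-hour constant $k=\kconstant$ so that \cref{thm:clock} makes each synchronous hour long enough for the drain to finish), and the tightest case is the hours nearest $l$, where $\rho_{h-1}$, and hence the permitted biased count, is largest.
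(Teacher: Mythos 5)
Your first bound is established exactly as in the paper: process of elimination using the nonincreasing total mass from \cref{lem:induction-total-mass} at hour $h-1$ together with the tail bound of \cref{lem:level-lower-tail}. For the second bound the two arguments share the same engine --- \cref{lem:one-sided-cancel} with $A$ the $\roleclock$ agents at $\fieldhour\ge h$ (fraction $a=0.9c$) pulling up the $\unbiased$ agents at $\fieldTlevel<h$ --- but you go one step further than the paper. The paper simply plugs in $b_1<0.7976m$, $b_2=0.0276m$, reads off $t<2/c$, and stops; it does not discuss the replenishment of $\unbiased_{<h}$ you flag, namely that a cancel reaction at exponent $-(h-1)$ drops two fresh $\unbiased$ agents at $\fieldTlevel=h-1<h$, and there can be up to about $\rho_{h-1}|\mains|$ such arrivals over the course of hour $h$. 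Your concern is well taken: \cref{lem:one-sided-cancel} is stated for a subpopulation $B$ that only shrinks, so the paper's direct invocation is loose. (A secondary nit in the paper's calculation: its $b_1=(0.9976-2\rho_{h-1})m$ is the \emph{lower} bound it just established for $|\unbiased_{\le h}|/n$, not an upper bound on the initial $|\unbiased_{<h}|/n$; the harmless fix is the trivial $b_1=m$, which still gives $\E[t]<2/c$.) Your supermartingale on $|\unbiased_{<h}|+2\,|\{\text{biased at }-(h-1)\}|$ is a sensible repair --- it is invariant under the troublesome cancel, decreases under clock pull-ups and under splits at $-(h-1)$, and increases only through agents strictly above exponent $-(h-1)$, whose count is at most $0.001|\mains|$ by \cref{lem:induction-upper-tail} at hour $h-1$. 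What you'd still have to nail down is that the potential's decay actually beats the $2/c$ budget in both regimes: for $h\le l-5$ the near-balanced counts at $-(h-1)$ make cancels drain the biased part quadratically, but near $h=l$ the minority is depleted, cancels stall, and the drain must instead come from split reactions with $\unbiased_h$ agents, so you will need the inductive bound on $|\unbiased_{h-1}|$ or $|\unbiased_h|$; you correctly call this the crux. In short, you take the paper's route and tighten a step the paper glosses over, rather than taking a different route.
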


\begin{proof}
We show the first claim, that during synchronous hour $h$, the count $|\unbiased| \geq (0.9976 - 2\rho_{(h-1)})|\mains|$ by process of elimination. By \cref{lem:induction-total-mass}, by time $\hourend_{h-1} < \hourstart_{h}$, the total mass $\mu\leq \rho_{(h-1)}|\mains| 2^{-h+1}$, which implies the total count $\{a:a.\fieldlevel \geq -h\} \leq 2\rho_{(h-1)}|\mains|$ in all future configurations, since total mass is nonincreasing. Then by \cref{lem:level-lower-tail}, the count $|\{\unbiased:\unbiased.\fieldTlevel > h\} \cup \{b:b.\fieldlevel < -h\}| \leq 0.0024|\mains|$ until time $\hourend_h$. 
This leaves $|\mains|-0.0024|\mains|-2\rho_{(h-1)}|\mains| = (0.9976 - 2\rho_{(h-1)})|\mains|$ agents who must be $\unbiased$ agents with $\fieldTlevel \geq h$ until time $\hourend_h$. 

By definition of time $\hourstart_h$, there are at least $0.9|\clocks|$ $\roleclock$ agents with $\fieldhour \geq h$ that will bring these $\unbiased$ agents up to $\fieldTlevel = h$. We need to bring all but $0.0276|\mains|$ of these agents up to $\fieldhour = h$ to achieve the desired bound $|\unbiased_h| \geq (0.97 - 2\rho_{(h-1)})|\mains|$.
We can apply \cref{lem:one-sided-cancel}, with $a = 0.9c$, $b_1 = (0.9976 - 2\rho_{(h-1)})m < (0.9976 - 2\cdot 0.1)m$, and $b_2 = 0.0276m$ to conclude this happens after parallel time $t$, where $t < (1+\epsilon)\E[t]$ with very high probability. The expected time
\[
\E[t] = \frac{\ln(b_1) - \ln(b_2)}{2a}
\leq \frac{\ln(0.7976m) - \ln(0.0276m)}{2\cdot 0.9c}
\leq \frac{1.89}{c}.
\]
Thus for small constant $\epsilon > 0$, we have $t < (1+\epsilon)\frac{1.89}{c} < \frac{2}{c}$ with very high probability.


\end{proof}

In order to reason about the total mass $\mu_{(>-h)}$ above exponent $-h$, we will define the potential function $\phi_{(>-h)}$, where for a biased agent $a$ with $a.\fieldlevel = -i \geq -h+1$ at time $t$, $\phi_{(>-h)}(a,t) = 4^{-i + h - 1}$. The global potential 
\[
\phi_{(>-h)}(t) = \sum_{a.\fieldlevel > -h}\phi(a,t)
\geq \sum_{a.\fieldlevel > -h}4^{-h+1+h-1} = |\{a:a.\fieldlevel > -h\}|.
\]
Since $\phi_{(>-h)}$ upper bounds the count above exponent $h$, we can bound the mass $\mu_{(>-h)}(t) \leq 2^{-h+1}\phi_{(>-h)}(t)$. Also note that unlike the mass, $\phi_{(>-h)}(t)$ strictly decreases via split reactions since $4^{-i} > 4^{-i-1} + 4^{-i-1}$. This will let us show that $\phi_{(>-h)}$ exponentially decays to $0$ when there are a constant fraction of $\unbiased$ agents to do these splits.

We first show that by hour $h$ exponents significantly far above $-h$ are empty. Letting $q = \lfloor\frac{\ln n}{3}\rfloor$, we will show the potential $\phi_{(>-h+q)}$ hits $0$ by hour $h$.

\begin{lemma}
\label{lem:upper-levels-empty}
For each hour $h = 0, \ldots, l$, by time $\hourstart_{h}+\frac{2}{c}$, the maximum level among all biased agents is at most $-h + q$ with high probability $1 - O(1/n^{12})$.
\end{lemma}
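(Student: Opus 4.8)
The plan is to prove this by induction on the hour $h$, following the strategy flagged in the text: show that the potential $\phi_{(>-h+q)}$ reaches $0$ by the stated time, which is exactly the claim that no biased agent has $\fieldlevel>-h+q$. The base case covers every $h$ with $-h+q\ge 0$ (in particular all $h\le q$), where the statement is vacuous because a biased agent always has $\fieldlevel\le 0\le -h+q$; so assume $h>q$ and that the lemma holds at all smaller hours. Since each agent's $\fieldlevel$ only decreases, the population-wide maximum $\fieldlevel$ over biased agents is non-increasing, so the hypothesis at hour $h-1$ gives that from $\hourstart_{h-1}+\frac{2}{c}$ onward---in particular at $\hourstart_h$---every biased agent has $\fieldlevel\le -(h-1)+q=-h+q+1$.

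\textbf{The decay mechanism.} I would work with the fact that $\phi_{(>-h+q)}$ is non-increasing in time: cancel reactions only remove agents, a split of an agent at level $-i$ with $i\le h-q-1$ replaces its weight $4^{(h-q-1)-i}$ by two children of weight $4^{(h-q-1)-(i+1)}$ each (or weight $0$ each when they land at the boundary level $-h+q$), a net change of $-\tfrac{1}{2}\cdot 4^{(h-q-1)-i}<0$, and splits of agents already at level $\le -h+q$ do not touch $\phi_{(>-h+q)}$. Every agent counted in $\phi_{(>-h+q)}$ is at a level $-i$ with $i\le h-q-1$, hence can split with any $\unbiased$ agent having $\fieldTlevel\ge i+1$, so in particular with any $\unbiased$ agent with $\fieldTlevel\ge h-q$; combining \cref{lem:induction-t-count} at hour $h-1$ (a constant fraction of $\mains$ agents are $\unbiased$ with $\fieldTlevel=h-1$ throughout synchronous hour $h-1$), monotonicity of $\fieldTlevel$, and the first claim of \cref{lem:induction-t-count} at hour $h$, there is a count $\Omega(n)$ of such partners throughout the relevant window (and since $h-1\ge h-q$ for $q\ge 1$, the hour-$(h-1)$ fuel already suffices, avoiding any circular dependence on the $\frac{2}{c}$ delay). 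Consequently each agent counted in $\phi_{(>-h+q)}$ splits at parallel rate $\Omega(1)$ and each such split drops $\phi_{(>-h+q)}$ by at least half that agent's weight, so $\E[\phi_{(>-h+q)}(t+\tfrac{1}{n})-\phi_{(>-h+q)}(t)]\le -\Omega(\tfrac{1}{n})\,\phi_{(>-h+q)}(t)$---an $\Omega(1)$-rate exponential decay per unit of parallel time. The key point is that this same estimate applies throughout roughly the last $q$ synchronous hours (each of length $\Theta(1)$ by \cref{thm:clock}, with the decay estimate for the appropriately shifted potentials furnished by the inductive hypotheses at those hours), so $\phi_{(>-h+q)}$ has a window of $\Theta(q)=\Theta(\log n)$ parallel time in which to fall from its largest possible value, which is polynomial in $n$ (at most $|\mains|$ agents, each of weight at most $4^{q}=n^{O(1)}$, by the hypothesis one hour earlier), down below $1$---and $\phi_{(>-h+q)}<1$ forces $\phi_{(>-h+q)}=0$ since $\phi_{(>-h+q)}\ge|\{a:a.\fieldlevel>-h+q\}|$.

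\textbf{Turning this into a probability bound.} I would convert the expected decay into a high-probability statement via the logarithm trick noted before \cref{thm:azuma}: while $\phi_{(>-h+q)}$ is large, $\ln\phi_{(>-h+q)}$ is a supermartingale with controllably small per-step increments, so Azuma's inequality (\cref{thm:azuma}) yields the requisite decay except with polynomially small probability; the last few agents at the boundary level are then cleared by treating their split times as a sum of geometric waiting times of rate $\Omega(1)$ and applying \cref{thm:janson} with a large constant $\lambda$. The coefficient $\tfrac{1}{3}$ in $q=\lfloor\tfrac{\ln n}{3}\rfloor$ is chosen precisely so that $4^{q}=n^{(\ln 4)/3}$ stays small enough, relative to the $\Theta(\log n)$-hour decay window, that the accumulated exponential decay beats the polynomial starting value with margin; a union bound over the $\le L=O(\log n)$ hours and the $O(1)$ sub-steps within each then gives the stated $1-O(1/n^{12})$.

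\textbf{Main obstacle.} The real difficulty is that the slack $q$ must be spent over $\Theta(q)$ synchronous hours rather than a single one, so the induction has to carry a quantitative decay estimate for $\phi_{(>-h+q)}$ across hours---not merely ``it is eventually $0$''---while (i) maintaining the $\Omega(n)$ supply of valid $\unbiased$ partners, which is in tension with the fact that splits consume $\unbiased$ agents, and (ii) correctly charging the inflow into each level from the levels above it against the geometric $4^{(\cdot)}$ weighting that makes $\phi_{(>-h+q)}$ monotone. A secondary technical point, which is also what limits the guarantee to a (tunable) polynomial rather than very-high probability, is that a single split can move an agent of weight up to $4^{q}=n^{\Theta(1)}$, so the martingale increments of $\ln\phi_{(>-h+q)}$ are not uniformly $O(1/n)$; the analysis must be split into a ``bulk decay'' regime (potential large, relative jumps small) handled by Azuma and a ``final clearing'' regime handled by the geometric-sum tail bound of \cref{thm:janson}.
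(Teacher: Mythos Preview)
Your outline is correct in spirit---the potential $\phi_{(>-h+q)}$, exponential decay driven by a $\Theta(n)$ supply of $\unbiased$ agents from \cref{lem:induction-t-count}, and a window of $\Theta(q)$ synchronous hours---but the paper's execution differs from yours in two ways, and one of them makes your ``main obstacle'' disappear entirely.

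\textbf{Probability bound.} The paper does \emph{not} use Azuma on $\ln\phi$, nor any two-regime split into bulk decay plus final clearing. Instead it simply tracks the expectation: from $\E[\phi(t+1/n)\mid\phi(t)]\le(1-\tfrac{0.77m}{n})\phi(t)$ one iterates over the $\Theta(qn)$ interactions in the window to get $\E[\phi(t_{\mathrm{end}})]\le n^{-13}\cdot O(n)=O(n^{-12})$, and then applies Markov's inequality, using that $\phi$ is nonnegative-integer-valued so $\Pr[\phi>0]=\Pr[\phi\ge 1]\le\E[\phi]$. This is a one-line first-moment argument; the bounded-differences worry you raise (a single agent carrying weight up to $4^{q}$) is simply irrelevant to it. Your Azuma/Janson route would work, but all of the ``secondary technical point'' you describe is self-inflicted.

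\textbf{Starting value.} The paper does not bootstrap from the inductive hypothesis of this lemma at hour $h-1$ (which, as you note, would only cap the starting potential at $O(n\cdot 4^{q})$). Instead it invokes \cref{lem:induction-upper-tail} at hour $h-q$: at time $\hourstart_{h-q}+\tfrac{2}{c}+\tfrac{41}{m}$ that lemma already gives $\phi_{(>-(h-q))}=\phi_{(>-h+q)}\le 0.001|\mains|=O(n)$. This tighter start is what makes the clean $n^{-13}\cdot n=O(n^{-12})$ arithmetic fall out; remember that \cref{lem:induction-t-count,lem:upper-levels-empty,lem:induction-upper-tail,lem:induction-total-mass} are proven by a joint induction on hours (\cref{thm:induction-on-hours}), so using \cref{lem:induction-upper-tail} at an earlier hour is legitimate.

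In short: your decomposition would go through, but the paper's is strictly simpler---a first-moment/Markov argument from an $O(n)$ starting value, with no martingale concentration and no regime split.
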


\begin{proof}
The statement is vacuous for hours $h < q$, so we must only consider hours $h \geq q$. We use the potential
 $\phi_{(>-h+q)}$, and start the argument at time $t_{\text{start}} = \hourstart_{(h-q)}+\frac{2}{c} + \frac{41}{m}$ where inductively by \cref{lem:induction-upper-tail} $\phi_{(>-h+q)}(t_{\text{start}}) \leq 0.001|\mains|$. We must show that by time $t_{\text{end}} = \hourstart_h+\frac{2}{c}$, we have $\phi_{(>-h+q)}(t_{\text{end}}) = 0$.
 
 By \cref{lem:induction-t-count}, the count of $\unbiased$ agents with $\fieldhour \geq h - q$, $|\unbiased_{(\geq h-q)}| \geq 0.77|\mains|$ during all synchronous hours after $\hourstart_{(h-q)} + \frac{2}{c}$ and up through synchronous hour $l-5 \geq h - 5$. Thus the interval we consider consists of at least $q - 5$ synchronous hours.
 By \cref{thm:clock}, each synchronous hour takes at least time 
$
[1.48(\kconstant-2)-2.2]/c^2 \geq 17/c^2 \geq 51/m
$, since $c < \frac{1}{3} < m$ by \cref{lem:phase-initialize}. Thus the total time for this argument is at least $t_{\text{end}} - t_{\text{start}} \geq \frac{51(q-5)}{m}$ time.

For each of the $\frac{51(q-5)}{m}n$ interactions in this interval, we consider the expected change to $\phi_{(>-h+q)}$.
 For each biased agent $a$ with $a.\fieldlevel = -i > -h + q$, a split reaction will change the potential by $\Delta\phi_{a} \leq 4^{h-1}(2\cdot 4^{-i-1} - 4^{-i}) = 4^{h-1}(-\frac{1}{2}4^{-i}) = -\frac{1}{2}\phi_a$. Then in each interaction at parallel time $t$, the expected change in the potential
\begin{align*}
\E[\phi_{(>-h+q)}(t+1/n) - \phi_{(>-h+q)}(t)] 
& \leq \sum_{a.\fieldlevel > (-h+q)}\Pr[\text{$a$ splits}]\cdot \Delta\phi_{a}
\\&\leq \sum_{a.\fieldlevel > (-h+q)} \frac{2\cdot 0.77 m}{n} \cdot -\frac{1}{2}\phi_{a}
= -\frac{0.77 m}{n}\phi_{(>-h+q)}(t).
\end{align*}
Then we have $\E[\phi_{(>-h+q)}(t+1/n)|\phi_{(>-h+q)}(t)] \leq (1-\frac{0.77 m}{n})\phi_{(>-h+q)}(t)$. 

We now recursively apply this bound to all $\frac{51(q-5)}{m}n$ interactions beginning at time $t_{\text{start}}$:

\begin{align*}
\E\qty[\phi_{(>-h+q)}(t_{\text{end}}) | \phi_{(>-h+q)}(t_{\text{start}})] 
&\leq \qty( 1-\frac{0.77 m}{n} )^{\frac{51(q-5)}{m}n} \phi_{(>-h+q)}(t_{\text{start}})
\\
\E\qty[\phi_{(>-h+q)}(t_{\text{end}})]
&\leq 
\exp(-0.77 \cdot 51 \ln n / 3)\cdot 
\exp(51 \cdot 5 \cdot 0.77)\cdot 
0.001|\mains|
\\
&\leq n^{-13}
\exp(197)\cdot 
0.001mn
\\
&= O\qty(n^{1-13})
= O\qty(n^{-12}).
\end{align*}
Finally, since $\phi_{(>-h+q)}$ takes nonnegative integer values, we can apply Markov's Inequality to conclude $\Pr[\phi_{(>-h+q)}(t_{\text{end}}) > 0] = O(1/n^{12})$.
\end{proof}

Now we can reason about the potential $\phi_{(>-h)}$ during hour $h$, which will decrease by a large constant factor. The upper bound on $\phi_{(>-h)}$ gives the a bound on the mass of the upper tail $\mu_{(>-h)}$.

\begin{lemma}
\label{lem:induction-upper-tail}
For each hour $h = 0, \ldots, l$, by time $\emph{\hourstart}_h + \frac{2}{c} + \frac{41}{m}$, the potential $\phi_{(>-h)} \leq 0.001|\mains|$ with very high probability. This implies the mass above exponent $-h$ is $\mu_{(>-h)} \leq 0.001 |\mains| 2^{-h+1}$.
\end{lemma}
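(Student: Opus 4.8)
The plan is to analyze the potential $\phi_{(>-h)}$ defined just above the lemma statement, for which we already know $\mu_{(>-h)}(t)\le 2^{-h+1}\phi_{(>-h)}(t)$, so the mass bound is an immediate corollary of the potential bound. I would (i) bound $\phi_{(>-h)}$ at the start of synchronous hour $h$ by a small constant times $|\mains|$ using the hour-$(h-1)$ inductive hypotheses; (ii) show that throughout synchronous hour $h$ the potential decays geometrically at rate $\Theta(m)$ per unit time, driven by split reactions fueled by the $\unbiased_h$ agents; and (iii) upgrade the resulting expectation bound to very high probability by applying \cref{thm:azuma} to $\Phi=\ln\phi_{(>-h)}$, exactly as outlined after \cref{thm:azuma}. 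The base case $h=0$ is vacuous since no agent ever has $\fieldlevel>0$.

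\emph{Starting value.} Let $t_{\mathrm{start}}=\hourstart_h+\tfrac2c$. Decompose $\phi_{(>-h)}(t_{\mathrm{start}})$ into the agents at level exactly $-(h-1)$, each contributing $4^{0}=1$, and the agents at level $\ge -(h-2)$, whose total contribution is exactly $4\,\phi_{(>-(h-1))}(t_{\mathrm{start}})$. Both $\mu$ and $\phi_{(>-(h-1))}$ are nonincreasing, and by \cref{thm:clock} (with $k=\kconstant$) we have $\hourstart_h\ge\hourend_{h-1}\ge\hourstart_{h-1}+\tfrac{41}{m}$, so at time $t_{\mathrm{start}}$ the hour-$(h-1)$ conclusions of \cref{lem:induction-total-mass} and \cref{lem:induction-upper-tail} still apply: the count at level $-(h-1)$ is at most $\mu(t_{\mathrm{start}})/2^{-(h-1)}\le\rho_{h-1}|\mains|$, and $\phi_{(>-(h-1))}(t_{\mathrm{start}})\le 0.001|\mains|$. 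Hence $\phi_{(>-h)}(t_{\mathrm{start}})\le(\rho_{h-1}+0.004)|\mains|$, which is at most $0.412|\mains|$ in the worst case $h=l$ and considerably smaller for $h<l$.

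\emph{Decay and concentration.} On $[\hourstart_h+\tfrac2c,\hourend_h]$, \cref{lem:induction-t-count} gives $|\unbiased_h|\ge\tau_h|\mains|$, and any such agent can fuel a split of any biased agent at level $-i>-h$ (since then $\fieldTlevel=h>i$). Thus a fixed biased agent above exponent $-h$ splits with an $\unbiased_h$ partner with per-interaction probability $\ge\tfrac{2\tau_h m}{n}$, and such a split decreases $\phi_{(>-h)}$ by $\tfrac12\cdot 4^{h-1-i}$ (the agent's quarter-weight descendant plus a fresh quarter-weight agent replace its unit contribution, unless the split lands at level $-h$, in which case the decrease is larger); every other reaction only decreases $\phi_{(>-h)}$. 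This yields $\E[\phi_{(>-h)}(t+\tfrac1n)\mid\mathcal F_t]\le(1-\tfrac{\tau_h m}{n})\phi_{(>-h)}(t)$, hence over the $\tfrac{41}{m}n$ interactions of the window $[t_{\mathrm{start}},\hourstart_h+\tfrac2c+\tfrac{41}{m}]$ (which sits inside synchronous hour $h$ by the length bound of \cref{thm:clock}) the expectation drops to $e^{-41\tau_h}(\rho_{h-1}+0.004)|\mains|<0.001|\mains|$, the constants being chosen so this is strict even in the tight case $h=l$ where $41\tau_l>\ln(0.412/0.001)$. To get very high probability, stop the process at $\sigma=\min\{j:\phi_{(>-h)}\le 0.001|\mains|\}$; by Jensen the drift of $\Phi=\ln\phi_{(>-h)}$ before $\sigma$ is $\le\ln(1-\tfrac{\tau_h m}{n})\le-\tfrac{\tau_h m}{n}$, so $\Phi_{j\wedge\sigma}+\tfrac{\tau_h m}{n}(j\wedge\sigma)$ is a supermartingale, and before $\sigma$ we have $\phi_{(>-h)}=\Theta(n)$ while \cref{lem:upper-levels-empty} forces every occupied level during hour $h$ to be $\ge -h+q$ with $q=\lfloor\tfrac{\ln n}{3}\rfloor$, so a single interaction moves $\phi_{(>-h)}$ by at most $O(4^q)=O(n^{0.47})$ and $\Phi$ by $O(n^{-0.53})$. \cref{thm:azuma} over these $\Theta(n)$ steps then shows $\Phi$ stays within a constant slack of its mean with probability $1-\exp(-\Omega(n^{\Omega(1)}))$, i.e. $\sigma$ occurs by time $t_{\mathrm{end}}$ with very high probability, which is the claim; the mass bound follows from $\mu_{(>-h)}\le 2^{-h+1}\phi_{(>-h)}$.

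\emph{Main obstacle.} The delicate step is the concentration argument. It relies on \cref{lem:upper-levels-empty} to exclude the scenario in which a few agents stall at high exponents carrying potential polynomial in $n$ --- this is precisely what keeps the Azuma increments $o(1)$ --- and it requires checking that the constants $\tfrac{41}{m}$, $\tau_h$, and $\rho_{h-1}$ leave a strictly positive margin in the worst hour $h=l$, since there $e^{-41\tau_l}\rho_{l-1}$ lies only marginally below $0.001$.
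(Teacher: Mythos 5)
Your proposal matches the paper's proof of this lemma essentially step by step: the same $(\rho_{h-1}+0.004)|\mains|\le 0.412|\mains|$ starting bound via the hour-$(h-1)$ inductive hypotheses, the same $\tau_h$-driven multiplicative decay by a factor $e^{-41\tau_h}<1/412$, the same use of \cref{lem:upper-levels-empty} to cap occupied levels at $-h+q$ and hence get $O(n^{-0.53})$ increments for $\Phi=\ln\phi_{(>-h)}$, and the same application of \cref{thm:azuma}. Your stopping-time-plus-Jensen framing is simply a tidier phrasing of the paper's ``$\phi_{(>-h)}\ge 0.001|\mains|$ during the interval'' together with the elementary bound $\ln(1+x)\le x$; the substance is identical.
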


\begin{proof}
By \cref{lem:induction-t-count}, after time $\hourstart_h + \frac{2}{c}$, we have a count $|\unbiased_h| \geq \tau_h|\mains|$, where the weakest bound is at hour $l$, where $\tau_l = 0.15$.

Inductively, we have $\phi_{(>-h+1)}(\hourstart_h) \leq 0.001|\mains|$ by time $\hourstart_{(h-1)} + \frac{2}{c} + \frac{41}{m}$. By \cref{lem:induction-total-mass}, by time $\hourend_{(h-1)}$, the total mass $\mu \leq \rho_{(h-1)}|\mains|2^{-h+1} \leq \rho_{(l-1)}|\mains|2^{-h+1}$. Thus there are at most $\rho_{(l-1)}|\mains| = 0.408|\mains|$ biased agents with exponent $-h+1$, which lets us bound the potential $\phi_{(>-h)}(\hourstart_h + \frac{2}{c}) \leq (0.408 + 4\cdot 0.001)|\mains| = 0.412|\mains|$. Thus we must drop the potential by the constant factor $412$.

To show that $\phi$ drops by a constant factor, we will use $\Phi(t) = \ln(\phi_{(>-h)}(t))$, which will be a supermartingale. If agent $a$ at with exponent $-i$ splits, with $\phi_a = 4^{-i+h-1}$, this changes the potential $\phi$ by $\Delta\phi_a = -\frac{1}{2}\phi_a$. The potential $\Phi$ then changes by
\[
\Delta\Phi_a = \ln(\phi_{(>-h)} + \Delta\phi_a) - \ln(\phi_{(>-h)}) = \ln(1 + \frac{-\frac{1}{2}\phi_a}{\phi_{(>-h)}}) \leq - \frac{\phi_a}{2\phi_{(>-h)}}.
\]
The expected change in $\Phi$ is then
\[
\E[\Delta\Phi] \leq \sum_{a.\fieldlevel > -h}\Pr[\text{$a$ splits}]\cdot \Delta\Phi_{a}
\leq \sum_{a.\fieldlevel > -h} \frac{2\tau_h m}{n} \cdot -\frac{\phi_{a}}{2\phi_{(>-h)}}
= -\frac{0.15 m}{n}.
\]

We define the supermartingale $(\Phi_j)_{j=0}^{\frac{41}{m}n}$, where $\Phi_j = \Phi(\hourstart_h + \frac{2}{c} + \frac{j}{n})$. Then the desired inequality is $\phi_{>(h+1)}(\hourstart_h+\frac{2}{c}+\frac{41}{m}) \leq 0.001|\mains| \leq \phi_{>(h+1)}(\hourstart_h+\frac{2}{c}) / 412$, so we need to show
$\Phi_{\frac{41}{m}n} - \Phi_0 \leq \ln(\frac{1}{412})$. The expected value 
\[
\E[\Phi_{\frac{41}{m}n} - \Phi_0] \leq -\frac{0.15m}{n}\cdot\frac{41}{m}n = -6.15 \leq \ln(\frac{1}{412}) - 0.12.
\]

To apply Azuma's Inequality, we will need a bound on the difference $|\Phi_{j+1}-\Phi_j|\leq \max\qty|\frac{\phi_a}{2\phi_{(>-h)}}|$. During the interval we consider, $\phi_{(>-h)} \geq 0.001|\mains|$, since after that the desired inequality will hold.
By \cref{lem:upper-levels-empty}, by time $\hourstart_h + \frac{2}{c}$, the maximum exponent in the population is at most $-h + q$, where $q = \floor{ \frac{\ln n}{3} }$, so $\phi_a \leq 4^{(-h+q)+h-1} = 4^{q-1}$. Using the fact that $4^q = e^{\ln 4 \ln n / 3} = O(n^{0.47})$, we can bound the largest change in $\Phi$ as
\[
|\Phi_{j+1}-\Phi_j| \leq \frac{4^{q-1}}{0.002|\mains|}
= O\qty(\frac{4^q}{n})=O(n^{0.47}/n)=O(1/n^{0.53}).
\]

Now by Azuma's Inequality (\cref{thm:azuma}) we have
\[
\Pr\big[ (\Phi_{\frac{41}{m}n} - \Phi_{0}) - \E[\Phi_{\frac{41}{m}n} - \Phi_0] \geq 0.12] \leq \exp(-\frac{0.12^2}{2\sum_{j=0}^{\frac{41}{m}n} (O(n^{-0.53}))^2}) = \exp(-\Theta(n^{0.06})).
\]
Thus $\phi_{(>-h)}$ will drop by at least the constant factor $412$ with very high probability, giving $\phi_{(>-h)}(\hourstart_h + \frac{2}{c} + \frac{41}{m}) \leq 0.001|\mains|$.
\end{proof}

\begin{lemma}
\label{lem:induction-total-mass}
For each hour $h = 0, \ldots, l$, by time $\emph{\hourstart}_h + \frac{2}{c} + \frac{47}{m} \leq \emph{\hourend}_h$, the total mass $\mu \leq \rho_h |\mains| 2^{-h}$ with very high probability. The constant $\rho_h = 0.1$ for all $h \leq l+5$, then the last few constants $\rho_{l-4} = 0.104$, $\rho_{l-3} = 0.13$, $\rho_{l-2} = 0.212$, $\rho_{l-1} = 0.408$, and $\rho_{l} = 0.808$.
\end{lemma}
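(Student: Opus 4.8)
The plan is to prove \cref{lem:induction-total-mass} as the third step of the per-hour induction underlying \cref{thm:induction-on-hours}, so I may use \cref{lem:induction-t-count} and \cref{lem:induction-upper-tail} at hour $h$, \cref{lem:level-lower-tail}, and the inductive hypothesis (this lemma at hour $h-1$). Write $m=|\mains|/n$, $c=|\clocks|/n$, and set $t_0=\hourstart_h+\tfrac{2}{c}+\tfrac{41}{m}$, the time by which \cref{lem:induction-upper-tail} gives $\mu_{(>-h)}(t_0)\le 0.001|\mains|2^{-h+1}$. The base case $h=0$ is immediate: by \cref{lem:discrete-averaging} the mass entering \phaseMainAveraging\ is at most $0.03|\mains|\le 0.1|\mains|=\rho_0|\mains|2^{0}$, and $\mu$ is nonincreasing.

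First I would show that at time $t_0$ essentially all surviving mass sits at exponent exactly $-h$. The inductive hypothesis gives $\mu(\hourend_{h-1})\le\rho_{h-1}|\mains|2^{-h+1}$, and monotonicity of $\mu$ then gives $\mu(t_0)\le 2\rho_{h-1}|\mains|2^{-h}$; peeling off the small upper tail (\cref{lem:induction-upper-tail}) and the small lower tail (\cref{lem:level-lower-tail}: the count below $-h$ is $\le 0.0024|\mains|$, so its mass is $\le 0.0012|\mains|2^{-h}$) shows the count $n_{-h}=a_h+b_h$ of biased agents at exponent $-h$ (majority count $a_h$, minority count $b_h$) is $\le 2\rho_{h-1}|\mains|$. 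The invariant $g=\sum_v v.\fieldbias$ forces $|(a_h-b_h)-g_h|\le(\mu_{(>-h)}+\mu_{(<-h)})2^{h}\le 0.0032|\mains|$, so the local imbalance $\Delta:=|a_h-b_h|$ is controlled by the relative gap $g_h=g\,2^{h}$: for $h\le l-5$ we have $g_h<0.025|\mains|$ (since $g_l<0.8|\mains|$) so the two sides are nearly balanced, while for the last five hours $g_h$ roughly doubles each hour ($g_{l-4}<0.05|\mains|,\dots,g_l<0.8|\mains|$), which is exactly what forces the weaker constants $\rho_{l-4},\dots,\rho_l$.

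Next I would run cancel reactions $+\tfrac{1}{2^h},-\tfrac{1}{2^h}\to 0,0$ at exponent $-h$, applying \cref{lem:cancel-reactions} to the majority-at-$(-h)$ and minority-at-$(-h)$ subpopulations (sizes $a_h,b_h$, differing by $\Delta$). After $d\cdot n$ such cancels the exponent-$(-h)$ count is $\Delta+2(b_h-dn)$, so, adding the small masses above and below $-h$ and a residual $\le 0.002|\mains|2^{-h}$ from the handful of agents split down into $-h$ during the window (itself bounded via \cref{lem:induction-upper-tail}), all totalling at most $0.006|\mains|2^{-h}$, the total mass is $\le\big(\Delta+2(b_h-dn)\big)2^{-h}+0.006|\mains|2^{-h}$. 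Choosing $dn$ as large as the time budget allows makes the right side $\le\rho_h|\mains|2^{-h}$ — feasible exactly because $\rho_h|\mains|$ exceeds $\Delta$ plus these residuals for every $h$ by the tuned values of $\rho_h$ — while keeping $b_h-dn$ a positive constant fraction of $n$, so the concentration in \cref{lem:cancel-reactions} (all fractions $\Theta(1)$) delivers the required cancels within $(1+\epsilon)$ times their expected time with failure probability $\exp(-\Theta(n))$. Evaluating that expected cancel time using the exact bimolecular rate $2(a_h/n)(b_h/n)$ (rather than a crude lower bound) shows it never exceeds $\tfrac{6}{m}$ over all worst cases $n_{-h}\le 2\rho_{h-1}|\mains|$, $\Delta\le g_h+0.0032|\mains|$; in the fully balanced (tie) case $\Delta=0$ one uses instead the $a_h\to b_h$ limit of the same expression, which is again $\le\tfrac{6}{m}$.

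Finally I would check the timing: the hour-$h$ argument uses at most $\tfrac{2}{c}+\tfrac{47}{m}$ parallel time, and $\tfrac{2}{c}+\tfrac{47}{m}\le\tfrac{49}{c}\le\tfrac{17}{c^2}\le\tfrac{0.45\cdot\kconstant-3.1}{c^2}\le\hourend_h-\hourstart_h$ by \cref{thm:clock} with $k=\kconstant$ and $c<\tfrac13<m$ from \cref{lem:phase-initialize}, so all of it occurs inside synchronous hour $h$; a union bound over the $O(\log n)$ hours preserves the very-high-probability guarantee. The main obstacle is the cancelling step: it runs in the nearly-balanced regime $a_h\approx b_h$, where the generic \cref{lem:cancel-reactions} bound degenerates (its denominator $2(a_h-b_h)/n$ can shrink to lower-order terms), so one must exploit the exact harmonic ($1/t$-type) decay of balanced cancelling and then verify that the tuned constants ($\rho_h$, the $41/m$ and $6/m$ budgets, $k=\kconstant$) actually close — these inequalities are tight, and they are precisely what pin down the explicit last-five-hours values of $\rho_h$. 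A secondary, purely bookkeeping, nuisance is tracking the small nonzero contributions (mass above and below $-h$, and agents freshly split into $-h$) so that their total stays strictly under $\rho_h|\mains|$.
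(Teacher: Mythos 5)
Your proposal takes essentially the same route as the paper: per-hour induction, pin most surviving mass to exponent exactly $-h$ using the inductive mass bound plus \cref{lem:induction-upper-tail} and \cref{lem:level-lower-tail}, then apply \cref{lem:cancel-reactions} to show $\Theta(m)$ cancels happen at $-h$ within $6/m$ time, with per-hour tuning of $\rho_h$ for $h\ge l-4$ as the relative gap $g_h$ doubles. Two differences are worth flagging. First, the paper works directly with \emph{mass} (each cancel removes exactly $2\cdot 2^{-h}$, so $dn$ cancels drop $\mu$ from $2\rho_{h-1}|\mains|2^{-h}$ to $\rho_h|\mains|2^{-h}$), which avoids the count-based accounting of $\Delta+2(b_h-dn)$ and the extra ``freshly split into $-h$'' residual you introduce --- splits conserve mass, so that term is not needed when you track $\mu$ rather than the count at $-h$. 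Second, and more substantively: you identify the ``nearly balanced'' case as the obstacle and propose to handle the $a_h\to b_h$ limit of the expression in \cref{lem:cancel-reactions}. The paper instead fixes the \emph{maximum} admissible gap ($\alpha=0.8\cdot 2^{h-l}$) as the worst case, justified by the observation that a larger gap shrinks the minority fraction $b$ and slows cancelling. With $a=(\rho_{h-1}+\alpha/2)m$ and $b=(\rho_{h-1}-\alpha/2-0.0032)m$, the difference $a-b\ge 0.0032m$ is a constant fraction for every $h$, so \cref{lem:cancel-reactions} applies with no limiting argument; the tie case is \emph{easier}, not harder, and need not be analyzed separately. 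Your worry about degeneracy is therefore not a gap in the statement, but it is a detour the paper's worst-case framing avoids. If you recast your argument to pick that worst case explicitly and set $d=(\rho_{h-1}-\rho_h/2)m$, you recover exactly the paper's table of $(\alpha, a, b, d, \E[t], \rho_h)$ values, and the $6/m$ and $47/m$ budgets close as you compute.
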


\begin{proof}
We start the argument at time $\hourstart_h + \frac{2}{c} + \frac{41}{m}$.
Let $A = \{a:a.\fieldopinion = +1, a.\fieldlevel = -h\}$ and  $B = \{b:b.\fieldopinion = -1, b.\fieldlevel = -h\}$. We will show that by time $\hourend_h$, large number of cancel reactions happen between the agents in $A$ and $B$ to reduce the total mass.
Inductively, by $\hourend_{h-1}$ we have total mass $\mu \leq \rho_{(h-1)} |\mains| 2^{-h+1}$. We assume in the worst case (since we need the total mass to be small) that we start with the maximum possible amount of mass $\mu = \rho_{(h-1)} |\mains| 2^{-h+1}$. We use the upper bound on the gap $g_h \leq \alpha|\mains|$, where $\alpha = 0.8\cdot 2^{h-l}$ and assume in the worst case (since larger gaps reduce the rate of cancelling reactions) the largest possible gap $\beta_+ - \beta_- = g_h2^{-h} = \alpha|\mains|2^{-h}$. This gives majority mass $\beta_+ = (\rho_{(h-1)} + \frac{\alpha}{2})|\mains|2^{-h}$ and minority mass $\beta_- = (\rho_{(h-1)} - \frac{\alpha}{2})|\mains|2^{-h}$. 

Since the minority is the limiting reactant in the cancelling reactions, we assume in the worst case the smallest possible minority count at exponent $-h$, where all mass outside of exponent $-h$ is the minority. Then the majority count at exponent $-h$ is $|A| = (\rho_{(h-1)} + \frac{\alpha}{2})|\mains|$. By \cref{lem:induction-upper-tail}, the mass above exponent $-h$ is $\mu_{(>-h)}(\hourstart_h + t_2) \leq 0.001 |\mains| 2^{-h+1}$. By \cref{lem:level-lower-tail}, the maximum count below exponent $-h$ is $0.0024|\mains|$, so the mass $\mu_{(<-h)} \leq 0.0024|\mains|2^{-h-1}$. This leaves a minority count at exponent $-h$ of $|B| = (\rho_{(h-1)} - \frac{\alpha}{2} - 0.002 - 0.0012)|\mains|$.

Now we will apply \cref{lem:cancel-reactions} with $a = (\rho_{(h-1)} + \frac{\alpha}{2})m$ and $b = (\rho_{(h-1)} - \frac{\alpha}{2} - 0.0032)m$. 
First we consider the cases where $h \leq l + 5$, where $\rho_h = \rho_{(h-1)} = 0.1$ There the bound on the gap $\alpha = 0.8\cdot 2^{(l-5)-l} = 0.025$. This gives $a = 0.1125m$ and $b = 0.0843m$. By \cref{lem:cancel-reactions}, the time $t$ to cancel a fraction $d = 0.05m$ from both $a$ and $b$ has mean
\[
\E[t] \sim \frac{\ln(b) - \ln(a) - \ln(b-d) + \ln(a - d)}{2(a-b)}
=
\frac{\ln(0.0843) - \ln(0.1125) - \ln(0.0343) + \ln(0.0625)}{2(0.0282m)}
<
\frac{5.53}{m},
\]
so $\E[t](1 + \epsilon) < \frac{6}{m}$ and $t < \frac{6}{m}$ with very high probability. Cancelling this fraction $d$ reduces the total mass by $2dn2^{-h} = 0.1|\mains|\cdot2^{-h} = \rho_{(h-1)}|\mains|2^{-h+1} - \rho_{h}|\mains|2^{-h}$. Then the total mass is at most $\rho_h|\mains|2^{-h}$ by time $\hourstart_h + \frac{2}{c} + \frac{47}{m}$.

For the remaining levels $h = l-4, l-3, l-2, l-1, l$, we will repeat the above argument and calculation, but now the bound will change so $\rho_h \neq \rho_{(h-1)}$. First our bound on the gap $\alpha$ will double as we move down each level. This causes the worst case fractions $a$ and $b$ to be spread further apart. Then $d$, the largest fraction which will cancel from both sides within $\frac{6}{m}$ time with high probability, will be smaller. This gives the new mass bound $\rho_h$, since from the equation $\rho_{(h-1)}|\mains|2^{-h+1} - \rho_{h}|\mains|2^{-h} = 2dn2^{-h}$, we have $d = (\rho_{(h-1)} - \frac{\rho_h}{2})m$ and $\rho_h = 2(\rho_{(h-1)}-\frac{d}{m})$. This relative total mass bound $\rho_h$ will be growing larger as $h$ decreases. This is to be expected, since we do in fact see the count of biased agents growing as they reach the final exponents before the count of $\unbiased$ agents runs out (see \cref{fig:exponent} and \cref{fig:opinion-constant}, where the value $-l = -19$).

The following table shows the constants used in the computations at each of the steps $h = l-5, l-4, \ldots, l$. The top row $h = l-5$ corresponds to the exact calculations above, then the following rows are the constants used in the same argument for lower levels.

\begin{tabular}{|c|c|c|c|c|c|c|}
    \hline
    $h = l - 5$ &
    $\alpha = 0.025$ &
    $a = 0.1125m$ &
    $b = 0.0843m$ &
    $d = 0.05m$ &
    $\E[t] < 5.53/m$ &
    $\rho_{(l-5)} = 0.1$ 
    \\
    \hline
    $h = l - 4$ &
    $\alpha = 0.05$ &
    $a = 0.125m$ &
    $b = 0.0718m$ &
    $d = 0.048m$ &
    $\E[t] < 5.83/m$ &
    $\rho_{(l-4)} = 0.104$ 
    \\
    \hline
    $h = l - 3$ &
    $\alpha = 0.1$ &
    $a = 0.154m$ &
    $b = 0.0508m$ &
    $d = 0.039m$ &
    $\E[t] < 5.66/m$ &
    $\rho_{(l-3)} = 0.13$ 
    \\
    \hline
    $h = l - 2$ &
    $\alpha = 0.2$ &
    $a = 0.23m$ &
    $b = 0.0268m$ &
    $d = 0.024m$ &
    $\E[t] < 5.29/m$ &
    $\rho_{(l-2)} = 0.212$ 
    \\
    \hline
    $h = l - 1$ &
    $\alpha = 0.4$ &
    $a = 0.412m$ &
    $b = 0.0088m$ &
    $d = 0.008m$ &
    $\E[t] < 2.95/m$ &
    $\rho_{(l-1)} = 0.408$ 
    \\
    \hline
    $h = l$ &
    $\alpha = 0.8$ &
    $a = 0.808m$ &
    $b = 0.0048m$ &
    $d = 0.004m$ &
    $\E[t] < 1.11/m$ &
    $\rho_{l} = 0.808$ 
    \\
    \hline
\end{tabular}

Note that for the last couple hours, the worst case for $b$ is very small, so the amount of cancelling that happens is negligible, and the relative mass bound essentially doubles. We will see later that the minority count does in fact sharply decrease, so it is accurate that the rate of cancellation drops to zero, and the count of biased majority agents is roughly doubling for these last couple rounds.
\end{proof}

\subsection{End of \phaseMainAveraging}

Now that we have proven \cref{thm:induction-on-hours}, we will finish analyzing separately the cases of an initial tie and an initial majority opinion.

In the case of a tie, the stronger bounds from \cref{thm:induction-on-hours} hold all through the final hour $h = L$. Thus at hour $L$ we still have a constant fraction $0.77|\mains|$ of $\unbiased$ agents. We now show that in the remaining $O(\log n)$ time in \phaseMainAveraging, while the $\roleclock$ agents with $\fieldhour = h$ decrement their counter in \cref{line:if-both-clocks-finished}, these $\unbiased$ agents can keep doing split reactions to bring all remaining biased agents down to exponent $-L$. This lets us now prove \cref{thm:phase-3-tie-result}, the main result of the section for the case of a tie.

\replicatedThmPhaseThreeTieResult*

\begin{proof}
We start by using \cref{thm:induction-on-hours}, where at hour $L$, the total mass $\mu(\hourend_L) \leq 0.1|\mains|2^{-L}$. Thus the count of biased agents is at most $0.1|\mains|$. We also have at least $0.77|\mains|$ $\unbiased$ agents with $\fieldTlevel = L$ (this count is actually slightly higher, as we could show almost all of the $0.9|\mains|$ $\unbiased$ agents reach $\fieldTlevel = L$ quickly by the same argument as \cref{lem:induction-t-count}, but this bound will suffice).

We use the potential $\phi_{(>-L)}$, where by \cref{lem:induction-upper-tail}, we have $\phi_{(>-L)}(\hourend_0) \leq 0.001|\mains|$. The goal is to now show that $\phi_{(>-L)} = 0$ within the $O(\log n)$ time it takes for any $\fieldcounter$ to hit $0$ and trigger the move to the next phase. This proof proceeds identically to the proof of \cref{lem:upper-levels-empty}, where we had shown that 
$\E[\phi_{(>-L)}(t+1/n)|\phi_{(>-L)}(t)] \leq (1-\frac{0.77 m}{n})\phi_{(>-L)}(t)$. We again recursively bound the expected value of $\phi_{(>-L)}$ after an additional $16\ln n$ time.
\begin{align*}
\E[\phi_{(>-L)}(\hourend_L + 16\ln n)|\phi_{(>-L)}(\hourend_0)] &\leq \qty(1-\frac{0.77 m}{n})^{16 n \ln n}\phi_{(>-L)}(\hourend_0)
\\
\E[\phi_{(>-L)}(\hourend_L + 16\ln n)]
&\leq \exp(-0.77\cdot 0.25 \cdot 16\ln n)\cdot 0.001|\mains|
\\
&\leq n^{-3.08}0.001mn
= O(1/n^{2}).
\end{align*}
Again we conclude by Markov's Inequality that $\Pr[\phi_{(>-L)}(\hourend_L + 16\ln n) > 0] \leq O(1/n^2)$. 

Finally we must argue that \phaseMainAveraging\ lasts at least until time $\hourend_L + 16\ln n$. We will bound the probability that a $\roleclock$ agent can decrement its $\fieldcounter$ down to $0$ before time $\hourend_L + 16\ln n$. Note the if statement on \cref{line:if-both-clocks-finished} will only decrement the counter when both $\roleclock$ agents have reached the final minute. Even if in the worst case an agent was at the final minute at the beginning of the phase, until time $\hourend_{L-1}$ the count of other $\roleclock$ agents with $\fieldminute = kL$ is at most $0.1p|\clocks| = 0.001|\clocks|$.
By \cref{thm:clock}, the time between consecutive hours is at most $\frac{3.86k}{c^2} < \frac{290}{c}$ using $k = \kconstant$ and the bound $c > 0.2$ with very high probability from \cref{lem:phase-initialize}.
Then the number of interactions until $\hourend_{L-1}$ is at most $\frac{290}{c}Ln$. In each interaction, the probability of the $\roleclock$ agent decrementing its counter is at most $2\cdot 0.001c \cdot \frac{1}{n}$, so the expected number of decrements $\E[X] \leq 290L\cdot 0.002 \leq 0.58(\log_2(n)) \leq 0.84\ln n $. Then by Chernoff bounds in \cref{thm:chernoff-bound-multiplicative}, with $\mu = 0.84\ln n, \delta = 5$, we have
\[
\Pr[X \geq 5.04\ln n] \leq \Pr[X \geq (1+\delta)\mu] \leq \exp(-\frac{\delta^2\mu}{2 + \delta}) = \exp(-3\ln n) = n^{-3}.
\]

Then after time $\hourend_{L-1}$, we assume that all $\roleclock$ agents have $\fieldminute = kL$. In this case, the probability a given $\roleclock$ agent decrements its counter is at most $\frac{2c}{n}$. Thus in $16n\ln n$ interactions, the expected number of decrements $\E[X] \leq 32c\ln n \leq 11\ln n$, using $c < \frac{1}{3}$ from \cref{lem:phase-initialize}. We again use \cref{thm:chernoff-bound-multiplicative}, with $\mu = 11\ln n, \delta =  0.9$, to conclude
\[
\Pr[X \geq 20.9\ln n] \leq \Pr[X \geq (1+\delta)\mu] \leq \exp(-\frac{\delta^2\mu}{2 + \delta}) \leq \exp(-3.07\ln n) \leq n^{-3}.
\]
Then by union bound, $\roleclock$ agents decrements their counter at most $5.04\ln n + 20.9\ln n < 26\ln n$ times with high probability $1 - O(1/n^2)$. In \phaseMainAveraging, we initialize $\fieldcounter \gets c_3\ln n$, so setting this counter constant $c_3 = 26$, we conclude that no agent moves to the next phase before time $\hourend_L + 16\ln n$ with high probability. This gives enough time to conclude that all biased agents have $\fieldlevel = -L$ by the end of \phaseMainAveraging.
\end{proof}

Now we consider the case where there is an initial majority, which we have assumed without loss of generality was $\A$. Here the argument of \cref{thm:induction-on-hours} stops working at exponent $-l$, because of the substantial relative gap $0.4|\mains| \leq g_l < 0.8|\mains|$. Next we will need to show that the count of $\unbiased$ gets brought to almost $0$ by split reactions with the majority. In order to show that the count of $\unbiased$ stays small however, we will need some way to bound future cancel reactions. To do this, we will now show that during the last few hours $l-4, \ldots, l$, the minority mass $\beta_-$ drops to a negligible amount. We will then use this tight bound to show the majority consumes most remaining $\unbiased$ agents, and finally that constraints on the majority mass $\beta_+$ will force the majority count to remain above $99\%$ at exponents 
$-l,-(l+1),-(l+2)$ 
for all reachable configurations in the rest of \phaseMainAveraging.

We first show inductively that the minority mass becomes negligible during the last few hours $l-5, \ldots, l$. We already showed a bound on the total mass in \cref{lem:induction-total-mass}, which used an upper bound on the gap. If we want the minority mass to be small, now the worst case is the smallest possible gap. Notice that just using the final mass bound $\mu < \rho_l|\mains|2^{-l} = 0.808|\mains|2^{-l}$ in the worst case with the smallest possible gap $g_l = 0.4$ at exponent $-l$, would imply $\beta^+ \approx 0.6|\mains|2^{-l}$ and $\beta^- \approx 0.2|\mains|2^{-l}$. To get a much tighter upper bound on minority mass, we will make a similar inductive argument to \cref{lem:induction-total-mass}, now using the lower bound for each gap $g_h$, to show enough minority mass cancels at exponent $-h$ during each hour $h$.

\begin{lemma}
\label{lem:induction-minority-mass}
For each hour $h = l-5,\ldots, l$, by time $\hourend_h$, the minority mass $\beta_- \leq \xi_h|\mains|2^{-h}$ with very high probability. The constants $\xi_{(l-5)} = 0.04375$, $\xi_{(l-4)} = 0.0375$, $\xi_{(l-3)} = 0.0267$, $\xi_{(l-2)} = 0.0145$, $\xi_{(l-1)} = 0.0056$, and $\xi_{l} = 0.004$.
\end{lemma}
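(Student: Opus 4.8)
The plan is an induction on $h$ that mirrors \cref{lem:induction-total-mass}, but tracks the minority mass $\beta_-$ instead of the total mass and uses the \emph{lower} bound $g_h \geq 0.4\cdot 2^{h-l}|\mains|$ on the relative gap, since for an upper bound on $\beta_-$ the worst case is a small gap. The base case $h=l-5$ needs no new work: conditioned on the conclusion of \cref{lem:induction-total-mass} (which holds with very high probability), by time $\hourend_{l-5}$ we have $\mu \leq \rho_{l-5}|\mains|2^{-(l-5)} = 0.1|\mains|2^{-(l-5)}$, and since $\beta_+-\beta_- = g = g_{l-5}2^{-(l-5)}$ with $g_{l-5}\geq 0.0125|\mains|$, we get $\beta_- = \tfrac12(\mu-g) \leq \tfrac12(0.1-0.0125)|\mains|2^{-(l-5)} = 0.04375|\mains|2^{-(l-5)} = \xi_{l-5}|\mains|2^{-(l-5)}$.

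For the inductive step, assume $\beta_-(\hourend_{h-1}) \leq \xi_{h-1}|\mains|2^{-(h-1)} = 2\xi_{h-1}|\mains|2^{-h}$. Split reactions preserve $\beta_-$ and cancel reactions only decrease it, and these are the only bias-changing reactions in \phaseMainAveraging, so $\beta_-$ is nonincreasing; hence $\beta_-(t) \leq 2\xi_{h-1}|\mains|2^{-h}$ throughout synchronous hour $h$. I would then repeat the cancel-reaction analysis of \cref{lem:induction-total-mass} on the cancel window of hour $h$ (which fits inside the hour by the time accounting of \cref{thm:induction-on-hours}): by the relevant time \cref{lem:induction-upper-tail} bounds the mass above exponent $-h$ by $0.001|\mains|2^{-h+1}$ and \cref{lem:level-lower-tail} bounds the count below exponent $-h$ by $0.0024|\mains|$, so the minority mass held outside exponent $-h$ is at most $0.0032|\mains|2^{-h}$; consequently the minority count at exponent $-h$ is $|B| \geq \beta_-/2^{-h} - 0.0032|\mains|$, while the majority count there satisfies $|A|-|B| \geq g_h - 0.0032|\mains| \geq (0.4\cdot 2^{h-l}-0.0032)|\mains| > 0$.

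Now I apply \cref{lem:cancel-reactions} to the cancel reactions at exponent $-h$, taking the genuinely worst (slowest) admissible configuration — $\beta_-$ at its maximum $2\xi_{h-1}|\mains|2^{-h}$, the gap at its lower bound, and the off-exponent mass placed so as to minimize the cancel rate — and choosing $d = \delta_h n$ with $\delta_h = 2\xi_{h-1}-\xi_h$, the fraction such that $\delta_h n$ cancels at exponent $-h$ drop $\beta_-$ by exactly $\delta_h|\mains|2^{-h}$. In each case $h\in\{l-4,\dots,l\}$ the mean completion time $\E[t] = \frac{\ln b-\ln a-\ln(b-\delta_h)+\ln(a-\delta_h)}{2(a-b)}$ is a constant below the cancel-window length (the concrete constants go in a table analogous to that in the proof of \cref{lem:induction-total-mass}), so by \cref{lem:cancel-reactions} the $\delta_h n$ cancels occur within budget with very high probability, giving $\beta_-(\hourend_h) \leq \xi_h|\mains|2^{-h}$. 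If instead the minority population at exponent $-h$ is exhausted before $\delta_h n$ cancels occur, then its mass at exponent $-h$ is $0$ and $\beta_- \leq 0.0032|\mains|2^{-h} < \xi_h|\mains|2^{-h}$, so the bound holds in that case too. A union bound over the five inductive steps keeps the very-high-probability guarantee.

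The main obstacle is not probabilistic but combinatorial bookkeeping: identifying the true worst-case configuration for the cancel rate, since the relevant quantities trade off against each other (a larger $\beta_-$ gives more minority mass to cancel but also a larger limiting reactant $|B|$ and hence faster cancelling; a smaller gap makes $\beta_-$ closer to $\mu/2$ but also makes $|A|$ and $|B|$ more balanced), and then checking that all six constants $\xi_h$ close consistently with the chosen $\delta_h$ and stay within the $O(1)$ cancel-window budget of \cref{thm:induction-on-hours}. One also has to be careful to invoke \cref{lem:cancel-reactions} only in its $d<b$ regime, which costs $O(1)$ time, and to treat the $d\to b$ (exhaustion) possibility by the separate argument above rather than through the $\Theta(\log n)$-time ``all cancels'' bound, which would not fit inside a synchronous hour.
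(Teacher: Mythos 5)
Your proposal is correct and takes essentially the same route as the paper: the base case computes $\beta_- = \tfrac12(\mu - g)$ directly from the total-mass bound of \cref{lem:induction-total-mass} and the lower bound on $g_{l-5}$, and the inductive step replays the cancel-reaction analysis of \cref{lem:induction-total-mass} at exponent $-h$ with the worst case now taken to be the \emph{smallest} admissible relative gap, using \cref{lem:induction-upper-tail} and \cref{lem:level-lower-tail} to confine the off-exponent minority mass. The paper likewise arrives at the same pairs $(a,b)$ as in \cref{lem:induction-total-mass} and hence reuses the same $\E[t]$ computations, defining $\xi_h = 2\xi_{h-1} - d/m$; your $\delta_h = 2\xi_{h-1}-\xi_h$ is the same quantity. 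Your explicit treatment of the exhaustion case (minority at exponent $-h$ running out before $d$ cancels occur) is a sensible addition the paper leaves implicit, and it keeps you safely in the $d<b$ regime of \cref{lem:cancel-reactions} so the time budget stays $O(1)$ per hour.
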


\begin{proof}
For the base case $h = l-5$, by \cref{lem:induction-total-mass}, the total mass $\mu(\hourend_{(l-5)}) \leq 0.1|\mains|2^{-l+5}$. The relative gap $0.0125|\mains| \leq g_{(l-5)} < 0.025|\mains|$, so now in the worst case (for minority mass being small), we assume the smallest gap $g_{(l-5)} = 0.0125|\mains|$ and largest total mass $\mu(\hourend_{(l-5)}) = \beta_+ + \beta_- =  0.1|\mains|2^{-l+5}$. Since the invariant gap $g = \beta_+ - \beta_- = g_{(l-5)}2^{-l+5}$, we have $\beta_+ = 0.05625|\mains|2^{-l+5}$ and $\beta_- = 0.04375|\mains|2^{-l+5}$, giving an upper bound constant $\xi_{(l-5)} = 0.04375$.

Now we outline the inductive step, for each level $h = l-4, \ldots, l$. We will show the constants for the first step $h = l-4$ here, and then list constants for the remaining steps in a table below. Let $A = \{a:a.\fieldopinion = +1, a.\fieldlevel = -h\}$ and  $B = \{b:b.\fieldopinion = -1, b.\fieldlevel = -h\}$.

We start the argument at time $\hourstart_h + \frac{2}{c} + \frac{41}{m}$, and with the previous bound $\beta_-(\hourend_{(h-1)}) = \xi_{(h-1)}|\mains|2^{-h+1} = 0.0875|\mains|2^{-h}$. We will then assume in the worst case the smallest possible gap $g_h = \alpha|\mains| = 0.4\cdot 2^{h-l}|\mains| = 0.025|\mains|$. This gives $\beta_+ = (2\xi_{(h-1)} + \alpha)|\mains|2^{-h}$. We then assume in the worst case that all mass allowed outside exponent $-h$ belongs to the minority (so doesn't reduce by cancelling), giving majority count $|A| = (2\xi_{(h-1)} + \alpha)|\mains|$. We also assume the maximum amount of mass outside exponent $-h$. By \cref{lem:level-lower-tail}, before time $\hourend_h$, the count below exponent $-h$ is at most $0.0024|\mains|$, so the mass is at most $0.0024|\mains|2^{-h-1}$. By \cref{lem:induction-upper-tail}, after time $\hourstart_h + \frac{2}{c} + \frac{41}{m}$, the mass above exponent $-h$ is $\mu_{(>-h)} \leq 0.001\cdot 2^{-h+1}$. This leftover remaining mass at exponent $-h$ gives minority count $|B| = (2\xi_{(h-1)} - 0.0012 - 0.002)|\mains|$.

Now we will apply \cref{lem:cancel-reactions} with fractions $a = (2\xi_{(h-1)} + \alpha)m = 0.1125m$ and $b = (2\xi_{(h-1)} - 0.0036)m = 0.0843m$. These match the first fractions $a,b$ used in the proof of \cref{lem:induction-total-mass}, and again we use that the time $t$ to cancel a fraction $d = 0.05m$ from both $a$ and $b$ has mean
\[
\E[t] \sim \frac{\ln(b) - \ln(a) - \ln(b-d) + \ln(a - d)}{2(a-b)}
=
\frac{\ln(0.0843) - \ln(0.1125) - \ln(0.0343) + \ln(0.0625)}{2(0.0282m)}
<
\frac{5.53}{m},
\]
so $\E[t](1 + \epsilon) < \frac{6}{m}$ and $t < \frac{6}{m}$ with very high probability. Cancelling this fraction $d$ reduces the minority mass by $dn2^{-h} = 0.05|\mains|\cdot2^{-h}$, giving a new minority mass $\beta_- = (2\xi_{(h-1)} - \frac{d}{m})|\mains|2^{-h} = 0.0375|\mains|2^{-h}$. Thus for $\xi_{h} = 0.0375$, the minority mass is at most $\xi_h|\mains|2^{-h}$ by time $\hourstart_h + \frac{2}{c} + \frac{47}{m}$.

For the remaining levels $h = l-4, l-3, l-2, l-1, l$, we will repeat the above argument and calculation. 
The following table shows the constants used in the computations at each of the steps $h = l-4, l-3, \ldots, l$. The top row $h = l-4$ corresponds to the exact calculations above, then the following rows are the constants used in the same argument for lower levels.

\begin{tabular}{|c|c|c|c|c|c|c|}
    \hline
    $h = l - 4$ &
    $\alpha = 0.025$ &
    $a = 0.1125m$ &
    $b = 0.0843m$ &
    $d = 0.05m$ &
    $\E[t] < 5.53/m$ &
    $\xi_{(l-4)} = 0.0375$ 
    \\
    \hline
    $h = l - 3$ &
    $\alpha = 0.05$ &
    $a = 0.125m$ &
    $b = 0.0718m$ &
    $d = 0.048m$ &
    $\E[t] < 5.83/m$ &
    $\xi_{(l-3)} = 0.0267$ 
    \\
    \hline
    $h = l - 2$ &
    $\alpha = 0.1$ &
    $a = 0.154m$ &
    $b = 0.0508m$ &
    $d = 0.039m$ &
    $\E[t] < 5.66/m$ &
    $\xi_{(l-2)} = 0.0145$ 
    \\
    \hline
    $h = l - 1$ &
    $\alpha = 0.2$ &
    $a = 0.23m$ &
    $b = 0.0268m$ &
    $d = 0.024m$ &
    $\E[t] < 5.29/m$ &
    $\xi_{(l-1)} = 0.0056$ 
    \\
    \hline
    $h = l$ &
    $\alpha = 0.4$ &
    $a = 0.412m$ &
    $b = 0.0088m$ &
    $d = 0.008m$ &
    $\E[t] < 2.95/m$ &
    $\xi_{l} = 0.004$ 
    \\
    \hline
\end{tabular}

 Note that the structure of the proof yields the same sequence of values $a, b$ as the proof of \cref{lem:induction-total-mass}, so we have repeated the exact same applications of \cref{lem:cancel-reactions}. This time, however, we are using the cancelled fraction $d$ to show the minority mass becomes very small.
\end{proof}

Now \cref{lem:induction-minority-mass} gives that minority mass $\beta_- \leq 0.004|\mains|2^{-l}$ by time $\hourend_l$. The rest of the argument will not try to reason about probabilistic guarantees on what the distribution looks like. Instead, we will make claims purely about reachability, and show that we get to a configuration where invariants force the count of majority agents to remain high in any reachable configuration using the transitions of \phaseMainAveraging.

\begin{lemma}
\label{lem:large-majority-count}
By the time $\hourend_{(l+2)}$, the count of majority agents with $\fieldlevel \in \{-l,-(l+1),-(l+2)\}$ is at least $0.96|\mains|$, with very high probability. Then in all reachable configurations where all agents are still in $\phaseMainAveraging$, the count of majority agents with $\fieldlevel \in \{-l,-(l+1),-(l+2)\}$ is at least $0.92|\mains|$.
\end{lemma}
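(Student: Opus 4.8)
The plan is to leverage the ``clean'' configuration guaranteed at the end of synchronous hour $l$, promote the relevant bounds to permanent (monotone) invariants, and then show that the two extra synchronous hours $l+1$ and $l+2$ suffice for the clock to drive essentially all remaining $\unbiased$ agents to be consumed by split reactions; a purely combinatorial count at that point gives the claim.

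\emph{State at $\hourend_l$ and permanent invariants.} By \cref{thm:induction-on-hours}, \cref{lem:induction-minority-mass}, \cref{lem:induction-upper-tail}, and \cref{lem:level-lower-tail} (all at $h=l$), at time $\hourend_l$ we have, with very high probability: total mass $\mu\le\rho_l|\mains|2^{-l}=0.808|\mains|2^{-l}$, minority mass $\beta_-\le\xi_l|\mains|2^{-l}=0.004|\mains|2^{-l}$, the count of biased agents above exponent $-l$ is at most $0.001|\mains|$, and the count of biased agents below exponent $-l$ is at most $0.0024|\mains|$. Since $g_l=g2^l\in[0.4,0.8)|\mains|$, the majority mass is $\beta_+=g+\beta_-\ge 0.4|\mains|2^{-l}$. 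All of these except the last are preserved forever in \phaseMainAveraging: $\mu$ and $\beta_-$ are nonincreasing, $\beta_+\ge g$ is the mass invariant, and $\phi_{(>-l)}$ (hence the count above exponent $-l$) is nonincreasing; the below-exponent bound, via \cref{lem:mains-small-hour} and \cref{lem:level-lower-tail} applied at $h=l+1,l+2$ (those lemmas hold at every hour), keeps the count below exponent $-(l+2)$ at most $0.0024|\mains|$ up to time $\hourend_{l+2}$. Combining the count-above, count-below, and majority-mass bounds with the trivial bound (majority count)$\le|\mains|$ and the identity $\beta_+=\sum_{\text{exponents}}(\text{count})\cdot(\text{bias})$ yields the key structural fact: \emph{in every reachable configuration through $\hourend_{l+2}$, at least a fixed constant fraction $\eta|\mains|$ (with, e.g., $\eta=0.14$) of agents are majority agents at exponent $-l$ or $-(l+1)$}, since otherwise the majority mass $\ge 0.4|\mains|2^{-l}$ would have to be carried almost entirely at exponents $\le-(l+2)$, forcing a count exceeding $|\mains|$.

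\emph{Consumption of $\unbiased$ agents (the main obstacle).} From $\hourstart_{l+2}$, \cref{thm:clock} guarantees at least $0.9|\clocks|$ clock agents at hour $\ge l+2$ (and that the synchronous hour $[\hourstart_{l+2},\hourend_{l+2}]$ is nonempty of length $\Omega(1)$), so by \cref{lem:one-sided-cancel} all but a tiny constant fraction of the $\unbiased$ agents reach $\fieldTlevel=l+2$ within $\frac{2}{c}$ parallel time, whence they can fuel splits of any biased agent at exponent $-l$ or $-(l+1)$. By the structural fact there are always $\ge\eta|\mains|$ majority agents at those exponents, so each interaction decreases the $\unbiased$ count with probability $\Omega(m)\cdot(\#\unbiased)/n$, net of the $O(1/n)$-per-cancel increases, whose total is bounded by the number of cancel reactions and hence by the small minority count. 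Taking the logarithm of the $\unbiased$ count (restricted to $\fieldTlevel=l+2$) as a supermartingale with $O(1/n)$ bounded differences and applying \cref{thm:azuma} exactly as in the proofs of \cref{lem:induction-upper-tail} and \cref{lem:upper-levels-empty} --- using that the decay time $[\hourstart_{l+2}+\tfrac{2}{c},\hourend_{l+2}]$ is $\Omega(1)$ and the rate is $\Omega(m)$ (with $c<m$ by \cref{lem:phase-initialize}), which is ample --- we conclude that with very high probability the $\unbiased$ count is driven below $0.018|\mains|$ (indeed essentially to $0$, as in \cref{lem:upper-levels-empty}) by time $\hourend_{l+2}$. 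The care needed here is that a split consumes an $\unbiased$ agent only if its $\fieldTlevel$ is large enough, so the clock catch-up bound must be interleaved with the decay, and the $\unbiased$ agents regenerated (at low $\fieldTlevel$) by cancel reactions must be swept up again; the global cancellation budget keeps this regeneration negligible. I expect this to be the delicate step --- its heart is the ``no stalling'' argument, that splittable majority agents are always present while $\unbiased$ agents remain non-negligible --- whereas the remaining steps are bookkeeping of monotone quantities plus a routine Azuma concentration.

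\emph{Counting and reachability.} At time $\hourend_{l+2}$ every $\rolemain$ agent is a majority agent at an exponent in $\{-l,-(l+1),-(l+2)\}$, a minority agent, a biased agent outside this window, or an $\unbiased$ agent; the minority count is at most $0.016|\mains|$ at exponents $\ge-(l+2)$ (from $\beta_-\le0.004|\mains|2^{-l}$) plus at most $0.0024|\mains|$ below, the out-of-window biased count is at most $0.001|\mains|+0.0024|\mains|$, and the $\unbiased$ count is at most $0.018|\mains|$, and these sum to less than $0.04|\mains|$, so the majority-in-window count is at least $0.96|\mains|$. For the reachability statement, note that in any later configuration still in \phaseMainAveraging\ the majority-in-window count decreases only when a cancel reaction destroys a majority agent or a split reaction moves a window majority agent (necessarily at exponent $-(l+2)$) out of the window; both are globally bounded --- cancels by the remaining minority count, window-exiting splits by the remaining $\unbiased$ agents together with those regenerated by the remaining cancels --- so the total possible decrease stays below $0.04|\mains|$ and the count remains at least $0.92|\mains|$ in all reachable configurations of \phaseMainAveraging.
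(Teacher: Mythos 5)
Your overall decomposition tracks the paper's proof closely: establish what is known at the end of synchronous hour $l$, observe that a constant fraction of majority agents must sit at exponents $-l$ or $-(l+1)$ by mass conservation, drive the $\unbiased$ count to near zero during hours $l+1$ and $l+2$, then count. For the intermediate step you substitute an Azuma supermartingale on $\log(\#\unbiased)$ where the paper simply applies \cref{lem:one-sided-cancel} a second time (with $a = 0.19m$, $b_1 = 0.5676m$, $b_2 = 0.001m$). Both should work, but the paper's route is simpler and avoids the regeneration bookkeeping you correctly flag as delicate; since the one-sided cancel lemma already tolerates a nonzero floor $b_2$, it absorbs the regenerated $\unbiased$ agents without a separate argument.

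The genuine gap is in the reachability claim. Your budget argument bounds the future decrease of the majority-in-window count by (i) the remaining minority count and (ii) the remaining plus regenerated $\unbiased$ agents. But with the constants in play the budget does not close: the minority count at $\hourend_{l+2}$ is about $0.018|\mains|$, the initial $\unbiased$ count you derive is about $0.018|\mains|$, and each cancel regenerates two $\unbiased$ agents, giving a window-exit budget near $0.018 + 0.054 = 0.072|\mains| > 0.04|\mains|$. Worse, after $\hourend_{l+2}$ the clock keeps advancing, so $\roleclock$ agents can push freshly regenerated $\unbiased$ agents up to $\fieldTlevel > l+2$ and let them fuel window-exiting splits; your budget has no handle on that. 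The paper sidesteps all of this with a mass-conservation invariant rather than a dynamic budget: if $0.04|\mains|$ majority agents left the window by splitting, they each carried mass at least $2^{-(l+2)}$, and since any agent below $-(l+2)$ holds mass at most $2^{-(l+3)}$, at least $0.08|\mains|$ agents are needed below $-(l+2)$ to account for that mass; together with the $0.92|\mains|$ still in the window this already totals $|\mains|$, so no further departures are possible. That static argument is immune to the order, number, or regeneration of reactions, and it is the missing idea here. Your claim that a budget argument ``stays below $0.04|\mains|$'' is not substantiated and, as stated, is false.
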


\begin{proof}
We argue by process of elimination. First we apply \cref{lem:level-lower-tail} to conclude that until time $\hourend_{(l+2)}$, the count $|\{\unbiased:\unbiased.\fieldTlevel > l+2\} \cup \{b:b.\fieldlevel < -(l+2)\}| \leq 0.0024|\mains|$ with very high probability.

Now \cref{lem:induction-minority-mass} gives that minority mass $\beta_- \leq 0.004|\mains|2^{-l}$ after time $\hourend_l$. Then we can bound the maximum minority count at exponent $-(l+2)$ and above by $0.016|\mains|$. In addition, these minority agents could eliminate more majority agents by cancel reactions. The count of agents they could cancel with is at most $0.016|\mains|$.

This leaves at least $|\mains| - 0.0024|\mains| - 0.016|\mains - |0.016|\mains| = 0.9656|\mains|$ agents that are either majority or $\unbiased$ agents at time $\hourend_l$. We will next show that in hour $l+2$, most of any remaining $\unbiased$ agents set $\fieldhour = l+2$ and then are consumed in split reactions with majority agents with $\fieldlevel > -(l+2)$.

The majority mass $\beta_+ \geq g = g_l 2^{-l} \geq 0.4|\mains|2^{-l}$. By \cref{lem:induction-upper-tail}, at most $0.001|\mains|2^{-l+1} = 0.002|\mains|2^{-l}$ of $\beta_+$ can come from exponents above $-l$. Thus we need mass $0.398|\mains|2^{-l}$ from majority agents at exponent $-l$ and below. Note that if the count $|\{a:a.\fieldopinion = +1, a.\fieldlevel \in \{-l,-(l+1)\}\}|\leq 0.19|\mains|$, then the maximum majority mass could achieve would be $0.81|\mains|$ with $\fieldlevel = -l-2$ and $0.19|\mains|$ with $\fieldlevel = -l$, which gives $(\frac{0.81}{4}+0.19)|\mains|2^{-l} < 0.398|\mains|2^{-l}$.
This implies we must have a count of at least $0.19|\mains|$ of majority agents at exponents $-(l+1)$ and $-l$ in all reachable configurations.

Next we show that all but at most $0.0036|\mains|$ of these $\unbiased$ agents gets brought up to $\fieldhour = l+2$. We start at time $\hourstart_{l+2}$, when the count $|\unbiased| \leq (0.9656 - 0.398)\mains = 0.5676|\mains|$, and the count of clock agents at $\fieldhour = l+2$ is at least $0.9|\clocks|$. Thus we can apply \cref{lem:one-sided-cancel}, with $a = 0.9c$, $b_1 = 0.5676m$, $b_2 = 0.0056m$ to conclude that at most $0.0056|\mains|$ of $\unbiased$ agents are left at the wrong hour after parallel time $t_1 < (1+\epsilon)\E[t_1]$, where the expected time
\[
\E[t_1] = \frac{\ln(b_1) - \ln(b_2)}{2a}
= \frac{\ln(0.5676m) - \ln(0.0036m)}{1.8c}
\leq \frac{2.82}{c}.
\]
Thus with very high probability $t_1 < \frac{3}{c}$.

Next the at least $0.19|\mains|$ majority agents at exponents $-(l+1)$ and $-l$ will eliminate these $\unbiased$ agents by split reactions. We show that at most $0.001|\mains|$ of $\unbiased_h$ agents are left. We again apply \cref{lem:one-sided-cancel}, with $a = 0.19m$, $b_1 = 0.5676m$, $b_2 = 0.01m$ to conclude this takes at most parallel time $t_2 < (1 + \epsilon)\E[t_2]$, where
\[
\E[t_2] = \frac{\ln(b_1) - \ln(b_2)}{2a}
= \frac{\ln(0.5676m) - \ln(0.001m)}{0.38m}
\leq \frac{16.69}{m}.
\]
Thus with very high probability $t_2 < \frac{17}{m}$.

Account for these counts $0.0032|\mains|$ and $0.001|\mains|$ of leftover $\unbiased$ agents, along with the maximal count $0.001|\mains|$ of biased agents above exponent $-l$, we are left with $0.9656|\mains| - 0.0032|\mains| - 0.001|\mains| - 0.001|\mains| = 0.96|\mains|$ agents that must be majority agents with $\fieldlevel \in \{-l,-(l+1),-(l+2)\}$. 

Now we argue that no reachable configurations can bring this count below $0.92|\mains|$. We have already accounted for the maximum number of minority agents that can do cancel reactions at exponents $-l,-(l+1),-(l+2)$. Thus we only have to argue that no amount of split reactions can bring this count down by $0.04|\mains|$. Note that reducing the count by $0.04|\mains|$ will reduce the mass at these exponents by at least $0.04|\mains|2^{-(l+2)}$. It will take at least $0.08|\mains|$ agents below exponent $-l-2$ to account for the same mass, for a total of $0.96|\mains| - 0.04|\mains| + 0.08|\mains| = |\mains|$. Thus it is not possible to move more than $0.04|\mains|$ majority agents below exponent $-(l+2)$ by split reactions, because there are not enough $\rolemain$ agents to account for the mass.
\end{proof}

We can now use the results of \cref{lem:large-majority-count}, \cref{lem:induction-upper-tail}, and \cref{lem:induction-minority-mass}, including the high probability requirements from all previous lemmas, to prove \cref{thm:phase-3-majority-result}, the main result for \phaseMainAveraging\ in the non-tie case:

\replicatedThmPhaseThreeMajorityResult*

\section{Analysis of final phases}
\label{sec:analysis-final-phases}

\subsection{$\rolereserve$ agent Phases \ref{phase:reserve-sample} and \ref{phase:reserve-split}}

We now consider the behavior of the $\rolereserve$ agents in \phaseReserveSample\ and \phaseReserveSplit. We first show that with high probability they are also able to set their $\fieldsample$ field during \phaseReserveSample.

\begin{lemma}
\label{lem:reserve-sample}
By the end of \phaseReserveSample, all $\rolereserve$ agents have $\fieldsample \neq \bot$, with high probability $1-O(1/n^2)$.
\end{lemma}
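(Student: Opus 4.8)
The plan is as follows. A $\rolereserve$ agent ends \phaseReserveSample\ with $\fieldsample = \bot$ precisely when it never interacts with a biased $\rolemain$ agent before the phase ends. So it suffices to establish, each with probability $1-O(1/n^2)$, that (i) within $C\ln n$ parallel time every $\rolereserve$ agent has met a biased $\rolemain$ agent, for a suitable constant $C$; and (ii) the phase lasts at least $C\ln n$ parallel time.

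For (i), the key point is that the number of biased $\rolemain$ agents is $\Omega(n)$ throughout \phaseReserveSample: by \cref{thm:phase-3-majority-result} it is at least $0.92|\mains|$ at the end of \phaseMainAveraging\ (we condition on the high-probability event of that theorem, folding its $O(1/n^2)$ failure probability into the error budget; note also that in the tie case we never reach this phase), by \cref{lem:phase-initialize} $|\mains|\geq n/3$, and no transition of \phaseReserveSample\ alters the $\fieldopinion$ or $\fieldlevel$ of a $\rolemain$ agent (only $\rolereserve$ and $\roleclock$ agents change state). Hence for a fixed $\rolereserve$ agent $r$, in each interaction the probability that $r$ meets a biased $\rolemain$ agent is at least $\frac{2}{n}\cdot\frac{0.92|\mains|}{n-1}\geq\frac{\gamma}{n}$ for a constant $\gamma>0$, so the probability that $r$ has not done so after $Cn\ln n$ interactions is at most $(1-\gamma/n)^{Cn\ln n}\leq n^{-\gamma C}$. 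Taking $C$ large enough that $\gamma C\geq 3$ and a union bound over the at most $n$ $\rolereserve$ agents gives (i).

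For (ii), we invoke the standard timing argument for a $\roleclock$-counter phase: a fixed $\roleclock$ agent decrements $\fieldcounter$ only when chosen for an interaction, which occurs with probability $2/n$, so the number of interactions before it has decremented $c_5\ln n$ times is a sum of $c_5\ln n$ geometrics with common success probability $2/n$ and mean $\frac{c_5\ln n}{2}n$; by the lower-tail bound of \cref{thm:janson} this is at least $\frac{c_5\ln n}{4}n$ interactions with probability $1-n^{-3}$ once $c_5$ is a large enough constant, and a union bound over the at most $n$ $\roleclock$ agents shows that with probability $1-O(1/n^2)$ no $\roleclock$ agent reaches $\fieldcounter=0$ before parallel time $\frac{c_5\ln n}{4}$. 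Choosing $c_5\geq 4C$ makes this at least $C\ln n$. Combining (i) and (ii) by a union bound completes the proof. The only real care needed is fixing the constants $C$ and $c_5$ and checking that the three sources of failure --- \cref{thm:phase-3-majority-result}, the per-agent geometric tails, and the phase-length tails --- each contribute $O(1/n^2)$; there is no substantive obstacle beyond this coupon-collector-style estimate.
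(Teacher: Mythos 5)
Your proof is correct and takes essentially the same approach as the paper. The paper's proof invokes \cref{lem:one-sided-cancel} (the packaged ``one-sided cancel'' lemma, which analyzes the aggregate process as a sum of geometrics with success probabilities $\sim 2ia/n$ for $i$ remaining unsampled agents) to get the sharper bound $t \leq \frac{5\ln n}{2\cdot 0.92m}$, whereas you do a per-agent geometric tail estimate plus a union bound over the $\rolereserve$ agents; these are two standard ways of packaging the same coupon-collector estimate, and both yield $O(\log n)$ time with failure $O(1/n^2)$. You make the phase-length argument (part (ii)) explicit, which the paper compresses to ``for appropriate choice of counter constant $c_5$''; that is a fine and indeed slightly more self-contained presentation. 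One small thing the paper states and you correctly echo implicitly: the count of biased $\rolemain$ agents that $\rolereserve$ agents can sample from is protected because the only state-changing transitions in \phaseReserveSample\ act on $\rolereserve$ or $\roleclock$ agents, and the $|M|\geq 0.92|\mains|$ bound plus $|\mains|\geq n/3$ gives the $\Omega(n)$ pool --- you handle this correctly.
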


\begin{proof}
By \cref{thm:phase-3-majority-result}, we have at least $0.92|\mains|$ majority agents with $\fieldexponent \in \{-l,-(l+1),-(l+2)\}$ by the end of \phaseMainAveraging. Now we can analyze the process of $\rolereserve$ agents sampling the $\fieldexponent$ of the first biased agent they encounter in \phaseReserveSample\ with \cref{lem:one-sided-cancel}, where subpopulations $A = \{a:a.\fieldrole = \rolemain, a.\fieldbias = \pm 1\}$ with and $B = \reserves$, comprising fractions $a \geq 0.92m$ and $b_1 = r$. Then by \cref{lem:one-sided-cancel}, all $\rolereserve$ agents set their $\fieldsample$ within time $t$, where $t \leq \frac{5\ln n}{2\cdot 0.92m}$ with high probability $1-O(1/n^2)$. Thus for appropriate choice of counter constant $c_5$, this will happen before the first $\roleclock$ agent advances to the next phase.
\end{proof}

We next show that the $\rolereserve$ agents are able to bring the exponents of all biased agents down to at most $-l$ during \phaseReserveSplit.

\begin{lemma}
\label{lem:reserve-split}
By the end of \phaseReserveSplit, all biased agents have $\fieldlevel \leq -l$, with high probability $1 - O(1/n^2)$.
\end{lemma}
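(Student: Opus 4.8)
The plan is to read off the structural guarantee of \cref{thm:phase-3-majority-result} at the start of \phaseReserveSplit, combine it with the distribution of sampled exponents established in \cref{lem:reserve-sample}, and then run a potential argument showing the ``tail'' of biased agents above exponent $-l$ decays to $0$ well within the $O(\log n)$ duration of the phase. First I would record the initial conditions: since neither \phaseDetectTie\ nor \phaseReserveSample\ changes any $\rolemain$ agent's $\fieldlevel$, at the first moment all agents are in \phaseReserveSplit\ we have (i) at least $0.92|\mains|$ majority agents at exponents in $\{-l,-(l+1),-(l+2)\}$ and (ii) $\mu_{(>-l)} \le 0.002|\mains|2^{-l}$. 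From (ii) I extract two facts used repeatedly: the count of biased agents strictly above $-l$ is at most $0.001|\mains|$, and, since a biased agent at exponent $-j$ with $j<l$ requires a binary tree of $2^{l-j}-1$ splits to be brought down to exponent $-l$, the \emph{total} number of split reactions ever needed to remove all agents above $-l$ is at most $2^l\mu_{(>-l)} \le 0.002|\mains|$, i.e.\ only $O(n)$, far below the reserve supply $|\reserves| = \Theta(n)$.

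Next I would pin down the sampled exponents. Because $\rolemain$ agents are static during \phaseReserveSample, each $\rolereserve$ agent's $\fieldsample$ is a (near-uniform) sample from the biased $\rolemain$ population of that phase, and by (i) together with the count bound above, at least a $0.99$ fraction of biased agents sit at exponent $\le -l$. By \cref{lem:reserve-sample} every reserve obtains a sample, so a standard concentration argument (the per-reserve sampling indicators are negatively associated) gives that at least $0.99|\reserves| \ge \tfrac{n}{7}$ reserves have $\fieldsample \le -l$ with high probability. Any such reserve can enable a split of \emph{any} biased agent strictly above exponent $-l$, since such an agent has exponent $\ge -(l-1) > -l \ge \fieldsample$. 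I would also note a termination fact: the split reaction is disabled for a pair $(r,m)$ exactly when $r.\fieldsample \ge m.\fieldlevel$, so if even one reserve with $\fieldsample \le -l$ is never consumed, then in every configuration reachable once splits run out all biased agents lie at exponent $\le -l$; this is the correctness backbone, and the quantitative work is to show the above-$-l$ tail is actually cleared within the phase's time budget.

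For the timing I would use the potential $\Psi(t) = \sum_{a:\ a.\fieldlevel > -l} 4^{\,a.\fieldlevel+l}$, exactly as in \cref{lem:upper-levels-empty}: $\Psi$ is a nonnegative integer that is $0$ iff no biased agent is above $-l$ (and is $\ge 4$ otherwise), it never increases, and a split of an agent above $-l$ decreases it by at least half that agent's contribution. While $\Omega(n)$ reserves with $\fieldsample \le -l$ remain unconsumed, each agent above $-l$ undergoes such a split with probability $\Omega(1/n)$ per interaction, so $\E[\Psi(t+1/n)\mid \Psi(t)] \le (1-\Omega(1/n))\,\Psi(t)$. Since $\Psi(0) = O(n^2)$ (using $\Psi(0)\le 4^l\mu_{(>-l)} = 2^l\cdot 0.002|\mains|$ and $2^l\le 2n$), recursing over $O(\log n)$ parallel time and applying Markov's inequality gives $\Pr[\Psi > 0] = O(1/n^2)$ at some time $T = O(\log n)$. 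I would then check $T \le c_6\ln n$ for an appropriate counter constant and union-bound with the high-probability event that no $\roleclock$ agent drives its $\fieldcounter$ to $0$ before time $T$, as in the end-of-phase argument of \cref{thm:phase-3-tie-result}.

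The main obstacle is the hypothesis ``while $\Omega(n)$ reserves with $\fieldsample\le -l$ remain unconsumed.'' Reserves are consumed (converted to $\rolemain$) by \emph{every} split, including ``wasteful'' ones in which a reserve whose $\fieldsample$ lies well below $-l$ pushes an agent that is already at exponent $\le -l$ even further down; since the biased count can grow toward $n$, one cannot simply say few reserves are used. The fix is to track consumption by sampled exponent. A reserve with $\fieldsample = -l$ can only ever react with agents above $-l$, of which only $\le 0.002|\mains|$ pushes occur total; more generally, if $s\le -l$ is the least exponent at which a constant fraction of reserves sampled, the biased agents those reserves can touch have exponent $>s$ and carry mass at most $\mu_{(>-l)}$ plus the mass at exponents in $(s,-l]$, and the latter is fed only by the $O(n)$ downward pushes already accounted for — so a constant fraction of those reserves survive. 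Making this bookkeeping rigorous, and in particular ruling out a cascade that exhausts the moderate-exponent reserves before the above-$-l$ tail is cleared, is the delicate step; granted it, the potential computation and the time bound are routine.
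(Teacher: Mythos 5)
Your setup (reading off the bounds from \cref{thm:phase-3-majority-result} and \cref{lem:reserve-sample}) and your potential argument (the $\phi_{(>-l)}$-type potential, the supermartingale recursion, Markov's inequality, and the end-of-phase counter check) match the paper's proof. You also correctly identify the central difficulty, namely that reserves are consumed by \emph{every} split they participate in, including ``wasteful'' splits that push an agent already at exponent $\leq -l$ further down, and that the biased count can grow to $\Theta(n)$, so one cannot naively bound the number of reserves lost.

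However, you stop short of actually resolving this difficulty — you acknowledge that ``making this bookkeeping rigorous \ldots is the delicate step,'' and the sketch you give (``if $s\leq -l$ is the least exponent at which a constant fraction of reserves sampled \ldots'') does not work as stated. It does not establish that any single exponent actually receives a constant fraction of samples, and it does not bound the mass in the window $(s,-l]$: that window could contain a constant fraction of $|\mains|$ initially, not merely what is fed down from above $-l$. The paper closes this gap with a concrete case split on $|A_{-l}| = |\{a : a.\fieldopinion=+1, a.\fieldlevel=-l\}|$. If $|A_{-l}| > 0.19|\mains|$, then by Chernoff $|R_{-l}| \geq 0.18|\reserves|$, and an $R_{-l}$ reserve (with $\fieldsample=-l$) can react \emph{only} with agents strictly above $-l$, of which at most $2^l\mu_{(>-l)} \leq 0.002|\mains|$ ever exist; thus $R_{-l}$ stays $\geq 0.04n$ throughout. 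If instead $|A_{-l}| \leq 0.19|\mains|$, the paper invokes the majority-mass constraint $\beta_+ \geq 0.4|\mains|2^{-l}$ together with the tail bounds to force $|A_{-(l+1)}| > 0.59|\mains|$ (otherwise even putting all remaining $\rolemain$ agents at $-(l+2)$ would leave $\beta_+ < 0.398|\mains|2^{-l}$). Then $|R_{-(l+1)}| \geq 0.58|\reserves|$ initially, and $R_{-(l+1)}$ reserves are consumed only by splits where the main agent has exponent $\geq -l$; these number at most $0.004|\mains| + 0.19|\mains|$, again leaving $\geq 0.04n$ reserves. This mass-constraint argument (bounding how few majority agents can sit at $-(l+1)$ given the invariant $\beta_+\geq 0.4|\mains|2^{-l}$) is the missing ingredient in your proposal; without it, one cannot rule out a distribution concentrated at lower exponents whose reserves are all wasted on pushes within $\{-l,\ldots,-(l+2)\}$.
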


\begin{proof}
\cref{thm:phase-3-majority-result}, the mass above exponent $-l$ is $\mu_{(>-l)} \leq 0.001|\mains|2^{-l+1}$. We must show that all of this mass moves down to at least exponent $-l$ via split reactions with the $\rolereserve$ agents (\cref{line:reserve-split} in \phaseReserveSplit).
We consider the following two cases based on whether the size of $A_{-l} = \{a:a.\fieldopinion = +1, a.\fieldexponent = -l\}$.

In the first case, $|A| > 0.19|\mains|$ and we will show all agents get brought down to $\fieldexponent \leq -l$. Because of the sampling process in \phaseReserveSample, the count of $R_{-l} = \{r:r.\fieldrole = \rolereserve, r.\fieldsample = -l\}$ has size at least $|R_{-l}| \geq 0.18|\reserves|$  with very high probability, by standard Chernoff bounds. If all the agents above exponent $-l$ split down to exponent $-l$, they would have count at most $0.002|\mains|$, potentially all coming out of the initial count of $R_{-l}$. Thus for the entirety of \phaseReserveSplit, we have 
\[
|R_{-l}| \geq 0.18|\reserves| - 0.002|\mains| = n(0.18 r - 0.002m) \geq n(0.18 \cdot 0.24 - 0.002) \geq 0.04n,
\]
using the very high probability bound $r > 0.24$ from \cref{lem:phase-initialize}.

In the second case $|A| \leq 0.19|\mains|$. Now we cannot make guarantees on the size $|R_{-l}|$, so we will instead reason about $A_{-(l+1)} = \{a:a.\fieldopinion = +1, a.\fieldexponent = -(l+1)\}$ and $R_{-(l+1)} = \{r:r.\fieldrole = \rolereserve, r.\fieldsample = -(l+1)\}$. 

We first show that $|A_{-(l+1)}| > 0.59|\mains|$.
Recall by definition of $l$ and $g_l$ that the majority mass $\beta_+ \geq g_l 2^{-l} \geq 0.4|\mains|2^{-l}$. 
At most $0.001|\mains|2^{-l+1}$ can come from exponents above $-l$ and at most $0.19|\mains|2^{-l}$ comes from exponent $-l$. Thus there must be at least mass $(0.398 - 0.19)|\mains|2^{-l}$ from $A_{-(l+1)}$ and the exponents below. If $|A_{-(l+1)}| = 0.59|\mains|$, then even putting all $0.41|\mains|$ remaining $\rolemain$ agents at $\fieldexponent = -(l+2)$ would only give mass $0.59|\mains|2^{-(l+1)} + 0.41|\mains|2^{-(l+2)} = 0.3975|\mains|2^{-l}$. Thus we require $|A_{-(l+1)}| > 0.59|\mains|$. Again by standard Chernoff bounds from the sampling process of \phaseReserveSample, this implies the initial size of $|R_{-(l+1)}| \geq 0.58|\reserves|$. 

If all the agents above exponent $-l$ split down to exponent $-(l+1)$, they would have count at most $0.004|\mains|$, potentially all coming out of the initial count of $R_{-l}$.
In addition, we can lose count $|A| \leq 0.19|\mains|$ from $R_{-l}$ from additional split reactions. This implies that for the entirety of \phaseReserveSplit, we have count at least
\begin{align*}
|R_{-(l+1)}| &\geq 0.58|\reserves| - 0.004|\mains| - 0.19|\mains| = n(0.58 r - 0.004m - 0.19 m) 
\\
& \geq n(0.58\cdot 0.24 - (0.004 + 0.19) \cdot 0.51)
\geq 0.04n,
\end{align*}
using the very high probability bounds $r > 0.24$ and $m < 0.51$ from \cref{lem:phase-initialize}.

Thus in both cases we maintain a count of at least $0.04n$ $\rolereserve$ agents at level $-l$ or $-(l+1)$. For an agent $a$ with $a.\fieldexponent > -l$, the probability that in a given interaction agent $a$ does a split reaction (\cref{line:reserve-split} in \phaseReserveSplit) with an agent in $R_{-l} \cup R_{-(l+1)}$ is at least $\frac{2\cdot 0.04n \cdot 1}{n^2} = \frac{0.08}{n}$. Now we proceed as in the proofs of \cref{lem:upper-levels-empty} and \cref{thm:phase-3-tie-result}, analyzing the potential $\phi_{(>-l)}$ to show it hits 0 in $O(\log n)$ time.

Recall that for each biased agent $a$ with $a.\fieldexponent = -i > -l$ and local potential $\phi_a = 4^{-i+l-1}$, a split reaction changes the potential by $\Delta \phi_a \leq 4^{l-1}(2\cdot 4^{-i-1} - 4^{-i}) = 4^{l-1}(-\frac{1}{2}4^{-i}) = -\frac{1}{2}\phi_a$. Then in each interaction at parallel time $t$, the expected change in the potential
\begin{align*}
\E[\phi_{(>-l)}(t+1/n) - \phi_{(>-l)}(t)] 
& \leq \sum_{a.\fieldlevel > -l}\Pr[\text{$a$ splits}]\cdot \Delta\phi_{a}
\\&\leq \sum_{a.\fieldlevel > -l} \frac{0.08}{n} \cdot -\frac{1}{2}\phi_{a}
= -\frac{0.04}{n}\phi_{(>-l)}(t).
\end{align*}
Thus we have $\E[\phi_{(>-l)}(t+1/n)|\phi_{(>-l)}(t)] \leq (1-\frac{0.04}{n})\phi_{(>-l)}(t)$. 

When we begin the argument at time $t_{\text{start}}$ at the beginning of \phaseReserveSplit, we start with $\phi_{(>-l)}(t_{\text{start}}) \leq 0.001|\mains|$ from the final iteration of \cref{lem:induction-upper-tail}. We will wait until time $t_{\text{end}} = t_{\text{start}} + 75\ln n$, and now recursively bound the potential after these $75 n \ln n$ interactions:

\begin{align*}
\E\qty[\phi_{(>-l)}(t_{\text{end}}) | \phi_{(>-l)}(t_{\text{start}})] 
&\leq \qty( 1-\frac{0.04}{n} )^{75 n\ln n} \phi_{(>-l)}(t_{\text{start}})
\\
\E\qty[\phi_{(>-h+q)}(t_{\text{end}})]
&\leq 
\exp(-0.04 \cdot 75 \ln n)\cdot 
0.001|\mains|
\\
&\leq n^{-3}
\cdot 
0.001mn
= O\qty(n^{-2})
\end{align*}
Finally, since $\phi_{(>-l)}$ takes nonnegative integer values, we can apply Markov's Inequality to conclude $\Pr[\phi_{(>-l)}(t_{\text{end}}) > 0] = O(1/n^{2})$.
So after $75\ln n$ time in \phaseReserveSplit, all biased agents have $\fieldexponent \leq -l$, with high probability. For appropriate choice of counter constant $c_6$, this happens before any $\roleclock$ agent advances to the next phase.
\end{proof}

Recall $M$ is the set of all majority agents with $\fieldlevel \in \{-l,-(l+1),-(l+2)\}$, where $|M| \geq 0.92|\mains|$ at the end of \phaseMainAveraging\ by  \cref{thm:phase-3-majority-result}. We finally observe that after \phaseReserveSplit, $M$ is still large.

\begin{lemma}
\label{lem:reserve-keeps-majority-count}
At the end of \phaseReserveSplit, $|M| \geq 0.87|\mains|$ with very high probability.
\end{lemma}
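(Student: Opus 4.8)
The plan is to show that $M$ can only shrink by a small additive amount during \phaseReserveSplit, by isolating the single mechanism through which a member of $M$ can be lost and then checking that only a small fraction of $\rolereserve$ agents can trigger it.

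First I would note that neither \phaseDetectTie\ nor \phaseReserveSample\ modifies the role, opinion, or $\fieldlevel$ of any $\rolemain$ agent, so $|M| \geq 0.92|\mains|$ still holds at the start of \phaseReserveSplit\ by \cref{thm:phase-3-majority-result}. Within \phaseReserveSplit\ the only reaction touching a $\rolemain$ agent is the split reaction on \cref{line:reserve-split}, which decrements the biased agent's $\fieldlevel$ by one and never changes its $\fieldopinion$; in particular there are no cancel or consumption reactions. Consequently a majority agent can leave $M$ \emph{only} by having its $\fieldlevel$ decremented from $-(l+2)$ to $-(l+3)$, and the guard $r.\fieldsample < m.\fieldlevel$ of \cref{line:reserve-split} then forces the partner $\rolereserve$ agent to satisfy $r.\fieldsample \leq -(l+3)$. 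Meanwhile $M$ can only gain members (a $\rolereserve$ agent converting into a majority $\rolemain$ agent with $\fieldlevel \in \{-l,-(l+1),-(l+2)\}$, or an above-$-l$ majority agent descending into that range). Since each $\rolereserve$ agent performs at most one split (\cref{line:reserve-split} sets its role to $\rolemain$), the number of departures from $M$ over the whole phase is at most $N := |\{r : r.\fieldrole = \rolereserve,\ r.\fieldsample \leq -(l+3)\}|$ measured at the end of \phaseReserveSample. Hence $|M| \geq 0.92|\mains| - N$ at the end of \phaseReserveSplit.

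The remaining step is to bound $N$. During \phaseReserveSample\ the biased $\rolemain$ population is frozen, and by \cref{lem:reserve-sample} every $\rolereserve$ agent sets $\fieldsample$ to the exponent of (essentially) a uniformly chosen biased $\rolemain$ agent; so $N$ is dominated, up to standard concentration, by $\mathrm{Binomial}(|\reserves|,q)$ where $q$ is the fraction of biased $\rolemain$ agents at exponent $\leq -(l+3)$. Those agents are disjoint from $M$ and both sets lie in $\mains$, so their count is at most $|\mains|-|M| \leq 0.08|\mains|$, while the total biased count is at least $|M| \geq 0.92|\mains|$; thus $q \leq 0.08/0.92 < 0.087$. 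Feeding in the very-high-probability size bounds from \cref{lem:phase-initialize} (which give $|\reserves| < 0.54|\mains|$) and a Chernoff bound on the sampling process exactly as in the proof of \cref{lem:reserve-split}, one gets $N < 0.048|\mains|$ with very high probability, whence $|M| \geq 0.92|\mains| - 0.048|\mains| > 0.87|\mains|$.

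The main obstacle is the last paragraph: the samples of distinct $\rolereserve$ agents are not literally independent (biased agents are never consumed in \phaseReserveSample, so two $\rolereserve$ agents may sample the same one), so the Chernoff step must be argued as in \cref{lem:reserve-split} rather than invoked naively, and the constants must be tracked carefully enough that the final quantity stays above $0.87|\mains|$. Everything else is the clean structural observation that, in the absence of cancel and consumption reactions, $M$ can only bleed out through $\rolereserve$ agents whose sampled exponent lies strictly below $-(l+2)$, and there are too few biased agents that low for many such samples to occur.
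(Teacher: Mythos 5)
Your proof is correct and follows essentially the same structure as the paper's: identify that only split reactions with $\rolereserve$ agents holding small samples can remove agents from $M$, Chernoff-bound the count of such $\rolereserve$ agents via the sampling process of \phaseReserveSample, and combine with the subpopulation size bounds from \cref{lem:phase-initialize}. The one refinement you make---restricting attention to $\rolereserve$ agents with $\fieldsample \leq -(l+3)$ rather than $\fieldsample \notin \{-l,-(l+1),-(l+2)\}$ as the paper does---is tighter (agents with samples above $-l$ cannot in fact trigger a departure from $M$), but it does not change the conclusion.
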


\begin{proof}
By \cref{thm:phase-3-majority-result}, $|M| \geq 0.92|\mains|$ at the end of \phaseMainAveraging, so the count of all other main agents is at most $0.08|\mains|$. By standard Chernoff bounds on the sampling process in \phaseReserveSample, the count 
\[
|\{r:r.\fieldrole = \rolereserve, r.\fieldsample \notin \{-l,-(l+1),-(l+2)\}\}| \leq 0.09|\reserves| \leq 0.05|\mains|.
\]
Split reactions that consume these $\rolereserve$ agents are the only way to bring agents out of the set $M$.
Thus at the end of \phaseReserveSplit, the count is still at least
\[
|M| \geq 0.92|\mains| - 0.09|\reserves| \geq 0.87|\mains|,
\]
using $|\reserves| < \frac{5}{9}|\mains|$ with very high probability by \cref{lem:phase-initialize}. 
\end{proof}

\subsection{Minority elimination Phases \ref{phase:highminorityelimination} and \ref{phase:lowminorityelimination}}

Next we argue that in \phaseHighMinorityElimination, the agents in $M$ are able to eliminate all minority agents with $\fieldexponent \in \{-l,-(l+1),-(l+2)\}$. Note that these minority agents are able to do a cancel reaction with any agent in $M$, since by design \phaseHighMinorityElimination\ allows reactions between agents with an exponent gap of at most 2. We first argue that $|M|$ must stay large through the entirety of \phaseHighMinorityElimination:

\begin{lemma}
\label{lem:high-minority-elimination-majority-count}
At the end of \phaseHighMinorityElimination, $|M| \geq 0.8|\mains|$.
\end{lemma}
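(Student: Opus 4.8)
The plan is to argue, as in \cref{lem:large-majority-count}, purely by reachability and mass-conservation, never by probabilistic concentration. The starting point is \cref{lem:reserve-keeps-majority-count}, which gives $|M| \geq 0.87|\mains|$ at the start of \phaseHighMinorityElimination, together with \cref{lem:reserve-split}, which guarantees all biased agents have $\fieldlevel \leq -l$ at this point, and \cref{thm:phase-3-majority-result}/\cref{lem:induction-minority-mass}, which bound the minority mass by $\beta_- \leq 0.004|\mains|2^{-l}$ (this bound is preserved, since the transitions of \phaseHighMinorityElimination\ never increase minority mass — the cancel, gap-1 cancel, and gap-2 cancel reactions all weakly decrease $\beta_-$). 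First I would observe that the only way an agent leaves the set $M$ is: (a) it is a majority agent at exponent $-l,-(l+1),-(l+2)$ that does a cancel or gap-1/gap-2 reaction with a minority agent, thereby either becoming $\unbiased$ or moving to a lower exponent; or (b) — which cannot actually happen here — a split reaction, but \phaseHighMinorityElimination\ has no split reactions, so only (a) is relevant.

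The key step is to bound how much of $M$ can be destroyed by type-(a) reactions. Each such reaction consumes (or pushes below $-(l+2)$) at most one $M$-agent and is ``paid for'' by one unit of minority participation. Since $\beta_- \leq 0.004|\mains|2^{-l}$ and every minority agent has exponent $\leq -l$, the minority count is at most $0.004|\mains|2^{-l}/2^{-l} = 0.004|\mains|$ — wait, more carefully: a minority agent at exponent $-l$ carries mass $2^{-l}$, so the count of minority agents is at most $0.004|\mains|2^{-l}/2^{-(l)} $ only if all are at exponent exactly $-l$; in general minority agents at lower exponents carry less mass, so the total minority \emph{count} can be larger. So instead I would count reactions, not agents: each cancel/gap-1/gap-2 reaction strictly decreases total mass or keeps it constant while decrementing a minority exponent, and more to the point each such reaction that removes an $M$-agent \emph{reduces minority mass} by a definite amount. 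A cancel reaction at exponent $-i$ removes mass $2^{-i}$ from the minority; a gap-1 reaction $+\frac1{2^i},-\frac1{2^{i+1}}\to +\frac1{2^{i+1}},0$ removes minority mass $2^{-(i+1)}$; a gap-2 reaction removes minority mass $2^{-(i+2)}$. In every case, removing a majority agent from exponent $\geq -(l+2)$ via such a reaction costs the minority at least mass $2^{-(l+2)} \cdot c$ for an explicit constant $c$ (smallest when the majority agent is at $-(l+2)$ and the minority agent is two exponents below; then minority mass drops by $2^{-(l+4)}$, hmm — I need to be careful: actually in a gap-2 reaction the \emph{minority} agent is the one consumed, so it is at the lower exponent; if the majority $M$-agent is at $-(l+2)$ the minority is at $-(l+4)$ and only $2^{-(l+4)}$ of minority mass is used to eject one $M$-agent). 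The worst case is thus: one unit of $M$-count destroyed per $2^{-(l+4)}$ of minority mass consumed, giving at most $0.004|\mains|2^{-l} / 2^{-(l+4)} = 0.064|\mains|$ destroyed $M$-agents. That would give $|M| \geq 0.87|\mains| - 0.064|\mains| > 0.8|\mains|$, with a little slack to spare.

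The main obstacle is getting the bookkeeping exactly right on which agent is the ``consumed''/lower-exponent one and hence how little minority mass can be spent to eject one $M$-agent — and reconciling the fact that a majority $M$-agent that does a gap-1 or gap-2 reaction does not become $\unbiased$ but merely decrements, so it may still stay inside $M$ if it was at $-l$ or $-(l+1)$, meaning I should only charge ejections where the $M$-agent actually leaves $\{-l,-(l+1),-(l+2)\}$ or becomes $\unbiased$, which only happens at the boundary exponent $-(l+2)$ or via a full cancel. I would also need to note that majority agents below $-(l+2)$ (outside $M$) cancelling with minority agents is irrelevant to $|M|$, and that minority agents cancelling majority agents via the gap reactions is symmetric. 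Once the per-ejection minority-mass cost $c_{\min} = 2^{-(l+4)}$ (or whatever the careful analysis yields — I would recompute, but it is a fixed power of two) is pinned down, the bound $|M| \geq 0.87|\mains| - (\beta_-/c_{\min}) \geq 0.8|\mains|$ follows by arithmetic, and — as in \cref{lem:large-majority-count} — a final mass-conservation argument (the majority mass $\beta_+ \geq g \geq 0.4|\mains|2^{-l}$ cannot be accommodated if $|M|$ drops too low) shows the bound holds in \emph{all} reachable configurations of \phaseHighMinorityElimination, not just at a particular time.
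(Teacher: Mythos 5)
Your plan follows the paper's proof in outline: start from $|M|\ge 0.87|\mains|$ at the end of \phaseReserveSplit\ (\cref{lem:reserve-keeps-majority-count}), invoke the minority-mass bound $\beta_-\le 0.004|\mains|2^{-l}$ carried forward from \cref{thm:phase-3-majority-result}, and show the ``damage'' to $M$ is at most $0.064|\mains|$, giving $|M|\ge 0.87|\mains|-0.064|\mains|\ge 0.8|\mains|$. The one place you do something genuinely different is in how you justify the $0.064|\mains|$ figure, and your version is actually the more rigorous one. The paper argues that the \emph{count} of minority agents at exponent $\ge -(l+4)$ is at most $0.004|\mains|2^{-l}/2^{-(l+4)}=0.064|\mains|$, and then deducts that count directly from $|M|$. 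As you implicitly noticed, this is only correct if each such minority agent can dislodge at most one $M$-agent, which is not literally true: a minority agent at exponent $-l$ can chain gap-1/gap-2 reactions (dropping one level each time) and knock out several $M$-agents before being exhausted. Your mass-charging argument repairs this: every reaction that removes an $M$-agent costs the minority at least $2^{-(l+4)}$ of mass — the worst case being a gap-2 reaction with the majority $i$-agent at $-(l+2)$ and the minority $j$-agent at $-(l+4)$, which you correctly pin down (your hesitation about ``whatever the careful analysis yields'' resolves to exactly $2^{-(l+4)}$; all other cases cost $2^{-(l+3)}$ or more). Since minority mass is nonincreasing and starts below $0.004|\mains|2^{-l}$, the number of $M$-removals over the whole phase is at most $0.064|\mains|$, giving the bound for all configurations reachable in \phaseHighMinorityElimination, not just the endpoint. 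Your closing suggestion to add a $\beta_+$-based mass-conservation argument as in \cref{lem:large-majority-count} is harmless but unnecessary: the charging argument already gives the bound for all times in the phase, which is what \cref{lem:high-minority-elimination} relies on.
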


\begin{proof}
We use the bound on minority mass $\beta_- \leq 0.004|\mains|2^{-l}$ from \cref{thm:phase-3-majority-result}. This implies the minority count at exponent $-(l+4)$ and above is at most $0.064|\mains|$, since a $0.064|\mains|\cdot 2^{-(l+4)} \geq \beta$. Note that cancelling with these minority agents is the only way for a majority agent in $M$ to change its state in \phaseHighMinorityElimination. Using the previous bound from \cref{lem:reserve-keeps-majority-count}, the count of $M$ can decrease at most to $|M| = 0.87|\mains| - 0.064|\mains| \geq 0.8|\mains|$.
\end{proof}

Now we argue that these agents in $M$ eliminate all high exponent minority agents.

\begin{lemma}
\label{lem:high-minority-elimination}
At the end of \phaseHighMinorityElimination, all minority agents have $\fieldexponent < -(l+2)$, with high probability $1 - O(1/n^2)$.
\end{lemma}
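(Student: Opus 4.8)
The plan is to define an integer‑valued potential that counts (with small weights) the minority agents still sitting at one of the three ``good'' exponents $\{-l,-(l+1),-(l+2)\}$, show that under the transitions of \phaseHighMinorityElimination\ this potential never increases and strictly decreases whenever such a minority agent meets one of the many agents of $M$, and conclude that it decays exponentially to $0$ within $O(\log n)$ time. This mirrors the structure already used in \cref{lem:upper-levels-empty} and \cref{thm:phase-3-tie-result}.

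Concretely, condition on the high‑probability events of \cref{thm:phase-3-majority-result}, \cref{lem:reserve-split}, and \cref{lem:high-minority-elimination-majority-count}: at the start of the phase every biased agent has $\fieldlevel \le -l$, the minority mass satisfies $\beta_- \le 0.004|\mains|2^{-l}$ (this bound is preserved through Phases~5 and~6, since \phaseReserveSample\ changes no biases and the split reactions of \phaseReserveSplit\ preserve $\beta_-$), and (from the proof of \cref{lem:high-minority-elimination-majority-count}, which bounds the total decrease) $|M| \ge 0.8|\mains|$ throughout \phaseHighMinorityElimination. Let $S(t)$ be the set of minority agents with $\fieldlevel \in \{-l,-(l+1),-(l+2)\}$ and put
$$\Psi(t) = \sum_{b \in S(t)} \bigl(b.\fieldlevel + l + 3\bigr),$$
so each agent of $S$ contributes $3$, $2$, or $1$ according as its exponent is $-l$, $-(l+1)$, or $-(l+2)$. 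Since every agent of $S$ has $|b.\fieldbias| \ge 2^{-(l+2)}$, we get $|S| \le 0.016|\mains|$, hence $\Psi \le 0.05 n$ at the start of the phase; and $\Psi \ge 1$ holds exactly when some minority agent still has $\fieldexponent \ge -(l+2)$, so the lemma is the statement that $\Psi = 0$ by the end of the phase.

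The heart of the argument is a short case analysis. In \phaseHighMinorityElimination\ levels only ever decrease, and (using $\fieldlevel \le -l$ for all biased agents) the only reaction that creates a new minority agent, the gap‑$2$ cancel, can only place it at exponent $i.\fieldlevel - 2 \le -(l+3)$, i.e. outside $S$; hence $\Psi$ is non‑increasing under all transitions. Moreover, since every agent of $M$ and every agent of $S$ has exponent in $\{-l,-(l+1),-(l+2)\}$, a pair consisting of one agent of $S$ and one of $M$ differs in exponent by at most $2$ and has opposite opinions, so a standard, gap‑$1$, or gap‑$2$ cancel reaction fires; checking the nine cases (three exponent gaps $\times$ which of the two agents has the larger magnitude) shows $\Psi$ drops by at least $1$ in every case — the minority agent becomes unbiased, flips to the majority opinion, or slides down one exponent, sometimes consuming an $M$ agent whose loss is already charged in \cref{lem:high-minority-elimination-majority-count}. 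Since a fixed $b \in S$ meets some agent of $M$ in the next interaction with probability at least $\frac{2|M|}{n^2} \ge \frac{1.6 m}{n}$, and these events are disjoint over $b$, using $|S| \ge \Psi/3$ we obtain $\E[\Psi(t + 1/n) \mid \Psi(t)] \le \bigl(1 - \frac{0.53 m}{n}\bigr)\Psi(t)$.

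Finishing is then routine: iterating this bound over $c'\ln n$ parallel time gives $\E[\Psi] \le e^{-0.53 m c' \ln n}\cdot 0.05 n$, which is $O(n^{-2})$ once $c'$ is a sufficiently large constant (using $m > 1/3$ from \cref{lem:phase-initialize}), and Markov's inequality on the nonnegative integer $\Psi$ yields $\Pr[\Psi \ge 1] = O(n^{-2})$; a standard counter argument shows that for an appropriate choice of $c_7$ the phase lasts at least $c'\ln n$ time (plus the $O(\log n)$ epidemic spreading time for all agents to enter the phase) before any $\roleclock$ agent's $\fieldcounter$ reaches $0$, with high probability. I expect the main obstacle to be exactly the case analysis of the previous paragraph — verifying globally that $\Psi$ never increases and that it drops by at least $1$ on every $S$–$M$ interaction — where the delicate point is that the gap‑$2$ reaction (which can convert a majority agent into a low‑exponent minority agent, or a minority agent into a majority agent) never feeds a fresh agent back into $S$; this is precisely where the hypothesis ``all biased agents have $\fieldlevel \le -l$'' supplied by \cref{lem:reserve-split} is essential.
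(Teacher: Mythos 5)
Your approach is a genuinely different route from the paper's, which eliminates $B_{-l}$, then $B_{-(l+1)}$, then $B_{-(l+2)}$ sequentially via \cref{lem:one-sided-cancel}, using that once $B_{-h}$ is cleared no new minority agent can appear there (levels only decrease, and by \cref{lem:reserve-split} there is no minority at level $-h+2$ left to re-flip a majority). A potential-function argument \emph{can} work here, but your weights $(3,2,1)$ are not steep enough, and the precise case you flagged as the ``delicate point'' is exactly where the claim ``$\Psi$ drops by at least $1$ in every case'' fails. Take $b$ minority with $b.\fieldlevel = -l$ meeting $a \in M$ with $a.\fieldlevel = -(l+2)$. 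The gap-$2$ cancel (with $i=b$, $j=a$) slides $b$ down to $-(l+1)$ and flips $a$ to the minority opinion while its level stays at $-(l+2)$, as the example comment and the $\fieldbias$ invariant both dictate. Under weights $(3,2,1)$, $b$'s contribution drops $3\to 2$ but the newly-minority $a$ enters $S$ with contribution $0 \to 1$, so $\Delta\Psi = 0$. Thus $\Psi$ is non-increasing but does \emph{not} strictly decrease on every $S$--$M$ interaction, and the drift bound $\E[\Psi(t+1/n)\mid\Psi(t)]\le(1-\Theta(1)/n)\Psi(t)$ does not follow.

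The fix is to use exponential weights $(4,2,1)$, i.e., take $\Psi' = 2^{l+2}\sum_{b\in S}|b.\fieldbias|$, the integer-scaled minority mass restricted to $S$. Then the interaction above gives $\Delta\Psi' = (2-4)+1 = -1$, and re-running your case-check shows $\Psi'$ is globally non-increasing and drops by at least $1$ on every $S$--$M$ interaction: by \cref{lem:reserve-split} the only way a majority agent of $M$ can join $S$ is via exactly this gap-$2$ reaction from a minority at $-l$, and the steeper weight at $-l$ absorbs that conversion. Your remaining estimates carry over with only constant adjustments ($|S| \ge \Psi'/4$ and $\Psi' \le 4\cdot 0.016|\mains|$ initially), giving a valid alternative to the paper's sequential one-sided-cancel argument, which itself is perhaps more elementary since it avoids the case analysis entirely.
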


\begin{proof}
From \cref{lem:reserve-split}, all minority agents already have $\fieldexponent \leq -l$. Let $B_{-l}, B_{-(l+1)}, B_{-(l+2)}$ be the sets of minority agents with $\fieldexponent = -l,-(l+1),-(l+2)$, respectively. We argue successively that $|B_{-l}| = 0$, then $|B_{-(l+1)}|$, then $|B_{-(l+2)}| = 0$.

The initial bound on minority mass $\beta_- \leq 0.004|\mains|2^{-l}$ from \cref{thm:phase-3-majority-result} implies that initially $|B_{-l}| \leq 0.004|\mains|$. Note that an interaction with any agent in $M$ will bring remove an agent from $B_{-l}$ via one of the \phaseHighMinorityElimination\ cancel reactions. Using the bound $|M| \geq 0.8|\mains|$ during the entire phase from \cref{lem:high-minority-elimination-majority-count}, we can use \cref{lem:one-sided-cancel} to bound the time for $|B_{-l}| = 0$. We have constants $a = 0.8m$ and $b_1 = 0.004m$. Then by \cref{lem:high-minority-elimination-majority-count}, with high probability $1-O(1/n^2)$, the time $t_1$ to eliminate the count of $B_{-l}$ is at most $t_1 \leq \frac{5\ln n}{2(0.8m - 0.004m)} \leq 6.41\ln n$, using the bound $m \geq 0.49$ from \cref{lem:phase-initialize}.

Next we wait to eliminate the count of $B_{-(l+1)}$, by a similar argument. Initially the count is at most $|B_{-(l+2)}| \leq 0.008|\mains|$, and this will all cancel in at most time $t_2$. By \cref{lem:high-minority-elimination-majority-count}, with high probability, $t_2 \leq \frac{5\ln n}{2(0.8m - 0.008m)} \leq 6.45\ln n$. The same argument for $B_{-(l+2)}$, initially $|B_{-(l+2)}| \leq 0.016|\mains|$, gives $t_3 \leq \frac{5\ln n}{2(0.8m - 0.016m)} \leq 6.51\ln n$.

Thus the entire process requires time at most $t_1 + t_2 + t_3 < 20\ln n$. So for appropriate choice of counter constant $c_7$, this will happen before the first $\roleclock$ agent advances to the next phase.
\end{proof}

Now we will prove that the remaining $0.8|\mains|$ majority agents in $M$ are able to eliminate all remaining minority agents in \phaseLowMinorityElimination.

\begin{lemma}
\label{lem:low-minority-elimination}
At the end of \phaseLowMinorityElimination, there are no more minority agents, with high probability $1 - O(1/n^2)$.
\end{lemma}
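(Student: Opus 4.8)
The plan is to show that the large pool $M$ of majority agents pinned at exponents $\{-l,-(l+1),-(l+2)\}$ consumes every remaining minority agent. At the start of \phaseLowMinorityElimination\ we know $|M|\geq 0.8|\mains|$ (\cref{lem:high-minority-elimination-majority-count}), every agent of $M$ carries $\fieldfull=\False$, and every minority agent has exponent $\leq -(l+3)$ (\cref{lem:high-minority-elimination}); also $m = |\mains|/n \geq 1/3$ by \cref{lem:phase-initialize}. Since $M$ and the minority agents are disjoint sub-populations of the $\rolemain$ agents, the minority count is at most $|\mains|-|M|\leq 0.2|\mains|$.

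First I would pin down the structural invariants of the phase. Every $M$ agent has strictly larger exponent than every minority agent, so any meeting of a non-full $M$ agent with a minority agent (full or not) fires the consumption reaction, turning the minority agent unbiased and removing it from the minority population. Conversely, no \phaseLowMinorityElimination\ transition changes an $M$ agent's opinion or exponent, and none can consume an $M$ agent, since that would need a non-full opposite-opinion partner of strictly larger exponent, whereas all minority (and any stray majority) agents sit at exponent $\leq -(l+3)$, below $M$'s. Hence an $M$ agent leaves the pool of available consumers exactly when it consumes one minority agent, and the minority count is non-increasing, dropping by one at each consumption by an available consumer. (A minority agent consuming a stray low-exponent majority agent merely makes that stray agent unbiased and the minority agent full; this does not disrupt the bookkeeping and only helps, as full minority agents remain consumable by $M$.) Consequently the number of non-full $M$ agents never falls below $0.8|\mains|-0.2|\mains| = 0.6|\mains| = \Omega(n)$.

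The remaining argument is quantitative. The minority count is driven to $0$ by a one-sided cancel process (\cref{lem:one-sided-cancel}) in which the non-full $M$ agents form a subpopulation maintained above $0.6mn$ agents and each interaction with a minority agent removes that minority agent — equivalently a cancel process between non-full $M$ agents and minority agents (\cref{lem:cancel-reactions}). Either bound gives that all minority agents are eliminated after $O(\log n)$ parallel time (at most roughly $\tfrac{5\ln n}{1.2m}$) with high probability $1-O(1/n^2)$; choosing the counter constant $c_8$ large enough, this finishes before the first $\roleclock$ agent increments $\fieldphase$. Hence with high probability \phaseLowMinorityElimination\ ends with no minority agents, while $\Omega(n)$ majority agents survive and the sign of every $\fieldbias$ has been preserved, so \phaseConsensusTwo\ then stabilizes to the correct output. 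The main obstacle is the first step: checking that the $\fieldfull$ flag bookkeeping neither strands a minority agent as permanently unconsumable nor exhausts the supply of non-full $M$ consumers — that is, that the only way a minority agent leaves its population is consumption by a larger-exponent non-full agent, and that $M$ agents are themselves immune to consumption. Once those invariants are in place, the bound follows directly from the cancel-reaction timing lemmas of \cref{sec:timing-lemmas}.
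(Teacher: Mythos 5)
Your proof is correct and takes essentially the same approach as the paper's. The paper applies \cref{lem:cancel-reactions} directly with $a = 0.8m$, $b = 0.2m$ (using the WHP bound $m \geq 0.495$ rather than your probability-1 bound $m \geq 1/3$, hence it quotes the tighter constant $8.5 \ln n$ instead of your $12.5\ln n$, but either gives $O(\log n)$), and simply notes that each minority agent is consumed by a non-full $M$ agent. You reach the same conclusion via the same timing lemma; the main extra content in your write-up is the explicit verification of the structural invariants — that $M$ agents are never consumable (their exponents strictly dominate all minority exponents), that $\fieldfull$ flags cannot strand a minority agent or exhaust the consumer pool below $0.6|\mains|$, and that full minority agents remain consumable — which the paper treats as immediate from the transition rules.
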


\begin{proof}
From \cref{lem:high-minority-elimination-majority-count} and \cref{lem:high-minority-elimination}, by the end of \phaseHighMinorityElimination, we have $|M| \geq 0.8|\mains|$ and all minority agents have $\fieldexponent < -(l+2)$. We assume in the worst case all $0.2|\mains|$ other $\rolemain$ agents have the minority opinion. Note by the consumption reaction in \phaseLowMinorityElimination, every minority agent will be eliminated by an agent in $M$ that still has $\fieldfull = \False$. Thus we can apply \cref{lem:cancel-reactions}, with $a = 0.8m$ and $b = 0.2m$, to conclude that all remaining minority agents are eliminated in time $t$, where $t \leq \frac{5\ln n}{2(0.8m - 0.2m)} \leq 8.5\ln n$, using $m \geq 0.495$ with very high probability from \cref{lem:phase-initialize}. So for appropriate choice of counter constant $c_8$, this will happen before the first $\roleclock$ agent advances to the next phase.
\end{proof}

\subsection{Fast Stabilization}

We next give a time bound for the stable backup:

\begin{lemma}
\label{lem:stable-backup}
The 6-state protocol in 
\phaseStableBackup\
stably computes majority in $O(n\log n)$ stabilization time, both in expectation and with high probability.
\end{lemma}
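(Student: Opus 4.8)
The plan is to prove correctness and the running-time bound separately, and for the running time to decompose every execution into three successive ``coalescence'' epochs, each bounded by a sum of geometric random variables via \cref{thm:janson}.

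For correctness, I would first note that $D := \#\{a: a.\fieldactive \wedge a.\fieldoutput=\A\} - \#\{a: a.\fieldactive \wedge a.\fieldoutput=\B\}$ is invariant under all three reactions of \phaseStableBackup\ and equals the initial gap $g = A-B$ (initially every agent is active with $\fieldoutput = \fieldinput$); assume WLOG $g \ge 0$. From an arbitrary reachable configuration I would exhibit a path to a stable correct configuration: repeatedly apply the cancel reaction (\cref{line:stable-cancel-reaction}) until no active $\B$ agent remains — this terminates because the active-$\B$ count never increases and, while positive, always has an active-$\A$ partner available since $\#\{\A\text{-active}\} = \#\{\B\text{-active}\} + g$. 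If $g>0$, at least one active $\A$ agent survives, so \cref{line:stable-convert-unbiased} turns every active $\T$ agent into a passive $\A$ agent, and then \cref{line:stable-convert-passive} recolors every passive agent to $\A$, reaching the all-$\A$ configuration ($g$ active, $n-g$ passive) in which only output-preserving reactions remain. If $g=0$, then $D=0$ forces all active agents to become $\T$ (and one checks the active count never reaches $0$, since only \cref{line:stable-convert-unbiased}, which requires two active agents and leaves one active, can decrease it), after which \cref{line:stable-convert-passive} recolors all passive agents to $\T$. In both cases the consensus output equals $M(\vi)$.

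For the time bound, still assuming $g \ge 0$, I would introduce stopping times $\tau_1 \le \tau_2 \le \tau_3$. Let $\tau_1$ be the first time with no active $\B$ agent; since no reaction ever creates an active $\B$ agent, for $t \ge \tau_1$ the cancel reaction is never again enabled, so the count of active $\T$ agents is non-increasing, and I let $\tau_2 \ge \tau_1$ be the first time it equals $0$ (so $\tau_2 = \tau_1$ when $g=0$). For $t \ge \tau_2$ the only enabled reaction is \cref{line:stable-convert-passive}, driven by active-$\A$ agents (if $g>0$) or active-$\T$ agents (if $g=0$), so the count of passive agents whose color differs from the final consensus value is non-increasing, and I let $\tau_3 \ge \tau_2$ be the first time it is $0$; the configuration at $\tau_3$ is stable and correct, so it suffices to bound $\tau_3$. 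In each of the epochs $[0,\tau_1]$, $[\tau_1,\tau_2]$, $[\tau_2,\tau_3]$ a single counter decreases by exactly $1$ per ``productive'' interaction, and whenever that counter equals $i \ge 1$ there is always at least one partner of the complementary type ($i+g$ active-$\A$ agents in the first epoch; $g \ge 1$ active-$\A$ agents, or $\ge 1$ active-$\T$ agents, in the other two), so a productive interaction occurs with probability at least $2i/n^2$. Hence the number of interactions spent in each epoch is stochastically dominated by $X = \sum_{i=1}^{n} G_i$ with the $G_i \sim \mathrm{Geom}(2i/n^2)$ independent, where $\E[X] \le \tfrac{n^2}{2}\sum_{i=1}^n \tfrac1i \le \tfrac{n^2(\ln n + 1)}{2}$; \cref{thm:janson} with $\lambda = 5$ and $p^\ast = 2/n^2$ gives $\Pr[X \ge 5\E[X]] \le n^{-2}$. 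Restarting this estimate at $\tau_1$ and $\tau_2$ via the strong Markov property, a union bound yields $\tau_3 = O(n\log n)$ parallel time with probability $1 - O(1/n^2)$, and summing the three conditional expectations gives $\E[\tau_3] = O(n \log n)$.

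The main obstacle is the bookkeeping of exactly which reactions are enabled in each epoch, because \cref{line:stable-convert-passive} is not monotone: an active $\T$ agent can revert a passive $\A$ agent back to passive $\T$, so the passive-cleanup counter is monotone only after all active $\T$ agents have been absorbed. This is precisely why the middle epoch $[\tau_1,\tau_2]$ must be inserted, and why one must verify that active $\T$ agents are never re-created after $\tau_1$; together with the $g=0$ check that the active count never hits $0$, this is the part that needs care. Note one cannot simply invoke \cref{lem:cancel-reactions} or \cref{lem:one-sided-cancel} for the two later epochs, since the driving subpopulation there can be as small as a single agent (a non-constant fraction of $n$), so the direct geometric-sum argument above is required; the probabilistic estimates themselves are then routine.
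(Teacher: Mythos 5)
Your proof is correct and follows the same three-epoch decomposition the paper uses (cancel away the minority; drain active $\T$ agents via \cref{line:stable-convert-unbiased}; then recolor passives via \cref{line:stable-convert-passive}). Your presentation is a bit more careful in one respect: rather than invoking \cref{lem:cancel-reactions} and an informal coupon-collector argument, you dominate each epoch directly by a common sum of geometrics and apply \cref{thm:janson}, correctly observing that the driving subpopulation here can be a single agent, so \cref{lem:cancel-reactions,lem:one-sided-cancel} do not literally apply (their hypotheses ask for constant fractions). The price is a slightly weaker bound on the first epoch's expectation, $O(n\log n)$ rather than $O(n)$, but $O(n\log n)$ is all the lemma claims.

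One small slip should be fixed. The parenthetical ``so $\tau_2 = \tau_1$ when $g=0$'' is false for $\tau_2$ as you defined it: when $g=0$ the active-$\T$ count never reaches zero, since at $\tau_1$ the only active agents are $\T$ and \cref{line:stable-convert-unbiased} needs an active biased partner to eliminate one, so the count stays positive forever (which is precisely what your own later sentence relies on when you let active-$\T$ agents drive the third epoch). Redefine $\tau_2$ as the first time \cref{line:stable-convert-unbiased} is no longer enabled (i.e., either no active $\T$ remains or no active biased agent remains); then $\tau_2 = \tau_1$ when $g = 0$ as you intend, the middle epoch degenerates, and the rest of the argument is unchanged.
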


\begin{proof}
Note that agents in the initial state with $\fieldactive = \True$ and $\fieldoutput \in \{\A,\B\}$ can only change their state by a cancel reaction in \cref{line:stable-cancel-reaction} with another active agent with the opposite opinion.

First we consider the case where one opinion, without loss of generality $\A$, is the majority. We first wait for all active $\B$ agents to meet an active $\A$ agent and changed their state to active $\T$ via \cref{line:stable-cancel-reaction}. This is modelled by the ``cancel reaction'' process described in \cref{lem:cancel-reactions}, taking expected $O(n)$ time and $O(n\log n)$ time with high probability.
At this point, there are no active $\B$ agents and at least one active $\A$ agent remaining. We next wait for this active $\A$ agent to interact with all remaining active $\T$ agents, which will then become passive via \cref{line:stable-convert-unbiased}. This takes $O(n\log n)$ time in expectation and with high probability by a standard coupon-collector argument, after which point there are only active $\A$ agents and passive agents. Finally, we wait for these active $\A$ agents to convert all passive agents to $\fieldoutput = \A$ via \cref{line:stable-convert-passive}, taking another $O(n\log n)$ time by the same coupon-collector argument.

Next we consider the case where the input distribution is a tie. We first wait for all $n/2$ pairs of active $\A$ and $\B$ agents to cancel via \cref{line:stable-cancel-reaction}, taking $O(n)$ time in expectation and $O(n\log n)$ time with probability.
Consider the last such interaction. At this point, those two agents have become active $\T$ agents, and there are no active $\A$ or $\B$ agents left. Thus after interacting with one of the $\T$ agents, all passive agents will output $\T$ via \cref{line:stable-convert-passive}. This takes $O(n\log n)$ time in expectation and with high probability by the same coupon-collector argument.
\end{proof}

We are now able to combine all the results from previous sections to prove \cref{thm:nonuniform-algorithm}, giving guarantees on the behavior of the protocol:

\replicatedThmNonuniformAlgorithm*

\begin{proof}
We first argue that with high probability $1 - O(1/n^{2})$, the protocol \majorityNonuniform\ stabilizes to the correct output in $O(\log n)$ time. We consider three cases based on the size of the initial gap $|g|$. For the majority cases where $|g| > 0$, we assume without loss of generality $g > 0$, so $\A$ is the majority.

If $|g| \geq 0.025|\mains|$, then by \cref{lem:discrete-averaging}, we stabilize in \phaseConsensus\ to the correct output with high probability $1-O(1/n^2)$.

If $0 < |g| < 0.025|\mains|$, then by \cref{thm:phase-3-majority-result}, with high probability $1-O(1/n^{2})$, we end \phaseMainAveraging\ with at least 92\% of $\rolemain$ agents in the set $M$, with the majority opinion and $\fieldexponent \in \{-l,-(l+1),-(l+2)\}$. Then after \phaseReserveSample\ and \phaseReserveSplit, all biased agents have $\fieldexponent \leq -l$ with high probability $1 - O(1/n^2)$ by \cref{lem:reserve-split}. Then after \phaseHighMinorityElimination\ and \phaseHighMinorityElimination, there are no more minority agents with high probability $1-O(1/n^2)$ by \cref{lem:low-minority-elimination}. Thus in \phaseConsensusTwo, the majority opinion $+1$ will spread to all agents by epidemic, and we reach a stable correct configuration where all agents have $\fieldbiases = \{0, +1\}$ and $\fieldoutput = \A$.

If $g = 0$, then by \cref{thm:phase-3-tie-result}, with high probability $1 - O(1/n^{2})$, we end \phaseMainAveraging\ with all biased agents at $\fieldexponent = -L$. Thus in \phaseDetectTie, we reach a stable correct configuration where all agents have $\fieldoutput = \T$, and there are no agents with $\fieldexponent > -L$ to increment the phase.

Next we justify that \majorityNonuniform\ stably computes majority, since it is always possible to reach a stable correct configuration. By \cref{lem:phase-initialize}, we must produce at least 2 $\roleclock$ agents by the end of \phaseInitialize. Thus we must eventually advance through all timed phases 
\ref{phase:initialize}, \ref{phase:discrete-averaging}, \ref{phase:main-averaging}, \ref{phase:reserve-sample}, \ref{phase:reserve-split}, \ref{phase:highminorityelimination}, \ref{phase:lowminorityelimination}
using the $\fieldcounter$ field.
Also all agents have left the initial role $\roleMCR$, otherwise the $\textbf{Init}$ of \phaseDiscreteAveraging\ will trigger all agents to move to \phaseStableBackup\ by epidemic. Thus the sum of $\fieldbias$ across all $\rolemain$ agents is the initial gap $g$, and this key invariant is maintained through all phases. 

Using this invariant, if the agents stabilize in \phaseConsensus, they have a consensus on their $\fieldoutput$, the sign of their $\fieldbias$, and this consensus must be the sign of the sum of the biases, giving the sign of the initial gap. If the agents stabilize in \phaseDetectTie, all $|\fieldbias| \leq \frac{1}{2^L} \leq \frac{1}{n}$. This implies the gap $|g| \leq |\mains|\cdot \frac{1}{n} < 1$, so the gap $g = 0$ and the agents are correctly outputting $\T$. Finally, note that in \phaseLowMinorityElimination, when the agents set the field $\fieldfull = \True$, they can no longer actually store in memory the true bias they are holding. For example an agent with $\fieldopinion = +1, \fieldexponent = -i, \fieldfull = \True$, is actually representing the interval $\frac{1}{2^{i+1}} \leq \fieldbias < \frac{1}{2^i}$. Then we still have the guarantee that if we reach stable consensus in \phaseConsensusTwo, then this consensus is the sign of the sum of the biases and is thus correct. The final case is that we do not stabilize here, and then move to \phaseStableBackup, where they agents eventually stabilize to the correct output.

We finally justify that the expected stabilization time is $O(n\log n)$. By \cref{lem:stable-backup}, the stable backup \phaseStableBackup\ will stabilize in expected $O(n\log n)$ time. Note that the high probability guarantees are all at least $1- O(1/n^{2})$, so the time for the stable backup contributes at most $O\qty(\frac{n\log n}{n^{2}}) = o(1)$ to the total expected time. Now by \cref{lem:phase-initialize}, with very high probability we have at least $|\clocks| \geq 0.24n$ $\roleclock$ agents. In this case, the time upper bounds of \cref{thm:clock} and upper bounds on the $\fieldcounter$ times using standard Chernoff bound, imply that every timed phase lasts expected $O(\log n)$ time. If we do not stabilize in the untimed phases, then we also pass through by epidemic expected $O(\log n)$ time. Thus we either stabilize or reach the backup \phaseStableBackup\ in expected $O(\log n)$ time. There is a final very low probability case that $|\roleclock| = o(n)$, but we must at least have $|\roleclock| = 2$. Even in this worst case, the time upper bounds of \cref{thm:clock} and all counter rounds are at most polynomial in $n$, whereas the low probability of such a small $\roleclock$ population is smaller than any polynomial. Thus this event adds a negligible contribution, and we conclude the total expected stabilization time is $O(\log n)$.
\end{proof}
\section{Uniform, stable protocols for majority using more states}
\label{sec:uniform}

The algorithm described in \cref{sec:algorithm} is \emph{nonuniform}:
the set of transitions used for a population of size $n$ depends on the value $\ceil{\log n}$.
A uniform protocol~\cite{CMNPS11, doty2018exact, doty2019efficient} is a single set of transitions
that can be used in any population size.
Since it is ``uniformly specified'', the transition function is formally defined by a linear-space\footnote{
    That is, the Turing machine is allowed to use a bit more than the space necessary simply to read and write the input and output states, but not significantly more (constant-factor). 
    This allows it to do simple operations, such as integer multiplication, that require more than constant space, without ``cheating'' by allowing the internal memory usage of the Turing machine to vastly exceed that required to represent the states.
} Turing machine, where the space bound is the maximum space needed to read and write the input and output states.
The original model~\cite{AADFP06} used $O(1)$ states and transitions for all $n$ and so was automatically uniform,
but many recent $\omega(1)$ state protocols are nonuniform.
With the exception of the uniform variant in~\cite{Berenbrink_majority_2020},
all $\omega(1)$ state stable majority protocols are nonuniform~\cite{alistarh2015fastExactMajority, alistarh2017time, alistarh2018space, bilke2017brief, berenbrink2018population,  ben2020log3}.
The uniform variant in~\cite{Berenbrink_majority_2020} has a tradeoff parameter $s$ that, when set to $O(1)$ to minimize the states,
uses $O(\log n \log \log n)$ states 
and $O(\log^2 n)$ time.

In this section
we show that there is a way to make \majorityNonuniform\ in \cref{sec:algorithm} uniform,
retaining the $O(\log n)$ time bound,
but the expected number of states increases to $\Theta(\log n \log \log n)$.\footnote{
    We say ``expected'' because
    this protocol has a non-zero probability of using an arbitrarily large number of states. 
    The number of states will be $O(\log n \log \log n)$ in expectation and with high probability.
}
    
    
    


\subsection{Main idea of $O(\log n \log \log n)$ state uniform majority (not handling ties)}
\label{subsec:uniform-not-handle-ties}

Since \majorityNonuniform\ uses the hard-coded value 
$L = \ceil{\log n}$,
to make the algorithm uniform,
we require a way to estimate $\log n$
and store it in a field $L$ (called $\fieldestimatelogn$ below) of each agent.
For correctness and speed, it is only required that $\fieldestimatelogn$ be within a constant multiplicative factor of $\log n$.

G\k{a}sieniec and Stachowiak~\cite{GS18} show a uniform  $O(\log \log n)$ state, $O(\log n)$ time protocol
(both bounds in expectation and with high probability) that 
computes and stores in each agent a value $\ell \in \N^+$ that, with high probability, 
is within additive constant $O(1)$ of $\ceil{\log \log n}$
(in particular, WHP  
$\ell \geq \floor{\log \log n} - 3$~\cite[Lemma 8]{Berenbrink_majority_2020}),
so $2^\ell = \Theta(\log n)$.
(This is the so-called \emph{junta election} protocol used as a subroutine for a subsequent leader election protocol.)
Furthermore, 
agents approach this estimate from below, propagating the maximum estimate by epidemic $\ell',\ell \to \ell,\ell$ if $\ell' < \ell$.
This gives an elegant way to compose the size estimation with a subsequent nonuniform protocol $\mathcal{P}$ that requires the estimate:
agents store their estimate $\fieldestimatelogn$ of $\log n$ and use it in $\mathcal{P}$.
Whenever an agent's estimate $\fieldestimatelogn$ updates---always getting larger---it simply resets $\mathcal{P}$, i.e., sets the entire state of $\mathcal{P}$ to its initial state.
We can then reason as though all agents actually started with their final convergent value of $\fieldestimatelogn$.\footnote{
    One might hope for a stronger form of composition,
    in which the size estimation \emph{terminates},
    i.e., sets an initially $\False$ Boolean flag to $\True$ only if the size estimation has converged,
    in order to simply prohibit the downstream protocol $\mathcal{P}$ from starting with an incorrect estimate of $\log n$.
    However, when both states $A$ and $B$ are initially $\Omega(n)$,
    this turns out to be impossible;
    $\Omega(n)$ agents will necessarily set the flag to $\True$ in $O(1)$ time,
    long before the $O(\log n)$ time required for the size estimation to converge~\cite[Theorem 4.1]{doty2019efficient}.
}

To make our protocol uniform, but remove its correctness in the case of a tie, as we explain below,
all agents conduct this size estimation, stored in the field $\fieldestimatelogn$, in parallel with the majority protocol $\mathcal{P}$ of \cref{sec:algorithm}.
Each agent resets $\mathcal{P}$ to its initial state whenever $\fieldestimatelogn$ updates.
This gives the stated $O(\log n)$ time bound and $O(\log n \log \log n)$ state bound.
Note that in Phase 0, agents count from $\fieldcounter = \Theta(\log n)$ down to 0.
It is sufficient to set the constant in the $\Theta$ sufficiently large that all agents with high probability receive by epidemic the convergent final value of $\fieldestimatelogn$ significantly before any agent with the same convergent estimate counts down to 0.

Acknowledging that,
with small probability,
the estimate of $\log n$ could be too low for \phaseDetectTie\ to be correct, we simply remove \phaseDetectTie\ and do not attempt to detect ties.
So if we permit undefined behavior in the case of a tie (as many existing fast majority protocols do),
then this modification of the algorithm otherwise retains stably correct, $O(\log n)$ time behavior,
while increasing the state complexity to $O(\log n \log \log n)$.

\subsection{How to stably compute ties}

With low but positive probability, the estimate of $\log n$ could be too small.
For most phases of the algorithm, this would merely amplify the probability of error events 
(e.g., \phaseDiscreteAveraging\ doesn't last long enough for agents to converge on biases $\{-1,0,+1\}$) 
that later phases are designed to handle.
However, the correctness of \phaseDetectTie\ 
(which detects ties) 
requires agents to have split through at least $\log n$ exponents in \phaseMainAveraging.
Since the population-wide bias doubles each time the whole population splits down one exponent,
the only way for the whole population to split through $\log n$ exponents is for there to be a tie (i.e., the population-wide bias is 0, so can double unboundedly many times).
In this one part of the algorithm,
for correctness we require the estimate to be at least $\log n$ with probability 1.
(It can be much greater than $\log n$ without affecting correctness; an overestimate merely slows down the algorithm.)

To correct this error, we will introduce a stable backup size estimate, to be done in \phaseDetectTie. Note that there are only a constant number of states with $\fieldphase = 4$: $\roleclock$ agents do not store a counter in this phase, and $\rolemain$ agents that stay in this phase must have $\fieldbias \in 
\left\{ 0, \pm \frac{1}{2^L} \right\}$.
Thus we can use an additional $\Theta(\log n)$ states for the agents that are currently in $\fieldphase = 4$ to stably estimate the population size. 
If they detect that their estimate of $L$ was too small, they simply go to the stable backup \phaseStableBackup.

\paragraph{Stable computation of $\floor{\log n}$.}
The stable computation of $\log n$ has all agents start in state $L_0$, where the subscript represents the agent's estimate of $\floor{\log n}$.
We have the following transitions:
for each $i \in \N$,
$L_i, L_i \to L_{i+1}, F_{i+1}$
and,
for each $0 \leq j < i$,
$F_i, F_j \to F_i, F_i$.
Among the agents in state $L_i$,
half make it to state $L_{i+1}$,
reaching a maximum of $L_k$ at $k = \floor{\log_2 n}$.\footnote{
    More generally, the unique stable configuration encodes the full population size $n$ in binary in the following distributed way:
    for each position $i$ of a 1 in the binary expansion of $n$,
    there is one agent $L_i$.
    Thus, these remaining agents lack the space to participate in propagating the value $k = \floor{\log n}$ by epidemic,
    but there are $\Omega(n)$ followers $F$ to complete the epidemic quickly.
}
All remaining $F$ agents receive the maximum value $k$ by epidemic.
A very similar protocol was analyzed in~\cite[Lemma 12]{Berenbrink2019counting}.
We give a quick analysis below for the sake of self-containment.


\paragraph{Time analysis of stable computation of $\floor{\log n}$.}
The expected time is $\Theta(n \log n)$.
For the upper bound, for each $i$,
if $j$ is the count of $L_i$ agents,
then the probability of a $L_i,L_i \to \ldots$ reaction is $O(j^2/n^2)$,
so for any possible starting count $k$ of $L_i$ agents, 
it takes expected time at most 
$O \qty( \frac{1}{n} \sum_{j=1}^k \frac{n^2}{j^2} ) 
= O(n)$ 
for the count of $L_i$ agents to get from $k$ to 0 or 1.
Assuming in the worst case that no $L_{i+1},L_{i+1} \to \ldots$ reaction happens until all $L_i,L_i \to \ldots$ reactions complete,
then summing over $\floor{\log n}$ values of $i$ gives the $O(n \log n)$ time upper bound to converge on one $L_{k}$ agent, where $k = \floor{\log n}$.
It takes additional $O(\log n)$ time for the $F$ agents to propagate $k$ by epidemic.

For the lower bound, observe that for each value of $i$, the \emph{last} reaction $L_i,L_i \to \ldots$ occurs when the count of $L_i$ is either 2 (and will increase at most once more), 
or 3 (and will never increase again).
This takes expected time $\Theta(n)$.
Since the count of $L_i$ will increase at most once more,
at that time (just before the last $L_i,L_i \to \ldots$ reaction), 
the count of $L_{i-1}$ is at most 3,
otherwise \emph{two} more $L_i$'s could be produced, and this would not be the last $L_i,L_i \to \ldots$ reaction.
Thus, to produce the $L_i$ needed for this last $L_i,L_i \to \ldots$ reaction,
the previous $L_{i-1},L_{i-1} \to L_i,F_i$ reaction occurs when the count of $L_{i-1}$ is at most 5, also taking $\Theta(n)$ time.
Thus the final reaction at each level takes time $\Theta(n)$ and depends on a reaction at the previous level that also takes time $\Theta(n)$,
so summing across $\floor{\log n}$ levels gives $\Omega(n \log n)$ completion time.

It is shown in~\cite[Lemma 12]{Berenbrink2019counting} that the time is $O(n \log^2 n)$ with high probability,
i.e., slower than the expected time by a $\log n$ factor.
Since the probability is $O(1/n^2)$ that we need to rely on this slow backup, even this larger time bound contributes negligibly to our total expected time.

\subsection{Challenges in creating $O(\log n)$ state uniform algorithm}
\label{subsec:uniform-issues-logn}

It is worth discussing some ideas for adjusting the uniform protocol described above to attempt to reduce its space complexity to $O(\log n)$ states.
The primary challenge is to enable the population size $n$ to be estimated without storing the estimate in any agent that participates in the main algorithm
(i.e., the agent has role $\rolemain$, $\roleclock$, or $\rolereserve$).
If agents in the main algorithm do not store the size, then by~\cite[Theorem 4.1]{doty2019efficient} they will provably go haywire initially, 
with agents in every phase, totally unsynchronized,
and require the size estimating agents to reset them after having converged on a size estimate that is $\Omega(\log n)$.

The following method would let the $\roleStableSize$ agents reset main algorithm agents, 
without actually having to store an estimate of ${\log n}$ in the algorithm agents, 
but it only works with high probability.
The size estimating agents could start a junta-driven clock as in~\cite{GS18},
which is reset whenever they update their size estimate.
Then, as long as there are $\Omega(n)$ $\roleStableSize$ agents,
they could for a phase timed to last $\Theta(\log n)$ time, 
reset the algorithm agents by \emph{direct} communication 
(instead of by epidemic).
This could put the algorithm agents in a quiescent state where they do not interact with each other, 
but merely wait for the $\roleStableSize$ agents to exit the resetting phase, 
indicating that the algorithm agents are able to start interacting again.
Since there are $\Omega(n)$ size-estimating agents,
each non-size-estimating agent will encounter at least one of them with high probability in $O(\log n)$ time.

The problem is that the reset signal is not guaranteed to reach every algorithm agent.
There is some small chance that a $\rolemain$ agent with a bias different from its $\fieldinput$ does not encounter a $\roleStableSize$ agent in the resetting phase, so is never reset.
The algorithm from that point on could reach an incorrect result when the agent interacts with properly reset agents,
since the sum of biases across the population has changed.
In our algorithm, by ``labeling'' each reset with the value $\fieldestimatelogn$,
we ensure that no matter what states the algorithm agents find themselves in during the initial chaos before size computation converges,
every one of them is guaranteed to be reset one last time with the same value of $\fieldestimatelogn$.
The high-probability resetting described above seems like a strategy that could work to create a high probability uniform protocol using $O(\log n)$ time and states,
though we have not thoroughly explored the possibility.

But it seems difficult to achieve probability-1 correctness using the technique of 
``reset the whole majority algorithm whenever the size estimate updates,''
without multiplying the state complexity by the number of possible values of $\fieldestimatelogn$.
Since we did not need $\floor{\log n}$ exactly,
but only a value that is $\Theta(\log n)$,
we paid only $\Theta(\log \log n)$ multiplicative overhead for the size estimate,
but it's not straightforward to see how to avoid this using the resetting technique.
Of course, one could imagine that the savings could come from reducing the state complexity of the main majority-computing agents in the nonuniform algorithm.
However, reducing the nonuniform algorithm's state complexity to below the $\Omega(\log n)$ lower bound of~\cite{alistarh2018space}
would provably require the algorithm to be not monotonic or not output dominant.
(See \cref{sec:conclusion} for a discussion of those concepts.)
Another possible approach is to intertwine more carefully the logic of the majority algorithm with the size estimation,
to more gracefully handle size estimate updates without needing to reset the entire majority algorithm.

\section{Conclusion}
\label{sec:conclusion}

There are two major open problems remaining concerning the majority problem for population protocols.

\paragraph{Uniform $O(\log n)$-time, $O(\log n)$-state majority protocol}
Our main $O(\log n)$ state protocol, 
described in \cref{sec:algorithm}, 
is \emph{nonuniform}:
all agents have the value $\ceil{\log n}$ encoded in their transition function.
The uniform version of our protocol described in \cref{sec:uniform} uses $O(\log n \log \log n)$ states.
It remains open to find a uniform protocol that uses $O(\log n)$ time and states.

\paragraph{Unconditional $\Omega(\log n)$ state lower bound for stable majority protocols}
The lower bound of $\Omega(\log n)$ states for (roughly) sublinear time majority protocols shown by Alistair, Aspnes, and Gelashvili~\cite{alistarh2018space}
applies only to stable protocols satisfying two conditions:
\emph{monotonicity} and \emph{output dominance}.

Recall that a \emph{uniform} protocol is one where a single set of transitions works for all population sizes; 
nonuniform protocols typically violate this by having an estimate of the population size 
(e.g., the integer $\ceil{\log n}$)
embedded in the transition function.
Monotonicity is a much weaker form of uniformity satisfied by nearly all known nonuniform protocols.
While allowing different transitions as the population size grows,
monotonicity requires that the transitions used for population size $n$ must also be correct for all smaller population sizes $n' < n$
(i.e., an \emph{overestimate} of the size cannot hurt),
and furthermore that the transitions be no slower on populations of size $n'$ than on populations of size $n$
(though the transitions designed for size $n$ may be slower on size $n'$ than the transitions intended for size $n'$).
Typically the nonuniform estimate of $\log n$ is used by the protocol to synchronize phases taking time $\approx T = \Theta(\log n)$, 
by having each agent individually count from $T$ down to 0 
(a so-called ``leaderless phase clock'', our \countersubroutine).
$\Theta(\log n)$ is the time required for an epidemic to reach the whole population and communicate some message to all agents before the phase ends.
Most errors in such protocols are the result of some agent not receiving a message before a phase ends, i.e., the clock runs atypically faster than the epidemic.
If the size estimate is significantly \emph{larger} than $\log n$, this slows the protocol down, but only increases the probability that all agents receive intended epidemic messages before a phase ends; thus such protocols are monotone.

In our nonuniform protocol \majorityNonuniform, 
which gives each agent the value $L = \ceil{\log n}$,
giving a different value of $L$ does not disrupt the stability (only the speed) of the protocol,
with one exception:
\phaseMainAveraging\ must go for at least $\log n$ exponents,
or else 
(i.e., if $L$ is an underestimate)
\phaseDetectTie\ could incorrectly report a tie.
If we consider running \phaseMainAveraging\ on a smaller population size $n'$ than the size $n$ for which it was designed,
$L=\ceil{\log n}$ of $\ceil{\log n'}$ will be an \emph{overestimate},
which does not disrupt correctness.
Furthermore, since $\roleclock$ agents are counting to $L$ in each case,
they are just as fast on population size $n'$ as on size $n$.
This implies that \phaseMainAveraging,
thus the whole protocol \majorityNonuniform, 
is monotone.

Output dominance references the concept of a \emph{stable} configuration $\vec{c}$,
in which all agents have a consensus opinion that cannot change in any configuration subsequently reachable from $\vec{c}$.
A protocol is \emph{output dominant} if in any stable configuration $\vec{c}$,
adding more agents with states already present in $\vec{c}$ maintains the property that every reachable stable configuration has the same output (though it may disrupt the stability of $\vec{c}$ itself).
This condition holds for all known stable majority protocols,
including that described in this paper,
because they obey the stronger condition that adding states already present in $\vec{c}$ does not even disrupt its stability.
Such protocols are based on the idea that two agents with the same opinion cannot create the opposite opinion,
so stabilization happens exactly when all agents first converge on a consensus output.

To see that our protocol is output dominant,
define a configuration to be \emph{silent} if no transition is applicable
(i.e., all pairs of agents have a null interaction);
clearly a silent configuration is also stable.
Although the definition of stable computation allows non-null transitions to continue happening in a stable configuration,
many existing stable protocols have the stronger property that they reach a silent configuration with probability 1, including our protocol.
It is straightforward to see that any silent configuration has the property required for output dominance,
since if no pair of states in a configuration can interact nontrivially,
their counts can be increased while maintaining this property.
(One must rule out the special case of a state with count 1 that can interact with another copy of itself,
which does not occur in our protocol's stable configurations.)

Monotonicity can be seen as a natural condition that all ``reasonable'' non-uniform protocols must satisfy, but output dominance arose as an artifact that was required for the lower bound proof strategy of~\cite{alistarh2018space}.
It remains open to prove an unconditional (i.e., removing the condition of output dominance) lower bound of $\Omega(\log n)$ states for \emph{any} stable monotone majority protocol taking polylogarithmic time,
or to show a stable polylogarithmic time monotone majority protocol using $o(\log n)$ states, which necessarily violates output dominance.
If the unconditional lower bound holds, then our protocol is simultaneously optimal for both time and states.
Otherwise, it may be possible to use $o(\log n)$ states to stably compute majority in polylogarithmic stabilization time with a non-output-dominant protocol.
In this case, there may be an algorithm simultaneously optimal for both time and states, or there may be a tradeoff.

\paragraph{$O(\log n)$ time, $o(\log n)$ state non-stable protocol}
Berenbrink, Els\"{a}sser, Friedetzky, Kaaser, Kling, and Radzik~\cite{Berenbrink_majority_2020} showed a non-stable majority protocol (i.e., it has a positive probability of error) using
$O(\log \log n)$ states and converging in 
$O(\log^2 n)$ time. 
(See \cref{sec:intro} for more details.)
Is there a protocol with $o(\log n)$ states solving majority in $O(\log n)$ time?

We close with questions unrelated to majority.

\paragraph{Fast population protocol for parity}
The majority problem ``$X_1 > X_2$?'' is a special case of a \emph{threshold} predicate,
which asks whether a particular weighted sum of inputs $\sum_{i=1}^k w_i X_i > c$ exceeds a constant $c$.
(For majority, $w_1=1, w_2=-1, c=0$; our protocol extends straightforwardly to other values of $w_i$, though not other values of $c$.)
The threshold predicates, together with the \emph{mod} predicates, characterize the \emph{semilinear} predicates, which are precisely the predicates stably computable by $O(1)$ state protocols~\cite{AAE06} with no time constraints
(though $\Theta(n)$ time to stabilize is sufficient for all semilinear predicates~\cite{AAE08} and necessary for ``most''~\cite{belleville2017time}).
A representative mod predicate is \emph{parity}: asking whether an odd or even number of agents exist.
(More generally, asking the parity of the number of agents with initial opinion $A$.)
Like majority, parity is solvable by a simple protocol in $O(n)$ time,\footnote{
    Agents start in state $L_1$, undergo reactions $L_i, L_j \to L_{(i+j) \mod 2}, F_{(i+j) \mod 2}$;
    \quad
    $L_i,F_j \to L_i,F_i$;
    \quad
    $F_i,F_j \to F_i,F_i$,
    i.e., $F$ agents adopt the parity of $L$ agents, and two $F$ agents with different parities adopt that of the sender.
    The first two reactions would take time $O(n \log n)$ for the last remaining $L_i$ to update all $F$ agents directly;
    the last reaction reduces this time to $O(n)$~\cite[Theorem 5]{AAE08}.
} 
and it is known to require $\Omega(n)$ time for any $O(1)$ state protocol to stabilize~\cite{belleville2017time}.
Techniques from~\cite{alistarh2017time} can be used to show that stabilization requires close to linear time even allowing up to $\frac{1}{2} \log \log n$ states.
An interesting open question is to consider allowing $\omega(1)$ states in deciding parity.
Can it then be decided in polylogarithmic time?
Is there a parity protocol simultaneously optimal for both polylogarithmic time and states, such as the $O(\log n)$ time, $O(\log \log n)$ state protocol for leader election~\cite{berenbrink2020optimal}?
Or is there a tradeoff?

\paragraph{Fast population protocols for function computation}
The transition $X,Q \to Y,Y$, starting with sufficiently many excess agents in state $Q$,
computes the function $f(x) = 2x$,
because if we start with $x$ agents in state $X$,
eventually $2x$ agents are in state $Y$,
taking time $O(\log n)$ to stabilize~\cite{CheDotSolNaCo}.
The similar transition $X,X \to Y,Q$ computes $f(x) = \floor{x/2}$,
but it takes time $\Theta(n)$ to stabilize, 
as does any $O(1)$-state protocol computing any linear function with a coefficient not in $\N$, 
as well as ``most'' non-linear functions such as $\min(x_1,x_2)$ (computable by $X_1,X_2 \to Y,Q$) and $\max(x_1,x_2)$~\cite{belleville2017time}.
Can such functions be computed in sublinear time by using $\omega(1)$ states?

\bibliography{refs}
\bibliographystyle{plain}

\end{document}